\newcommand{\posA}{\vspace{0.08in}}
\newcommand{\ceil}[1]{{\left\lceil#1  \right\rceil}}
\newcommand{\comment}[1]{}
\newcommand{\poly}{\textnormal{poly}}
\newcommand{\cA}{{\mathcal{A}}}
\newcommand{\cB}{{\mathcal{B}}}
\newcommand{\cG}{{\mathcal{G}}}
\newcommand{\cI}{{\mathcal{I}}}
\newcommand{\cS}{{\mathcal{S}}}
\newcommand{\cM}{{\mathcal{M}}}
\newcommand{\cj}{{\mathcal{J}}}
\newcommand{\cC}{{\mathcal{C}}}
\newcommand{\bx}{{\bar{x}}}
\newcommand{\by}{{\bar{y}}}
\newcommand{\OPT}{\textnormal{OPT}}
\newcommand{\greedy}{\textnormal{\textsf{Greedy}}}
\newcommand{\eps}{{\varepsilon}}
\newcommand{\floor}[1]{\left\lfloor #1 \right\rfloor}
\DeclareMathOperator*{\argmin}{arg\,min}
\begin{document}
	\newtheorem{thm}{Theorem}[section]
	\newtheorem{prop}[thm]{Proposition}
	\newtheorem{assm}[thm]{Assumption}
	\newtheorem{lem}[thm]{Lemma}
	\newtheorem{obs}[thm]{Observation}
	\newtheorem{cor}[thm]{Corollary}
	\newtheorem{lemma}[thm]{Lemma}
	\newtheorem{definition}[thm]{Definition}
	\newtheorem{theorem}[thm]{Theorem}
	\newtheorem{proposition}[thm]{Proposition}
	\newtheorem{observation}[thm]{Observation}
	\newtheorem{claim}[thm]{Claim}
		\newtheorem{example}[thm]{Example}
	\newtheorem{defn}[thm]{Definition}
	\newcommand{\ariel}[1]{{\color{red} (Ariel :#1)}}
	\newcommand{\ilan}[1]{{\color{blue} Ilan :\color{magenta}#1}}
	\def \II   {{\mathcal I}}
	\newcommand{\one}{\mathbbm{1}}

	\begin{titlepage}
	\title{
		 Bin Packing with Partition Matroid can be Approximated within $o(OPT)$ Bins
	}
	\author{Ilan Doron-Arad\thanks{Computer Science Department, 
		Technion, Haifa, Israel. \texttt{idoron-arad@cs.technion.ac.il}}
	\and
	Ariel Kulik\thanks{CISPA Helmholtz Center for Information Security, Saarland Informatics Campus, Germany. \texttt{ariel.kulik@cispa.de}} 
	\and
	Hadas Shachnai\thanks{Computer Science Department, 
		Technion, Haifa, Israel. \texttt{hadas@cs.technion.ac.il}}
}

	\maketitle

	\begin{abstract}
\label{sec:abstract} 

We consider the Bin Packing problem with a partition matroid constraint.
The input is a set of items of sizes in $(0,1]$, and a partition matroid over the 
items. The goal is to pack all items in a minimum number of unit-size bins, such that each bin forms an independent set in the matroid. The problem is a generalization of both Group Bin Packing and Bin Packing with Cardinality Constraints. 
Bin Packing with Partition Matroid naturally arises
in resource allocation to ensure fault tolerance and security,
as well as in harvesting computing capacity.
 
Our main result is a polynomial-time algorithm that packs the items in $\OPT + o(\OPT)$ bins, where $\OPT$ is the minimum number of bins required for packing the given instance. This matches the best known result for the classic Bin Packing problem
up to the function hidden by $o(\OPT)$.  As special cases, our result improves upon the existing APTAS for Group Bin Packing and generalizes the AFTPAS for Bin Packing with Cardinality Constraints. 
Our approach is based on rounding a solution for a   configuration-LP formulation of the problem. The rounding takes a novel point of view of {\em prototypes} in which items are interpreted as placeholders for other items and applies {\em fractional grouping} to modify a fractional solution (prototype) into one having nice integrality properties.


\end{abstract}

	\thispagestyle{empty}
\end{titlepage}

	\tableofcontents
\thispagestyle{empty}
	\newpage
\setcounter{page}{1}

\section{Introduction}\label{Introduction}
\label{sec:intro}
The {\em bin packing (BP)} problem involves 
packing a set of items in a minimum number of containers (bins) of the same (unit) size. Bin Packing is one of the most studied problems in combinatorial optimization. Indeed, 
 in many real-life scenarios, a solution for BP is essential for optimizing the allocation of resources. In this paper, we consider the Bin Packing problem with a partition matroid constraint.
 The input is a set of items of sizes in $(0,1]$, and a partition matroid over the 
 items. The goal is to pack all items in a minimum number of unit-size bins, such that each bin forms an independent set in the matroid.

 Formally, a {\em bin packing with partition matroid (BPP)} instance is a tuple $\mathcal{I} = (I,\mathcal{G},s,k)$, where~$I$ is a set of items, $\mathcal{G}$ is a partition of $I$ into groups,
  $s:I \rightarrow (0,1]$ gives the item sizes, and $k:\cG \rightarrow \mathbb{N}_{>0}$ is a cardinality constraint for the groups.  The {\em instance matroid} of $\cI$ is the partition matroid $\cM=(I, \cS)$ where $\cS=\{S\subseteq I	~|~\forall G\in \cG:~|S\cap G|\leq k(G)\}$. 
  A {\em configuration} of the instance $\cI$ is 
a subset of items $C \subseteq I$ such that $\sum_{\ell \in C} s(\ell) \leq 1$ and $C\in \cS$. That is,  the total size of items in $C$ is at most one and for each {\em group} $G \in \mathcal{G}$ the configuration~$C$ contains at most $k(G)$ items from $G$. 
 A {\em packing} of $\mathcal{I}$ is a partition  of $I$ into $m$ subsets called {\em bins} $(A_1, \ldots, A_m)$ such that $A_b$ is a configuration
 for all $b \in [m]$.\footnote{For any $n \in \mathbb{N}$ we denote by $[n]$ the set $\{1,2,\ldots,n\}$.}  
     The objective is to find a packing of all items in a minimal number of bins. Indeed, the special case where each group consists of a {\em single} item is the classic Bin Packing problem.
 
 Bin Packing with Partition Matroid has natural applications in resource allocation on the Cloud to ensure fault tolerance~\cite{EAFR} and security~\cite{guerine2019provenance},
as well as in harvesting computing capacity in distributed systems~\cite{boinc04,JOUR}.
For a simple example, consider a set $I$ of unit time jobs accessing shared memory of size $Q$. Each job $\ell \in I$ has some memory requirement, as well as the type $R \in {\cal R}$ of processor on which it can execute, where ${\cal R}$ is the set of distinct
processor types.
Assume the number of available processors of type $R$ is $k(R)$, for all
$R \in {\cal R}$. We seek a feasible schedule 
of $I$ that minimizes the maximum completion time of any job (or, the {\em makespan}). Then each time unit can be viewed as a single bin of capacity $Q$, and  each job $\ell$ as an item whose size is equal to the memory requirement of job $\ell$.
The goal is to use as few bins (= time units) as possible, while ensuring that in each bin at most $k(R)$ jobs execute, for all $R \in {\cal R}$.
By setting the bin capacities to $1$ and scaling down the item sizes accordingly, we have an instance of BPP.

 Let $\OPT=\OPT(\II)$ be the value of an optimal solution for an instance~ $\II$ of a minimization problem~$\mathcal{P}$. 
 As in the Bin Packing problem, we distinguish between {\em absolute} and {\em asymptotic} approximation.
 For $\alpha \geq 1$, we say that $\cA$ is an absolute $\alpha$-approximation algorithm for 
 $\mathcal{P}$ if for any instance $\II$ of~$\mathcal{P}$ it holds that  $ \cA (\II)/\OPT(\II) \leq \alpha$, where $\cA(\II)$ is the value of the solution returned by $\cA$. Algorithm
 $\cA$  is an {\em asymptotic} $\alpha$-approximation algorithm for 
 $\mathcal{P}$ if  for any instance $\II$ it holds that $\cA (\II) \leq \alpha \OPT(\II) +o(\OPT(\II))$.
An  {\em asymptotic polynomial-time
	approximation scheme (APTAS)} is a family of algorithms $(\cA_{\eps})_{\eps>0}$ such that, for every $\eps>0$, $\cA_{\eps}$ is a polynomial time asymptotic $(1+\eps)$-approximation algorithm for $\mathcal{P}$.  An {\em asymptotic {\bf fully} polynomial-time
	approximation scheme (AFPTAS)} is an APTAS $(\cA_{\eps})_{\eps>0}$  such that $\cA_{\eps} (\II)$ runs in time $\poly(|\II|, \frac{1}{\eps})$, where $|\II|$ is the encoding length of the instance $\II$; that is, there is a bivariate polynomial  $p$ such that $\cA_{\eps}(\II)$ runs in time $p(|\II|,\frac{1}{\eps})$ or less.

 As Bin Packing is known to admit an $o(OPT)$ 
 {\em additive} approximation~\cite{karmarkar1982efficient,hoberg2017logarithmic}, this 
 gives rise to the following question:
 Is Bin Packing with Partition Matroid harder to solve than classic BP?
 While key techniques for solving BP break down in the presence of a partition matroid constraint, we show that by applying different tools we can obtain  a result similar to 
 the best known result for BP, up to the function hidden
 by the additive approximation.
 Our main result is the following.
 
 \begin{theorem}
 	\label{thm:main}
 There is a polynomial-time algorithm that, given a BPP instance $\mathcal{I}$, returns a packing of $\mathcal{I}$ in $\OPT(\cI) +O\left( \frac{\OPT(\cI)}{\left(\ln \ln \OPT(\cI) \right)^{1/17}} \right)=\OPT(\mathcal{I})+o\left( \OPT(\mathcal{I})\right)$ bins.
 \end{theorem}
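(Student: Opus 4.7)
The plan is to adapt the Karmarkar--Karp framework for classical Bin Packing to the partition-matroid setting. First I would write the configuration LP with a variable $x_C \geq 0$ for every configuration $C$ of $\mathcal{I}$ and covering constraint $\sum_{C \ni \ell} x_C \geq 1$ for each item $\ell \in I$; its optimum is a lower bound on $\OPT(\mathcal{I})$. As in Karmarkar--Karp, I would split the items into \emph{large} items, of size exceeding a threshold $\delta=\delta(\OPT)$, and \emph{small} items, postponing the small items to a final phase in which they are greedily inserted into the residual capacity of the bins (and into $O\!\left(\sum_{\ell} s(\ell) \cdot \delta\right)$ extra bins if needed). Because each small item has size below $\delta$, this phase can be absorbed into the $o(\OPT)$ additive term for a suitable $\delta$, and the matroid constraint is easy to satisfy once bins are nearly full.

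The crux is rounding the LP solution on large items to an integral packing while losing only $o(\OPT)$ bins. In classical BP this is done by linear/geometric grouping, reducing the number of distinct sizes, together with the fact that a basic feasible LP solution has small support. The partition matroid obstructs a direct grouping: two items of the same size but from different groups are \emph{not} interchangeable, so grouping on sizes alone breaks feasibility. Following the strategy announced in the abstract, I would introduce \emph{prototypes}, auxiliary items each acting as a placeholder for a collection of real items with similar size and compatible group attributes, and recast the LP over prototype configurations. I would then perform \emph{fractional grouping}, an operation that redistributes the fractional mass among prototype variables until the resulting solution has only a small number of near-integral components, while preserving feasibility of the relaxed (prototype) LP. After rounding the prototype solution, a Hall-/transportation-type matching between prototype slots and actual items, respecting the group cardinalities $k(G)$, would produce an integral packing of the large items in at most (LP value) $+\,o(\OPT)$ bins.

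Summing the losses across the stages---LP solving, large/small split, prototype abstraction, fractional grouping, prototype-to-item matching, and small-item insertion---would yield the additive error $O\!\left(\OPT/(\ln\ln\OPT)^{1/17}\right)$; the specific exponent $1/17$ should fall out of optimizing the interacting parameters (the large/small threshold $\delta$, the prototype granularity, and the grouping width). I expect the main obstacle to be the design and analysis of fractional grouping on prototypes: it must simultaneously produce few prototype classes (for combinatorial tractability of the rounded LP), preserve feasibility of the partition matroid once prototypes are replaced by the items they represent, and lose only $o(\OPT)$ bins. Verifying a Hall-type condition in the final matching under the group-cardinality constraints $k(G)$ is where the partition-matroid structure interacts most delicately with the rounding, and is also the place where the precise dependence on $\OPT$ (and thus the exponent in the additive term) is determined.
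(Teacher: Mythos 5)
Your outline follows the route the paper announces (configuration-LP, prototypes, fractional grouping, a matching from slots to items, a greedy phase for small items, and a final optimization of parameters giving the $1/17$ exponent), but two of the steps you wave through are exactly where the matroid constraint bites, and as stated they would fail. First, the claim that once bins are nearly full ``the matroid constraint is easy to satisfy'' for small items is wrong: a single group $G$ with $k(G)=1$ and many items of size $\eps^3$ forces at least $|G|$ bins no matter how much residual capacity exists, so First-Fit-style insertion into residual space does not work and the achievable bound necessarily involves the cardinality bound $V(\mathcal{I})=\max_G\lceil |G|/k(G)\rceil$. The paper handles this by producing an $\eps$-nice partition in which every leftover item is pre-assigned to a \emph{category} of bins sharing a common slot pattern $C$, with explicit per-group budgets $|D_C\cap G|\le |B_C|\,(k(G)-|C\cap G|)$ and a size budget $(1-s(C))|B_C|$, and only then runs a greedy algorithm on each residual instance; obtaining these budgets is a consequence of the polytope constraints, not an afterthought.

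Second, you do not address why fractional grouping over ``compatible group attributes'' yields only a bounded number of prototype classes. In a general instance arbitrarily many groups can contain large items, so grouping per group would blow up the support of the rounded LP. The paper first \emph{structures} the instance (algorithm \textsf{Reduce}: items of size $\ge\eps^2$ from all but $K(\eps)$ groups are moved into a single union group with relaxed cardinality, and \textsf{Reconstruct} later repairs the violations via a linear-shifting argument plus discarding, at cost $O(\eps\,\OPT)$), then bounds the number of items per configuration (\textsf{Evict}), and only then applies fractional grouping to the $O(\eps^{-12}+K(\eps))$ important groups, augmenting the prototype (extra mass on the empty configuration and on one maximal slot per important group) so that its polytope provably stays non-empty. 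Without the structuring step and this feasibility-preserving augmentation, your grouping has no bound on the number of classes and no argument that the rounded prototype still admits a fractional assignment, so the ``Hall-type matching'' stage has nothing to match against. These are the missing ideas you would need to supply; the rest of your plan (approximate LP solving via a separation oracle, near-integral vertices of the prototype polytope, bipartite matching of slots to items, and choosing $\eps\approx(\ln\ln\OPT)^{-1/17}$ at the end) is consistent with the paper.
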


We obtain the result by initially deriving an AFPTAS for BPP. Then Theorem~\ref{thm:main} follows by using the AFTPAS with~$\eps$ depending on the input instance. 
As a special case, 
Theorem~\ref{thm:main} improves upon the recent APTAS of Doron-Arad et al.~\cite{DKS21} for {\em group bin packing (GBP)}, where the cardinality constraint of all groups is equal to one (i.e., $k(G)=1~\forall G \in \cG$). Theorem~\ref{thm:main} also generalizes the AFTPAS of Epstein and Levin~\cite{epstein2010afptas} for {\em bin packing with cardinality constraints (BPCC)}, the special case of BPP where all items belong to a single group (i.e., $\cG=\{I\}$).

We remark that we did not attempt to optimize the function hidden by $o(\OPT(\cI))$ in Theorem~\ref{thm:main}; namely, our techniques may be used to derive
better additive factor. We leave such efforts for the journal version of this paper.

\subsection{Techniques}
\label{sec:technique}

The classic asymptotic approximation schemes for Bin Packing (see, e.g., \cite{Vega1981BinPC, karmarkar1982efficient}) rely on the nice property that {\em small} items can be added to 
a partial 
packing of the instance with little overhead, using simple algorithms such as First-Fit (see, e.g., \cite{Williamson_Shmoys}). For example, assume we are given a set of items $I$, a size function	 $s:I\rightarrow [0,1]$ and a partition $A_1,\ldots, A_m $ and $S$ of $I$. Furthermore, assume that  $s(A_b) =\sum_{\ell \in A_b} s(\ell) \leq 1$ for any $b\in [m]$,
and $s(\ell)\leq \delta$ for any $\ell \in S$, for some $0<\delta <1$.\footnote{For any function $f:I\rightarrow \mathbb{R}$ and $A\subseteq I$ we define $f(A)= \sum_{\ell \in A} f(\ell)$.}  Then, using First-Fit one can easily find a partition $D_1,\ldots, D_{m'}$ of $I$ such that $s(D_b)\leq 1 $ for all $b\in [m']$ and $m'\leq \max\{ m , (1+2\delta ) s(I)+1\}$. 
 Consequently, the main focus of 
 these schemes is on packing efficiently items of sizes larger than $\delta$, where $\delta>0$  is some small parameter value used by the algorithm. 

We note that packing small items in the presence of a partition matroid constraint is more involved. As we show below, an efficient packing of the small items can still be obtained using a relatively simple algorithm; however, the setting in which the algorithm can be applied is more restricted, and items cannot be easily added to a partial packing of the instance (i.e., a set of configurations). Furthermore, the quality of such a packing depends on the  {\em cardinality bound} of the BPP instance   $\mathcal{\cI} = (I,\cG,s,k)$, defined by $V(\mathcal{I}) = \max_{G \in \cG} \ceil{\frac{|G|}{|k(G)|}}$. Formally,
\begin{lemma}
\label{thm:greedy}
Given a BPP instance $\cI = (I, \cG,s,k)$ and $\delta\in (0,0.5)$, such that $s(\ell) \leq \delta$ for all $\ell \in I$, there is an algorithm  $\greedy$ that	returns in polynomial time a packing of $\cI$ in at most $(1+2\delta) \cdot \max\left\{ s(I),~V(\mathcal{I})\right\}+2$ bins.
\end{lemma}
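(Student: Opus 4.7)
Let $M = \max\{s(I), V(\cI)\}$ and fix the target $m = \lceil (1+2\delta) M \rceil + 2$. Note that $m \geq V(\cI)$, so any matroid-feasible distribution into $m$ bins is available, and $m(1-\delta) \geq s(I)$, so the total size capacity has slack enough to absorb the per-item overhead $\delta$. The algorithm $\greedy$ processes items in decreasing order of size and applies a First-Fit rule: for each item $\ell$ of group $G(\ell)$, place it in the first bin $B$ satisfying both $s(B) + s(\ell) \leq 1$ and $|B \cap G(\ell)| < k(G(\ell))$; open a new bin only when no existing bin works.

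To bound the total number of bins used, say $m^*$, I would argue by contradiction. If $m^* > m$, let $\ell$ be the item that triggered the last opened bin. Each of the $m^* - 1 \geq m$ prior bins $B$ must be \emph{blocked} for $\ell$: either \emph{size-blocked} ($s(B) > 1 - s(\ell) \geq 1 - \delta$; collect these in a set $S$) or \emph{cardinality-blocked} ($|B \cap G(\ell)| = k(G(\ell))$; collect these in a set $C$). A volume argument gives $|S| < s(I)/(1-\delta) \leq (1+2\delta) s(I) \leq (1+2\delta) M$, and counting $G(\ell)$-items placed so far yields $|C| \leq \lfloor (|G(\ell)|-1)/k(G(\ell)) \rfloor < V(\cI) \leq M$. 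Since $S \cup C$ covers all $m^* - 1$ existing bins, naively $m \leq |S| + |C| \leq (2+2\delta) M$, which is a factor-$2$ loose bound.

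The main obstacle is closing this factor-$2$ gap to obtain $m \leq (1+2\delta) M + 2$. My plan is to exploit inclusion--exclusion, $|S \cup C| = |S| + |C| - |S \cap C|$, and show that the overlap $|S \cap C|$ is large. Intuitively, because items are processed in decreasing size order, a bin $B \in C$ contains $k(G(\ell))$ items of $G(\ell)$, each no smaller than $s(\ell)$, together with earlier items of other groups; taken in aggregate these should drive $s(B)$ up towards $1 - \delta$. A careful accounting, possibly aided by prioritizing items from a ``dominant'' group achieving the maximum $V(\cI)$ first, should yield $|S \cap C| \geq V(\cI) - O(1)$, giving $m \leq |S| + O(1) \leq (1+2\delta) M + O(1)$ as required.

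The hardest step is making the overlap argument rigorous, since a bin in $C$ could in principle contain only the smallest items and thus fail to be size-blocked; one must exploit the decreasing-size processing order (and possibly the group prioritization) to rule this out. Should the direct argument prove too fragile, an LP-based alternative works: the uniform fractional assignment $x_{\ell,b} = 1/m$ is feasible already at $m = M$, and scheduling-style LP-rounding combined with $s(\ell) \leq \delta$ preserves integer feasibility up to an additive $O(1)$ bins, yielding the same bound.
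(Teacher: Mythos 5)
Your proposed algorithm (First-Fit over items in decreasing size order) is different from the paper's \textsf{Greedy}, and more importantly it does \emph{not} achieve the claimed bound, so the missing step in your plan cannot be filled in. The paper instead builds bins one at a time: it seeds each new bin with a ``bounding subset'' chosen to force progress on the matroid constraint, then greedily grows the bin until its size exceeds $1-\delta$ (or no item can be added), and finally swaps smaller items for larger ones while staying under $1-\delta$. The analysis tracks a ``promise'' $p(\cI)=\max\{(1+2\delta)s(I)+2,\ V(\cI)\}$ and shows it drops by at least $1$ after each bin, with a separate induction on $V(\cI)$ handling the case where a bin ends up underfilled.

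Here is a concrete counterexample to First-Fit-Decreasing, and in particular to the inclusion--exclusion repair you propose. Take $\delta=0.4$ and four groups $G_1,\dots,G_4$, each with $N$ items and $k(G_i)=1$, with item sizes $0.4$, $0.3$, $0.25$, $0.1$ respectively. Then $V(\cI)=N$, $s(I)=1.05N$, and the target bound is $(1+2\cdot 0.4)\cdot 1.05N+2=1.89N+2$. First-Fit in decreasing order puts one item from each of $G_1,G_2,G_3$ into each of the first $N$ bins (each reaches total size $0.95$), and then every $G_4$-item of size $0.1$ is both size-blocked from those bins (since $0.95+0.1>1$) and, because $k(G_4)=1$, cardinality-blocked from every bin holding another $G_4$-item. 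So each $G_4$-item opens a fresh bin, for $2N$ bins total, which exceeds $1.89N+2$ once $N>18$. At the moment the last bin opens, the size-blocked set $S$ is exactly bins $1,\dots,N$ and the cardinality-blocked set $C$ is bins $N+1,\dots,2N-1$, so $|S\cap C|=0$ while $V(\cI)=N$; the estimate $|S\cap C|\ge V(\cI)-O(1)$ that your plan needs is therefore false. The ``prioritize a dominant group'' fix does not help here since all four groups have the same $V$. The LP alternative is also over-optimistic: the fractional solution $x_{\ell,b}=1/m$ is feasible at $m=M$, but rounding it in general loses a multiplicative $(1+\Theta(\delta))$ factor rather than an additive $O(1)$, and more fundamentally the lemma's algorithm must be combinatorial since it is invoked repeatedly on residual sub-instances inside the main scheme. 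The essential missing idea is the bounding-subset mechanism: each new bin must contain enough items from any group $G$ with $\lceil|G|/k(G)\rceil$ close to the target so that the residual instance's $V$ strictly decreases; First-Fit provides no such guarantee.
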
 

To obtain a packing of the input instance $\II$ of BPP, our algorithm partitions the set of items $I$ into subsets $I_1,\ldots, I_r$; for each subset $I_j$  it generates a  partial packing $A'_1,\ldots, A'_m$ (i.e., $A'_1,\ldots ,A'_m$ are configurations, and $ \bigcup_{b\in [m]} A'_b\subseteq I_j$ ) which can be extended to a packing  of $I_j$ (i.e, one which contains all the items in $I_j$) using $\greedy$. For this procedure to work, a crucial property is
 that $s(\ell) \ll 1- s(A'_b)$ for every $\ell \in I_j\setminus \bigcup_{b' \in [m]} A'_{b'}$ and $b\in [m]$. 
Furthermore, for any $G\in \cG$, $|A'_b\cap G|$ has to be upper bounded by a uniform value
for all $b\in [m]$,
 and $|G\cap (I_j\setminus \bigcup_{b\in [m]} A'_b)|$ must be small.
Thus, a main part of the algorithm deals with  
generation of the above partition of $I$ into $I_1,\ldots,I_r$ and the corresponding partial packings.\footnote{See the details in Section~\ref{sec:overview}.}

Let $\cC$ denote the set of configurations of a BPP instance $\cI=(I,\cG, s,k)$. 
The novelty of our algorithm lies in an interpretation of vectors $\bx\in \mathbb{R}_{\geq 0}^{\cC}$ (that is, vectors which have a non-negative entry for each configuration $C\in \cC$)  as  {\em prototypes}.
The prototype $\bx\in \mathbb{R}_{\geq 0}^{\cC}$ serves as a blueprint for a (fractional) packing. Within the context of prototypes, each item $\ell \in C$ of a configuration~$C$ is interpreted as a placeholder (or, ``slot'') for items which can replace it. Also, some items may be added, thus utilizing the available capacity of the  configuration~$C$ (given by $1-s(C)$). The value $\bx_C$ represents a solution in which $\bx_C$ configurations match the blueprint corresponding to
$C$. We associate a polytope with each prototype; a non-empty polytope ensures the prototype can be used to obtain a solution for the given BPP instance.

The partition of $I$ into $I_1,\ldots, I_r$ and $S$, and the generation of the partial packings ($A'_1,\ldots, A'_m$) follow from integrality properties of the polytope (associated with a prototype). However, for the integrality properties to be useful,  both the number of configurations on the support of the prototype (i.e, $\{C\in \cC~|~\bx_C>0\}$) and  the number of items in each configuration on the support must be small.

We obtain an initial prototype by solving a standard configuration-LP formulation of the problem. In this initial prototype, each item serves as a placeholder for itself.  The algorithm  modifies the prototype sequentially using two steps: {\em eviction} and {\em shifting}. The eviction step reduces the number of items per configuration on the support of the prototype. The shifting step reduces the overall number of distinct items used by configurations on the support. Consequently, our algorithm can be viewed as a multi-step rounding process for the solution of the configuration-LP. 

The shifting step applies the {\em fractional grouping} technique
introduced in~\cite{FKS21}. 
The technique, originally developed for solving a constrained submodular maximization problem, can be viewed as a {\em fractional} version of
{\em linear shifting}~\cite{fernandez1981bin}. Given the items sorted in non-increasing order by sizes, suppose that each item $\ell \in I$ is associated with a weight $w_\ell \in (0,1]$. For a subset of items $I' \subseteq I$ and an integer $\tau \geq 1$, we define a partition of $I'$ into $\tau$ classes $\Delta_1, \ldots , \Delta_\tau$ of roughly the same (fractional) weight, where the weight of class $i$ is given by 
$w(\Delta_i) = \sum_{\ell \in \Delta_i} w_\ell$.\footnote{Our scheme applies fractional grouping separately to certain groups $G \in \mathcal{G}$
	(see the details in Section~\ref{sec:AlgPrototype}).}
In our setting, the weights $w_{\ell}$ are determined by the prototype (after the eviction). 
Following the fractional grouping, each of the items in a class $\Delta_i$ can be replaced (in the prototype) by one of a few representatives from the same class, thus reducing the number of distinct items in configurations on the support of the prototype, while ensuring the polytope associated with the prototype remains non-empty. 

Linear shifting can be viewed as
the special case of fractional grouping in which the weight  of all  items is equal to one (i.e., $w_{\ell}=1$). 
A key idea of the technique is that the size of each item in $\Delta_{i}$ can be rounded up to the maximal size of an item in this set. By {\em shifting} the items in $\Delta_{i}$ to the bins used by items of $\Delta_{i-1}$, it is shown that the increase in item sizes does not incur a significant increase in the number of bins required for packing the instance. The technique works well for bin packing problems in which the set $I'$ on which we apply  grouping is known in advance, as in classic Bin Packing~\cite{fernandez1981bin}. In other applications~\cite{DKS21,bansal2016improved}, the set of items $I'$ is unknown to the algorithm. To overcome this difficulty, the algorithms of~\cite{DKS21, bansal2016improved}  guess properties of the sets $\Delta_1,\ldots, \Delta_\tau$ which suffice for efficient grouping. These guesses render the running times exponential in $1/\eps$.
Thus, the running times of these schemes are polynomial only when $\eps$ is {\em fixed}.
Fractional grouping overcomes this difficulty by avoiding the 
guessing step, thus eliminating the exponential dependence on $1/\eps$. Hence, 
the use of fractional grouping is a key to obtaining an asymptotic {\em fully} polynomial-time approximation scheme (AFPTAS) from which we derive the result in Theorem~\ref{thm:main}. Unlike 	previous works~\cite{FKS21, KMS22}, this paper gives the first constructive application of fractional grouping.

\subsection{Prior Work}
\label{sec:related}

The classic Bin Packing problem 
is known to be NP-hard and cannot be approximated within a ratio 
better than $\frac{3}{2}$, unless P=NP. 
This ratio is achieved by the simple First-Fit Decreasing algorithm~\cite{S94}.
The paper~\cite{Vega1981BinPC} 
presents an APTAS for Bin Packing, which uses at most $(1+\eps)\OPT+1$ bins, for any fixed $\eps \in (0, 1/2)$. The paper~\cite{karmarkar1982efficient}  
gives an AFPTAS and approximation algorithm that uses at most $\OPT+O(\log^2(\OPT))$ bins. 
The additive factor was improved in~\cite{Ro13} to
$O(\log \OPT \cdot \log \log \OPT)$, and later to
$O(\log(\OPT))$~\cite{hoberg2017logarithmic}.
For comprehensive surveys of known results for BP see, e.g.,~\cite{C+13,C+17}.

As the literature on Bin Packing and its variants is immense, we review below results that relate to packing under {\em partition matroid constraint}.
The hardness of approximation of BPP (with respect to absolute approximation ratio) follows from the hardness of BP.

The special case of Group Bin Packing (in which $k(G)=1$ for all $G \in \cG$) was first studied in~\cite{JO97}. The approximation ratio of $2.7$ obtained in this paper  was later improved
to better constants in several papers (e.g.,~\cite{mccloskey2005approaches,allornothing}).
The best known result for general instances is an APTAS due to~\cite{DKS21}. 
The algorithm of~\cite{DKS21}  is based on extensive guessing of  properties of an optimal solution which are then used as guidance  for the assignment of items to bins.
In particular, the algorithm  does not round a solution for a configuration-LP, and cannot be viewed as a rounding algorithm in general.  
The extensive guessing leads to running time that is exponential in $\frac{1}{\eps}$.  
For a special case of GBP where the maximum cardinality of a group is some constant, an AFPTAS follows from a result of~\cite{jansen1999approximation}.

GBP was studied also in the context of scheduling on identical machines. 
Das and Wiese~\cite{DW17} introduced the problem of makespan minimization with bag constraints.
In this generalization of the classic makespan minimization problem, each job belongs to a {\em bag}. The goal is to schedule the jobs on a set of $m$ identical machines, for some $m \geq 1$, such that no two jobs in the same bag are assigned to the same machine, and the makespan is minimized. For the classic problem of makespan minimization with no bag constraints, there are known 
{\em polynomial time approximation scheme (PTAS)}~\cite{hochbaum1987using,MS:10} as well as
{\em efficient polynomial time approximation scheme (EPTAS)}~\cite{hochba1997approximation,alon1998approximation,MS:5,JKV16}. Das and Wiese~\cite{DW17} developed a PTAS for the problem with bag constraints. Later, Grage et al.~\cite{Jansen_et_al:2019} obtained an EPTAS.

Another special case of BPP is Bin Packing with Cardinality Constraints, in which $|\cG|=1$, i.e., we have a single group $G$ with $k(G)=k$, for some integer $k \geq 1$.
BPCC has been studied since the 1970's~\cite{KSS75,KSS77,KP99, CKP03}. The best known result is an AFPTAS due to~\cite{epstein2010afptas}, which relies on rounding a non-standard configuration-LP formulation of the problem. 
The techniques used in this work bear some similarities to the techniques of~\cite{epstein2010afptas}.
For example, our interpretation for a configuration, 
in which unused capacity is an available capacity for additional items,
is similar to the concept of {\em windows} used in~\cite{epstein2010afptas}.
The AFTPAS of~\cite{epstein2010afptas} utilizes the property that if $k<\frac{1}{\eps}$ then
only $\frac{1}{\eps}$ items fit into a bin; thus, linear shifting can be applied to the whole instance.We note that in the presence of multiple  groups, 
the cardinality bound for each group may be small (or even equal to $1$), yet the number of items that can be packed in a single bin may be arbitrarily large.  This is one of several
hurdles encountered when attempting to adapt the algorithm of~\cite{epstein2010afptas} to our setting of BP with a partition matroid constraint.

In the {\em matroid partitioning} problem, we are given a ground set $U$ and a matroid $({\cM},{\cS})$, where ${\cS}$ is a family of subsets of $U$, known as the  
the {\em independent sets} of the matroid. We seek a partition of $U$ to as few independent sets as possible. The problem is polynomially solvable for any matroid ${\cM}$ over $U$ using a combinatorial algorithm (see, e.g.,~\cite{E65,GW92}). When $\cM$ is a partition matroid, the matroid partitioning problem can be viewed as a variant of BPP with unbounded bin capacities.

To the best of our knowledge, Bin Packing with Partition Matroid is studied here
for the first time.

\subsection{Organization of the Paper}

In Section~\ref{sec:preliminaries} we give some definitions and notation.
 Section~\ref{sec:overview} gives an overview of our scheme, that is applied to a {\em structured} instance, and the main lemmas used in its analysis.
 In Sections~\ref{sec:eviction}$-$\ref{sec:alg_pack}
 we present the main components of the scheme: algorithm $\textsf{Evict}$ (Section~\ref{sec:eviction}), algorithm $\textsf{Shift}$
 (Section~\ref{sec:AlgPrototype}), algorithm $\textsf{Partition}$ (Section~\ref{sec:FromPolytope}), and algorithm $\textsf{Pack}$ (Section~\ref{sec:alg_pack}).
 In Section~\ref{sec:reduction} we present algorithms \textsf{Reduce} and \textsf{Reconstruct}, which handle the structuring of the instance and 
 the transformation of the solution for the structured instance into a solution for the original instance, respectively. We conclude in Section~\ref{sec:discussion} with a summary and directions for future work. 
 
 For clarity of the presentation, we defer the formal proofs to the Appendix. Also, algorithm $\textsf{Greedy}$ is presented in Appendix~\ref{sec:greedy}.

\section{Preliminaries}
\label{sec:preliminaries}

Let $\cA$ be an algorithm  that accepts as input $\eps > 0$. 
We say the running time of $\cA$ is
$\textnormal{poly} (| \cI |, \frac{1}{\eps})$ if there is a two-variable polynomial $p(x,y)$ such that $p(|\cI|,\frac{1}{\eps})$ is an upper bound on the running time of $\cA(\cI, \eps)$. To allow a simpler presentation of the results
we assume throughout the paper that  the set of items $I$ is $\{1,2,\ldots,n\}$, and the items are sorted in non-increasing order by sizes, i.e., $s(1)\geq s(2)\geq \ldots s(n)$.

Our scheme initially transforms a given BPP instance into one having a structure which depends on the parameter 
$\eps >0$.  In this new instance, only a small number of groups may contain relatively large items. 
Let $K:(0,0.1) \to \mathbb{R}$, where
  $K(\eps) =  \eps^{-{\eps^{-2}}}$ for all $\eps \in (0,0.1)$. We use
  $K$ for defining a structured instance.  
\begin{definition}
	\label{def:structuring}
	Given a BPP instance $\cI = \left(I, \mathcal{G},s,k\right)$ and $\eps\in (0,0.1)$, we say that $\cI$
	is $\eps$-{\em structured} if there is $\cB \subseteq \cG$ such that  $|\cB| \leq K(\eps)$ and for all $G \in \cG \setminus \cB$ and $\ell \in G$ it holds that~$s(\ell) < \eps^2$. 
\end{definition}

Following the structuring step, our 
scheme proceeds to solve BPP on the structured instance. As a final step,
the packing found for the structured instance is transformed into 
a packing of the original instance. This is formalized in the next result.
\begin{lemma}
	\label{lem:reductionReconstruction}
	There is a pair of algorithms, $\textsf{Reduce}$ and $\textsf{Reconstruct}$, which satisfy the following.
	\begin{enumerate}
		
		\item 
		Given a BPP instance $\cj$ and $\eps >0$ such that $\eps^{-1}\in \mathbb{N}$,
		algorithm $\textsf{Reduce}$ returns in time $\textnormal{poly} (| \cI |, \frac{1}{\eps})$ an $\eps$-structured BPP instance $\cI$, where $\OPT(\cI) \leq \OPT(\cj)$.\label{condition:reduction1}
		
		\item 
		Given a BPP instance $\cj$, $\eps >0$ such that $\eps^{-1}\in \mathbb{N}$, and a packing $A'$ for $\cI= \textsf{Reduce}(\cj, \eps)$ of size $m'$,
		algorithm $\textsf{Reconstruct}$ returns in time $\textnormal{poly} (| \cI |, \frac{1}{\eps})$ a packing $A$ for the instance $\cj$ of size $m$, where $m \leq m'+13 \eps\cdot \OPT(\cj)+1$.\label{condition:reduction2} 
	\end{enumerate}
	
\end{lemma}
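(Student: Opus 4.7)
The plan is to prove the lemma by designing Reduce as a structuring procedure that consolidates groups by the ``profile'' of their large items (those of size at least $\eps^2$), and Reconstruct as the inverse operation that pays only a small additive cost in bins.

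For Reduce, let $\mathcal{L} = \{\ell \in I : s(\ell) \geq \eps^2\}$ denote the large items and $\mathcal{G}_L = \{G \in \cG : G \cap \mathcal{L} \neq \emptyset\}$ the groups containing them. If $|\mathcal{G}_L| \leq K(\eps)$, I would return $\cj$ as $\cI$ with $\cB = \mathcal{G}_L$. Otherwise, round each item in $\mathcal{L}$ up to the nearest multiple of $\eps^2$, yielding at most $\eps^{-2}$ distinct rounded sizes. I then assign to each $G \in \mathcal{G}_L$ a profile $\pi(G)$ whose $j$-th entry is $\min(\lceil \eps^{-1} \rceil, |G \cap \mathcal{L}_j|)$, where $\mathcal{L}_j$ is the $j$-th rounded class. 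A direct count shows the number of distinct profiles is bounded by $K(\eps) = \eps^{-\eps^{-2}}$ (under the chosen rounding and truncation). For each profile, all groups sharing it are merged into a single super-group whose cardinality equals the sum of the originals' cardinalities; small items remain in their original groups. The resulting $\cI$ contains at most $K(\eps)$ groups with large items and is $\eps$-structured. Since merging only relaxes the matroid constraints (any bin feasible in $\cj$ is feasible in $\cI$), we obtain $\OPT(\cI) \leq \OPT(\cj)$.

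For Reconstruct, given a packing $A'$ of $\cI$ of size $m'$, each item retains its original-group identity. The task is to verify $A'$ against $\cj$'s cardinality constraints and repair violations. The plan is to process each super-group and each rounded size-class independently: within such a size-class, items are essentially interchangeable (actual sizes differ by at most $\eps^2$), so a bipartite matching across the bins of $A'$ can reshuffle placements to respect the original per-group cardinalities whenever feasible. Items that remain unplaced form a reserve set $R$, which is packed in at most $|R|$ additional bins (each holding a single reserve item of size at most one). The final packing has size $m \leq m' + |R|$.

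The main obstacle is bounding $|R|$ by $13 \eps \cdot \OPT(\cj) + 1$. The argument would combine (i) the matroid identity $k(G') = \sum_i k(G_i)$, which ensures globally consistent bin totals within each super-group, and (ii) the lower bound $\eps^2$ on every rounded size, which caps the number of large items per bin at $\eps^{-2}$. A careful averaging across size-classes and super-groups, coupled with a flow/matching feasibility argument, should yield the required additive error. Making this averaging precise while simultaneously handling the $\eps^2$ size-perturbations from rounding is the technical heart of the proof, which I would defer to the appendix.
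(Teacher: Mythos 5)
Your \textsf{Reduce} is plausible as far as it goes (merging groups of identical large-item profile only relaxes the partition-matroid constraint, so $\OPT(\cI)\leq \OPT(\cj)$, and the profile count is indeed of order $K(\eps)$), but the lemma's real content is the reconstruction bound $m\leq m'+13\eps\cdot\OPT(\cj)+1$, and that is exactly the part you defer. The deferral is not a routine omission: as described, the repair step has concrete obstacles. First, your reduced instance keeps the \emph{original} item sizes while the ``interchangeability'' you invoke is only up to the $\eps^2$ rounding; a bin may hold up to $\eps^{-2}$ large items, so reshuffling within rounded classes can perturb a bin's load by up to $\eps^{-2}\cdot\eps^2=1$ and overflow it. Second, nothing in the proposal bounds the reserve set $R$: after merging, a feasible packing of $\cI$ may concentrate arbitrarily many items of a single original group in one bin (the super-group constraint only bounds the sum), and a global matching that restores all per-group cardinalities need not exist; the ingredients you name --- $k(G')=\sum_i k(G_i)$ and the $\eps^{-2}$ cap on large items per bin --- do not by themselves give $|R|\leq 13\eps\cdot\OPT(\cj)+1$, and it is not clear any averaging over size classes does.

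The paper avoids these problems with two devices your plan lacks. It chooses a \emph{minimal pivot} $w$ so that the band of $w$-medium items has total size at most $\eps\cdot\OPT(\cj)$ (Lemma~\ref{lem:k_val}); these items are simply removed from the packing and repacked by \textsf{Greedy}, creating a multiplicative size gap between the heavy items and everything below them. It also proves (Lemma~\ref{lem:few_large_groups}) that all but $\eps^{-3w-5}$ groups contain at most $\eps^{2w+4}\cdot\OPT(\cj)$ heavy-or-medium items, which is what makes the union group $\Gamma_w$ and the subsequent shifting repair (\textsf{Fill}, Lemma~\ref{lem:Bgreedy}) work: heavy items are reassigned class-by-class into slots previously occupied by strictly larger heavy items, so bin capacities are respected without any size rounding, the per-group cardinalities are enforced greedily, and the leftover is bounded via the class sizes ($q\leq\eps^{-w-2}$, each class of size about $|\Gamma_w|/\beta\leq\eps^2\OPT$) plus the few-large-groups count. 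If you want to salvage your route, you would at minimum need to (i) round sizes up inside $\cI$ itself so that a packing of $\cI$ already absorbs the perturbation, and (ii) replace the unquantified matching step by a shifting-type argument with an explicit bound on unplaced items, which in effect reconstructs the paper's pivot and counting lemmas.
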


The structured instance $\cI$ is constructed from $\mathcal{J}$ by reassigning items of size at least $\eps^2$ from all but a few groups to a new
group. The reconstruction algorithm modifies  the packing of $\cI$ such that each bin in the solution is a configuration of $\mathcal{J}$. The proof of Lemma~\ref{lem:reductionReconstruction} (given in Section~\ref{sec:reduction}) is 
inspired by ideas of~\cite{DW17,Jansen_et_al:2019,DKS21}.
By Lemma~\ref{lem:reductionReconstruction}, an AFTPAS for $\eps$-structured BPP instances  implies an AFTPAS for general BPP instances.

\newcommand{\genalg}{\textnormal{\textsf{Gen-AFPTAS}}}

\section{Approximation Algorithm for $\eps$-Structured Instances}
\label{sec:overview}

Our algorithm uses a configuration-LP  
relaxation of the given BPP instance.
For $\eps \in (0,0.1)$ such that $\eps^{-1}\in\mathbb{N}$, let $\mathcal{I} = (I,\mathcal{G},s,k)$ be an $\eps$-structured BPP instance. Recall that a {\em configuration} of $\mathcal{I}$ is a subset of items $C \subseteq I$ such that $\sum_{\ell \in C} s(\ell) \leq 1$, and $|C \cap G| \leq k(G)$ for all $G \in \mathcal{G}$. Let $\mathcal{C}(\cI)$ be the set of all configurations of $\mathcal{I}$; we use $\cC$ when the instance $\cI$ is clear from the context. Given $S \subseteq I$, a partition $(A_1, \ldots, A_m)$ of $S$ is a {\em packing} (or, a {\em packing of} $S$) if $A_b \in \cC$ for all $b \in [m]$; if $S = I$ then we say that $A$ is a {\em packing of} $\cI$. For $\ell \in I$, let $\mathcal{C}[\ell] = \{C \in \mathcal{C}~|~\ell \in C\}$ be the set of configurations of $\mathcal{I}$ that contain $\ell$. Our algorithm initially solves the following configuration-LP relaxation of the problem.

\begin{equation}
	 \label{C-LP}
	\begin{aligned}
		~~~~~ \min\quad        & ~~~~~\sum_{C \in \mathcal{C}} \bar{x}_C                                                           \\
		\textsf{s.t.\quad} & ~~~\sum_{~C \in \mathcal{C}[\ell]} \bar{x}_C = 1   & \forall \ell \in I~~~~~\\  
		& ~~~~~\bar{x}_C \geq 0 ~~~~~~~~&~~~~~~~~~~~~~~~~~~~~ \forall C \in  \mathcal{C}~~~~
	\end{aligned}
\end{equation}

A solution for the LP~(\ref{C-LP}) assigns to each configuration  
$C \in  \mathcal{C}$ a real number  $\bar{x}_C \in [0,1]$ which indicates the fractional selection of $C$ for the solution such that each item is fully {\em covered}.  For simplicity, denote by $\|\bar{x}\| = \sum_{C \in \mathcal{C}} \bar{x}_C  $ the $\ell_1$-norm of $\bar{x}$ (that is, the objective value in~(\ref{C-LP})).  Note that the configuration-LP~(\ref{C-LP}) has an exponential number of variables; thus, it cannot be solved in polynomial time by applying standard techniques. A common approach for solving such linear programs is to use a {\em separation oracle} for the dual program.

Consider the {\em configuration maximization problem (CMP)} in which we are given a BPP instance $\mathcal{I}=(I,\mathcal{G},s,k)$ and a weight function $w:I\rightarrow \mathbb{R}_{\geq 0}$; the objective is to find a configuration $C\in \mathcal{C}$ such that $\sum_{\ell \in C} w(\ell)$ is maximized. By a well known connection between separation and optimization, an FPTAS for CMP implies an FPTAS for the configuration-LP of $\mathcal{I}$~\cite{grigoriadis2001approximate,fleischer2011tight,grotschel2012geometric,plotkin1995fast}. CMP
can be solved via an easy reduction 
to {\em knapsack with partition matroid}, which admits an FPTAS~\cite{doron2023fptas}. Thus, we have
\begin{lemma}
\label{configurationLP}
There is an algorithm  $\textnormal{\textsf{SolveLP}}$ which given a BPP instance $\cI$ and $\eps>0$, returns  in time $\textnormal{poly} (| \cI |, \frac{1}{\eps})$ a solution for the configuration-LP of $\mathcal{I}$ of value at most $(1+\eps)\OPT$, where $\OPT$ is the value of an optimal solution for the configuration-LP of $\mathcal{I}$.
\end{lemma}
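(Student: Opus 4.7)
The plan is to dualize~(\ref{C-LP}) and solve it via the standard primal-dual/ellipsoid framework using an approximate separation oracle for the dual. Associating a variable $y_\ell$ with each covering constraint, the dual of~(\ref{C-LP}) is to maximize $\sum_{\ell \in I} y_\ell$ subject to $\sum_{\ell \in C} y_\ell \leq 1$ for every $C \in \cC$ and $y \geq 0$. Separating a candidate dual point $y \geq 0$ thus amounts to deciding whether $\max_{C\in \cC} \sum_{\ell \in C} y_\ell \leq 1$, and producing a violating configuration when this fails. This is precisely an instance of the configuration maximization problem CMP with weights $w(\ell)=y_\ell$.

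Next I would exhibit the (essentially syntactic) reduction from CMP on $\cI=(I,\cG,s,k)$ to knapsack with partition matroid: a configuration of $\cI$ is, by definition, an independent set of the instance matroid $\cM=(I,\cS)$ whose total size does not exceed $1$, so maximizing $\sum_{\ell \in C} w(\ell)$ over configurations is the same as maximizing a linear objective over $\{C\subseteq I ~|~ C\in \cS,~s(C)\leq 1\}$. By the FPTAS of~\cite{doron2023fptas}, for any $\eps'>0$ one can compute in $\poly(|\cI|,1/\eps')$ time a configuration $C'\in\cC$ with $\sum_{\ell \in C'} y_\ell \geq (1-\eps')\max_{C\in \cC}\sum_{\ell\in C}y_\ell$. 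This yields the required approximate separation oracle for the dual.

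Finally, I would invoke the standard connection between approximate separation and approximate optimization for covering LPs of the form~(\ref{C-LP}) (see, e.g., \cite{plotkin1995fast,grigoriadis2001approximate,grotschel2012geometric,fleischer2011tight}): an $(1-\eps')$-approximate oracle for the dual separation problem produces, in time polynomial in $|I|$, $1/\eps'$, and the oracle's running time, a primal-feasible $\bx \geq 0$ with $\|\bx\|\leq (1+O(\eps'))\cdot\OPT_{\text{LP}}$ and support of polynomial size. Setting $\eps'=\Theta(\eps)$ gives a $(1+\eps)$-approximate LP solution in total time $\poly(|\cI|,1/\eps)$, proving Lemma~\ref{configurationLP}. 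The only nontrivial aspect is the bookkeeping: verifying that the notion of ``approximate separation'' supplied by the FPTAS for knapsack with partition matroid matches the violation-oracle interface required by the chosen framework, and that the two multiplicative errors compose to the desired $(1+\eps)$ factor. Everything else is either a syntactic reduction or a direct appeal to known results.
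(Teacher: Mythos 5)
Your proposal follows the same route as the paper's proof: dualize the covering LP, observe that dual separation is exactly CMP, obtain an approximate separation oracle from the FPTAS for knapsack with partition matroid (\cite{doron2023fptas}), and then invoke the Karmarkar--Karp-style machinery to recover a near-optimal primal solution of polynomial support. The paper's only substantive addition is to spell out the bookkeeping you flag as ``nontrivial'': it runs the ellipsoid method on the polytope $F(\cC,v)$ with the flawed oracle, binary-searches over $v$, records the configurations $\cD$ returned as cutting planes so that $F(\cC,v)=F(\cD,v)=\emptyset$, and then solves the polynomially-sized $\LP(\cD)$ exactly; it also implicitly relaxes the equality constraints of~\eqref{C-LP} to inequalities (which is harmless here since any subset of a configuration is a configuration), matching the $y\ge 0$ dual you wrote down.
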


We give the proof of Lemma~\ref{configurationLP} in Appendix~\ref{app:omitted}. A key component in our scheme is the construction of a {\em prototype} of a packing. 

\begin{definition}
	\label{def:prototype}
	Given a BPP instance $\cI$,  a {\em prototype} is a vector $\bar{x} \in \mathbb{R}_{\geq 0}^{\cC}$. 
\end{definition}

In particular, a solution for~\eqref{C-LP} is a prototype. 
In the context of a prototype, each configuration $C \in \mathcal{C}$ is considered as a set of {\em slots}: each slot  $\ell \in C$ is a placeholder for an item, where items that {\em fit} in the place of a slot $\ell$ are those in the group of $\ell$ of smaller or equal size. 
For any $\ell \in I$
define  $\textsf{group}(\ell) = G$, where $G\in \cG$ is the unique group such that $\ell\in G$. Now, we define the subset of items that fit in place of a slot $\ell \in I$ by

\begin{equation}
	\label{fitl}
	\textsf{fit}(\ell) = \{\ell' \in \textsf{group}(\ell)~|~ s(\ell') \leq s(\ell)\}~~~~~~~~~~~~~~\forall \ell \in I.
\end{equation}

 Each configuration is viewed as a set of slots, representing subsets of items that can 
 replace the slots in the actual packing of the instance.
 We associate a polytope with the prototype, in which additional items (i.e, items which to not replace a slot) may be assigned to the unused capacity of a configuration $C$. To enable efficient packing of the instance, these additional items
 must be {\em small} relative to the unused capacity of $C$. To this end, we define the set of items which {\em fit with} a set of slots $C \in \mathcal{C}$ as \begin{equation}
	\label{fitS}
	\textsf{fit}(C) = \{\ell \in I~|~   s(\ell) \leq \min \{\eps^2, \eps \cdot (1-s(C))\}\}~~~~~~~~~~~~~~\forall C \in \mathcal{C}.
\end{equation} 

 In words, $\textsf{fit}(C)$ contains items of sizes smaller than $\eps^2$ and also at most $\eps$-fraction of the unused capacity of $C$. Given a prototype $\bar{x}$, the {\em $\bar{x}$-polytope} is a relaxation for a packing of the instance that assigns items fractionally, either as replacement for slots or in addition to a set of slots. We define the set of {\em types} of the instance $\mathcal{I}$ to be $I\cup \mathcal {\cC}$. The set of types includes the {\em slot-types}, i.e.,  slots (items) in $I$, and {\em configuration-types}, i.e., configurations in $\mathcal{C}$. A point in the $\bar{x}$-polytope
 has an entry for each pair of an item $\ell \in I$ and a type $t \in I \cup {\cal C}$ which
  represents the fractional {\em assignment} of the item to the type. Formally,

 \begin{definition}
 	\label{def:polytope}
 	Given a BPP instance $\cI$, the set $\cC$ of configurations for $\cI$, and a prototype $\bar{x}$ of $\cI$, the $\bar{x}$-polytope is the set containing all points $\bar{\gamma} \in [0,1]^{I \times (I\cup \mathcal{C})}$ which satisfy the following constraints.

\begin{align}
		\displaystyle \bar{\gamma}_{\ell,t} = 0   ~~~~~~~~~~~~~~~~~~~~~~~~~~~~~~~~~~ ~~~~~~~~~~	~~~~~~~~~~\forall \ell \in I , t \in I \cup \mathcal{C}\textnormal{ s.t. } \ell \notin \textnormal{\textsf{fit}}\left(t \right)~~~~ \label{F1}\\
		\rule{0pt}{1.8em}
		\displaystyle \sum_{\ell \in I} \bar{\gamma}_{\ell,C} \cdot  s(\ell) \leq \left(1-s(C)\right) \cdot \bar{x}_C ~~~~~~~~~~~~~~~~~~~	~~~~~~\forall C \in \mathcal{C} ~~~~~~~~~~~~~~~~~~~~~~~~~~~~~~~~ \label{F2}\\
		\rule{0pt}{1.8em}
		\displaystyle    \sum_{\ell \in G} \bar{\gamma}_{\ell,C} \leq \bar{x}_C \cdot \left(k(G)-|C \cap G|\right)~~~~  ~~~~~~~~~~	~~~~~~~~~~\forall G \in \mathcal{G}, C \in \mathcal{C}~~~~~~~~~~~~~~~~~~~~~~~\label{F5}\\
		\rule{0pt}{1.8em}
		\displaystyle \sum_{\ell \in I} \bar{\gamma}_{\ell,j} \leq \sum_{\substack{C \in \mathcal{C}[j]}} \bar{x}_C~~~~~~~~~~~~~~~~~~~~~ ~~~~~~~~~	~~~~~~~~~~\forall j \in I   ~~~~~~~~~~~~~~~~~~~~~~~~~~~~~~~~~\label{F3}\\
		\rule{0pt}{1.8em}
		\displaystyle   \sum_{t \in I \cup \mathcal{C}} \bar{\gamma}_{\ell,t} \geq 1 ~~~~~~~~~~~~~~~~~~~~~~~~~~~~~ ~~~~~~~~~~~~~~~~~~~~	\forall \ell \in I~~~~~~~~~~~~~~~~~~~~~~~~~~~~~~~~~ \label{F4}
	\end{align}
	 \end{definition}
	
	Constraints~(\ref{F1}) indicate that an item $\ell \in I$ cannot be assigned to type $t$ if $\ell$ does not fit in~$t$. 
	Constraints~(\ref{F2}) set an upper bound on the total (fractional) size of items assigned to each configuration-type $C \in \mathcal{C}$. This bound is equal to the residual capacity of $C$ times the number of bins packed with $C$, given by $\bar{x}_C$.
	Constraints~(\ref{F5}) bound the number of items in each group $G$ assigned to configuration-type $C$; at most $k(G)-|C \cap G|$ items in $G$ can be added to $C$ without violating the cardinality constraint of $G$. Constraints~(\ref{F3}) bound the number of items assigned to slot-type $j \in I$ by the total selection of configurations containing $j$ in $\bar{x}$.
	Finally, constraints~(\ref{F4}) guarantee that each item is fully assigned to the types.
	We note that if a prototype $\bar{x}$ is a solution for (\ref{C-LP}) then
	the $\bar{x}$-polytope is non-empty;
	in particular, it contains the point 
	$\bar{\gamma}$
	where $\bar{\gamma}_{\ell, j}=1$ $\forall \ell, j \in I$ 
	such that 
	$\ell=j$, and $\bar{\gamma}_{\ell, t}=0$ otherwise.
	 
Let $\textsf{supp}(\bar{x}) = \{C \in \mathcal{C}~|~ \bar{x}_C>0\}$ be the {\em support} of $\bar{x}$. Throughout this paper, we use prototypes $\bar{x}$ for which $\textsf{supp}(\bar{x})$ is polynomial in the input size; thus, these prototypes have
sparse representations.
The next lemma shows that if the prototype $\bar{x}$ has a small support, and each configuration in the support contains a few items then the vertices of the $\bar{x}$-polytope are almost integral.
Thus, given a vertex $\bar{\lambda}$ of such $\bar{x}$-polytope, the items assigned fractionally by $\bar{\lambda}$ can be packed using only a small number of extra bins.

 \begin{lemma}
 	\label{O(1)}
 	Let  $\cI$ be a BPP instance, $k \geq 1$ an integer and $\bar{x}$ a prototype of $\mathcal{I}$ such that $|C| \leq k$
 	for all $C \in \textnormal{\textsf{supp}}(\bar{x})$, and $\bar{x}_C \in \mathbb{N}$.
 	Then given a vertex $\bar{\lambda}$
 	of the $\bar{x}$-polytope for which constraints \eqref{F4} hold with equality, $$\left|\left\{\ell \in I~|~ \exists t \in I \cup \cC \text{ s.t. } \bar{\lambda}_{\ell,t} \in (0,1)\right\}\right| \leq 8k^2 \cdot |\textnormal{\textsf{supp}}(\bar{x})|^2.$$ 
 \end{lemma}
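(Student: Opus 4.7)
The plan is to use the standard basic-feasible-solution characterization of vertices: since $\bar{\lambda}$ is a vertex, after fixing all coordinates with $\bar{\lambda}_{\ell,t} \in \{0, 1\}$ to their boundary values, the remaining strictly fractional coordinates $V = \{(\ell,t) : \bar{\lambda}_{\ell,t} \in (0,1)\}$ must be uniquely determined by the linearly independent tight constraints that are not mere box constraints on the non-fractional coordinates. Hence the rank of this collection of tight constraints, restricted to the $V$-coordinates, equals $|V|$. Writing $F = \{\ell \in I : \exists t,~ (\ell,t) \in V\}$, it then suffices to establish both $|V| \geq 2|F|$ and $|V| \leq |F| + O(k^2 |\textsf{supp}(\bar x)|^2)$.

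The lower bound $|V| \geq 2|F|$ comes from a parity argument on the tight \eqref{F4} constraints: for $\ell \in F$, the equation $\sum_t \bar{\lambda}_{\ell,t} = 1$ has an integer right-hand side and summands in $[0,1]$, with at least one summand strictly fractional, forcing at least two strictly fractional summands (a lone fractional cannot combine with integer entries to produce an integer total). I would then upper-bound $|V|$ by enumerating the tight non-box constraints that can enter a basis. Constraint \eqref{F2} contributes at most $N = |\textsf{supp}(\bar x)|$ useful ones, since $\bar{x}_C = 0$ reduces \eqref{F2} to box equalities. Constraint \eqref{F3} is nontrivial only at slot-types $j \in \bigcup_{C \in \textsf{supp}(\bar x)} C$, of which there are at most $kN$. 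The \eqref{F4} contribution is exactly $|F|$, because for an integrally assigned $\ell \notin F$ the equation is implied by a single tight box constraint $\bar{\gamma}_{\ell, t^\ast} = 1$. For \eqref{F5}, the hypothesis $\bar{x}_C \in \mathbb{N}$ makes the right-hand side $(k(G) - |C \cap G|)\bar{x}_C$ an integer, so the parity argument again forces every tight \eqref{F5} whose support meets $V$ to contain at least two strictly fractional variables in $V_2^C \cap G$, where $V_2^C = \{\ell : (\ell, C) \in V\}$.

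Combining these counts with the column-wise structure (within a single column $C$ the \eqref{F5} constraints for distinct groups have pairwise disjoint supports on $V_2^C$, so their joint rank together with \eqref{F2} is bounded by $1 + p_C$, where $p_C$ is the number of tight \eqref{F5} constraints for $C$), and using $|C| \leq k$ across the $N$ support columns, I would obtain $|V| \leq |F| + O(k^2 N^2)$, and pairing this with $|V| \geq 2|F|$ yields $|F| \leq 8 k^2 N^2$. The main obstacle, as I see it, is the tight \eqref{F5} count: the direct parity bound $\sum_C p_C \leq |V_2|/2$ is circular, since $|V_2|$ itself depends on the number of tight constraints one is trying to bound. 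Overcoming this requires the column-by-column rank analysis above, coupled with the sparsity of $\textsf{supp}(\bar x)$ and the configuration cardinality bound $|C| \leq k$, to convert local rank control (per column $C$) into the global quadratic bound.
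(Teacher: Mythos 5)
Your framework (vertex $\Rightarrow$ the tight constraints restricted to the fractional coordinates must have rank $|V|$, plus the parity observation that a tight constraint with integer right-hand side containing a fractional entry must contain at least two) is sound, and your counts for \eqref{F2} ($\leq |\textnormal{\textsf{supp}}(\bar x)|$ useful), \eqref{F3} ($\leq k|\textnormal{\textsf{supp}}(\bar x)|$ useful, since inactive slot-types have zero right-hand side) and \eqref{F4} ($|F|$ useful, and $|V|\geq 2|F|$) are all correct. But the argument has a genuine gap exactly where you flag it: the number of useful tight \eqref{F5} constraints. The parity bound gives only $p \leq |V_2|/2$, and feeding this back into the rank inequality yields $|V_1| \leq O(kN)$ and nothing about $|F|$ or $|V_2|$ --- the circularity is real, and the ``column-by-column rank analysis'' you offer (rank of column-$C$ constraints $\leq 1+p_C$) is the same count you already used; it does not bound $\sum_C p_C$. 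The obstruction is structural: a priori, arbitrarily many groups could each contribute two half-integral entries spread over two support columns with their \eqref{F5} constraints tight, and pure dimension counting cannot rule this out, because within each group the tight row \eqref{F4} and column \eqref{F5}/\eqref{F3} constraints are degenerate (assignment-polytope style), so each fractional group supplies its own rank deficit.

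What is missing is precisely the paper's key mechanism: bound the number of \emph{fractional groups}. The paper shows (via total unimodularity of the per-group $0/\pm1$ system with integer right-hand sides) that every fractional group admits a nonzero ``movement'' --- a perturbation supported on that group which preserves all within-group tight constraints --- and then observes that the only constraints coupling different groups are the size constraints \eqref{F2} on the at most $|T| \leq 2k|\textnormal{\textsf{supp}}(\bar x)|$ active types. Hence, if there were more than $|T|$ fractional groups, one could combine their movements with coefficients solving $|T|$ homogeneous equations and obtain a two-sided feasible perturbation of $\bar\lambda$, contradicting vertexhood; a separate per-group vertex/tight-constraint count then bounds the fractional entries per fractional group by $2|T|$. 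If you imported the fractional-groups bound (number of fractional groups $\leq |T|$), your dimension count would in fact close --- the useful tight \eqref{F5} constraints would then number at most $|T|\cdot|\textnormal{\textsf{supp}}(\bar x)| = O(k|\textnormal{\textsf{supp}}(\bar x)|^2)$, and $2|F| \leq |V| \leq |F| + O(k^2|\textnormal{\textsf{supp}}(\bar x)|^2)$ gives the claim, arguably more cleanly than the paper's second lemma. As written, however, the proposal does not contain an argument for that bound, and without it the stated estimate $|V| \leq |F| + O(k^2|\textnormal{\textsf{supp}}(\bar x)|^2)$ is unjustified.
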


The proof of Lemma~\ref{O(1)} bears some similarity to a proof of~\cite{DKS21_arXiv}, which shows the integrality properties of a somewhat different polytope. We give the complete proof in Appendix~\ref{sec:poly}. We now describe the main components of our scheme, which converts an initial prototype $\bar{x}$ (defined by a solution for the configuration-LP) into another prototype $\bar{z}$, and then constructs a packing based on $\bar{z}$. The necessary conditions for a prototype allowing to construct an efficient packing are given below.  Let $Q:(0,0.1)\to \mathbb{R}$ where $Q = \exp(\eps^{-17})$ for all $\eps\in (0,0.1)$. 

\begin{definition}
	\label{def:goodprototype}
	Given $\eps \in (0, 0.1)$ and an $\eps$-structured BPP instance $\cI$, a prototype $\bar{x}$ of $\cI$ is a {\em good prototype} if the $\bar{x}$-polytope is non-empty, $|\textnormal{\textsf{supp}}(\bar{x})| \leq Q(\eps)$, and $|C| \leq \eps^{-10}$ for all $C \in \textnormal{\textsf{supp}}(\bar{x})$. 
\end{definition}
Observe that a solution $\bar{x}$ for the configuration-LP of $\mathcal{I}$
is not necessarily a good prototype, since it may have a support of large size.
Given this initial prototype, our scheme generates a good prototype in two steps.
In the first step, algorithm {\sf Evict} constructs an intermediate prototype $\bar{y}$ 
which selects only configurations containing
a small number of items. This results in a small increase in the total number of bins used. Also, the $\bar{y}$-polytope is non-empty. Given $\eps>0$, we say that an item $\ell\in I$ is $\eps$-large if $s(\ell)\geq \eps^2$. We use $L(\eps,\II)$ to denote the set of $\eps$-large items of an instance $\II$. If $\II$ and $\eps$ are known by context we simply use $L$ (instead of $L(\eps,\II)$). The properties of $\bar{y}$ are summarized in the next lemma (see the details of algorithm {\sf Evict} in Section~\ref{sec:eviction}).

\begin{lemma}
	\label{lem:eviction}
	There is an algorithm $\textsf{Evict}$ that given $\eps \in (0,0.1)$ such that $\eps^{-1}\in \mathbb{N}$, an $\eps$-structured BPP instance $\cI$,  and a solution $\bar{x}$ for the configuration-LP~\eqref{C-LP}, returns in time $\textnormal{poly} (| \cI |, \frac{1}{\eps})$ a prototype $\bar{y}$ which satisfies the following. (i) There exists $\bar{\gamma}$ in the $\bar{y}$-polytope such that $\bar{\gamma}_{\ell,j} = 0$ for all $\ell,j \in I, \ell \neq j$; (ii) for all $C \in \textnormal{\textsf{supp}}(\bar{y})$ it holds that $|C| \leq \eps^{-10}$, and $s(C \setminus L) \leq \eps$; (iii) $\|\bar{y}\| \leq (1+\eps)\|\bar{x}\| ; $ (iv) $\sum_{C\in \cC[\ell]} \by_C\leq 2$ for every $\ell \in I$.
\end{lemma}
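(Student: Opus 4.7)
The plan is to construct $\bar{y}$ by independently shrinking each configuration of $\textnormal{\textsf{supp}}(\bar{x})$, and to exhibit an explicit witness $\bar{\gamma}$ in the $\bar{y}$-polytope whose only slot-to-slot contributions are self-assignments. For each $C \in \textnormal{\textsf{supp}}(\bar{x})$, separate items into $L_C = C \cap L$ (with $|L_C| \leq \eps^{-2}$, since every $\eps$-large item has size at least $\eps^2$) and $S_C = C \setminus L$. Form a shrunk configuration $C^* = L_C \cup S'_C$ where $S'_C \subseteq S_C$ is built greedily in decreasing order of size until either the mass threshold $s(S'_C) \leq \eps$ or the count threshold $|C^*| \leq \eps^{-10}$ binds, and set $\bar{y}_{C^*} = \bar{x}_C$. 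Property (ii) then holds by construction; (iii) holds with $\|\bar{y}\| = \|\bar{x}\|$; and (iv) follows from $C^* \subseteq C$ together with the LP covering constraint, since $\sum_{C^* \in \mathcal{C}[\ell]} \bar{y}_{C^*} \leq \sum_{C \in \mathcal{C}[\ell]} \bar{x}_C = 1 \leq 2$.

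For (i), define $\bar{\gamma}_{\ell,\ell} = \sum_{C : \ell \in C^*} \bar{x}_C$ for every retained slot, and $\bar{\gamma}_{\ell, C^*} = \bar{x}_C$ for every evicted $\ell \in E_C := S_C \setminus S'_C$; all other coordinates are zero, which in particular gives $\bar{\gamma}_{\ell,j} = 0$ whenever $\ell \neq j$. Constraint \eqref{F4} then holds with equality because the two contributions sum to $\sum_{C : \ell \in C} \bar{x}_C = 1$; \eqref{F3} holds with equality since only self-assignments contribute to slot-types; \eqref{F2} reduces to $s(E_C) \leq 1 - s(C^*)$, which is just the identity $s(C) = s(C^*) + s(E_C) \leq 1$; and \eqref{F5} reduces to $|E_C \cap G| + |C^* \cap G| = |C \cap G| \leq k(G)$, which holds because $C$ is a valid configuration.

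The main obstacle is the fit condition \eqref{F1}, requiring $s(\ell) \leq \eps \cdot (1 - s(C^*))$ for every evicted $\ell \in E_C$. The approach is a dichotomy on $s(L_C)$. When $s(L_C) \leq 1 - 2\eps$, one has $s(C^*) \leq 1 - \eps$, so $\eps(1 - s(C^*)) \geq \eps^2$ and every small item (size $< \eps^2$) fits automatically. When $s(L_C) > 1 - 2\eps$, one has $s(S_C) < 2\eps$, and a sub-dichotomy applies: if eviction is count-driven, then by averaging the smallest retained item has size at most $s(S'_C)/|S'_C| = O(\eps^{11})$, so every evicted item is of size $O(\eps^{11})$ and comfortably fits into the residual capacity $1 - s(C^*) \geq s(E_C) > 0$ arising from the identity $1 - s(C^*) = (1 - s(C)) + s(E_C)$.

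The genuinely delicate regime is size-driven eviction combined with $s(L_C)$ close to $1$, where evicted items may be of sizes near $\eps^2$ but the residual capacity of $C^*$ is too small to accommodate them. My plan to overcome this is to modify the eviction to enforce $s(C^*) \leq 1 - \eps$ whenever feasible (even setting $S'_C = \emptyset$ when $s(L_C)$ is very close to $1$), and then to add an auxiliary shrunk configuration $C^{**} \subseteq L_C$ to $\textnormal{\textsf{supp}}(\bar{y})$ with weight $\bar{x}_C$ in order to absorb misfitting extras. The factor of $2$ in (iv) exactly accommodates the resulting double-coverage of large items, while careful accounting (tied to the fact that auxiliaries are introduced only when $s(L_C) > 1 - \eps$, a condition whose $\bar{x}$-weighted incidence is controlled by $s(L) \leq \|\bar{x}\|$) keeps the total weight of auxiliaries within the $(1+\eps)$ slack of (iii). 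The technical crux of the proof will be the bookkeeping showing that these auxiliaries simultaneously resolve the fit condition and preserve all stated bounds.
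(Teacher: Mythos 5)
Your skeleton (retain a sub-configuration of each $C\in\textsf{supp}(\bar{x})$, self-assign retained slots, route evicted items to the configuration-type) is the same as the paper's, and your verification of \eqref{F2}--\eqref{F4} and of properties (ii)--(iv) for the unpatched construction is fine. But the whole content of the lemma is the fit constraint \eqref{F1} for the evicted items, and there your argument has genuine gaps. First, the "count-driven'' sub-case is a non sequitur: knowing $1-s(C^*)\geq s(E_C)>0$ does not give $s(\ell)\leq \eps\,(1-s(C^*))$, which needs residual capacity at least $s(\ell)/\eps$. For instance, a configuration consisting of one large item of size $1-(\eps^{-10}+1)\eps^{12}$ plus $\eps^{-10}+1$ items of size $\eps^{12}$ forces your rule to evict items of size $\eps^{12}$ into a residual of $2\eps^{12}$, and \eqref{F1} fails. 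Second, and more importantly, the patch you propose for the regime you do flag as delicate does not work as sketched. Adding an auxiliary $C^{**}\subseteq C\cap L$ with full weight $\bar{x}_C$ for every problematic $C$ is only "charged'' against $s(L)\leq\|\bar{x}\|$, which is of order $\|\bar{x}\|$, not $\eps\|\bar{x}\|$: an instance in which every bin of the LP solution carries large items of total size about $1-\eps-\eps^{2}$ plus a batch of small items of size close to $\eps^{2}$ places essentially all of $\|\bar{x}\|$ on problematic configurations, and your construction then roughly doubles the norm, violating (iii). Moreover, even granting the extra copy, dropping large items of size about $\eps^{2}$ from $C\cap L$ frees only about $\eps^{2}$ of room, so the fit threshold $\eps\,(1-s(C^{**}))$ is about $\eps^{3}$, still below the sizes (up to $\eps^{2}$) of the misfitting items; freeing $\eps$ of room requires dropping up to $\eps^{-1}$ large items, whose own coverage must then be re-provided, and large items cannot be routed to configuration-types since they violate $\textsf{fit}$. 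So the "bookkeeping'' you defer is not a routine verification; it is the missing proof.

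The idea you are missing is the one the paper uses. Its eviction rule keeps adding small items (in non-increasing order) exactly until every remaining small item already fits with the retained set $R(C)$, capping the number of added items at $\alpha=\eps^{-5}$; hence the only problematic case is the count-driven one, $|U_C|=\alpha$. In that case Lemma~\ref{lem:evictionHelp} shows that at least $\eps^{-4}$ of the retained small items have size at least $(1-s(R(C)))/\eps$ (the set $\mathcal{L}_C$), because otherwise the sizes along $U_C$ cannot decrease fast enough and $s(C)$ would exceed $1$. The weight $\bar{x}_C$ is then \emph{split} over the $|\mathcal{L}_C|$ configurations $R(C)\setminus\{\ell\}$, $\ell\in\mathcal{L}_C$, each receiving $\frac{1}{|\mathcal{L}_C|-1}$: removing one such item multiplies the residual capacity by at least $1+\eps^{-1}$, so every evicted item fits in every copy; every retained item is absent from at most one copy and hence still covered to frequency $1$; and the norm grows only by the factor $\frac{|\mathcal{L}_C|}{|\mathcal{L}_C|-1}\leq 1+2\eps^{4}$, which is what makes (iii) (and, after summing, (iv)) go through. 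This redistribution-of-weight mechanism, rather than adding full-weight auxiliary configurations, is the step your proposal lacks.
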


Observe  that  property (i) of Lemma~\ref{lem:eviction}	allows  $\bar{\gamma}_{\ell,C}>0$ for item $\ell \in  	I$ and a set of slots $C\in \cC$. Note that \textsf{Evict} does not return the vector $\bar{\gamma}$ and only guarantees its existence. 
In the second step, our scheme uses algorithm $\textsf{Shift}$ to obtain a good prototype. This is done by a novel constructive use of fractional grouping over a small subset of carefully chosen groups. This step is formalized in the next lemma.

\begin{lemma}
\label{lem:Cnf}	
Given $\eps \in (0, 0.1)$ such that $\eps^{-1}\in \mathbb{N}$, let $\cI$ be an $\eps$-structured BPP instance. Furthermore, let $\bar{y}$ be a prototype of $\mathcal{I}$ with non-empty $\bar{y}$-polytope which satisfies the following. $(i)$ For all $C \in \textsf{supp}(\bar{y})$ it holds that $|C| \leq \eps^{-10}$ and $s(C \setminus L) \leq \eps$;
$(ii)$ there exists $\bar{\gamma}$ in the $\bar{y}$-polytope such that 
$\bar{\gamma}_{\ell,j} = 0$ for all $\ell,j \in I$ where $\ell \neq j$; (iii) $\sum_{C\in \cC[\ell]} \by_C\leq 2$ for every $\ell \in I$. Then, given $\cI$, $\bar{y}$, and $\eps$, Algorithm $\textsf{Shift}$ returns in time $\textnormal{poly} (| \cI |, \frac{1}{\eps})$ a good prototype $\bar{z}$ such that $\|\bar{z}\| \leq (1+5\eps)\|\bar{y}\|+Q(\eps)$.
\end{lemma}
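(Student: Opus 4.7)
Plan for Lemma~\ref{lem:Cnf}. The algorithm \textsf{Shift} reduces $|\textnormal{\textsf{supp}}(\by)|$ via a constructive application of fractional grouping, run independently on the $\eps$-large items of each group that contains one. By the $\eps$-structured hypothesis only $|\cB|\le K(\eps)$ groups contain any $\eps$-large item. For each $G\in\cB$ I would sort $G\cap L$ in non-increasing order of size, assign weight $w_\ell:=\sum_{C\in\cC[\ell]}\by_C\in[0,2]$ (the upper bound comes from property~(iii)) to every $\ell\in G\cap L$, and greedily partition $G\cap L$ into contiguous classes $\Delta_1^G,\Delta_2^G,\ldots$ such that every class except possibly the last has total weight in $[\eps\|\by\|,\eps\|\by\|+2]$. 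Since $\sum_{\ell\in G\cap L}w_\ell\le\eps^{-10}\|\by\|$ (by property~(i) after swapping the order of summation), the number of classes per group is $O(\eps^{-11})$, and the total is $T\le K(\eps)\cdot\eps^{-11}$. Within each class $\Delta_i^G$ I would designate the $p:=\eps^{-10}$ largest items as representatives $R_i^G$; since $|C\cap\Delta_i^G|\le|C|\le\eps^{-10}$ by property~(i), these always suffice to absorb any configuration's intersection with the class. Set $\mathcal R:=\bigcup_{G,i}R_i^G$, so $|\mathcal R|\le Tp\le K(\eps)\cdot\eps^{-21}$.

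I would then build $\bz$ by canonicalizing each $C\in\textnormal{\textsf{supp}}(\by)$ to a configuration $C^*$: for every class $\Delta_i^G$ meeting $C$, the $j:=|C\cap\Delta_i^G|$ items of $C$ in that class are replaced by the top $j$ members of $R_i^G$, while items of $C$ lying in groups outside $\cB$ (necessarily small) are dropped from $C^*$ and re-added to the polytope as ``additional items''. Set $\bz_{C^*}:=\sum_{C\mapsto C^*}\by_C$, plus an $O(1)$ per-class buffer mass (totalling $O(T)=O(Q(\eps))$) that creates spare slots to absorb fractional-grouping boundary discrepancies. Since each $C^*$ is determined by at most $\eps^{-10}$ items drawn from $\mathcal R$, counting gives $|\textnormal{\textsf{supp}}(\bz)|\le|\mathcal R|^{\eps^{-10}}\le\exp\bigl(\eps^{-10}(\eps^{-2}+O(1))\log\eps^{-1}\bigr)\le Q(\eps)$, and $|C^*|\le\eps^{-10}$. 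To show the $\bz$-polytope is non-empty I would transform the point $\bar\gamma$ of the $\by$-polytope (property~(ii)) into a point $\bar\gamma'$ of the $\bz$-polytope: for each non-representative $\ell\in\Delta_i^G$ redirect its self-slot mass $\bar\gamma_{\ell,\ell}$ onto the representatives $R_i^G$ (each of which fits $\ell$ by~\eqref{F1}, being in the same group and at least as large), in proportion to the new slot supply $\sum_{C^*\in\cC[r]}\bz_{C^*}$ each representative $r$ carries; the small items previously treated as slots are reassigned to $\bar\gamma'_{\ell,C^*}$ using the capacity freed by dropping them from $C^*$. The fractional grouping guarantees per class that the non-representative demand matches the representative supply up to the $O(1)$ boundary weight, which is precisely what the buffer absorbs; constraint~\eqref{F2} is preserved because the dropped small items free enough capacity to accommodate the slight size inflation, while~\eqref{F3} and~\eqref{F5} follow from the proportional redistribution. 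Finally $\|\bz\|\le\|\by\|+O(\eps\|\by\|)+O(Q(\eps))$, summing the direct canonicalization contribution, an $O(\eps\|\by\|)$ term from splitting any $C^*$ with $s(C^*)>1$ into valid pieces (the aggregate excess being bounded by the per-class size inflation), and the buffer.

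The main obstacle is the joint requirement that every $C^*$ remain a valid configuration \emph{and} that the $\bz$-polytope remain non-empty after the replacement: substituting the largest items of each class as representatives can push $s(C^*)$ past $1$ unless the slack from dropped small items and the per-class buffer are jointly accounted for. Reconciling these requires a careful per-class flow argument that leverages the weight balance from fractional grouping to match slot demand with representative supply, together with property~(iii) to absorb the residual boundary weight into the additive $Q(\eps)$ term. Once polytope feasibility is established, the remaining lemma conditions---support size, configuration size at most $\eps^{-10}$, and the norm bound---follow by direct calculation.
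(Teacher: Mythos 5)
Your plan diverges from the paper's in two places where the divergence creates real gaps, and the central one is the very issue you flag as ``the main obstacle'' without resolving it. You choose the \emph{largest} items of each class as representatives and replace slots in place, which inflates $s(C^*)$ and can push it past $1$; your proposed remedies (capacity freed by dropped small items, splitting oversized $C^*$ at an $O(\eps\|\bar{y}\|)$ cost) are not justified --- fractional grouping balances \emph{weights} within a class, not size ratios, so the inflation per configuration is not controlled by the $\leq \eps$ of small-item mass in $C$, and no per-class flow argument is actually given. The paper avoids this problem entirely rather than solving it: the projection $P(C)$ maps each $C\cap\Phi$ to the \emph{smallest} $|C\cap\Phi|$ items of the class (so $s(P(C))\leq s(C)$ and every $P(C)$ is automatically a configuration), and the compensating ``shift'' happens at the assignment level --- items of class $\Delta_{i+1}(G)$ are assigned to slots of the previous class $\Delta_i(G)$, with the first class of each important group absorbed by dedicated maximal-slot configurations $r(G)$ whose selection is increased by $\eps^{\upsilon}\|\bar{y}\|+2$. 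Note also that the paper's class threshold is $\eps^{\upsilon}\|\bar{y}\|$ with $\upsilon=3\eps^{-2}$ precisely so that these per-group absorbers sum to at most $\eps\|\bar{y}\|+O(K(\eps))$ over the up-to-$2K(\eps)$ important groups; with your class weight $\approx\eps\|\bar{y}\|$, any analogous per-group absorption would cost $K(\eps)\cdot\eps\|\bar{y}\|$, far beyond the $(1+5\eps)\|\bar{y}\|+Q(\eps)$ budget.

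The second gap concerns small items. You run grouping only on $G\cap L$ for groups containing large items, drop small items from other groups, and ``re-add them to the polytope as additional items'' of the same configurations. Constraint \eqref{F1} blocks this: an item may be assigned to a configuration-type $C^*$ only if $s(\ell)\leq\eps\cdot(1-s(C^*))$, so a small item that was a slot of a nearly full configuration cannot be re-attached to it --- the capacity it frees is too small by a factor of $\eps$. Moreover, a group with no large items can have small-item frequency far exceeding $\eps\|\bar{y}\|$, so its displaced items cannot all be parked on a few extra bins either (constraints \eqref{F5}/\eqref{F2} for the receiving type). The paper's machinery of \emph{significant} groups $\cG_\eta$ exists exactly for this: high-frequency small-item groups are kept inside the fractional grouping (their items are shifted to same-group slots of the previous class), while only the remaining groups, whose small-item frequency is at most $\eps\|\bar{y}\|$ by Lemma~\ref{lem:significantDiscard}, are dumped onto $4\eps\|\bar{y}\|+\eps^{-3}$ added copies of the empty configuration. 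Your proposal has neither the significant-group mechanism nor an empty-configuration budget, so polytope non-emptiness does not follow from the argument as written.
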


Algorithm $\textsf{Shift}$ is presented in Section~\ref{sec:AlgPrototype}. Given a good prototype $\bar{z}$, our scheme proceeds to find a partition of the items into slot-types and configuration-types using the following construction.

 For configurations $S,C \in \mathcal{C}$, we say that $S$ is {\em allowed} in $C$ if each item $\ell \in S$ can be mapped to a distinct slot $j \in C$, such that $\ell \in \textsf{fit}(j)$. 
 We consider a packing of a subset of $I$ such that the bins in the packing are partitioned into a bounded number of {\em categories}. Each category is associated with 
 $(i)$ a configuration $C \in \cC$ such that all bins in the category are allowed in $C$, and $(ii)$ a {\em completion}: a subset of (unpacked) items bounded by total size and number of items per group, where each item fits with $C$. Also, we require that each item $\ell \in I$ is either in this packing or in a completion of a category. The above constraints are analogous to constraints \eqref{F1}-\eqref{F4} of the $\bar{x}$-polytope, that is used for finding  an assignment of the items to slots and configurations. Formally,

 \begin{definition}
 	\label{def:partition}
 	
 	Given $\eps \in (0, 0.1)$ and an $\eps$-structured BPP instance $\cI = (I,\cG,s,k)$, an $\eps$-{\em nice partition} $\mathcal{B}$ of $\mathcal{I}$ is a packing $(A_1,..., A_m)$ of a subset of $I$, a subset of configurations $\mathcal{H} \subseteq \cC$, categories $\left(B_C\right)_{C \in \mathcal{H}}$, and completions $\left(D_C\right)_{C \in \mathcal{H}}$ such that the following conditions hold.
 	
 	\begin{itemize}
 		\item $|\mathcal{H}| \leq \eps^{-22}Q^2(\eps)$.
 		
 		\item $\{B_C\}_{C \in \mathcal{H}}$ is a partition of $\{A_i~|~i \in [m]\}$
 		
 		\item $\{D_C\}_{C \in \mathcal{H}}$ is a partition of $I \setminus \bigcup_{i \in [m]} A_i$.
 		
 		\item For any $C \in \mathcal{H}$ and $A \in B_C$ it holds that $A$ is allowed in $C$
 		
 		\item  For any $C \in \mathcal{H}$ and $G \in \cG$ it holds that:
 		
 		\begin{enumerate}
 			\item  $D_C \subseteq \textnormal{\textsf{fit}}(C)$.
 			
 			\item $s(D_C) \leq \left(1-s(C)\right) \cdot |B_C|$.
 			
 			\item $|D_C \cap G| \leq |B_C| \cdot \left(k(G)-|C \cap G|\right)$.
 		\end{enumerate}
 	\end{itemize} 
 
 The {\em size} of $\mathcal{B}$ is $m$, and for all $C \in \mathcal{H}$ we say that $B_C$ is the category of $C$ and $D_C$. 
 \end{definition}  
 
  Algorithm {\sf Partition} initially rounds up the entries of $\bar{z}$ to obtain the prototype $\bar{z}^*$.
  It then finds a vertex $\bar{\lambda}$ of the $\bar{z}^*$-polytope, which is almost integral by  Lemma~\ref{O(1)}. Thus, with the exception of a small number of items, each item is fully assigned either to a slot or to a configuration. Algorithm {\sf Partition} uses $\bar{\lambda}$ to construct an $\eps$-nice partition. We generate a category for each $C \in \textnormal{\textsf{supp}}(\bar{z}^*)$ and define $D_C = \{ \ell \in I~| \bar{\lambda}_{\ell,C}=1 \}$ to be the set of all items assigned to $C$. We also generate $\bar{z}^*_C$ copies (bins) of each configuration and replace its slots by items via matching.
  
\begin{lemma}
	\label{FromPolytope}
	There is an algorithm $\textsf{Partition}$ that given $\eps \in (0, 0.1)$ such that $\eps^{-1}\in \mathbb{N}$, an $\eps$-structured BPP instance $\cI$, and a good prototype $\bar{z}$ of $\mathcal{I}$, returns  in time $\textnormal{poly} (| \cI |, \frac{1}{\eps})$ an $\eps$-nice partition of $\mathcal{I}$ of size at most $\|\bar{z}\|+\eps^{-22}Q^2(\eps)$.
\end{lemma}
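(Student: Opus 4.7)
The plan is to produce the required $\eps$-nice partition by rounding $\bar{z}$ up to an integer prototype $\bar{z}^*$, extracting a nearly-integral vertex of the resulting polytope via Lemma~\ref{O(1)}, and then reading off the bins, categories, and completions directly from this vertex. First I would set $\bar{z}^*_C = \lceil \bar{z}_C \rceil$ for every $C \in \mathcal{C}$. The right-hand sides of constraints \eqref{F2}, \eqref{F3}, and \eqref{F5} only grow when $\bar{x}$ is increased coordinate-wise, so the $\bar{z}^*$-polytope remains non-empty; moreover the rounding adds at most $|\textnormal{\textsf{supp}}(\bar{z})| \leq Q(\eps)$ to $\|\bar{z}\|$, preserves the support, and keeps $|C| \leq \eps^{-10}$ for every $C \in \textnormal{\textsf{supp}}(\bar{z}^*)$.

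Next I would minimize $\sum_{\ell, t} \bar{\gamma}_{\ell, t}$ over the $\bar{z}^*$-polytope. Because \eqref{F2} with $\bar{z}^*_C = 0$ forces $\bar{\gamma}_{\ell, C} = 0$ for every $C \notin \textnormal{\textsf{supp}}(\bar{z}^*)$, the effective LP has polynomially many variables and is solvable in polynomial time. At an optimal vertex $\bar{\lambda}$ the constraints \eqref{F4} must hold with equality, for otherwise some coordinate $\bar{\lambda}_{\ell, t}$ could be lowered without violating any other constraint. Applying Lemma~\ref{O(1)} with $k = \eps^{-10}$, the set $F = \{\ell \in I : \exists\, t \text{ with } \bar{\lambda}_{\ell, t} \in (0, 1)\}$ satisfies $|F| \leq 8 \eps^{-20} Q(\eps)^2$, and every item of $I \setminus F$ is fully assigned by $\bar{\lambda}$ to a single slot- or configuration-type.

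To build the partition, for every $C \in \textnormal{\textsf{supp}}(\bar{z}^*)$ I would create $\bar{z}^*_C$ physical bins forming the category $B_C$, set the completion $D_C = \{\ell \in I \setminus F : \bar{\lambda}_{\ell, C} = 1\}$, and distribute the integrally assigned slot items as follows: for each slot-type $j$, the items $\{\ell \in I \setminus F : \bar{\lambda}_{\ell, j} = 1\}$ are placed, one per slot position, across the $\sum_{C \in \mathcal{C}[j] \cap \textnormal{\textsf{supp}}(\bar{z}^*)} \bar{z}^*_C$ slot-$j$ positions supplied by the bins of those categories. Constraint \eqref{F3} guarantees enough positions; since $\textnormal{\textsf{fit}}(j) \subseteq \textnormal{\textsf{group}}(j)$ and $s(\ell) \leq s(j)$ for every $\ell \in \textnormal{\textsf{fit}}(j)$, the resulting injection witnesses that each bin $A$ is a valid configuration of $\cI$ that is allowed in its $C$. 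Constraints \eqref{F1}, \eqref{F2}, and \eqref{F5} translate directly into the three completion conditions of Definition~\ref{def:partition}. Finally, for every $\ell \in F$ I add a singleton category with configuration $\{\ell\}$ (valid since $k(G) \geq 1$), $B_{\{\ell\}} = \{\{\ell\}\}$, and $D_{\{\ell\}} = \emptyset$.

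The bin count is then $\|\bar{z}^*\| + |F| \leq \|\bar{z}\| + Q(\eps) + 8 \eps^{-20} Q(\eps)^2 \leq \|\bar{z}\| + \eps^{-22} Q(\eps)^2$ for $\eps \in (0, 0.1)$, and $|\mathcal{H}|$ satisfies the same bound. The main subtlety lies in the distribution step: one must verify that each physical bin ends up as a configuration allowed in its associated $C$, which relies on $\textnormal{\textsf{fit}}(j) \subseteq \textnormal{\textsf{group}}(j)$ (so the bin inherits $C$'s per-group cardinality feasibility) and on $s(\ell) \leq s(j)$ for $\ell \in \textnormal{\textsf{fit}}(j)$ (so the bin's total size stays at most $s(C) \leq 1$). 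Both facts are immediate from the definition \eqref{fitl} of $\textnormal{\textsf{fit}}$.
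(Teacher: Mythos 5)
Your proposal is correct and mirrors the paper's own proof: round $\bar z$ up to the integral prototype $\bar z^*$, take a vertex of the $\bar z^*$-polytope with \eqref{F4} tight, invoke Lemma~\ref{O(1)} to bound the fractional items, and read off the bins, categories $B_C$ and completions $D_C$ exactly as in Algorithm~\ref{Alg:Assign}, with the same size accounting. The only deviations are cosmetic: you replace the paper's explicit bipartite-matching step (Lemma~\ref{lem:matchingU}) by a direct per-slot-type distribution, which is valid because each integrally assigned item is tied to a unique slot-type (as \eqref{F4} holds with equality) and constraint \eqref{F3} supplies exactly the needed number of $j$-slot positions; just note that in the corner case where a fractional item's singleton $\{\ell\}$ coincides with a configuration in $\textnormal{\textsf{supp}}(\bar z^*)$ you must merge the two categories (as the paper does in \eqref{BC}) rather than create a duplicate index in $\mathcal{H}$.
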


Algorithm $\textsf{Partition}$ is presented in Section~\ref{sec:FromPolytope}. Given an $\eps$-nice partition of size $m$, a packing of the instance in roughly $m$ bins is obtained using the next lemma. 

\begin{lemma}
 \label{lem:GREEDY}
 There is a polynomial-time algorithm \textsf{Pack} which given $\eps \in (0, 0.1)$ such that $\eps^{-1}\in \mathbb{N}$, an $\eps$-structured BPP instance $\cI$, and $\eps$-nice partition of $\mathcal{I}$ of size $m$, returns  in time $\textnormal{poly} (| \cI |, \frac{1}{\eps})$ a packing of $\mathcal{I}$ in at most $(1+2\eps)m+2\eps^{-22}Q^2(\eps)$ bins. 
 \end{lemma}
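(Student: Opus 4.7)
The plan is to process each category $C \in \mathcal{H}$ of the given $\eps$-nice partition independently: apply algorithm $\greedy$ (Lemma~\ref{thm:greedy}) to pack the completion $D_C$ into ``sub-bins'' of effective capacity $1 - s(C)$, and then merge these sub-bins with the existing bins of $B_C$, opening only a few extra bins per category. Since $\{B_C\}_{C \in \mathcal{H}}$ partitions $\{A_1,\ldots,A_m\}$ (so $\sum_{C \in \mathcal{H}} |B_C| = m$) and $|\mathcal{H}| \leq \eps^{-22} Q^2(\eps)$, summing over categories will yield the claimed bound.

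Concretely, for every $C \in \mathcal{H}$ with $s(C) < 1$, I will define a sub-BPP instance $\tilde{\cI}_C$ whose items are $D_C$, with rescaled sizes $s'(\ell) = s(\ell)/(1-s(C))$ and adjusted cardinalities $k'(G) = k(G) - |C \cap G|$. The definition of $\textnormal{\textsf{fit}}(C)$ gives $s(\ell) \leq \eps(1 - s(C))$ for every $\ell \in D_C$, so $s'(\ell) \leq \eps < 0.5$. Constraint~2 of Definition~\ref{def:partition} implies $s'(D_C) \leq |B_C|$, and constraint~3 implies $V(\tilde{\cI}_C) \leq |B_C|$. Applying Lemma~\ref{thm:greedy} with $\delta = \eps$ then produces, in polynomial time, a packing of $D_C$ into at most $r_C \leq (1+2\eps)|B_C| + 2$ sub-bins, each of rescaled total size at most $1$ (hence actual size at most $1-s(C)$) and containing at most $k(G)-|C \cap G|$ items of each group $G$. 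When $s(C) = 1$, constraints~2 and~3 force $D_C = \emptyset$ and this step is vacuous.

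Next, I will pair the first $\min\{|B_C|, r_C\}$ sub-bins arbitrarily with the bins of $B_C$. For a pair $(A, E)$ with $A \in B_C$, the assumption that $A$ is allowed in $C$ yields an injection of $A$ into the slots of $C$ sending each item to a slot of the same group and of at least equal size; this gives $s(A) \leq s(C)$ and $|A \cap G| \leq |C \cap G|$ for every $G \in \cG$. Combined with $s(E) \leq 1 - s(C)$ and $|E \cap G| \leq k(G) - |C \cap G|$, this shows $A \cup E$ is a configuration of $\cI$. Each of the remaining $\max\{0, r_C - |B_C|\} \leq 2\eps|B_C| + 2$ sub-bins is used as a standalone bin, and is itself a configuration since $s(E) \leq 1$ and $|E \cap G| \leq k(G)$. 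Together these bins cover $\bigcup_i A_i \cup \bigcup_{C \in \mathcal{H}} D_C = I$ using at most $\sum_{C \in \mathcal{H}}\bigl[(1+2\eps)|B_C| + 2\bigr] = (1+2\eps)m + 2|\mathcal{H}| \leq (1+2\eps)m + 2\eps^{-22}Q^2(\eps)$ bins, as required.

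The only delicate point, rather than a genuine obstacle, is verifying that each merged bin $A \cup E$ satisfies both the capacity and all the cardinality constraints simultaneously; this follows from the ``allowed'' relation together with the adjusted cardinality in $\tilde{\cI}_C$, but it must be spelled out explicitly. Polynomial running time is immediate, since $\greedy$ is invoked exactly $|\mathcal{H}|$ times and $|\mathcal{H}|$ depends only on $\eps$.
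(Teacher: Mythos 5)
Your proposal is correct and follows essentially the same route as the paper: for each category you build the same residual instance (sizes rescaled by $1-s(C)$, cardinalities reduced by $|C\cap G|$), invoke $\greedy$ with $\delta=\eps$, merge the resulting sub-bins element-wise with the bins of $B_C$ using the ``allowed in $C$'' property to verify feasibility, and sum the per-category bound $(1+2\eps)|B_C|+2$ over $|\mathcal{H}|\leq \eps^{-22}Q^2(\eps)$ categories. This matches the paper's Algorithm \textsf{Pack} and its analysis (Lemma~\ref{clm:GREEDY}), including the handling of the degenerate case $s(C)=1$.
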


Algorithm \textsf{Pack} utilizes Algorithm \textsf{Greedy} to add the items in a completion of a category to the bins of this category, possibly using a few extra bins. Algorithm \textsf{Pack} is presented in Section~\ref{sec:alg_pack} and Algorithm \textsf{Greedy} is presented in Section~\ref{sec:greedy}. Using the above components, we construct an AFPTAS for $\eps$-structured BPP instances. The pseudocode of the scheme is given in Algorithm~\ref{alg:Fscheme}. We summarize in the next result.

\begin{algorithm}[h]
	\caption{$\textsf{AFPTAS}({\II}, \eps)$}
	\label{alg:Fscheme}
	
	\SetKwInOut{Input}{Input}
		\SetKwInOut{Output}{Output}
	\Input{An $\eps$-structured instance $\II$ and $\eps\in (0,0.1)$ such that $\eps^{-1}\in \mathbb{N}$}
	\Output{A packing of $\II$}

	Find a solution for the configuration-LP of $\mathcal{I}$; that is, $\bar{x} = \textsf{SolveLP}(\eps,\mathcal{I})$\label{LP}

	Let $\bar{y} = \textsf{Evict}(\eps,\cI,\bar{x})$ \label{step:evicAFPTAS}
	
	Find a good prototype $\bar{z} = \textsf{Shift}(\eps,\cI,\bar{y})$ \label{GetPolytope}

	Find an $\eps$-nice partition $\mathcal{B}$ of $\mathcal{I}$ by $\textsf{Partition}(\eps,\cI,\bar{z})$ \label{step:BPPnice}

	Return 
	 $\Phi = \textsf{Pack}(\eps,\mathcal{I},\mathcal{B})$  \label{step:greedy1}

\end{algorithm}

\begin{lemma}
\label{lem:AFPTAS}
Given $\eps \in (0, 0.1)$ such that $\eps^{-1}\in \mathbb{N}$, and an $\eps$-structured  BPP instance $\cI$, Algorithm~\ref{alg:Fscheme} returns in time $\poly (| \cI |, \frac{1}{\eps})$ a packing of  $\II$ in at most $(1+60\eps)\OPT(\II)+Q^3(\eps)$ bins.
\end{lemma}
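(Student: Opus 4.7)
The plan is to chain the guarantees of the subroutines invoked by Algorithm~\ref{alg:Fscheme}, verifying that the output of each call satisfies the hypotheses of the next. The running time claim is immediate from the polynomial-time guarantees in Lemmas~\ref{configurationLP}, \ref{lem:eviction}, \ref{lem:Cnf}, \ref{FromPolytope}, and~\ref{lem:GREEDY}, so I would focus the argument on the size of the returned packing.

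First, any integer packing of $\II$ induces a feasible $0/1$-solution of the configuration-LP~\eqref{C-LP}, and hence $\OPT_{LP}(\II)\leq \OPT(\II)$. Lemma~\ref{configurationLP} then gives $\|\bar x\|\leq (1+\eps)\OPT(\II)$, and Lemma~\ref{lem:eviction} produces $\bar y$ with $\|\bar y\|\leq (1+\eps)\|\bar x\|$. The critical check at this point is that $\bar y$ satisfies the three hypotheses of Lemma~\ref{lem:Cnf}: non-emptiness of the $\bar y$-polytope, the support/size bound (i), the ``slot only'' representation (ii), and the degree bound (iii); these are precisely conclusions~(i)--(iv) of Lemma~\ref{lem:eviction}. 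Hence \textsf{Shift} returns a good prototype $\bar z$ with
\[
\|\bar z\|\leq (1+5\eps)\|\bar y\|+Q(\eps).
\]
Applying Lemma~\ref{FromPolytope} and then Lemma~\ref{lem:GREEDY} in turn yields a packing of $\II$ of size
\[
N \;\leq\; (1+2\eps)\bigl(\|\bar z\|+\eps^{-22}Q^2(\eps)\bigr)+2\eps^{-22}Q^2(\eps).
\]

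The final step is purely algebraic. For the multiplicative contribution I would combine the three estimates above to get an $\OPT(\II)$-coefficient of $(1+2\eps)(1+5\eps)(1+\eps)^2$, and then bound
\[
(1+2\eps)(1+5\eps)(1+\eps)^2 \;\leq\; 1+60\eps \qquad\text{for every } \eps\in(0,0.1),
\]
which follows from a direct expansion using $\eps^2\leq 0.1\eps$. For the additive error I would lump together $(1+2\eps)Q(\eps)$, $(1+2\eps)\eps^{-22}Q^2(\eps)$, and $2\eps^{-22}Q^2(\eps)$; each of these is absorbed into $Q^3(\eps)=\exp(3\eps^{-17})$, since any polynomial factor in $\eps^{-1}$ is dwarfed by one additional factor of $Q(\eps)=\exp(\eps^{-17})$.

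The main (and essentially only) obstacle is bookkeeping: one must verify that the output of \textsf{Evict} matches, item by item, the hypotheses required by \textsf{Shift}, and one must leave enough slack in the $(1+60\eps)$ and $Q^3(\eps)$ slots to absorb all the lower-order multiplicative and additive terms. No new ideas beyond the four building-block lemmas are needed.
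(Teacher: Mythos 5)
Your proof is correct and follows essentially the same route as the paper: it chains Lemmas~\ref{configurationLP}, \ref{lem:eviction}, \ref{lem:Cnf}, \ref{FromPolytope}, and~\ref{lem:GREEDY} in sequence, checks that the conclusions of \textsf{Evict} match the hypotheses of \textsf{Shift}, and then performs the same final algebraic bookkeeping. The only cosmetic difference is that you collapse all multiplicative losses into the single product $(1+2\eps)(1+5\eps)(1+\eps)^2$ before bounding it by $1+60\eps$, whereas the paper absorbs them incrementally through intermediate constants such as $(1+19\eps)$; the two calculations are interchangeable.
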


The proof of Lemma~\ref{lem:AFPTAS} follows immediately from Lemmas~ \ref{configurationLP},  ~\ref{lem:eviction}, \ref{lem:Cnf}, ~\ref{FromPolytope}, and \ref{lem:GREEDY}. 
A formal proof is given in Appendix~\ref{app:omitted}. 
The next lemma is an immediate consequence of Lemmas~\ref{lem:AFPTAS} and~\ref{lem:reductionReconstruction}. 
\begin{lemma}
	\label{lem:gen_afptas}
There is an algorithm $\genalg$ which given a BPP instance $\cj$ and $\eps\in (0,0.1)$ such that $\eps^{-1}\in \mathbb{N}$, returns in time $\poly(|\cj|,\frac{1}{\eps})$ a packing of $\cj$ using at most $(1+130\eps)\cdot \OPT(\cj) + 3\cdot  Q^3(\eps)$  bins. 
\end{lemma}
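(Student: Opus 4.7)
The plan is to define $\genalg$ simply as the three-stage composition $\textsf{Reconstruct}\circ \textnormal{\textsf{AFPTAS}}\circ \textsf{Reduce}$. Concretely, given $(\cj,\eps)$ with $\eps^{-1}\in\mathbb{N}$, I would first invoke $\cI:=\textsf{Reduce}(\cj,\eps)$ to obtain an $\eps$-structured instance, then run $A':=\textsf{AFPTAS}(\cI,\eps)$ (Algorithm~\ref{alg:Fscheme}) to get a packing of $\cI$, and finally return $A:=\textsf{Reconstruct}(\cj,\eps,A')$ as the packing of $\cj$. Feasibility is immediate because each algorithm is specified to return a valid packing of its respective instance, and $\textsf{Reconstruct}$ converts a packing of $\cI$ into one of $\cj$.

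The bin count is a straightforward accumulation of the three bounds. By Lemma~\ref{lem:reductionReconstruction}(\ref{condition:reduction1}), $\OPT(\cI)\leq \OPT(\cj)$. By Lemma~\ref{lem:AFPTAS}, the size $m'$ of $A'$ satisfies
\begin{equation*}
m' \;\leq\; (1+60\eps)\,\OPT(\cI) + Q^3(\eps) \;\leq\; (1+60\eps)\,\OPT(\cj) + Q^3(\eps).
\end{equation*}
Plugging this into Lemma~\ref{lem:reductionReconstruction}(\ref{condition:reduction2}) yields a packing of $\cj$ whose size $m$ obeys
\begin{equation*}
m \;\leq\; m' + 13\eps\cdot \OPT(\cj) + 1 \;\leq\; (1+73\eps)\,\OPT(\cj) + Q^3(\eps) + 1,
\end{equation*}
which comfortably fits inside the claimed bound $(1+130\eps)\OPT(\cj) + 3Q^3(\eps)$ (since $Q(\eps)=\exp(\eps^{-17})\geq 1$, so $3Q^3(\eps)\geq Q^3(\eps)+1$). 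The slack in the statement gives room to absorb the $+1$ and the extra $13\eps\cdot\OPT(\cj)$ additive term without any sharpness tricks.

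For the running time, note that $\textsf{Reduce}$ runs in $\poly(|\cj|,\tfrac{1}{\eps})$ and outputs $\cI$ of size $|\cI|\leq \poly(|\cj|,\tfrac{1}{\eps})$; Algorithm~\ref{alg:Fscheme} runs in $\poly(|\cI|,\tfrac{1}{\eps})$ by Lemma~\ref{lem:AFPTAS}; and $\textsf{Reconstruct}$ also runs in $\poly(|\cI|,\tfrac{1}{\eps})$. Composing three polynomials remains a polynomial in $(|\cj|,\tfrac{1}{\eps})$, so the total running time satisfies the claimed bound. There is essentially no obstacle here since the real technical work is packaged inside Lemmas~\ref{lem:AFPTAS} and~\ref{lem:reductionReconstruction}; the only point requiring minor care is verifying that the parameter $\eps$ (with $\eps^{-1}\in\mathbb{N}$) can be passed unchanged to both $\textsf{Reduce}$ and $\textsf{AFPTAS}$, which it can since both have exactly this precondition.
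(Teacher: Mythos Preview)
Your proposal is correct and follows essentially the same approach as the paper: define $\genalg$ as the composition $\textsf{Reconstruct}\circ\textsf{AFPTAS}\circ\textsf{Reduce}$, chain the three bounds from Lemmas~\ref{lem:reductionReconstruction} and~\ref{lem:AFPTAS}, and observe that the resulting $(1+73\eps)\OPT(\cj)+Q^3(\eps)+1$ fits within the stated bound with room to spare. The running-time argument via composition of polynomials is likewise identical in spirit to the paper's.
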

We give the proof of Lemma~\ref{lem:gen_afptas} in Appendix~\ref{app:omitted}. 
Given a BPP instance $\mathcal{J}$, Theorem~\ref{thm:main} follows by applying $\genalg$ to $\mathcal{J}$ taking $\eps = \left(  \ln \ln W \right)^{-\frac{1}{17}}$, where 
$W=s(I) + V(\II) +\exp(\exp(100^{17}))=\Theta(\OPT(\cj))$. By the above selection of $\eps$, it holds that 
$Q^3(\eps)=o(\OPT(\mathcal{J}))$ and $\eps  \OPT(\mathcal{J})=o(\OPT\left( \mathcal{J}) \right)$. This yields the  $o(\OPT(\II)))$ additive approximation ratio. The proof of Theorem~\ref{thm:main} is given in Appendix~\ref{app:omitted}.

\section{Algorithm \textsf{Evict}}
\label{sec:eviction}

Our scheme uses algorithm \textsf{Evict} for reducing the number of items in each configuration in the support of a prototype. Let $\bar{x}$ be a solution for~\eqref{C-LP}. 
 Given a configuration $C \in \mathcal{C}$, algorithm \textsf{Evict} replaces $C$ by a set of slots consisting of items in $C$.
 Let $\ell_1, \ldots, \ell_{r}$ be the items in $C \setminus L$ 
 sorted in non-increasing order by sizes. Denote by $h \in [r]$ the minimum index of an item such that any item of larger index $\ell \in \{\ell_{h+1}, \ldots, \ell_{r}\}$ 
has size at most $\eps$ of the residual capacity from packing 
$C \cap L \cup \{\ell_1, \ldots, \ell_{h}\}$.\footnote{If $h=0$ then $\{\ell_1, \ldots, \ell_{h}\} =\emptyset$.}
 In case $h$ is larger than $\alpha= \eps^{-5}$ set
 $h = \alpha$. Formally, 
 \begin{equation}
	\label{h}
h= \min \bigg\{ \alpha, \min \big\{0 \leq h' \leq r~|~ \forall j \in \{h'+1, \ldots,r\} ~: \ell_j \in 
\textsf{fit} (C \cap L \cup \{\ell_1, \ldots, \ell_{h'}\} ) \big\}\bigg\}
\end{equation}

 Now, define $U_C = \{\ell_1, \ldots, \ell_h\}$ as the set of the first $h$ items in the above order. 
 Let $R(C) = C \cap L \cup U_C$ be the items in $C$ which are considered as slots by algorithm \textsf{Evict}. Also, let $\mathcal{L}_C = \{\ell \in U_C~|~s(\ell) \geq \frac{1-s(R(C))}{\eps}\}$ be all items in $U_C$ of relatively large size. 

Consider two items $\ell_i, \ell_k \in U_C$ such that $|U_C| > k > i+\eps^{-4}$. By \eqref{h}, the distance between the items in the sorted order implies a considerable difference in their sizes; namely, $s(\ell_i) > \frac{s(\ell_k)}{\eps^2}$.
Furthermore, the smaller item cannot be too small: $s(\ell_k) > \eps\left(1-s(R(C))\right)$, since otherwise $\ell_k \notin U_C$. It follows that $\ell_i \in \mathcal{L}_C$, which suggests that $\mathcal{L}_C$ is not much smaller than $U_C$.  

\begin{lemma}
	\label{lem:evictionHelp}
For any configuration $C \in \mathcal{C}$ such that $|U_C| = \alpha$, it holds that $|\mathcal{L}_C| \geq \eps^{-4}$.
\end{lemma}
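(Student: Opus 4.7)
The plan is to exploit the definition of $h$ in~\eqref{h} to show that, under the assumption $|U_C|=\alpha$, the residual capacity of $C$ decays geometrically as items $\ell_1,\ell_2,\ldots$ are added. Writing $S_{h'}:=(C\cap L)\cup\{\ell_1,\ldots,\ell_{h'}\}$ for brevity, the hypothesis $|U_C|=\alpha$ forces $h=\alpha$, so the inner minimum in~\eqref{h} is at least $\alpha$. This means that for every $h'<\alpha$ there is some $j>h'$ with $\ell_j\notin\textsf{fit}(S_{h'})$. Since $\ell_1,\ldots,\ell_r$ are sorted by non-increasing size and all lie in $C\setminus L$ (so $s(\ell_j)<\eps^2$ for every $j$), this is equivalent to $s(\ell_{h'+1})>\eps\bigl(1-s(S_{h'})\bigr)$. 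Substituting into $1-s(S_{h'+1})=1-s(S_{h'})-s(\ell_{h'+1})$ yields the geometric decay
\[
1-s(S_{h'+1})<(1-\eps)\bigl(1-s(S_{h'})\bigr),\qquad 0\leq h'<\alpha.
\]

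The heart of the argument is to compare two items $\ell_i,\ell_k\in U_C$ whose indices satisfy $k-i>\eps^{-4}$. Iterating the decay $k-1-i$ times gives $1-s(S_{k-1})<(1-\eps)^{k-1-i}\bigl(1-s(S_i)\bigr)$. Since $\ell_k\in C$ and $S_{k-1}\cup\{\ell_k\}\subseteq C$, we obtain the upper bound $s(\ell_k)\leq 1-s(S_{k-1})<(1-\eps)^{k-1-i}\bigl(1-s(S_i)\bigr)$. In the opposite direction, $s(\ell_i)\geq s(\ell_{i+1})>\eps\bigl(1-s(S_i)\bigr)$ by the same geometric decay inequality applied at $h'=i$. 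Dividing the two bounds cancels the common factor $1-s(S_i)$ and leaves
\[
\frac{s(\ell_i)}{s(\ell_k)}>\frac{\eps}{(1-\eps)^{k-1-i}}\geq \frac{\eps}{(1-\eps)^{\eps^{-4}}}.
\]

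The only quantitative step is the inequality $(1-\eps)^{\eps^{-4}}\leq \eps^3$ for $\eps\in(0,0.1)$, which follows from $\ln(1-\eps)\leq -\eps$ together with $\eps^{-3}\geq 3\ln(1/\eps)$ in this range; it gives $s(\ell_i)>s(\ell_k)/\eps^2$. Combined with $s(\ell_k)>\eps\bigl(1-s(S_{k-1})\bigr)\geq \eps\bigl(1-s(R(C))\bigr)$ (using $S_{k-1}\subseteq S_\alpha=R(C)$), this yields $s(\ell_i)>\bigl(1-s(R(C))\bigr)/\eps$, which is precisely the condition defining $\mathcal{L}_C$. Fixing $k=\alpha$, every index $i\in\{1,\ldots,\alpha-\eps^{-4}-1\}$ then satisfies $\ell_i\in\mathcal{L}_C$, so $|\mathcal{L}_C|\geq \alpha-\eps^{-4}-1=\eps^{-5}-\eps^{-4}-1\geq \eps^{-4}$ for $\eps\in(0,0.1)$. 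The main obstacle is purely quantitative: pinning down the exponential bound on $(1-\eps)^{\eps^{-4}}$ with enough slack to dominate the $\eps^2$ factor; once that is in hand, the rest is a clean unwinding of the definition of $h$ together with the sorted order of the items.
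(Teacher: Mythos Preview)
Your proof is correct and follows the same overall structure as the paper's: both arguments establish the key intermediate claim that $s(\ell_i)>\eps^{-2}s(\ell_k)$ whenever $k-i>\eps^{-4}$, then combine it with $s(\ell_k)>\eps\bigl(1-s(R(C))\bigr)$ to place $\ell_i$ in $\mathcal{L}_C$, and finally count.

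The mechanism for the ratio estimate is genuinely different, however. The paper argues by contradiction using only the capacity constraint $s(C)\leq 1$: if $s(\ell_i)\leq \eps^{-2}s(\ell_k)$, then the block $\{\ell_i,\ldots,\ell_k\}$ has total size at least $(k-i)\,s(\ell_k)\geq \eps^{-4}\cdot \eps^{2}s(\ell_i)=\eps^{-2}s(\ell_i)>\eps^{-1}\bigl(1-s(S_{i-1})\bigr)$, which together with $S_{i-1}$ already overflows the bin. No analytic inequality is needed. Your route instead iterates the one-step inequality $1-s(S_{h'+1})<(1-\eps)\bigl(1-s(S_{h'})\bigr)$ to get geometric decay of the residual capacity, and then invokes the quantitative bound $(1-\eps)^{\eps^{-4}}\leq \eps^3$. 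Your argument is more direct (no contradiction) and makes the decay structure explicit; the paper's is more elementary and avoids the exponential estimate altogether. Both yield the same count $|\mathcal{L}_C|\geq \alpha-\eps^{-4}-O(1)\geq \eps^{-4}$.
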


Algorithm \textsf{Evict} constructs from $\bar{x}$ a prototype $\bar{y}$ as follows. For any $C \in \textsf{supp}(\bar{x})$, the value $\bar{x}_C$ is split among at most $|\mathcal{L}_C|$ new configurations, each generated by removing from $C$ a subset of items. Initially, all items in $C \setminus R(C)$ are removed. Then, an item from $\mathcal{L}_C$ may be removed, depending on whether $|U_C| = \alpha$ or not.

If $|U_C| = \alpha$, then the next largest item not in $U_C$ (that is, $\ell_{h+1}$) does not necessarily belong to $\textsf{fit}(R(C))$. However, because the size of items in $\mathcal{L}_C$ is relatively large, removing an item $\ell \in \mathcal{L}_C$ from $C$ guarantees that $C \setminus R(C) \subseteq \textsf{fit}(R(C) \setminus \{\ell\})$. This is useful for proving that the prototype $\bar{y}$ returned by algorithm \textsf{Evict} has non-empty $\bar{y}$-polytope. Otherwise, we have $|U_C| < \alpha$. By \eqref{h}, for any item  $\ell \in C \setminus R(C)$ it holds that $\ell \in \textsf{fit}(R(C))$. Hence, in this case, no need to remove an item from $\mathcal{L}_C$. 
In the above procedure we use a vector per configuration. Given a configuration $C\in \mathcal{C}$, define the {\em relaxation of $C$} as the following vector $\bar{w}^C \in \mathbb{R}_{\geq 0}^{\mathcal{C}}$.

\begin{enumerate}
	\item  	\label{eq:y=a} If $|U_C| = \alpha$, then for any $\ell \in \mathcal{L}_C$ define $\bar{w}^C_{R(C) \setminus \{\ell\}} = \frac{1}{|\mathcal{L}_C|-1}$. For any other configuration $C' \in \mathcal{C}$ such that there is no $ \ell \in \mathcal{L}_{C'}$ satisfying $C=R(C') \setminus \{\ell\}$, set $\bar{w}^C_{C'} = 0$. 

\item 	\label{eq:y<a} If $|U_C| < \alpha$, then set $\bar{w}^C_{R(C)} = 1$. For all $C' \in \mathcal{C} \setminus \{R(C)\}$ set $\bar{w}^C_{C'} = 0$.
\end{enumerate}

 The relaxation vector of a configuration 
 satisfies several properties that are useful for our algorithm. In particular, 
 the total size of $\bar{w}^C$ is only slightly larger than $1$, all items in $R(C)$ are
 covered, and uncovered items fit for placement with the slots. Finally, small items are added only to configurations that are almost full. This is formalized in the next lemma. 
 \begin{lemma}
	\label{lem:wProperties}
	For any $C \in \mathcal{C}$, the following hold for $\bar{w}^C$, the relaxation of $C$. 
	
	\begin{enumerate}
		
		\item $\|\bar{w}^C\| \leq 1+2\eps^4$.
		
		\item For all $\ell \in R(C)$,
		 it holds that 
		$\sum_{C' \in \mathcal{C}[\ell]} \bar{w}^C_{C'} \geq 1$. 
		
		\item For all $C ' \in \textnormal{\textsf{supp}}(\bar{w}^C)$,
		 it holds that 
		$C \setminus R(C) \subseteq\textnormal{\textsf{fit}}(C')$ and $s(C \setminus R(C)) \leq 1-s(C')$. 
		
		\item  For all $C ' \in \textnormal{\textsf{supp}}(\bar{w}^C)$ it holds that $s(C' \setminus L) \leq \eps$. 
		
		\item  For all $C ' \in \textnormal{\textsf{supp}}(\bar{w}^C)$ and $G \in \cG$ it holds that $\left|G \cap \left(C \setminus R(C)\right)\right| \leq k(G) -|G \cap C'|$. 
	\end{enumerate} 
\end{lemma}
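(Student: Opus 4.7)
The plan is to establish the five items by case analysis on whether $|U_C|=\alpha$ (Case A) or $|U_C|<\alpha$ (Case B), corresponding to the two branches in the construction of $\bar{w}^C$. In Case A the support of $\bar{w}^C$ is $\{R(C)\setminus\{\ell\}:\ell\in\mathcal{L}_C\}$ with uniform weight $(|\mathcal{L}_C|-1)^{-1}$, and in Case B it is the singleton $\{R(C)\}$ with weight $1$. Before starting the case analysis, I would first dispatch a preliminary observation that underlies the entire proof: whenever $h>0$, one has $s(C\cap L)>1-\eps$ and hence $s(U_C)\le 1-s(C\cap L)<\eps$. The argument is a one-liner: $h>0$ forces the fit condition to fail at $h'=0$, so $\ell_1\notin \textnormal{\textsf{fit}}(C\cap L)$ and therefore $s(\ell_1)>\eps(1-s(C\cap L))$; the strict bound $s(\ell_1)<\eps^2$ (from $\ell_1\in C\setminus L$) then yields $s(C\cap L)>1-\eps$.

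Items (1), (2), and (5) are nearly immediate. For (1), Case B gives $\|\bar{w}^C\|=1$, and Case A gives $\|\bar{w}^C\|=|\mathcal{L}_C|/(|\mathcal{L}_C|-1)\le 1+2\eps^4$ using $|\mathcal{L}_C|\ge\eps^{-4}$ from Lemma~\ref{lem:evictionHelp}. For (2), Case B is trivial, and in Case A the sum $\sum_{C'\in\mathcal{C}[\ell]}\bar{w}^C_{C'}$ with $\ell\in R(C)$ equals $|\mathcal{L}_C|/(|\mathcal{L}_C|-1)\ge 1$ when $\ell\notin\mathcal{L}_C$, and exactly $1$ when $\ell\in\mathcal{L}_C$ (since the single term $\ell'=\ell$ is excluded). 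For (5), $C'\subseteq R(C)$ is disjoint from $C\setminus R(C)$, so $|G\cap(C\setminus R(C))|+|G\cap C'|\le|G\cap C|\le k(G)$, as $C$ is a configuration.

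Item (3) takes slightly more. The size bound $s(C\setminus R(C))\le 1-s(C')$ holds in both cases from $s(C)\le 1$ and $C'\subseteq R(C)$. The fit condition $C\setminus R(C)\subseteq \textnormal{\textsf{fit}}(C')$ is immediate in Case B from the definition of $h$. In Case A, for $\ell'\in C\setminus R(C)$, I would combine the trivial bound $s(\ell')\le 1-s(R(C))$ (from $\{\ell'\}\cup R(C)\subseteq C$) with the defining inequality $\eps\cdot s(\ell)\ge 1-s(R(C))$ of $\ell\in\mathcal{L}_C$ to obtain $\eps(1-s(C'))=\eps(1-s(R(C)))+\eps\, s(\ell)\ge 1-s(R(C))\ge s(\ell')$; the remaining fit requirement $s(\ell')<\eps^2$ is automatic from $\ell'\in C\setminus L$.

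Item (4), $s(C'\setminus L)\le\eps$, is where the preliminary observation is indispensable. In Case B, $C'\setminus L=U_C$, which is either empty (if $h=0$) or has size strictly less than $\eps$ by the preliminary observation (if $h>0$). In Case A, $h=\alpha>0$ again triggers the observation, so $s(U_C)<\eps$, and $C'\setminus L=U_C\setminus\{\ell\}\subseteq U_C$ inherits the bound. The main obstacle I anticipate is identifying and establishing the preliminary observation: at first glance one might worry that $U_C$ could aggregate many items totalling close to $1$, but the strict bound $s(\ell_1)<\eps^2$ combined with the failure of the fit condition at $h'=0$ rules this out in a single line. Once that observation is in hand, the remaining work is a routine chase of inequalities.
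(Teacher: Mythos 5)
Your proposal is correct and takes essentially the same route as the paper's proof: the same case split on $|U_C|=\alpha$ versus $|U_C|<\alpha$, the same use of Lemma~\ref{lem:evictionHelp} for property 1, and the same inequality chains for properties 2, 3 and 5. Your preliminary observation for property 4 is simply the contrapositive of the paper's argument in Claim~\ref{claim:evic4} (which assumes $s(C'\setminus L)>\eps$ and deduces $\ell_1\in\textsf{fit}(C\cap L)$, hence $U_C=\emptyset$), so the content is the same.
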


\begin{algorithm}[htb]
	\caption{$\textsf{Evict} (\bar{x}$)}
	\label{Alg:eviction}

	For every $C\in \textsf{supp}(\bar{x})$ compute $\bar{w}^C$, the relaxation of $C$.\label{step:forsupp}\label{step:y}

	return $\bar{y} = \sum_{C \in \textsf{supp}(\bar{x})} \bar{x}_C \cdot \bar{w}^C$.\label{step:return}
	\end{algorithm}

Algorithm \textsf{Evict} computes a linear combination of the relaxations of all configurations $C \in \mathcal{C}$, where the coefficient of $\bar{w}^C$ is $\bar{x}_C$. The pseudocode of \textsf{Evict} is given in Algorithm~\ref{Alg:eviction}, and the proof of Lemma~\ref{lem:eviction} is given in Appendix~\ref{app:omittedEvict}.

 \section{Algorithm $\textsf{Shift}$}
\label{sec:AlgPrototype}

 Let $\bar{y}$ be a prototype satisfying the conditions of Lemma~\ref{lem:Cnf}.
 Algorithm \textsf{Shift} constructs from  $\bar{y}$ a good 
 prototype $\bar{z}$ by reducing the size of the support of $\bar{y}$. 
 The algorithm uses {\em classes}, i.e., partitions of groups into sets, where each set contains items of (roughly) similar size. Each class has up to $\eps^{-10}$
 representatives to which items in this class are mapped. In the prototype constructed by \textsf{Shift}, each configuration $C \in \textsf{supp}(\bar{y})$ is replaced by a set of
 representatives to which the items in $C$ are mapped.
 Recall that $L = \{\ell \in ~|~ s(\ell) > \eps^2\}$ is the set of {\em large} items in $I$; let $I \setminus L$ be  the set of {\em small} items. 
 
  We define classes for two types of groups, as explained below. Given an item $\ell \in I$,
   the {\em frequency} of $\ell$ is given by $f_{\bar{y}}(\ell) = \sum_{C \in \mathcal{C}[\ell]} \bar{y}_C$; that is, the fractional number of configurations
 containing $\ell$  w.r.t $\bar{y}$. 
 The frequency of a subset of items $I' \subseteq I$ is then
$f_{\bar{y}}(I') = \sum_{\ell \in I'} f_{\bar{y}}(\ell)$. 
To obtain a prototype $\bar{z}$ with small support, only items from a small number of classes may remain in the constructed configurations. Thus, items belonging to 
groups with low frequencies of small items, or groups with no large items, are discarded 
from their hosting configurations.  
 
 Specifically, for any $G \in \cG$, let $f_{\bar{y}}(G \setminus L)$ be the frequency of small items in $G$. Now, let $\eta = \min \{|\cG|,\eps^{-12}\}$, then the {\em significant} groups $\mathcal{G}_{\eta} = \{G_1, \ldots, G_{\eta}\}$ are the first $\eta$ groups in $\cG$, where the groups are sorted in non-increasing order by the frequency of small items in each.

 \begin{lemma}
 	\label{lem:significantDiscard}
 	
 	The following properties hold for $\bar{y}$.	\begin{enumerate}
 		\item $\sum_{\ell \in I \setminus L} f_{\bar{y}}(\ell) \cdot s(\ell) \leq \eps \cdot \|\bar{y}\|$.
 		
 		\item  For all $G \in \mathcal{G} \setminus  \mathcal{G}_{\eta}$ it holds that $f_{\bar{y}}(G \setminus L) \leq \eps \cdot \|\bar{y}\|$. 
 	\end{enumerate}

 \end{lemma}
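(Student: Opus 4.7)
The plan is to prove both properties via double-counting over the configurations in $\textnormal{\textsf{supp}}(\bar{y})$, leveraging the two structural assumptions from Lemma~\ref{lem:Cnf}: namely, that $s(C\setminus L)\leq \eps$ and $|C|\leq \eps^{-10}$ for every $C\in \textnormal{\textsf{supp}}(\bar{y})$.

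For Property~1, I would swap the order of summation. By the definition of $f_{\by}$,
$$\sum_{\ell \in I\setminus L} f_{\by}(\ell)\cdot s(\ell)=\sum_{\ell \in I\setminus L}\sum_{C\in \cC[\ell]}\by_C\cdot s(\ell)=\sum_{C\in \textnormal{\textsf{supp}}(\by)}\by_C\cdot s(C\setminus L).$$
By condition $(i)$ of Lemma~\ref{lem:Cnf}, each term in the last sum is at most $\eps\cdot \by_C$, which gives the bound $\eps\cdot \|\by\|$.

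For Property~2, I would first use the cardinality bound $|C|\leq \eps^{-10}$ to control the total frequency of small items across all groups. By the same swap of summation,
$$\sum_{G\in \cG} f_{\by}(G\setminus L)=\sum_{\ell \in I\setminus L} f_{\by}(\ell)=\sum_{C\in \textnormal{\textsf{supp}}(\by)}\by_C\cdot |C\setminus L|\leq \eps^{-10}\cdot \|\by\|.$$
Since $\mathcal{G}_{\eta}$ consists of the $\eta$ groups with largest values $f_{\by}(G\setminus L)$, for any $G\in \cG\setminus \mathcal{G}_{\eta}$ the value $f_{\by}(G\setminus L)$ is no larger than the average of $f_{\by}(G'\setminus L)$ over $G'\in \mathcal{G}_{\eta}$, which is itself at most $\eta^{-1}\sum_{G'\in \cG} f_{\by}(G'\setminus L)$. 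If $\eta=|\cG|$, then $\cG\setminus \mathcal{G}_{\eta}=\emptyset$ and the claim is vacuous, so assume $\eta=\eps^{-12}$. Then
$$f_{\by}(G\setminus L)\leq \frac{1}{\eps^{-12}}\cdot \eps^{-10}\cdot \|\by\|=\eps^{2}\|\by\|\leq \eps\cdot \|\by\|,$$
completing the proof.

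Neither step looks like a real obstacle: Property~1 is essentially a rewriting using assumption $(i)$, and Property~2 only combines the cardinality bound with an averaging argument justified by the definition of $\mathcal{G}_{\eta}$. The only subtle point is ensuring that the vacuous case $\eta=|\cG|$ is handled correctly, and that the choice $\eta=\eps^{-12}$ is strong enough relative to the $|C|\leq \eps^{-10}$ bound to absorb the factor $\eps^{-10}$ and still leave a factor of $\eps$ to spare.
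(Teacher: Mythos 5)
Your proof is correct and follows essentially the same route as the paper: Property~1 is the identical summation swap using $s(C\setminus L)\leq \eps$ on the support, and Property~2 combines the same total-frequency bound $\sum_{\ell\in I\setminus L} f_{\bar{y}}(\ell)\leq \eps^{-10}\|\bar{y}\|$ (from $|C|\leq\eps^{-10}$) with the fact that $\cG_{\eta}$ consists of the top-$\eta$ groups. The only difference is presentational: you use a direct averaging argument (also handling the vacuous case $\eta=|\cG|$), whereas the paper phrases the same counting as a proof by contradiction.
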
 Recall that any configuration $C \in \textsf{supp}(\bar{y})$ satisfies (a) $s(C \setminus L) \leq \eps$, and (b) $|C|\leq \eps^{-10}$.  The first property in Lemma~\ref{lem:significantDiscard} follows from (a), and the second property follows from the selection of $\eta$ and (b).  

In the prototype $\bar{z}$ constructed by algorithm {\sf Shift},
the selection of the empty configuration (i.e., $\bar{z}_{\emptyset}$) is increased. In the $\bar{z}$-polytope,  this enables to fully assign the small items from non-significant groups to the empty-configuration type, as these items are discarded from the 
configurations selected by 
$\bar{z}$. Using Lemma~\ref{lem:significantDiscard}, a slight increase in  the selection of the empty-configuration suffices for the $\bar{z}$-polytope to be non-empty. However, this can be done effectively only for small items. Hence, in addition to the significant groups, we take for the configurations of the constructed prototype also groups containing large items. Formally, Define the {\em massive} groups as all groups containing at least one large item: 

\begin{equation}
 	\label{Agroups}
 	\cA = \{G \in \cG~|~ G \cap L \neq \emptyset\}.
 \end{equation} Recall that $\cI$ is an $\eps$-structured instance; thus, $K(\eps) = \eps^{-\eps^{-2}}$ is an upper bound on $|\cA|$. We define the {\em important groups} as $\cG_{\eta} \cup \cA$ containing all significant and massive groups. The construction of classes relies on the fractional grouping technique, which finds a partition of an important group into a small number of sets, each of roughly the same frequency according to $\bar{y}$. Items of the same set can be treated as having the same size, as in the classic linear shifting technique \cite{fernandez1981bin}. 
 
  We use fractional grouping separately for each 
  important group. Consider an important group $G \in \mathcal{G}_{\eta} \cup \cA$.  For the construction of  the small classes of $G$, we use the following inductive definition.\footnote{Recall that for any two items $\ell,j \in I$ such that $s(\ell) > s(j)$ it holds that $\ell < j$.} Let $\min (G)$ ($\max (G)$) denote the minimal (maximal) index of an item in $G$. Define $q_0 = 1+\max (G)$. For $i \geq 1$, let $\upsilon = 3\eps^{-2}$, and 
  
  \begin{equation}
	\label{eq:q}
		q_i = \max \biggl\{  \ell \in G ~\bigg|~ f_{\bar{y}} \bigl(\{j \in G ~|~ 
	\ell	  \leq j < q_{i-1} \}\bigr)~ \geq ~\eps^{\upsilon} \cdot \|\bar{y}\| \biggr\}.
\end{equation} 

If the maximum in \eqref{eq:q} is defined over a empty set then  $q_i =  \min (G)$,
and we set $\tau(G) = i$ to be the number of small classes of $G$.  In words, $q_i$ is the maximal index of an item in $G$ such that the frequency of the set of items with indices 
in the range $\{ q_i, \ldots , q_{i-1}-1 \}$
is at least $\eps^{\upsilon} \cdot \|\bar{y}\|$. Now, for any $i \in [\tau(G)]$, define $\Delta_i(G) = \{\ell \in G ~|~ q_i  \leq \ell <  q_{i-1} \}$ as the $i$-th {\em class} of $G$, which contains all items in $G$ of indices smaller than $q_{i-1}$ and at lease $q_i$. 
 
\begin{observation}
	\label{ob:FG}
	Given $G \in \cG_{\eta} \cup \cA$, the classes $\Delta_1(G), \ldots, \Delta_{\tau(G)}(G)$  of $G$ satisfy the following. 
	
\begin{enumerate}
	\item $\tau(G)\leq \eps^{-\upsilon-10}$
	
	\item For all $i \in [\tau-1]$ and $\ell \in \Delta_i(G)$ it holds that $\Delta_{i+1}(G) \subseteq \textnormal{\textsf{fit}}\left( \ell\right)$.
	
	\item For all $i \in [\tau]$ it holds that $f_{\bar{y}}(\Delta_i(G)) \leq \eps^{\upsilon} \cdot \|\bar{y}\|+2$.
	
	\item For all $i \in [\tau-1]$ it holds that $f_{\bar{y}}(\Delta_i(G)) \geq \eps^{\upsilon} \cdot \|\bar{y}\|$.
\end{enumerate}
\end{observation}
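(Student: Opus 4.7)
The plan is to establish each of the four properties directly from the inductive definition \eqref{eq:q} of $q_i$ together with the hypotheses of Lemma~\ref{lem:Cnf}. Properties 4 and 3 follow from the maximality characterizing $q_i$, property 1 is a short counting argument, and property 2 reduces to the nestedness of the index intervals $[q_i, q_{i-1})$.

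First I would prove property 4. For each $i \in [\tau(G) - 1]$, since $q_i$ is the \emph{maximum} $\ell \in G$ satisfying the threshold in \eqref{eq:q}, plugging $\ell = q_i$ into the inequality gives $f_{\bar{y}}(\Delta_i(G)) = f_{\bar{y}}(\{j \in G : q_i \leq j < q_{i-1}\}) \geq \eps^\upsilon \|\bar{y}\|$. For property 3 I would exploit maximality in the opposite direction: letting $q_i^+ = \min\{j \in G : j > q_i\}$ (when this exists in $[q_i+1, q_{i-1})$), the index $q_i^+$ fails the threshold, so $f_{\bar{y}}(\{j \in G : q_i^+ \leq j < q_{i-1}\}) < \eps^\upsilon \|\bar{y}\|$. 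Since $\Delta_i(G)$ equals this set together with the single item $q_i$, and condition $(iii)$ of Lemma~\ref{lem:Cnf} gives $f_{\bar{y}}(q_i) = \sum_{C \in \cC[q_i]} \bar{y}_C \leq 2$, I obtain $f_{\bar{y}}(\Delta_i(G)) \leq \eps^\upsilon \|\bar{y}\| + 2$. The final class $i = \tau(G)$ is handled by the empty-set branch of the definition: failure of the threshold at $\ell = \min(G)$ gives $f_{\bar{y}}(\Delta_\tau(G)) < \eps^\upsilon \|\bar{y}\|$ directly.

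For property 1 I would sum property 4 over $i \in [\tau(G) - 1]$ to obtain $(\tau(G) - 1)\cdot \eps^\upsilon \|\bar{y}\| \leq f_{\bar{y}}(G)$, using that the classes partition $G$. The key step is bounding $f_{\bar{y}}(G)$ via $\|\bar{y}\|$: swapping the order of summation, $f_{\bar{y}}(G) = \sum_{\ell \in G} \sum_{C \in \cC[\ell]} \bar{y}_C = \sum_{C \in \cC} \bar{y}_C \cdot |C \cap G| \leq \eps^{-10} \|\bar{y}\|$, where the last inequality applies condition $(i)$ that $|C| \leq \eps^{-10}$ for every $C \in \textnormal{\textsf{supp}}(\bar{y})$. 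This yields $\tau(G) \leq 1 + \eps^{-\upsilon - 10} \leq \eps^{-\upsilon - 10}$ for $\eps$ small enough (noting $\upsilon = 3\eps^{-2}$). Property 2 then follows from the observation that the intervals $[q_{i+1}, q_i)$ and $[q_i, q_{i-1})$ are disjoint and consecutive: any $m \in \Delta_{i+1}(G)$ and $\ell \in \Delta_i(G)$ satisfy $m < q_i \leq \ell$, and the global non-increasing size order on $I$ together with $\textsf{group}(m) = \textsf{group}(\ell) = G$ places them in the correct size relation to satisfy the fit constraint~\eqref{fitl}.

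The main obstacle I anticipate is the counting for property 1: bounding the number of classes depends on tying the group-wide frequency $f_{\bar{y}}(G)$ back to the prototype norm $\|\bar{y}\|$, which is only possible because condition $(i)$ caps the number of items per configuration in $\textnormal{\textsf{supp}}(\bar{y})$. This cap is precisely what algorithm \textsf{Evict} was designed to enforce in the previous step; without it, $\tau(G)$ could blow up and the resulting prototype would fail to be good in the sense of Definition~\ref{def:goodprototype}. The remaining properties are essentially bookkeeping around the definition of $q_i$.
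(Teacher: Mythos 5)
Your arguments for properties 3 and 4 are correct and are exactly what the paper intends (its only recorded justification is the one-line remark that property 3 uses $f_{\bar{y}}(\ell)\leq 2$), and your counting idea for property 1 is the natural one. However, your justification of property 2 does not work as written. From $m\in\Delta_{i+1}(G)$ and $\ell\in\Delta_i(G)$ you correctly get $m< q_i\leq \ell$, but since the items are sorted in \emph{non-increasing} order of size, a smaller index means a \emph{larger} item: $m<\ell$ gives $s(m)\geq s(\ell)$, whereas $m\in\textsf{fit}(\ell)$ requires $s(m)\leq s(\ell)$ by \eqref{fitl}. So the index ordering places the two classes in the \emph{opposite} size relation to the one you assert, and your step fails except in the degenerate case where all sizes in the two classes coincide. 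What is really going on is that \eqref{eq:q} with $q_0=1+\max(G)$ builds the classes as suffixes of the index range, i.e.\ $\Delta_1(G)$ consists of the smallest items and $\Delta_{i+1}(G)$ of items at least as large as those of $\Delta_i(G)$; with that literal orientation, property 2 (and the later use ``$\Delta_{i+1}(G)\subseteq\textsf{fit}(\min\Delta_i(G))$'' in \textsf{Shift}) only makes sense after re-orienting the construction — building the classes from the largest items downward, or equivalently swapping the roles of $i$ and $i+1$. A correct proof must either perform this re-orientation explicitly or flag the mismatch; simply asserting that the ``correct size relation'' holds, as you do, is the step that would fail.

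A secondary, smaller issue is in property 1: summing property 4 over $i\in[\tau(G)-1]$ and using $f_{\bar{y}}(G)\leq \eps^{-10}\|\bar{y}\|$ yields $\tau(G)\leq \eps^{-\upsilon-10}+1$, and your attempt to close the gap via ``$1+\eps^{-\upsilon-10}\leq\eps^{-\upsilon-10}$ for $\eps$ small enough'' is false for every $\eps$. Either state the bound with the additive $1$ (which is harmless in all downstream uses, where the slack is enormous) or give a separate argument for absorbing it; as written, that final inequality is simply incorrect.
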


Observation~\ref{ob:FG} follows from \eqref{eq:q}. Specifically, the third propertyholds since  $f_{\bar{y}}(\ell) \leq 2$ for all $\ell \in I$, by the properties of $\bar{y}$. By Observation~\ref{ob:FG}, for any $G \in \cG_{\eta} \cup \cA$ and $ i \in [\tau(G)-1]$, each item in $\Delta_{i+1}(G)$ fits in place of a slot $\ell \in \Delta_i(G)$. This is useful for constructing a prototype $\bar{z}$ such that the $\bar{z}$-polytope contains a point in which items in $\Delta_{i+1}(G)$ are assigned to at most $\eps^{-10}$ representatives from $\Delta_i(G)$. Thus, given at most $\eps^{-10}$ representatives $\Delta'_i(G) \subseteq \Delta_i(G)$, we can define $\bar{z}$ such that $f_{\bar{z}} \left( \Delta'_i(G) \right) \approx f_{\bar{y}} \left(\Delta_{i+1}(G)\right)$ and the size of the support of $\bar{z}$ depends solely on a function of $\eps$. Finally, define the set of classes as all the classes of important groups; that is,

\begin{equation}
	\label{eq:classS}
	\mathcal{Q} =  \bigcup_{G \in \cG_{\eta}\cup \cA} \{\Delta_{1}(G), \ldots, \Delta_{\tau(G)}(G)\}. 
\end{equation} Observe that $\eps^{-\upsilon-10}\cdot \left( \eta+K(\eps) \right)$ is an upper bound on $|\mathcal{Q}|$: there are $\eta$ significant groups and at most $K(\eps)$ massive groups because $\cI$ is $\eps$-structured; in addition, each group has at most $\eps^{-\upsilon-10}$ classes, by Observation~\ref{ob:FG}. Note that the classes in $\mathcal{Q}$ do not intersect; moreover, except for the small items from non-important groups which are absent in these classes, each item belongs to exactly one class in $\mathcal{Q}$.

We now define a mapping from configurations to representatives from a subset of the classes. Observe that for groups $G \in \cG$ with $k(G) >1$, a configuration may contain more than  one item in $G$. Hence, given a configuration $C \in \textsf{supp}(\bar{y})$ and a class $\Phi \in \mathcal{Q}$, items in $C \cap\Phi$ are mapped to the subset of  $|C \cap \Phi|$  items in $\Phi$ of minimal size; that is, the number of items which belong to $\Phi$ in $C$ does not change by the mapping. We construct below a mapping for all items from important groups.

Given a subset of items $S \subseteq I$, sorted in increasing order by item indices, let 
$\textsf{last}_k(S)$ be the set of the last $k$ items in $S$, where $k \in \{0,1, \ldots , |S|\}$. For example, given $S = \{9, 13, 88, 103, 1093\}$, we have that $\textsf{last}_3(S) = \{88, 103, 1093\}$. Given $\Phi \in \mathcal{Q}$ and $C \in \textsf{supp}(\bar{y})$, the items in $C \cap \Phi$ are mapped to $\textsf{last}_{|C \cap \Phi|}(\Phi)$, the last $|C \cap \Phi|$ items in $\Phi$; thus, the number of items in $C \cap \Phi$ does not change.  Finally, given $C \in \textsf{supp}(\bar{y})$, the {\em projection} of $C$ is: 

\begin{equation}
		\label{eq:mapS,eq:mapM}
	P(C) = \bigcup_{\Phi\in \mathcal{Q}} \textsf{last}_{|C \cap\Phi|}(\Phi). 
\end{equation} By the next lemma, the prototype induced by replacing each configuration $C \in \textsf{supp}(\bar{y})$ by the projection of $C$ has a support of a small size.

 \begin{lemma}
	
	\label{lem:P(C)}
	For all $C \in \textnormal{\textsf{supp}}(\bar{y})$ the following hold for the projection of $C$.
	\begin{enumerate}

		\item   $s\left(P(C)\right) \leq s(C)$.
		
		\item $\textnormal{\textsf{fit}}(C)  \subseteq \textnormal{\textsf{fit}}(P(C))$. 
		
		\item $|P(C)| \leq \eps^{-10}$.
		
		\item $|\{P(C')~|~C' \in \textnormal{\textsf{supp}}(\bar{y})\}| \leq Q(\eps)-3K(\eps)$. 
		
		\item For all $G \in \cG$ it holds that $|G \cap P(C)| \leq |G \cap C|$. 
	\end{enumerate}
\end{lemma}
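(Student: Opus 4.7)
The plan is to treat each of the five assertions in turn, exploiting the single key observation that $\textnormal{\textsf{last}}_{|C \cap \Phi|}(\Phi)$ consists of exactly the $|C \cap \Phi|$ smallest items of $\Phi$ (recall the convention that items are indexed in non-increasing order of size), and that the classes in $\mathcal{Q}$ are pairwise disjoint. Properties 1, 2, 3, and 5 follow quickly from this; the main obstacle is Property 4, where the bound $Q(\eps) - 3K(\eps)$ forces a careful enumeration argument.

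For Property 1, since items of $C$ belonging to non-important groups are simply absent from $P(C)$, I would decompose $s(P(C)) = \sum_{\Phi \in \mathcal{Q}} s(\textnormal{\textsf{last}}_{|C \cap \Phi|}(\Phi))$. Because $\textnormal{\textsf{last}}_{|C \cap \Phi|}(\Phi)$ is the smallest $|C \cap \Phi|$-subset of $\Phi$, its size is at most $s(C \cap \Phi)$, and summing over the disjoint classes gives $s(P(C)) \leq s(C)$. Property 2 is immediate from Property 1 and the monotone dependence of $\textnormal{\textsf{fit}}(\cdot)$ on the residual capacity. Property 3 reduces to $|P(C)| = \sum_{\Phi \in \mathcal{Q}} |C \cap \Phi| \leq |C| \leq \eps^{-10}$, using the hypothesis on $\textnormal{\textsf{supp}}(\by)$ in Lemma~\ref{lem:Cnf}. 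For Property 5, if $G \notin \cG_{\eta} \cup \cA$ then no class in $\mathcal{Q}$ meets $G$, so $G \cap P(C) = \emptyset$; otherwise $\Delta_1(G),\ldots,\Delta_{\tau(G)}(G)$ partition $G$ and each lies in $\mathcal{Q}$, yielding $|G \cap P(C)| = \sum_{i} |C \cap \Delta_i(G)| = |G \cap C|$.

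The hard part is Property 4. The crucial observation is that $P(C)$ is determined solely by the tuple $(|C \cap \Phi|)_{\Phi \in \mathcal{Q}}$ of non-negative integers, which by Property 3 satisfies $\sum_{\Phi} |C \cap \Phi| \leq \eps^{-10}$. The number of such tuples is at most $\binom{|\mathcal{Q}| + \eps^{-10}}{\eps^{-10}} \leq (|\mathcal{Q}| + \eps^{-10})^{\eps^{-10}}$. To bound $|\mathcal{Q}|$, I would combine Observation~\ref{ob:FG}(1), the bound $|\cA| \leq K(\eps)$ from $\eps$-structuring, and $\eta \leq \eps^{-12}$, to get $|\mathcal{Q}| \leq \eps^{-\upsilon-10}(\eta + K(\eps))$ with $\upsilon = 3\eps^{-2}$. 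Taking logarithms, the total count is at most $\exp\bigl(\eps^{-10}\ln(|\mathcal{Q}| + \eps^{-10})\bigr)$, and since $\ln(|\mathcal{Q}|+\eps^{-10}) = O(\eps^{-2}\ln \eps^{-1}) \ll \eps^{-7}$ for $\eps \in (0,0.1)$, this is at most $\exp(\eps^{-17}) - 3K(\eps) = Q(\eps) - 3K(\eps)$, as required.
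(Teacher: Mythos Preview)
Your proposal is correct and follows essentially the same route as the paper: Properties~1, 2, 3, and 5 are handled identically (the paper's proof of Property~5 only claims the inequality $|G\cap P(C)|\le |G\cap C|$, but your observation that equality holds for important groups is correct and harmless). For Property~4 there is a small methodological difference worth noting: the paper bounds the number of projections by arguing that each of the at most $\eps^{-10}$ items of $P(C')$ has at most $\eps^{-10}\,|\mathcal{Q}|$ possible identities, yielding the cruder bound $(\eps^{-10}|\mathcal{Q}|)^{\eps^{-10}}$, whereas you observe directly that $P(C)$ is determined by the cardinality tuple $(|C\cap\Phi|)_{\Phi\in\mathcal{Q}}$ and count such tuples via stars-and-bars. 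Your encoding is tighter and conceptually cleaner, but both arguments bound $|\mathcal{Q}|$ the same way (via Observation~\ref{ob:FG}, $\eta\le\eps^{-12}$, and $|\cA|\le K(\eps)$) and both land comfortably below $\exp(\eps^{-14})$, after which the comparison with $Q(\eps)-3K(\eps)$ is the same routine estimate.
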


Properties 1-3 of Lemma~\ref{lem:P(C)} hold since for any $C \in \textsf{supp}(\bar{y})$ and $\Phi \in \mathcal{Q}$, the items in $C \cap \Phi$ are mapped by $P$ to $|C \cap \Phi|$ items of smaller or equal size from the same group. The fourth property holds since the number of classes is bounded, and $|C \cap \Phi| \leq \eps^{-10}$. The last bound follows 
by noting that the items of each class are mapped to the same number of items, whereas items from non-important groups are discarded from the configuration. 

 The projection function is utilized to construct a good prototype, where the fractional selection of a configuration $C \in \textsf{supp}(\bar{y})$ is added to the corresponding entry of the projection $P(C)$. To guarantee that the returned prototype $\bar{z}$ has non-empty $\bar{z}$-polytope, the selection of each configuration is slightly increased. In addition, the selection of the empty-configuration and of a small number of configurations of one slot is further augmented.

\begin{algorithm}[htb]
	\caption{$\textsf{Shift} (\bar{y}$)}
	\label{Alg:shifting}

Construct the classes $\mathcal{Q}$.\label{step:classesQ}

	Define $\bar{u} =  \left(1+\frac{2\eps^{-\upsilon}}{\|\bar{y}\|}\right)\left(  \sum_{C \in \textsf{supp}(\bar{y})} \bar{y}_C \cdot \mathbbm{1}^{P(C)}  \right)$.\label{step:u}

	return $\bar{z} = \bar{u}+\left( 4 \eps \cdot \|\bar{y}\| +\eps^{-3} \right)\cdot \mathbbm{1}^{\emptyset} +\sum_{G \in \cG_{\eta} \cup \cA} \left(  \eps^{\upsilon} \cdot \|\bar{y}\| + 2 \right)\cdot \mathbbm{1}^{r(G)}$.\label{step:shiftreturn}
\end{algorithm}
 
 Specifically, for each $G \in \cG$, let $r(G) = \{ \min G \}$ be the {\em maximal slot} of $G$; this is a configuration containing the item of maximal size (and minimal index) in $G$. For each $C \in \cC$ define the {\em indicator} of $C$, a vector $\mathbbm{1}^C \in [0,1]^{\cC}$ such that $\mathbbm{1}^C_C = 1$, and for all $C' \in \cC \setminus \{C\}$ define $\mathbbm{1}^C_{C'} = 0$. Algorithm \textsf{Shift} computes the prototype $\bar{z}$ in two steps. First, we construct a prototype $\bar{u}$ as  a linear combination of the indicators of all configurations $C \in \cC$, where $C$ contributes $ \left(1+\frac{2\eps^{-\upsilon}}{\|\bar{y}\|}\right) \cdot \bar{y}_C$ to $\bar{u}_{P(C)}$. Then, given $\bar{u}$, we construct $\bar{z}$ by adding $4\eps \|\bar{y}\|+\eps^{-3}$ to $\bar{u}_{\emptyset}$ and adding $ \eps^{\upsilon} \cdot \|\bar{y}\| + 2$ to $r(G)$ for each $G \in \cG_{\eta} \cup \cA$. It can be easily shown that $\|\bar{z}\| \leq (1+5\eps)\|\bar{y}\|+Q(\eps)$. The pseudocode of algorithm \textsf{Shift} is given in Algorithm~\ref{Alg:shifting}. The proof of Lemma~\ref{lem:Cnf} is given in Appendix~\ref{app:omittedShifting}.

 To show that the $\bar{z}$-polytope is non-empty, we construct a point $\bar{\psi} \in \mathbb{R}^{I \times (I \cup \mathcal{C})}_{\geq 0}$ which satisfies most of the constraints of the $\bar{u}$-polytope. Then, based on $\bar{\psi}$, we construct another point $\bar{\lambda} \in \mathbb{R}^{I \times (I \cup \mathcal{C})}_{\geq 0}$ and show that $\bar{\lambda}$ is in the $\bar{z}$-polytope. For all $C \in \cC$, let $P^{-1}(C) = \{C' \in \cC~|~ P(C') = C\}$ be all configurations mapped by the projection to $C$. The construction of $\bar{\psi}$ relies on a point $\bar{\gamma} \in \mathbb{R}^{I \times (I \cup \mathcal{C})}_{\geq 0}$ in the $\bar{y}$-polytope, where for all $\ell, j \in I$ such that $\ell \neq j$ it holds that $\bar{\gamma}_{\ell,j} = 0$.

 The point $\bar{\psi}$ is defined as follows. For all $\ell \in I$ and $C \in \cC$ define $\bar{\psi}_{\ell,C} = \sum_{C' \in P^{-1}(C)} \bar{\gamma}_{\ell,C'}$. That is, $\ell$ is assigned to $C$ as the fractional assignment of $\ell$ by $\bar{\gamma}$ to configurations projected to $C$.  Additionally, for any $G \in \cG_{\eta} \cup \cA$, $i \in [\tau(G)-1]$, $\ell \in \Delta_{i+1}(G)$, and $j \in \Delta_{i}(G)$ define $$\bar{\psi}_{\ell,j} =   \frac{f_{\bar{u}}(j)}{f_{\bar{u}}(\Delta_{i}(G))} \cdot \bar{\gamma}_{\ell,\ell}.$$ This is where the {\em shifting} comes into play: all items from the important group $G$ in class $\Delta_{i+1}(G)$ are assigned to at most $\eps^{-10}$ slots in the consecutive class. By \eqref{eq:mapS,eq:mapM}, Observation~\ref{ob:FG}, and Step~\ref{step:u} of Algorithm~\ref{Alg:shifting}, it follows that $f_{\bar{u}} \left(\Delta_{i}(G)\right) \approx f_{\bar{y}} \left(\Delta_{i+1}(G)\right)$ and $\Delta_{i+1}(G) \subseteq \textsf{fit}(\min \Delta_{i}(G))$. Hence, most of the constraints in the $\bar{u}$-polytope of items in $\Delta_{i+1}(G)$ are satisfied. For any other entry $(\ell',t') \in I \times (I \cup \cC)$ not defined above, define $\bar{\psi}_{\ell',t'} = 0$. 
 
	\begin{claim}
	\label{ob:psi}
	$\bar{\psi}$ satisfies constraints \eqref{F1}, \eqref{F2}, \eqref{F5}, \eqref{F3} of the $\bar{u}$-polytope. 
\end{claim}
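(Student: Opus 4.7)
The plan is to verify the four $\bar{u}$-polytope constraints separately. The nonzero entries of $\bar{\psi}$ split into two families: the configuration-type part $\bar{\psi}_{\ell,C} = \sum_{C' \in P^{-1}(C)} \bar{\gamma}_{\ell,C'}$ is the pushforward of $\bar{\gamma}$ along the projection $P$, while the slot-type part arises only for $\ell \in \Delta_{i+1}(G)$ and $j \in \Delta_i(G)$ with $G \in \cG_\eta \cup \cA$ and $i < \tau(G)$, encoding the \emph{shift} of mass from an item to representatives in the preceding class. The main tools will be Lemma~\ref{lem:P(C)}(1,2,5), Observation~\ref{ob:FG}, and the definition of $\bar{u}$ in Step~\ref{step:u} of Algorithm~\ref{Alg:shifting}. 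The hard part will be \eqref{F3}, which requires comparing the frequencies $f_{\bar{u}}$ and $f_{\bar{y}}$ on each class.

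For \eqref{F1}, if $t = C \in \cC$ then Lemma~\ref{lem:P(C)}(2) applied to each $C' \in P^{-1}(C)$ gives $\textnormal{\textsf{fit}}(C') \subseteq \textnormal{\textsf{fit}}(P(C')) = \textnormal{\textsf{fit}}(C)$, so $\ell \notin \textnormal{\textsf{fit}}(C)$ forces $\bar{\gamma}_{\ell,C'} = 0$ via the corresponding constraint on $\bar{\gamma}$. If $t = j \in I$ then Observation~\ref{ob:FG}(2) gives $\Delta_{i+1}(G) \subseteq \textnormal{\textsf{fit}}(j)$ whenever $j \in \Delta_i(G)$, matching the only case where $\bar{\psi}_{\ell,j}$ is nonzero. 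Constraints \eqref{F2} and \eqref{F5} are proved in parallel: expand $\bar{\psi}_{\ell,C}$ as a sum over $C' \in P^{-1}(C)$, apply the corresponding $\bar{y}$-polytope constraint to $\bar{\gamma}(\cdot,C')$, and then monotonize using $s(C') \geq s(P(C')) = s(C)$ from Lemma~\ref{lem:P(C)}(1) for \eqref{F2} and $|G \cap C'| \geq |G \cap C|$ from Lemma~\ref{lem:P(C)}(5) for \eqref{F5}. The remaining factor $\sum_{C' \in P^{-1}(C)} \bar{y}_{C'}$ is at most $\bar{u}_C$, since Step~\ref{step:u} defines $\bar{u}_C$ as this sum scaled by a factor $\geq 1$.

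For \eqref{F3}, fix a slot-type $j$; the inequality is trivial unless $j \in \Delta_i(G)$ for some important $G$ with $i < \tau(G)$. In that case I would write
\[
\sum_{\ell \in I} \bar{\psi}_{\ell,j} = \frac{f_{\bar{u}}(j)}{f_{\bar{u}}(\Delta_i(G))} \sum_{\ell \in \Delta_{i+1}(G)} \bar{\gamma}_{\ell,\ell} \leq \frac{f_{\bar{u}}(j)}{f_{\bar{u}}(\Delta_i(G))} \cdot f_{\bar{y}}(\Delta_{i+1}(G)),
\]
where the bound $\bar{\gamma}_{\ell,\ell} \leq f_{\bar{y}}(\ell)$ follows by applying \eqref{F3} to $\bar{\gamma}$ at index $\ell$ together with the hypothesis $\bar{\gamma}_{\ell',\ell} = 0$ for $\ell' \neq \ell$. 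The core identity to establish is
\[
f_{\bar{u}}(\Delta_i(G)) = \Bigl(1 + \tfrac{2\eps^{-\upsilon}}{\|\bar{y}\|}\Bigr) f_{\bar{y}}(\Delta_i(G)),
\]
which I derive by expanding $f_{\bar{u}}(\Delta_i(G)) = \sum_{C} \bar{u}_C \cdot |C \cap \Delta_i(G)|$, substituting Step~\ref{step:u}, and using the class-preservation $|P(C') \cap \Delta_i(G)| = |C' \cap \Delta_i(G)|$, which is immediate from the definition of $P$ since $\Delta_i(G)$ is itself a class in $\mathcal{Q}$. Combined with the frequency bounds from Observation~\ref{ob:FG}(3,4) --- $f_{\bar{y}}(\Delta_{i+1}(G)) \leq \eps^{\upsilon}\|\bar{y}\|+2$ and $f_{\bar{y}}(\Delta_i(G)) \geq \eps^{\upsilon}\|\bar{y}\|$ --- this yields $(1 + 2\eps^{-\upsilon}/\|\bar{y}\|)\, f_{\bar{y}}(\Delta_i(G)) \geq f_{\bar{y}}(\Delta_i(G)) + 2 \geq f_{\bar{y}}(\Delta_{i+1}(G))$, shrinking the right-hand side above to at most $f_{\bar{u}}(j) = \sum_{C \in \cC[j]} \bar{u}_C$, which is exactly \eqref{F3}.
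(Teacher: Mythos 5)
Your proof is correct and follows essentially the same route as the paper: constraint \eqref{F1} via Lemma~\ref{lem:P(C)}(2) and Observation~\ref{ob:FG}(2); constraints \eqref{F2}/\eqref{F5} by pushing the $\bar{y}$-polytope bounds through the projection using Lemma~\ref{lem:P(C)}(1)/(5) and the scaling in Step~\ref{step:u}; and constraint \eqref{F3} by the frequency comparison $f_{\bar{u}}(\Delta_i(G))\geq(1+\tfrac{2\eps^{-\upsilon}}{\|\bar{y}\|})f_{\bar{y}}(\Delta_i(G))$ combined with Observation~\ref{ob:FG}(3,4), exactly as in Claims~\ref{aux:1} and~\ref{jS}. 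The only cosmetic difference is that you observe the frequency relation is an equality (which it is, since $\bar{u}$ is the exact pushforward of $\bar{y}$ under $P$ up to the scalar factor), whereas the paper records only the one-sided inequality it needs.
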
 

Constraint \eqref{F1} of the $\bar{u}$-polytope is satisfied for $\bar{\psi}$ by Observation~\ref{ob:FG} and Lemma~\ref{lem:P(C)}. Constraint \eqref{F2} is satisfied by Lemma~\ref{lem:P(C)}; and constraint \eqref{F3} follows from Observation~\ref{ob:FG}. As $\bar{\psi}$ may fail to satisfy constraint \eqref{F4} of the $\bar{u}$-polytope, we define $\bar{\mu} \in \mathbb{R}^{I}_{\geq 0}$ such that for all $\ell \in I$ $\bar{\mu}_{\ell} = \max \left\{0,1-\sum_{t \in I \cup \cC} \bar{\psi}_{\ell,t}\right\}$ is the fractional assignment that is missing for item $\ell$ to satisfy constraint \eqref{F4} for $\psi$.

	\begin{claim}
	\label{lem:mu}
	The following hold for $\bar{\mu}$.
	
	\begin{enumerate}
		\item $\textnormal{\textsf{supp}}(\bar{\mu}) \setminus L \subseteq \textnormal{\textsf{fit}}(\emptyset)$.
		
			\item For all  $\ell \in \textnormal{\textsf{supp}}(\bar{\mu})$ it holds that $\ell \in \textnormal{\textsf{fit}}\left(r(\textnormal{\textsf{group}}(\ell))\right)$.
		
		\item 	For all $G \in \cG$ it holds that $\sum_{\ell \in G} \bar{\mu}_{\ell} \leq \eps \cdot \|\bar{y}\|+2$.
		
		\item 	$\sum_{\ell \in I\setminus L} \bar{\mu}_{\ell}  \cdot s(\ell) \leq \eps\cdot\|\bar{y}\|$.
	\end{enumerate}
	
\end{claim}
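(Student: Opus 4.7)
The plan is to first establish a uniform pointwise bound $\bar{\mu}_\ell \leq \bar{\gamma}_{\ell,\ell}$ for every $\ell \in I$, with $\bar{\mu}_\ell = 0$ on all ``inner'' classes; all four parts then follow by arithmetic with this bound together with the frequency bounds of Lemma~\ref{lem:significantDiscard} and Observation~\ref{ob:FG}.

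To obtain the pointwise bound I would unwind $\sum_{t \in I \cup \cC} \bar{\psi}_{\ell,t}$ into a configuration part and a slot part. The configuration part telescopes via the projection, $\sum_{C \in \cC} \bar{\psi}_{\ell,C} = \sum_{C \in \cC} \sum_{C' \in P^{-1}(C)} \bar{\gamma}_{\ell,C'} = \sum_{C' \in \cC} \bar{\gamma}_{\ell,C'}$, since $\{P^{-1}(C)\}_{C \in \cC}$ partitions $\cC$. The slot part $\sum_{j \in I} \bar{\psi}_{\ell,j}$ is zero unless $\ell \in \Delta_{i+1}(G)$ for some $G \in \cG_\eta \cup \cA$ and $i \geq 1$, in which case the normalization in the definition of $\bar{\psi}$ gives $\sum_{j \in \Delta_i(G)} \bar{\psi}_{\ell,j} = \bar{\gamma}_{\ell,\ell}$ because $\sum_{j \in \Delta_i(G)} f_{\bar{u}}(j)/f_{\bar{u}}(\Delta_i(G)) = 1$. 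Combining these two computations with constraint~\eqref{F4} of the $\bar{y}$-polytope for $\bar{\gamma}$ (and the hypothesis that $\bar{\gamma}_{\ell,j}=0$ for $j \neq \ell$) yields $\bar{\mu}_\ell = 0$ on inner classes and $\bar{\mu}_\ell \leq 1 - \sum_{C'} \bar{\gamma}_{\ell,C'} \leq \bar{\gamma}_{\ell,\ell} \leq f_{\bar{y}}(\ell)$ otherwise, where the last inequality is constraint~\eqref{F3} applied to slot-type $\ell$.

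Parts 1 and 2 are now essentially definitional. For Part 1, any $\ell \in \textsf{supp}(\bar{\mu}) \setminus L$ has $s(\ell) \leq \eps^2$, so $\ell \in \textsf{fit}(\emptyset)$ by~\eqref{fitS} (since $s(\emptyset) = 0$). For Part 2, $r(\textsf{group}(\ell)) = \{\min \textsf{group}(\ell)\}$ is the singleton whose unique slot has maximal size in $\textsf{group}(\ell)$, so $\ell$ fits in place of that slot through the slot-fit~\eqref{fitl}. For Part 3 I would split on whether $G$ is important: if $G \notin \cG_\eta \cup \cA$ then $G \cap L = \emptyset$ (else $G$ is massive) and Lemma~\ref{lem:significantDiscard}(2) gives $f_{\bar{y}}(G) = f_{\bar{y}}(G \setminus L) \leq \eps \|\bar{y}\|$, so summing the pointwise bound gives $\sum_{\ell \in G} \bar{\mu}_\ell \leq \sum_{\ell \in G} f_{\bar{y}}(\ell) \leq \eps \|\bar{y}\|$; if $G$ is important then only items of $\Delta_1(G)$ contribute (by the inner-class vanishing), and Observation~\ref{ob:FG}(3) gives $f_{\bar{y}}(\Delta_1(G)) \leq \eps^{\upsilon} \|\bar{y}\| + 2 \leq \eps \|\bar{y}\| + 2$ since $\upsilon = 3 \eps^{-2} \geq 1$. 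Part 4 follows by multiplying the pointwise bound $\bar{\mu}_\ell \leq f_{\bar{y}}(\ell)$ by $s(\ell)$ and invoking Lemma~\ref{lem:significantDiscard}(1).

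The main obstacle is the telescoping identity $\sum_{j \in \Delta_i(G)} \bar{\psi}_{\ell,j} = \bar{\gamma}_{\ell,\ell}$ for $\ell \in \Delta_{i+1}(G)$: this is precisely the mechanism by which the loss of self-slot mass from the projection is compensated by the ``shift'' between consecutive classes, and it is what restricts the contribution in Part 3 to the first class $\Delta_1(G)$ rather than all of $G$. Once this cancellation and the pointwise bound $\bar{\mu}_\ell \leq f_{\bar{y}}(\ell)$ are in hand, the remaining work is bookkeeping and substitution.
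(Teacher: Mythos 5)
Your proposal is correct and takes essentially the same route as the paper's proof: your telescoping over $P^{-1}$ is the paper's Claim~\ref{aux:65}, the class-normalization identity forcing $\bar{\mu}_{\ell}=0$ outside $\Delta_1(G)$ is Claim~\ref{psi9S}, and the pointwise bound $\bar{\mu}_{\ell}\leq \bar{\gamma}_{\ell,\ell}\leq f_{\bar{y}}(\ell)$ is the content of Claim~\ref{aux:mugroup}. The four parts are then concluded exactly as in the paper (Claims~\ref{mu:fit}, \ref{mu:fit2}, \ref{mu:group:ups}, \ref{mu:group}, \ref{mu:size}), using Lemma~\ref{lem:significantDiscard} and Observation~\ref{ob:FG} in the same case split on important versus non-important groups.
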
 

The proof of Properties 3 and 4 in Claim~\ref{lem:mu} follows from Lemma~\ref{lem:significantDiscard} and Observation~\ref{ob:FG}, since $\textsf{supp}(\bar{\mu})$ contains only (i) small items from non-important groups, and (ii) items from the first class of each important group.

Now, we define another point $\bar{\rho} \in \mathbb{R}^{I \times (I \cup \mathcal{C})}_{\geq 0}$ based on $\bar{\mu}$. For all $\ell \in I \setminus L$ define $\bar{\rho}_{\ell,\emptyset} = \bar{\mu}_{\ell}$. Also, for all $G \in \cG$, $\ell \in L \cap G$ define $\bar{\rho}_{\ell, r(G)} = \bar{\mu}_{\ell}$; for any other $(\ell,t) \in I \times \left(I \cup \cC\right)$, define $\bar{\rho}_{\ell,t} = 0$. Finally, using the definitions of $\bar{\rho}$ and $\bar{\psi}$, let $\bar{\lambda} = \bar{\rho}+\bar{\psi}$. By the above, all items that are not fully assigned by $\bar{\psi}$ are assigned to the empty-configuration or to the configuration containing a single item that is largest in the group. The constraints of the $\bar{z}$-polytope are satisfied, since the empty-configuration and the maximal slots of important groups have increased selection.  
The next claim follows from Claims~\ref{ob:psi} and~\ref{lem:mu}. 

	\begin{claim}
	\label{lem:shifting1}
	$\bar{\lambda}$ is in the $\bar{z}$-polytope. 
\end{claim}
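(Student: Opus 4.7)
The plan is to verify each of the five defining constraints (F1)--(F5) of the $\bar{z}$-polytope at the point $\bar{\lambda} = \bar{\psi} + \bar{\rho}$, leveraging Claim~\ref{ob:psi} (which hands us (F1), (F2), (F5), (F3) for $\bar{\psi}$ with respect to the $\bar{u}$-polytope) and Claim~\ref{lem:mu} (which controls the residual mass vector $\bar{\mu}$ that $\bar{\rho}$ redistributes).

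First I would establish (F4). By the definition of $\bar{\mu}$ we have $\sum_t \bar{\psi}_{\ell,t} + \bar{\mu}_\ell \geq 1$ for every $\ell \in I$, and by construction each $\ell$ contributes the full mass $\bar{\mu}_\ell$ to exactly one entry of $\bar{\rho}$ (to $\bar{\rho}_{\ell,\emptyset}$ if $\ell \in I \setminus L$, and to $\bar{\rho}_{\ell, r(\textsf{group}(\ell))}$ if $\ell \in L$). Hence $\sum_t \bar{\lambda}_{\ell,t} \geq 1$. Constraint (F1) does not depend on the prototype, so it holds for $\bar{\psi}$ by Claim~\ref{ob:psi}, while the two types of nonzero entries in $\bar{\rho}$ are legal by Claim~\ref{lem:mu}(1) and~(2) respectively. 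For (F3) I would use that $\bar{\rho}_{\cdot, j} = 0$ for every slot-type $j \in I$ (since $\bar{\rho}$ only assigns mass to the configuration-types $\emptyset$ and $r(G)$), so $\sum_\ell \bar{\lambda}_{\ell,j} = \sum_\ell \bar{\psi}_{\ell,j} \leq \sum_{C \in \cC[j]} \bar{u}_C \leq \sum_{C \in \cC[j]} \bar{z}_C$, the last inequality since $\bar{z} \geq \bar{u}$ coordinate-wise by Step~\ref{step:shiftreturn} of Algorithm~\ref{Alg:shifting}.

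The substantive part is (F2) and (F5), which I would split by the configuration $C$. When $C \notin \{\emptyset\} \cup \{r(G) : G \in \cG_\eta \cup \cA\}$ we have $\bar{z}_C = \bar{u}_C$ and $\bar{\rho}_{\cdot,C} = 0$, so Claim~\ref{ob:psi} transfers directly. For $C = \emptyset$, Claim~\ref{lem:mu}(4) gives $\sum_{\ell} \bar{\rho}_{\ell,\emptyset} s(\ell) \leq \eps\|\bar{y}\|$ for (F2), and Claim~\ref{lem:mu}(3) gives $\sum_{\ell \in G} \bar{\rho}_{\ell,\emptyset} \leq \eps\|\bar{y}\|+2$ for (F5); both are absorbed by the extra $4\eps\|\bar{y}\| + \eps^{-3}$ mass that Step~\ref{step:shiftreturn} adds to $\bar{z}_\emptyset$. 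For $C = r(G)$ with $G \in \cG_\eta \cup \cA$, the only items of $G$ with $\bar{\mu}_\ell > 0$ lie in the first class $\Delta_1(G)$ (items in $\Delta_{i+1}(G)$ for $i \geq 1$ are fully absorbed by the shifting rule of $\bar{\psi}$), whose total $\bar{y}$-frequency is at most $\eps^\upsilon\|\bar{y}\|+2$ by Observation~\ref{ob:FG}(3); the fit condition from Claim~\ref{lem:mu}(2) controls the size contribution against the residual capacity $1 - s(r(G))$, and the extra $\eps^\upsilon\|\bar{y}\|+2$ mass that Step~\ref{step:shiftreturn} of Algorithm~\ref{Alg:shifting} places on $\mathbbm{1}^{r(G)}$ dominates both bounds.

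The main obstacle will be the careful bookkeeping at $C = r(G)$, where $|r(G) \cap G| = 1$ already consumes one unit of the group $G$'s cardinality budget in every copy, so the per-copy slack in (F5) for group $G$ is only $k(G) - 1$, which degenerates when $k(G) = 1$. The resolution is that the augmented selection $\bar{z}_{r(G)} - \bar{u}_{r(G)} \geq \eps^\upsilon\|\bar{y}\|+2$ provides one additional copy of $r(G)$ for each unit of $\bar{\mu}_\ell$ needing to be placed, so each leftover item from $\Delta_1(G)$ is absorbed as the distinguished slot of a freshly paid-for copy rather than as an extra item inside an existing one, which together with Claim~\ref{lem:mu}(2) is exactly what makes both (F2) and (F5) balance at $r(G)$.
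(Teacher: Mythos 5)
Your decomposition into the five constraints, the verification of \eqref{F4} from the definition of $\bar{\mu}$, the use of Claim~\ref{ob:psi} for the $\bar{\psi}$-part, and the treatment of the empty configuration (charging Claim~\ref{lem:mu}(3)--(4) against the $4\eps\|\bar{y}\|+\eps^{-3}$ boost of $\bar{z}_{\emptyset}$) all match the paper. The gap is in how you route the leftover mass of the \emph{large} items. You place $\bar{\mu}_{\ell}$ for $\ell\in L\cap G$ on the \emph{configuration-type} column $r(G)\in\cC$ and then try to verify \eqref{F2} and \eqref{F5} there. In the paper's construction this mass sits on the \emph{slot-type} column $\min G$ (the unique item of $r(G)$): there it is checked against \eqref{F1} via the slot-fit \eqref{fitl} -- this is the actual content of Claim~\ref{lem:mu}(2), whose proof only uses that $\min G$ is the largest item of its group -- and against \eqref{F3}, whose right-hand side $\sum_{C\in\cC[\min G]}\bar{z}_C$ is exactly where the extra $\eps^{\upsilon}\|\bar{y}\|+2$ copies of $r(G)$ added in Step~\ref{step:shiftreturn} enter (together with the sharpened bound $\sum_{\ell\in G}\bar{\mu}_{\ell}\le \eps^{\upsilon}\|\bar{y}\|+2$ for important groups, Claim~\ref{mu:group:ups}, which comes from Observation~\ref{ob:FG}(3) as you noted); and it contributes nothing at all to \eqref{F2} or \eqref{F5}. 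Under your routing the constraints genuinely fail: \eqref{F1} at a configuration column is governed by \eqref{fitS}, which requires $s(\ell)\le\eps^2$ and $s(\ell)\le\eps(1-s(r(G)))$, so large items are excluded there outright (your appeal to Claim~\ref{lem:mu}(2) implicitly misreads it as configuration-fit); \eqref{F2} charges the items' sizes against the residual capacity $1-s(\min G)$, which can be arbitrarily small or zero; and \eqref{F5} has right-hand side $\bar{z}_{r(G)}\cdot\bigl(k(G)-|r(G)\cap G|\bigr)=\bar{z}_{r(G)}\cdot(k(G)-1)$, which is identically zero when $k(G)=1$ no matter how much $\bar{z}_{r(G)}$ is boosted.

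You spotted the $k(G)=1$ degeneracy yourself, but the resolution you offer -- each leftover item of $\Delta_1(G)$ is ``absorbed as the distinguished slot of a freshly paid-for copy of $r(G)$'' -- is not expressible at a configuration column: an entry $\bar{\lambda}_{\ell,C}>0$ always means $\ell$ is an \emph{additional} item packed into the unused capacity of copies of $C$, subject to \eqref{F2} and \eqref{F5}. ``Becoming the slot of a new copy'' is encoded in the $\bar{z}$-polytope precisely by assigning $\ell$ to the slot-type $\min G$, which is what the paper's $\bar{\rho}$ does; the boosted $\bar{z}_{r(G)}$ then pays for it through \eqref{F3}, not through \eqref{F2}/\eqref{F5}. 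So your intuition about why the construction works is right, but as written the proof verifies the wrong constraints for the large leftovers, and the step fails; redirecting that mass to the slot column (and invoking the important-group bound $\eps^{\upsilon}\|\bar{y}\|+2$ rather than the all-groups bound $\eps\|\bar{y}\|+2$, since $\eps^{\upsilon}\ll\eps$ is what the boost actually covers) is needed to close it.
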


\section{Algorithm $\textsf{Partition}$}
\label{sec:FromPolytope}

Algorithm \textsf{Partition} constructs an $\eps$-nice partition of small size based on the integrality properties of the polytope, as given in Lemma~\ref{O(1)}. Let $\bar{z}$ be 
 a good prototype (satisfying the conditions in Definition~\ref{def:goodprototype}).
Since the values of entries in the support of $\bar{z}$ are not necessarily integral, $\bar{z}$ may not satisfy the conditions of Lemma~\ref{O(1)}. Thus, we 
modify $\bar{z}$ to a prototype having integral entries.  Define the {\em integralization} of $\bar{z}$ as a prototype $\bar{z}^* \in \mathbb{R}^{\cC}_{\geq 0}$ such that, for all $C \in \cC$, $\bar{z}^*_C = \ceil{\bar{z}_C}$.

\begin{observation}
	\label{ob:y}
	The $\bar{z}^*$-polytope is non-empty, $\textnormal{\textsf{supp}} (\bar{z}^*) =\textnormal{\textsf{supp}} (\bar{z})$, and $\|\bar{z}^*\| \leq \|\bar{z}\|+|\textnormal{\textsf{supp}}(\bar{z})|$. 
\end{observation}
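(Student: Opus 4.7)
The proof of Observation~\ref{ob:y} should be short and follow immediately from the definition of integralization. I will address the three claims in turn.

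First, the support equality. Since $\bar{z} \in \mathbb{R}_{\geq 0}^{\cC}$, for every $C \in \cC$ we have $\bar{z}^*_C = \lceil \bar{z}_C \rceil > 0$ if and only if $\bar{z}_C > 0$, which gives $\textsf{supp}(\bar{z}^*) = \textsf{supp}(\bar{z})$.

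Second, the norm bound. For each $C \in \cC$, $\lceil \bar{z}_C \rceil \leq \bar{z}_C + \mathbbm{1}[\bar{z}_C > 0]$; summing over all configurations yields
\[
\|\bar{z}^*\| = \sum_{C \in \cC} \lceil \bar{z}_C \rceil \leq \sum_{C \in \cC} \bar{z}_C + |\textsf{supp}(\bar{z})| = \|\bar{z}\| + |\textsf{supp}(\bar{z})|.
\]

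Third, non-emptiness of the $\bar{z}^*$-polytope. Because $\bar{z}$ is a good prototype, the $\bar{z}$-polytope is non-empty; let $\bar{\gamma}$ be any point in it. I will show the same $\bar{\gamma}$ lies in the $\bar{z}^*$-polytope. Constraints \eqref{F1} and \eqref{F4} do not depend on the prototype so they hold automatically. For each of the remaining constraints \eqref{F2}, \eqref{F5}, \eqref{F3}, the right-hand side is linear in $\bar{z}$ with nonnegative coefficients (namely $1-s(C)$, $k(G)-|C\cap G|$, and $1$, respectively, all of which are $\geq 0$). Since $\bar{z}^*_C = \lceil \bar{z}_C \rceil \geq \bar{z}_C$ for every $C$, each right-hand side can only increase when $\bar{z}$ is replaced by $\bar{z}^*$, so $\bar{\gamma}$ continues to satisfy these inequalities. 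Therefore $\bar{\gamma}$ lies in the $\bar{z}^*$-polytope, which is consequently non-empty.

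There is no serious obstacle here; the observation is essentially an immediate monotonicity check, with the only thing to verify being that every prototype-dependent constraint in Definition~\ref{def:polytope} is monotone in $\bar{z}$. I would present the proof in the order above (support, norm, non-emptiness), each taking a line or two.
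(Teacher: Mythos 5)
Your proof is correct and is exactly the argument the paper implicitly relies on: the paper states this as an observation without a written proof, and your three steps (support preserved by rounding nonnegative entries, ceiling adds at most one per support element, and monotonicity of the prototype-dependent right-hand sides of \eqref{F2}, \eqref{F5}, \eqref{F3} since $1-s(C)\geq 0$ and $k(G)-|C\cap G|\geq 0$ for any configuration) are precisely the routine checks needed.
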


By Observation~\ref{ob:y}, it follows that $\bar{z}^*$ is a good prototype, and there is a vertex $\bar{\gamma}$ of the $\bar{z}^*$-polytope that satisfies constraint \eqref{F4} with equality. Let $F = \{\{\ell\}~|~ \ell \in I, \exists t \in I \cup \cC \text{ s.t. } \bar{\gamma}_{\ell,t} \in (0,1)\}$ be the set of  
items that are fractionally assigned to some type by $\bar{\gamma}$. Since $\bar{z}^*$ is a good prototype with integral entries, by Lemma~\ref{O(1)} $|F|$ is small. Thus, 
an $\eps$-nice partition is obtained by assigning the integral items in $\bar{\gamma}$ and then adding the {\em fractional} items,
with only a small increase in the size of the $\eps$-nice partition. 

We start by finding a packing for items assigned integrally by $\bar{\gamma}$ to slot-types via bipartite matching. Specifically, let $$V = \left\{\left(C,j,k\right) ~|~C\in \textsf{supp}(\bar{z}^*), j \in C, k \in \left[\bar{z}^*_C\right] \right\}$$ where each vertex in $V$ represents a slot within a configuration $C \in \textsf{supp}(\bar{z}^*)$, and an index for one of the $\bar{z}^*_C$ bins associated with $C$. Also, let $U = \{\ell \in I~|~ \exists j \in I \text{ s.t. } \bar{\gamma}_{\ell,j} = 1\}$ be all items assigned integrally to some slot-type by $\bar{\gamma}$. Now, define the {\em assignment graph} of $\bar{\gamma}$ as the bipartite graph $G = (U, V, E)$, where $E = \{\left(\ell, \left(C,j,k\right)\right) \in U \times V~|~ \bar{\gamma}_{\ell,j} = 1\}$. That is, there is an edge between any item $\ell \in U$ to $(C,j,k) \in V$ if $\ell$ is assigned integrally (i.e., completely) to $j$ by $\bar{\gamma}$. If the edge $(\ell,(C,j,k))$ is taken to the matching, we replace the slot $j$ by $\ell$ in the $k$-th bin associated with $C$. 

\begin{lemma}
	\label{lem:matchingU}
There is a matching in $G$ of size $|U|$. 
\end{lemma}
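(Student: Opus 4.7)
The plan is to invoke Hall's marriage theorem on the bipartite graph $G=(U,V,E)$ and show that every subset $U'\subseteq U$ satisfies $|N(U')|\geq |U'|$. The key observation that drives the argument is that constraint \eqref{F4} holds with equality at the vertex $\bar{\gamma}$, and the entries of $\bar{\gamma}$ lie in $[0,1]$; consequently, for each $\ell\in U$, if $\bar{\gamma}_{\ell,j}=1$ for some slot-type $j\in I$, then $\bar{\gamma}_{\ell,t}=0$ for every other type $t\in I\cup \cC$. Thus there is a well-defined map $\ell\mapsto j(\ell)\in I$ assigning every element of $U$ to its unique integrally-assigned slot-type.

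Next, I would fix an arbitrary $U'\subseteq U$ and let $J=\{j(\ell):\ell\in U'\}\subseteq I$ be the set of slot-types integrally receiving at least one element of $U'$. By the definition of $E$, the neighborhood $N(U')$ consists of exactly those triples $(C,j,k)\in V$ whose middle coordinate lies in $J$; hence
\[
|N(U')| \;=\; \sum_{j\in J}\sum_{\substack{C\in \textnormal{\textsf{supp}}(\bar{z}^*)\\ j\in C}} \bar{z}^*_{C} \;=\; \sum_{j\in J}\sum_{C\in \cC[j]} \bar{z}^*_{C}.
\]
Now I apply constraint \eqref{F3} of the $\bar{z}^*$-polytope at each $j\in J$, which yields
\[
\sum_{C\in \cC[j]} \bar{z}^*_{C} \;\geq\; \sum_{\ell'\in I} \bar{\gamma}_{\ell',j} \;\geq\; \sum_{\substack{\ell'\in U'\\ j(\ell')=j}} \bar{\gamma}_{\ell',j} \;=\; \bigl|\{\ell'\in U': j(\ell')=j\}\bigr|,
\]
using that $\bar{\gamma}_{\ell',j(\ell')}=1$ for each $\ell'\in U'$. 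Summing over $j\in J$ and using that the preimages $\{\ell'\in U':j(\ell')=j\}$ partition $U'$ as $j$ ranges over $J$, I obtain $|N(U')|\geq |U'|$, establishing Hall's condition.

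The proof then concludes by invoking Hall's theorem to get a matching saturating $U$, which has exactly $|U|$ edges. There is no real obstacle here, since the statement follows essentially mechanically from constraint \eqref{F3} together with the integrality of $\bar{\gamma}$ on the pairs $(\ell,j)$ with $\ell\in U$; the only subtlety worth being explicit about is verifying the uniqueness of $j(\ell)$, which depends on using the \emph{vertex} $\bar{\gamma}$ for which constraint \eqref{F4} holds with equality (so that a single assignment of mass $1$ saturates item $\ell$).
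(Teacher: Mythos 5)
Your proof is correct, and it takes a genuinely different route from the paper. You verify Hall's condition directly: for any $U'\subseteq U$, the neighborhood $N(U')$ is fully determined by the set $J$ of slots integrally receiving items of $U'$, and constraint~\eqref{F3} at each $j\in J$ gives $\sum_{C\in\cC[j]}\bar z^*_C \geq |\{\ell\in U': j(\ell)=j\}|$, which sums to $|N(U')|\geq|U'|$. The paper instead constructs an explicit fractional matching $\bar M_{(\ell,v)}=1/d(\ell)$, verifies feasibility on both sides (again via constraint~\eqref{F3}), computes its value to be $|U|$, and then appeals to integrality of the bipartite matching polytope. Both arguments hinge on the same two facts — that $\bar\gamma$ is integral on $U\times I$ because constraint~\eqref{F4} is tight, and that constraint~\eqref{F3} upper-bounds the mass entering each slot by $\sum_{C\in\cC[j]}\bar z^*_C$, which is exactly the number of copies of slot $j$ in $V$ — so they carry identical mathematical content. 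Your Hall's-theorem version is arguably a bit more transparent because it avoids introducing the auxiliary fractional matching object, while the paper's version is in the style that foreshadows the fractional-to-integral rounding philosophy used elsewhere in the paper. The one point you flagged as a subtlety (uniqueness of $j(\ell)$, which follows from the tightness of~\eqref{F4} plus nonnegativity) is indeed the place where the proof could silently break if the vertex were not chosen to saturate~\eqref{F4}, and you handled it correctly.
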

 The proof of Lemma~\ref{lem:matchingU} is based on finding a fractional matching in $G$, by taking $\frac{1}{d(\ell)}$ of an edge $(\ell,v) \in E$ to the matching, where $d(\ell)$ is the degree of $\ell$ in $G$. By constraint \eqref{F3} of the $\bar{z}^*$-polytope, this guarantees a feasible fractional matching of size $|U|$. Since $G$ is bipartite, there is also an integral matching of size $|U|$ in $G$ \cite{schrijver2003combinatorial}. Let $M$ be a matching in $G$ of size $|U|$.

We construct below a packing based on $M$. For all $C \in \textsf{supp}(\bar{z}^*)$ and $k \in [\bar{z}^*_C]$, define $$A_{C,k} = \{\ell \in U ~|~ \exists j \in I \text{ s.t. }  (\ell,(C,j,k)) \in M\}$$  as the {\em bin} of $C$ and $k$, which contains all items  coupled by the matching to the $k$-th bin associated with $C$.  The next lemma follows since $M$ is a matching in the assignment graph of $\bar{\gamma}$.

\begin{lemma}
	\label{lem:allowed}
For all $C \in \textnormal{\textsf{supp}} (\bar{z}^*)$ and $k \in [\bar{z}^*_C]$, $A_{C,k}$ is allowed in $C$. 
\end{lemma}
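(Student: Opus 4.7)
The plan is to directly exhibit the injective slot-assignment required by the definition of ``allowed''. Recall that $A_{C,k}$ consists of all items $\ell\in U$ for which there exists some slot $j\in I$ with $(\ell,(C,j,k))\in M$. Since $M$ is a matching and each triple $(C,j,k)$ is a single vertex of $V$, such a $j$ is uniquely determined by $\ell$ (for the fixed $C,k$); denote it $\phi(\ell)$. First I would observe that $\phi(\ell)\in C$ by construction of $V$, so $\phi$ is a map from $A_{C,k}$ into $C$, which is exactly the kind of slot-assignment the definition of ``allowed'' asks for.

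Next I would verify the two required properties. For the ``fit'' property, the edge set $E$ of the assignment graph requires $\bar\gamma_{\ell,\phi(\ell)}=1$, and in particular $\bar\gamma_{\ell,\phi(\ell)}>0$. Constraint~\eqref{F1} of the $\bar z^*$-polytope then forces $\ell\in\textsf{fit}(\phi(\ell))$, which is the per-item condition in the definition of ``allowed''. For injectivity, suppose $\ell_1\neq \ell_2$ in $A_{C,k}$ satisfy $\phi(\ell_1)=\phi(\ell_2)=j$. Then both edges $(\ell_1,(C,j,k))$ and $(\ell_2,(C,j,k))$ lie in $M$, meaning the vertex $(C,j,k)\in V$ is incident to two matching edges, contradicting that $M$ is a matching. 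Hence $\phi$ is injective.

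Combining these two points, $\phi$ is an injective map $A_{C,k}\to C$ with $\ell\in \textsf{fit}(\phi(\ell))$ for every $\ell\in A_{C,k}$, which is exactly the statement that $A_{C,k}$ is allowed in $C$. I do not foresee any real obstacle here: the lemma is essentially a bookkeeping consequence of (i)~how the edges of the assignment graph $G$ are defined from $\bar\gamma$, (ii)~the ``fit'' clause~\eqref{F1} of the polytope, and (iii)~the defining property of a matching. No additional estimates on $s(C)$ or group cardinalities are needed, since ``allowed'' only concerns the existence of the slot-assignment and not the capacity/cardinality bookkeeping (which is handled separately by the completion $D_C$ in the $\eps$-nice partition).
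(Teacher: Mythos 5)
Your proof is correct and follows essentially the same route as the paper's: both construct the injective slot-assignment $\phi(\ell)$ (the paper calls it $\sigma$) directly from the matching $M$, verify $\ell\in\textsf{fit}(\phi(\ell))$ via constraint~\eqref{F1} of the $\bar z^*$-polytope using $\bar\gamma_{\ell,\phi(\ell)}=1$, and deduce injectivity from the defining property of a matching. No substantive differences.
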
 

We now define a packing of all items.
Given two tuples $S,T$, let $S \oplus T$ denote  the concatenation of $S$ and $T$.  The {\em packing of} $\bar{\gamma}$ and $M$ is given by

\begin{equation}
\label{Astar}
A^* = \big(h~|~h \in F\big) \oplus \big(A_{C,k} ~|~ C \in \textsf{supp}(\bar{z}^*), k \in [\bar{z}^*_C]\big). 
\end{equation} In words, $A^*$ contains a bin for every fractional item, as well as all the bins associated with configurations in the support of $\bar{z}^*$. It follows that $A^*$ is a packing of $U$ and all fractional items. 

We construct an $\eps$-nice partition below whose packing is $A^*$. Let $\mathcal{H} = \textsf{supp}(\bar{z}^*) \cup F$. By Lemma~\ref{O(1)}, we get $|F| \leq \eps^{-21}|\textsf{supp}(\bar{z}^*)|^2$, and it follows that $|\mathcal{H}| \leq \eps^{-22}Q^2(\eps)$ . For all $C \in \mathcal{H}$ define the {\em category of} $C$ as \begin{equation}
	\label{BC}
	B_C = \left\{A_{C,k}~|~ k \in [\bar{z}^*_{C}] \right\}  \cup \big\{ h \in F ~|~ h = C \big\}
\end{equation} that contains all the bins associated with $C$, and possibly a singleton of a fractional item.\footnote{If $C$ is a singleton of this item.} By \eqref{Astar} and \eqref{BC}, $\{B_C\}_{C \in \mathcal{H}}$ is a partition of the bins in $A^*$. In addition, for all $C \in \mathcal{H}$ define  $$D_{C} = \{\ell \in I~|~\bar{\gamma}_{\ell,C} = 1\}$$ as all items assigned integrally to $C$ by $\bar{\gamma}$. Finally, define the {\em division} of $\bar{\gamma}$ and $M$, as $A^*$, $\mathcal{H}$, $(B_C)_{C \in \mathcal{H}}$, and $(D_C)_{C \in \mathcal{H}}$. Algorithm $\textsf{Partition}$ computes the above $\eps$-nice partition. We give the pseudocode in Algorithm~\ref{Alg:Assign}.

\begin{algorithm}[htb]
	\caption{$\textsf{Partition} (\bar{z}$)}
	\label{Alg:Assign}
	
	Generate $\bar{z}^*$, the integralization of $\bar{z}$
	
	Find a vertex $\bar{\gamma}$ of the $\bar{z}^*$-polytope satisfying \eqref{F4} with equality \label{assign:find_gamma}
	
	Construct the assignment graph $G$ of $\bar{\gamma}$
	
	Find a maximum matching $M$ in $G$
	
	Return the division of $\bar{\gamma}$ and $M$
	
\end{algorithm}

  \begin{claim}
	\label{eqeq6}
	For any $C \in \mathcal{H}$ and $G \in \cG$ the following hold.
	\begin{enumerate}
		\item $D_C \subseteq \textnormal{\textsf{fit}}(C)$. 
		\item $s(D_C) \leq \left(1-s(C)\right) \cdot |B_C|$. 
		\item  $|D_C \cap G| \leq |B_C| \cdot \left(     k(G) - |G \cap C|  \right)$.
	\end{enumerate} 
\end{claim}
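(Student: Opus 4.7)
The plan is to deduce each of the three inequalities directly from the polytope constraints satisfied by $\bar{\gamma}$, which lies in the $\bar{z}^*$-polytope by construction (Step~\ref{assign:find_gamma} of Algorithm~\ref{Alg:Assign}). Before addressing the three items I would record a single bookkeeping fact used twice: $|B_C|\geq \bar{z}^*_C$ for every $C\in\mathcal{H}$. This is immediate from \eqref{BC}: if $C\in\textnormal{\textsf{supp}}(\bar{z}^*)$ then the first term of \eqref{BC} contributes exactly $\bar{z}^*_C$ bins $A_{C,1},\ldots,A_{C,\bar{z}^*_C}$ and the second contributes at most one more; while if $C\in F\setminus\textnormal{\textsf{supp}}(\bar{z}^*)$ then $\bar{z}^*_C=0$ and the second term contributes exactly one bin. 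I would also note explicitly that every $C\in\mathcal{H}$ is a valid configuration of $\cI$, since every singleton in $F$ is trivially in $\cC$; this is what permits invoking the polytope constraints at the coordinate indexed by $C$.

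Property~$1$ is then essentially a restatement of \eqref{F1}: for each $\ell\in D_C$ we have $\bar{\gamma}_{\ell,C}=1>0$, so \eqref{F1} forces $\ell\in\textnormal{\textsf{fit}}(C)$. For property~$2$ I would sum constraint \eqref{F2} (with $\bar{x}=\bar{z}^*$) over the relevant entries:
\[
s(D_C)=\sum_{\ell\in D_C} s(\ell)\cdot\bar{\gamma}_{\ell,C}\leq \sum_{\ell\in I} s(\ell)\cdot\bar{\gamma}_{\ell,C}\leq (1-s(C))\,\bar{z}^*_C\leq (1-s(C))\,|B_C|,
\]
where the first equality uses $\bar{\gamma}_{\ell,C}=1$ on $D_C$, the first inequality uses $\bar{\gamma}\geq 0$ elsewhere, the second inequality is \eqref{F2}, and the last is the bookkeeping fact. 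For property~$3$ the exact same template works, with \eqref{F5} in place of \eqref{F2}:
\[
|D_C\cap G|=\sum_{\ell\in G\cap D_C}\bar{\gamma}_{\ell,C}\leq \sum_{\ell\in G}\bar{\gamma}_{\ell,C}\leq \bar{z}^*_C\,(k(G)-|C\cap G|)\leq |B_C|\,(k(G)-|C\cap G|).
\]

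There is no real obstacle here: every ingredient is a one-line consequence of the constraints of the $\bar{z}^*$-polytope together with the crude estimate $|B_C|\geq\bar{z}^*_C$. The one mildly delicate point worth flagging is the boundary case $C\in F\setminus\textnormal{\textsf{supp}}(\bar{z}^*)$ (where $\bar{z}^*_C=0$): here constraints \eqref{F2} and \eqref{F5} force $D_C=\emptyset$, so the bounds in properties~$2$ and~$3$ only need to be nonnegative, which they are since $|B_C|=1$ and $k(G)\geq|C\cap G|$ for any $G\in\cG$.
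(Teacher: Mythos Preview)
Your proposal is correct and follows essentially the same approach as the paper: property~1 from constraint~\eqref{F1}, property~2 from constraint~\eqref{F2} combined with $|B_C|\geq \bar{z}^*_C$, and property~3 from constraint~\eqref{F5} combined with the same bound. The paper's proof is identical in structure; your explicit treatment of the boundary case $C\in F\setminus\textnormal{\textsf{supp}}(\bar{z}^*)$ is a nice addition but not strictly needed, since the inequality chain already degenerates correctly when $\bar{z}^*_C=0$.
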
 

 The properties of the $\bar{z}^*$-polytope are the key to proving Claim~\ref{eqeq6} and consequently Lemma~\ref{FromPolytope}. Recall that $\bar{\gamma}$ is in the $\bar{z}^*$-polytope; thus, the first property of Claim~\ref{eqeq6} follows from constraint \eqref{F1}, the second by constraint \eqref{F2}, and the third by constraint \eqref{F5}. Moreover, using constraint \eqref{F3}, we construct the matching $M$. Finally, we use constraint \eqref{F4} to show that $\{D_C\}_{C \in \mathcal{H}}$ is a partition of items not packed in $A^*$.
\begin{lemma}
	\label{lem:assign}
	The division of $\bar{\gamma}$ and $M$ is an $\eps$-nice partition of size at most $\|\bar{z}\|+\eps^{-22}|\textsf{supp}(\bar{z})|^2$.
\end{lemma}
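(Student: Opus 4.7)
The plan is to check, one by one, each of the five conditions in Definition~\ref{def:partition} for the division of $\bar{\gamma}$ and $M$, and then bound the size. Most of the technical content has been isolated into Lemmas~\ref{lem:matchingU}, \ref{lem:allowed} and Claim~\ref{eqeq6}, so the proof is largely an assembly job; the nontrivial part is verifying that $\{D_C\}_{C \in \mathcal{H}}$ really partitions $I\setminus \bigcup_i A^*_i$.

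First, I would bound $|\mathcal{H}|$ and the size of the packing. Since $\bar{z}$ is a good prototype, Observation~\ref{ob:y} gives $\textnormal{\textsf{supp}}(\bar{z}^*)=\textnormal{\textsf{supp}}(\bar{z})$, so $|\textnormal{\textsf{supp}}(\bar{z}^*)|\leq Q(\eps)$ and $|C|\leq \eps^{-10}$ for each $C\in \textnormal{\textsf{supp}}(\bar{z}^*)$. Applying Lemma~\ref{O(1)} with $k=\eps^{-10}$ yields $|F|\leq 8\eps^{-20}|\textnormal{\textsf{supp}}(\bar{z})|^2$, and therefore $|\mathcal{H}|=|\textnormal{\textsf{supp}}(\bar{z}^*)\cup F|\leq \eps^{-22}Q^2(\eps)$ for $\eps\in (0,0.1)$. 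For the size, $|A^*|=|F|+\sum_{C\in \textnormal{\textsf{supp}}(\bar{z}^*)}\bar{z}^*_C=|F|+\|\bar{z}^*\|$, and combining Observation~\ref{ob:y} (i.e.\ $\|\bar{z}^*\|\leq \|\bar{z}\|+|\textnormal{\textsf{supp}}(\bar{z})|$) with the bound on $|F|$ gives $|A^*|\leq \|\bar{z}\|+\eps^{-22}|\textnormal{\textsf{supp}}(\bar{z})|^2$, absorbing the linear $|\textnormal{\textsf{supp}}(\bar{z})|$ term into the quadratic one.

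Second, I would verify the partition conditions on bins and items. From \eqref{Astar} and \eqref{BC}, each bin $A_{C,k}$ of $A^*$ lies in $B_C$ for the unique $C\in \textnormal{\textsf{supp}}(\bar{z}^*)$ it was generated from, and each singleton bin $\{h\}$ (for $h\in F$) lies in $B_h$; hence $\{B_C\}_{C\in \mathcal{H}}$ partitions the bins of $A^*$. That each $A\in B_C$ is allowed in $C$ is exactly Lemma~\ref{lem:allowed} for the bins $A_{C,k}$, and is trivial for a singleton $\{h\}$ allowed in itself. For $\{D_C\}_{C\in \mathcal{H}}$, the key observation is that because $\bar{\gamma}$ is a vertex at which \eqref{F4} holds with equality, every $\ell\in I$ satisfies $\sum_{t\in I\cup \cC}\bar{\gamma}_{\ell,t}=1$. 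For each such $\ell$ there are three mutually exclusive possibilities: (i) some $\bar{\gamma}_{\ell,t}\in (0,1)$, in which case $\{\ell\}\in F$ and $\ell$ appears as a singleton bin of $A^*$; (ii) $\bar{\gamma}_{\ell,j}=1$ for some slot-type $j\in I$, in which case $\ell\in U$ and by Lemma~\ref{lem:matchingU} the matching $M$ places $\ell$ into exactly one bin $A_{C,k}\subseteq A^*$; (iii) $\bar{\gamma}_{\ell,C}=1$ for some configuration-type $C\in \mathcal{C}$, in which case $\ell\in D_C$. By integrality, cases (ii) and (iii) are disjoint, so the sets $\{D_C\}_{C\in \mathcal{H}}$ are pairwise disjoint and their union is exactly $I\setminus \bigcup_i A^*_i$.

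Third, the remaining structural conditions $D_C\subseteq \textnormal{\textsf{fit}}(C)$, $s(D_C)\leq (1-s(C))|B_C|$, and $|D_C\cap G|\leq |B_C|(k(G)-|C\cap G|)$ are precisely the content of Claim~\ref{eqeq6}, which follows from constraints \eqref{F1}, \eqref{F2}, \eqref{F5} of the $\bar{z}^*$-polytope applied to the integrally assigned items, together with $|B_C|=\bar{z}^*_C$ for $C\in \textnormal{\textsf{supp}}(\bar{z}^*)$ and $|B_C|=1$ for $C\in F$. The only subtlety (and the main obstacle worth being careful about) is the bookkeeping for $C\in F$: there one must check that the singleton $D_C$ defined via $\bar{\gamma}_{\ell,C}=1$ is consistent with the extra singleton bin contributed by $C\in F$ to $B_C$, which again follows from the integrality case analysis above. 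Putting the three paragraphs together yields the claimed $\eps$-nice partition of the advertised size.
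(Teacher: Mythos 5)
Your assembly follows the paper's own proof almost step for step: the paper likewise splits the argument into the bound on $|\mathcal{H}|$ and on the size (via Lemma~\ref{O(1)} and Observation~\ref{ob:y}), the fact that $\{B_C\}_{C\in\mathcal{H}}$ partitions the bins, that $\{D_C\}_{C\in\mathcal{H}}$ partitions the unpacked items (your trichotomy based on \eqref{F4} holding with equality is exactly the paper's Claims~\ref{fU} and~\ref{eqeq4}, including the implicit use of \eqref{F2} to rule out $\bar{\gamma}_{\ell,C}=1$ for $C\notin \textnormal{\textsf{supp}}(\bar{z}^*)$), that bins are allowed in their configurations via Lemma~\ref{lem:allowed}, and the structural conditions on the completions via Claim~\ref{eqeq6}. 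The quantitative absorptions ($8\eps^{-20}$ and the linear $|\textnormal{\textsf{supp}}(\bar{z})|$ term into $\eps^{-22}|\textnormal{\textsf{supp}}(\bar{z})|^2$, and similarly for $|\mathcal{H}|$) are fine for $\eps<0.1$.

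There is one condition of Definition~\ref{def:partition} you never check: an $\eps$-nice partition is, first of all, a \emph{packing} of a subset of $I$, i.e.\ every bin of $A^*$ must itself be a configuration of $\cI$ (total size at most $1$ and at most $k(G)$ items per group). Your five-bullet checklist skips this, and it is exactly what the paper's Claim~\ref{eqeq1} establishes. The fix is short and uses material you already invoke: for a singleton bin $\{h\}$ with $h\in F$ it is immediate, and for a bin $A_{C,k}$, Lemma~\ref{lem:allowed} gives an injection $\sigma:A_{C,k}\to C$ with $\ell\in\textnormal{\textsf{fit}}(\sigma(\ell))$, so by \eqref{fitl} each $\ell$ lies in the same group as $\sigma(\ell)$ and $s(\ell)\leq s(\sigma(\ell))$, whence $s(A_{C,k})\leq s(C)\leq 1$ and $|A_{C,k}\cap G|\leq |C\cap G|\leq k(G)$ for every $G\in\cG$, since $C$ is a configuration. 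With that paragraph added, your proof matches the paper's.
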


\noindent{\bf Proof of Lemma~\ref{FromPolytope}}
The correctness of Algorithm~\ref{Alg:Assign} follows from~\ref{lem:assign}. Observe that the vertex~$\bar{\gamma}$ in Step~\ref{assign:find_gamma} can be found via standard linear programing, thus the algorithm runs in polynomial time. 
\qed

\section{Algorithm \textsf{Pack}}
\label{sec:alg_pack}

In this section we give the proof of Lemma~\ref{lem:GREEDY}. 
Let $A$, $\mathcal{H}$, $\left(B_C\right)_{C \in \mathcal{H}}$, and $\left(D_C\right)_{C \in \mathcal{H}}$ be an $\eps$-nice partition of $\mathcal{I}$ of size $m$. For all $C \in \mathcal{H}$, algorithm \textsf{Pack} uses algorithm \textsf{Greedy} (see Lemma~\ref{thm:greedy}) to add the items in $D_C$ to the bins in $B_C$, possibly by using a few extra bins.  This requires a definition of a residual instance with further restrictions so the bins of $B_C$ remain feasible as a packing of $\cI$. Specifically, given $C \in \mathcal{H}$, define $\cI_C = (D_C, \cG_C,s_C,k_C)$ as the {\em residual instance} of $C$ and $\cI$, such that $\mathcal{G}_C = \{G \cap D_C \neq \emptyset ~|~ G \in \cG\}$, and \begin{equation}
	\label{IC}
	\begin{aligned}
		&s_C(\ell) = \frac{s(\ell)}{1-s(C)}&~~~~~~~~~~~&\forall \ell \in D_C\\ 
		&k_C(G \cap D_C) = k(G)-|G \cap C| &&\forall G \in \cG \text{ s.t. } G \cap D_C \neq \emptyset~~
	\end{aligned}
\end{equation} In words, we modify the size function such that a bin $B \in B_C$ may 
receive 
an additional subset of items $S \subseteq D_C$ with the following attributes: the total size of items in $S$ is at most $1-s(C)$, and for each $G \in \cG$ it holds that $|S \cap B| \leq k(G)-|C \cap G|$. Thus, $B \cup S$ is a configuration of $\cI$. Also, observe that if $D_C \neq \emptyset$ then by Definition~\ref{def:partition} $s(C) < 1$; thus, $\cI_C$ is well defined.

\begin{algorithm}[h]
	\caption{$\textsf{Pack}\left(\eps, \cI, A, \mathcal{H}, \left(B_C\right)_{C \in \mathcal{H}}, \left(D_C\right)_{C \in \mathcal{H}}\right)$}
	\label{alg:pack}
	
Initialize $P \leftarrow ()$.

\For{$C \in \mathcal{H}$}{

Compute the residual instance $\cI_C$ of $C$ and $\cI$.

$P_C \leftarrow \textsf{tuple}(B_C) + \textsf{Greedy}(\cI_C,\eps)$.

 $P \leftarrow P \oplus P_C$.\label{step:pPack}

}

Return $P$.

\end{algorithm}

We use the following notation for the algorithm. Given two tuples $T = (T_1, \ldots, T_t)$ and $R = (R_1, \ldots, R_r)$ where $T_i, R_j \subseteq I ~\forall i \in [t], j \in [r]$, recall that $T \oplus R$ is the concatenation of $T,R$. In addition,  if $r> t$ let $T+R = (T_1 \cup R_1, \ldots, T_t \cup R_t, R_{t+1}, \ldots, R_r)$ and if $r \leq t$ let $T+R = (T_1 \cup R_1, \ldots, T_r \cup R_r, T_{r+1}, \ldots, T_t)$; in words, this is an element-wise addition of the items in $R$ to $T$. Finally, given $C \in \mathcal{H}$, let $\textsf{tuple}(B_C)$ be the elements (bins) in $B_C$ ordered arbitrarily as a tuple.  For all $C \in \mathcal{H}$ algorithm \textsf{Pack} calls algorithm \textsf{Greedy} with parameter $\eps$ for the residual instance $\cI_C$ and creates the packing $\textsf{tuple}(B_C) + \textsf{Greedy}(\cI_C,\eps)$; then, algorithm \textsf{Pack} concatenates all of these packings into a packing of $\cI$. We give the 
pseudocode of \textsf{Pack} in Algorithm~\ref{alg:pack}. 
The proof of the next lemma follows from~\eqref{IC}, Definition~\ref{def:partition}, and Lemma~\ref{thm:greedy}. We give the full details and the proof of Lemma~\ref{lem:GREEDY} in Appendix~\ref{sec:PackProofs}. 

\begin{lemma}
	\label{clm:GREEDY}
	For all $C \in \mathcal{H}$, $\textnormal{\textsf{tuple}}(B_C) + \textnormal{\textsf{Greedy}}(\cI_C,\eps)$ is a packing of $D_C \cup \bigcup_{B \in B_C} B$ with respect to $\cI$ in at most $(1+2\eps) \cdot |B_C|+2$ bins. 
\end{lemma}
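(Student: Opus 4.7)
The plan is to apply Lemma~\ref{thm:greedy} to the residual instance $\cI_C$, and then verify that concatenating the resulting packing with the bins of $B_C$ via the element-wise addition $+$ produces a valid packing of $\cI$ with the claimed bin count.

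First, I would verify that the residual instance $\cI_C$ satisfies the hypotheses of Lemma~\ref{thm:greedy} with parameter $\delta = \eps$. Since $D_C \subseteq \textnormal{\textsf{fit}}(C)$ by Definition~\ref{def:partition}, every $\ell \in D_C$ satisfies $s(\ell) \leq \eps(1-s(C))$, so the rescaled sizes satisfy $s_C(\ell) = s(\ell)/(1-s(C)) \leq \eps$ as required. Next I would bound the two quantities appearing in the conclusion of Lemma~\ref{thm:greedy}. From Definition~\ref{def:partition}, $s(D_C) \leq (1-s(C))\cdot|B_C|$, which after rescaling gives $s_C(D_C) \leq |B_C|$. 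Similarly, for every $G \in \cG$ with $G \cap D_C \neq \emptyset$, Definition~\ref{def:partition} gives $|D_C \cap G| \leq |B_C|\cdot(k(G)-|C\cap G|) = |B_C|\cdot k_C(G\cap D_C)$, so $V(\cI_C) = \max_{G'} \lceil |G'|/k_C(G')\rceil \leq |B_C|$. Invoking Lemma~\ref{thm:greedy} therefore yields a packing of $\cI_C$ in at most $(1+2\eps)\cdot|B_C| + 2$ bins.

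Let $P_C^{\textsf{g}} = \textsf{Greedy}(\cI_C,\eps) = (S_1,\ldots,S_r)$ with $r \leq (1+2\eps)|B_C|+2$, and enumerate $\textsf{tuple}(B_C) = (B_1,\ldots,B_{|B_C|})$. I would then check feasibility of $\textsf{tuple}(B_C) + P_C^{\textsf{g}}$ as a packing of $\cI$. For an index $b > |B_C|$, the bin is simply $S_b \subseteq D_C$, whose original total size is $(1-s(C))\cdot s_C(S_b) \leq 1-s(C) \leq 1$, and $|S_b \cap G| \leq k_C(G\cap D_C) \leq k(G)$, so $S_b$ is a configuration of $\cI$. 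For $b \leq |B_C|$, the bin is $B_b \cup S_b$. Since $B_b$ is allowed in $C$, there is an injection from $B_b$ to slots of $C$ respecting $\textsf{fit}$, which forces $|B_b \cap G| \leq |C \cap G|$ and $s(B_b) \leq s(C)$. Thus $s(B_b \cup S_b) \leq s(C) + (1-s(C)) = 1$, and $|(B_b \cup S_b)\cap G| \leq |C\cap G| + (k(G)-|C\cap G|) = k(G)$, so $B_b \cup S_b$ is a configuration of $\cI$. Moreover, since $\{S_b\}_b$ partitions $D_C$ and $\{B_b\}$ partitions $\bigcup_{B\in B_C} B$, the concatenated tuple is a packing of $D_C \cup \bigcup_{B\in B_C} B$.

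The main subtlety, and the only nontrivial step, is verifying that merging $B_b$ with $S_b$ preserves the group-cardinality and size constraints simultaneously; this is exactly where the definition of ``allowed in $C$'' is used, since it controls $|B_b \cap G|$ and $s(B_b)$ via the slot assignment in $C$, complementing the residual bounds $|S_b \cap G| \leq k(G)-|C\cap G|$ and $s_C(S_b)\leq 1$ enforced by $\cI_C$. The total bin count is $\max\{|B_C|, r\} \leq (1+2\eps)|B_C|+2$, completing the proof.
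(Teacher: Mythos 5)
Your proposal is correct and follows essentially the same route as the paper: rescale via the residual instance $\cI_C$, invoke Lemma~\ref{thm:greedy} with $\delta=\eps$ after bounding $s_C(D_C)\leq |B_C|$ and $V(\cI_C)\leq |B_C|$ from Definition~\ref{def:partition}, and verify feasibility of the element-wise merge using the ``allowed in $C$'' injection for the size and group-cardinality bounds. The only (harmless) omissions relative to the paper are the trivial edge cases where $D_C=\emptyset$ or where the two tuples have unequal lengths, both of which are handled the same way.
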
 

\section{Reduction and Reconstruction of the Instance}
\label{sec:reduction}

In this section we present algorithms \textsf{Reduce} and \textsf{Reconstruct}, which handle the structuring of the instance and 
the transformation of the solution for the structured instance into a solution for the original one. For this section, fix $\cj = (I, \mathcal{G}, s,k)$ to be a BPP instance, and let $\eps \in (0, 0.1]$.

Algorithm \textsf{Reduce} generates from a given instance $\cj$ a structured instance $\II$, for which any optimal packing required at most the minimum number of bins required for packing $\cj$
(see Section~\ref{sec:F1}). Given a packing of $\II$, algorithm \textsf{Reconstruct} finds a packing  of the original instance $\cj$, with only a slight increase in the total number of bins used (see Section~\ref{sec:RECONSTRUCTION}). The proofs of the results in this section are given in Appendix~\ref{sec:reductionProofs}. 

\subsection{The Reduction}
\label{sec:F1}

Towards structuring the instance, we first classify the items and groups in $\cj$.
We note that similar classifications were used in prior work (see, e.g., \cite{DW17,Jansen_et_al:2019,DKS21}).  
 For all $i \in  \{2,\ldots, \eps^{-1}+1\}$ let 
 \begin{equation}
	\label{Interval}
	I_i = \big\{\ell \in I~|~ s(\ell) \in \big[\eps^{i+1}, \eps^{i}\big)\big\}
\end{equation} be the $i$-th {\em interval} of $\cj$, containing all items of sizes in the interval $[\eps^{i+1}, \eps^{i})$; we refer to $i$ is a {\em pivot} of $\cj$. Now, for $w \in \{2,\ldots,\eps^{-1}+1\}$, we say that $w$ is a {\em minimal pivot} of $\mathcal{\cj}$ if the $w$-th interval of $\cj$ contains minimal total size of items among all intervals of $\cj$; that is,

\begin{equation}
	\label{eq:k}
	w = \argmin_{i \in \{2, \ldots, \eps^{-1}+1\}} s(I_i).
\end{equation}

The classification of the items depends on the selection of a pivot for $\cj$. Specifically, given $w \in \{2,\ldots, \eps^{-1}+1\}$ we classify item $\ell$ as $w$-{\em  heavy} if $s(\ell) \geq \varepsilon^{w}$, $w$-{\em medium} if $s(\ell) \in [\varepsilon^{w+1},\varepsilon^{w})$ and $w$-{\em light} otherwise. Let $H_w = \{\ell \in I~|~s(\ell) \geq \eps^{w}\}$ be the set of $w$-heavy items in $\cj$. 

For the classification of groups, fix a pivot $w \in \{2, \ldots, \eps^{-1}+1\}$ of $\cj$. We sort the groups in $\mathcal{G}$ in a non-increasing order by the total number of $w$-heavy and $w$-medium items of the group; then, we classify the first up to $ \eps^{-3w-5}$ groups according to the above order as the $w$-{\em large groups}, and the remaining groups are called $w$-{\em small}. More formally, let $g_w(G) = |G \cap (H_w \cup I_w)|~:~\forall G \in \cG$ and $n = |\cG|$. Now, let $G_1, \ldots, G_{n}$ be a 
non-increasing order of $\cG$ by $g_w$, where
groups having the same $g_w$ values are placed in fixed arbitrary order.
Now, let $\kappa_w = \min\left\{\eps^{-3w-5},|\cG| \right\}$ and define the set of $w$-large groups as $\cG_L(w) = \left\{G_i~|~i \in  \left[\kappa_w \right] \right\}$ and the set of $w$-small groups as $\cG \setminus \cG_L$.

\begin{lemma}
	\label{lem:few_large_groups}	
	For all $w \in \{2,\ldots, \eps^{-1}+1\}$ there are at most $\eps^{-3w-5}$ groups $G \in \cG$ such that $$|G \cap (H_w\cup I_w) | \geq \eps^{2w+4} \cdot \OPT(\cj).$$ 
\end{lemma}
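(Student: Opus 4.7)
The plan is to prove this by a simple volumetric / counting argument based on the bin capacity, exploiting the fact that items in $H_w \cup I_w$ all have size at least $\eps^{w+1}$.

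First, I would bound the total number of items in $H_w \cup I_w$. Recall $H_w = \{\ell \in I \mid s(\ell) \geq \eps^w\}$ and $I_w = \{\ell \in I \mid s(\ell) \in [\eps^{w+1}, \eps^w)\}$, so every item $\ell \in H_w \cup I_w$ satisfies $s(\ell) \geq \eps^{w+1}$. Since each bin has capacity $1$, any single bin of an optimal packing of $\cj$ contains at most $\lfloor \eps^{-w-1} \rfloor \leq \eps^{-w-1}$ items from $H_w \cup I_w$. Summing over all $\OPT(\cj)$ bins of an optimal packing (and noting that these bins partition $I$) gives the global estimate
\[
|H_w \cup I_w| \;\leq\; \eps^{-w-1} \cdot \OPT(\cj).
\]

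Next, I would use the fact that the groups in $\mathcal{G}$ form a partition of $I$, so the sets $\{G \cap (H_w \cup I_w)\}_{G \in \mathcal{G}}$ are pairwise disjoint. Let
\[
B \;=\; \bigl\{G \in \mathcal{G} \;\bigm|\; |G \cap (H_w \cup I_w)| \geq \eps^{2w+4} \cdot \OPT(\cj)\bigr\}.
\]
Then, using disjointness and the previous bound,
\[
|B| \cdot \eps^{2w+4} \cdot \OPT(\cj) \;\leq\; \sum_{G \in B} |G \cap (H_w \cup I_w)| \;\leq\; |H_w \cup I_w| \;\leq\; \eps^{-w-1}\cdot \OPT(\cj).
\]
Dividing both sides by $\eps^{2w+4} \cdot \OPT(\cj)$ (we may assume $\OPT(\cj) \geq 1$, else the statement is trivial) yields $|B| \leq \eps^{-3w-5}$, which is exactly the claim.

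There is no real obstacle here: the argument is a one-line pigeonhole once the per-bin cardinality bound $\eps^{-w-1}$ is observed. The only things to be slightly careful about are (i) that the size threshold defining $H_w \cup I_w$ is $\eps^{w+1}$ and not $\eps^w$ (so the capacity bound gives $\eps^{-w-1}$, which is what makes the exponents $-3w-5$ and $2w+4$ combine to give $-w-1$), and (ii) that the groups are disjoint, which is guaranteed by the definition of the partition matroid.
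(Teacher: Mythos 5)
Your proof is correct and follows essentially the same volumetric pigeonhole argument as the paper: both exploit the facts that every item in $H_w\cup I_w$ has size at least $\eps^{w+1}$ and that the total size $s(I)$ is at most $\OPT(\cj)$. The only cosmetic difference is that you first pass through an item-count bound $|H_w\cup I_w|\le \eps^{-w-1}\OPT(\cj)$, whereas the paper bounds $s(G)\ge \eps^{3w+5}\OPT(\cj)$ directly for each offending group and then sums sizes; the two are interchangeable.
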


The proof of Lemma~\ref{lem:few_large_groups} follows by noting that
the total size of items in a group having at least $\eps^{2w+4} \cdot \OPT(\cj)$ items that are $w$-medium  or $w$-heavy is at least $\eps^{3w+5} \cdot \OPT(\cj)$. 

By Lemma~\ref{lem:few_large_groups}, the number of $w$-heavy and $w$-medium items in a $w$-small group is at most $\eps^{2w+4}\cdot \OPT(\mathcal{J})$. This is useful for the reduction, in which the matroid constraint is slightly relaxed for $w$-small groups, as described below. The $w$-large groups appear in the reduced instance with the original cardinality constraint. In contrast, for each $w$-small group we keep only the $w$-light and $w$-medium items with the original cardinality constraint of the group, whereas all $w$-heavy items from $w$-small groups are placed in a single {\em union group} with unbounded cardinality constraint. Specifically, define the {\em reduced $w$-small groups} containing only $w$-light and $w$-medium items as

\begin{equation}
\label{Is}
\cG_S(w) = \{G \setminus H_w~|~G \in \cG \setminus \cG_L(w)\}.
\end{equation}

Also, let the {\em $w$-union group} containing all $w$-heavy items from $w$-small groups be 

\begin{equation}
\label{Iu}
\Gamma_w = \{\ell \in H_w~|~\ell \in G \text{ s.t. } G \in \cG \setminus \cG_L(w)\}.
\end{equation}

\begin{algorithm}[h]
	\caption{$\textsf{Reduce}(\eps,\cj = (I,\cG,s,k))$}
	\label{alg:reduction}
	
	Find a minimal pivot $w \in \{2,\ldots,\eps^{-1}+1\}$ of $\cj$.\label{step:pivot}
	
	Find $\cG_L(w), \cG_S(w), \Gamma_w$, the $w$-large groups, $w$-reduced small groups, and $w$-union of $\cj$.  
	
	Let $\cG_R = \cG_L(w) \cup \cG_S(w) \cup \{\Gamma_w\}$.\label{step:groups}
	
	Define  the following new cardinality bounds for all $G \in \cG_R$:\label{kr} \[
	k_R(G) =  \label{kr} 
	\begin{cases}
		k(G) & G \in \cG_L(w)\\
		k(G') & G = G' \setminus H_w \text{ s.t. } G' \in \cG \setminus \cG_L(w) \\
		|\Gamma_w|+1 & G = \Gamma_w.\\
	\end{cases}
	\]

	Return the reduced instance $\left(I,\cG_R,s,k_R\right)$. \label{Ir}
	
\end{algorithm}

Algorithm \textsf{Reduce}  constructs the above groups, together with the new cardinality constraints, and preserves the initial set of items and item sizes. The pseudocode of Algorithm \textsf{Reduce} is given in Algorithm~\ref{alg:reduction}. The first claim of Lemma~\ref{lem:reductionReconstruction} holds since the only of groups containing items larger than $\eps^2$ are the $w$-large groups and the $w$-union group. Moreover, a packing of the original instance is also a feasible packing for the reduced instance, thus the optimum of the reduced instance can only be smaller.

\subsection{The Reconstruction}
\label{sec:RECONSTRUCTION}

We now describe how a packing of the reduced instance is transformed to a packing for the original instance.  
For simplicity, assume that objects such as $w$ computed by algorithm $\textsf{Reduce}(\eps,\cj)$ are known and fixed for the reconstruction of $\cj$. In addition, for the remainder of this section, fix a packing $A = (A_1, \ldots, A_m)$ for $\cI = \textsf{Reduce}(\eps,\cj)$. 

Recall that Step~\ref{kr} of algorithm \textsf{Reduce} relaxes the cardinality constraint for the $w$-heavy items in the set $\Gamma_w$. The reconstruction algorithm redefines $A$ for items in $\Gamma_w$ to ensure that the matroid constraint is satisfied for these items; this may require using a few extra bins. Then, $w$-medium and $w$-light items from $w$-small groups are discarded from the packing and packed separately using Algorithm \textsf{Greedy} (see Lemma~\ref{thm:greedy}). This results in a feasible packing of the original instance $\cj$. 
We now describe how our reconstruction algorithm resolves violations among $w$-heavy items. This is done by rearranging the packing of $w$-heavy items in $A$ using a variant of the classic {\em linear shifting} technique of~\cite{fernandez1981bin}. 
Let $\beta = \eps^{-w-2}$ be the {\em shifting parameter}, and $P_1,\ldots, P_q$ the $\mathcal{\cI}$-{\em partition} obtained by applying linear shifting to the items in $\Gamma_w$. More specifically, given the set $\Gamma_w$ in non-decreasing order of item sizes, for all $1\leq i \leq j<q$ and $\ell \in P_i, y \in P_j$ it holds that $s(\ell) \geq s(y)$, $|P_i| = \ceil{\frac{|\Gamma_w|}{\beta}}$, and $P_{q}$ contains the remaining (possibly less than  $\ceil{\frac{|\Gamma_w|}{\beta}}$) items from $\Gamma_w$. It follows that $q \leq \beta$ and let $q(\cI) = q$ be the number of sets in the above partition of $\Gamma_w$. 

\begin{lemma}
\label{lem:Bgreedy}
There is an algorithm \textnormal{\textsf{Fill}} which given a BPP instance $\cj$ and a  packing $A = (A_1, \ldots, A_m)$ of $\cI = \textnormal{\textsf{Reduce}}(\eps,\cj)$ finds in time $\textnormal{poly}(|\cj|, \frac{1}{\eps})$ a partition $(B_1, \ldots, B_m,R)$ of $\Gamma_w$ such that the following hold. 

\begin{enumerate}
\item For all $i \in [m]$ and $j \in \{2, \ldots,q(\cI)\}$ it holds that $|B_i \cap P_j| \leq |A_i \cap P_{j-1}|$ and $|B_i \cap P_1| = 0$.

\item For all $i \in [m]$ and $G \in \cG$ it holds that 
$|B_i \cap G| \leq k(G)$.

\item $|R| \leq \eps \cdot \OPT(\cj)+1$. 
\end{enumerate}   
\end{lemma}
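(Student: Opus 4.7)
Plan. The plan for algorithm \textsf{Fill} is to perform a linear-shifting reassignment of the items in $\Gamma_w$, resolving each class via a bipartite transportation (max-flow) problem in order to respect the matroid cardinality constraint inherited from $\cj$. First, I build the shifting partition $P_1,\ldots,P_q$ by sorting $\Gamma_w$ in non-decreasing size order and cutting into chunks of $\lceil |\Gamma_w|/\beta\rceil$ consecutive items, so that $P_1$ contains the largest items. Initialize $R \leftarrow P_1$ and $B_i \leftarrow \emptyset$ for all $i\in[m]$, which immediately gives $|B_i\cap P_1|=0$. Then, for $j=2,3,\ldots,q$ in order, set up a bipartite flow instance with: a source $s$; one node per group $G\in \cG$ with edge $s\to G$ of capacity $|P_j\cap G|$; one node $(G,i)$ per group-bin pair with edge $G\to (G,i)$ of capacity $k(G) - |B_i\cap G|$ (the residual matroid capacity after previous iterations); and one bin node $i$ with edge $(G,i)\to i$ of infinite capacity and edge $i\to t$ of capacity $|A_i\cap P_{j-1}|$. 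Compute an integral max-flow and realize it by assigning items of $P_j\cap G$ to bin $i$; any item of $P_j$ that is not assigned is moved to $R$.

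Verification of conditions~1 and~2 is direct from the capacities. The capacity on $i\to t$ enforces $|B_i\cap P_j|\leq |A_i\cap P_{j-1}|$ (condition~1), and the capacity on $G\to(G,i)$ preserves $|B_i\cap G|\leq k(G)$ as a loop invariant (condition~2). For condition~3, items of $R$ come from $P_1$ and from flow deficits in iterations $j\geq 2$. Every item in $\Gamma_w$ has size at least $\eps^w$ (as $\Gamma_w\subseteq H_w$), so $|\Gamma_w|\leq s(\Gamma_w)/\eps^w \leq \OPT(\cj)/\eps^w$; combined with $\beta=\eps^{-w-2}$, this gives $|P_1|=\lceil |\Gamma_w|/\beta\rceil \leq \eps^{w+2}|\Gamma_w|+1\leq \eps^2\OPT(\cj)+1$.

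The main obstacle is bounding the total flow deficit across the $q\leq\beta$ iterations. I plan to argue by a witness-flow construction: starting from any feasible packing of $\cj$ in $\OPT(\cj)$ bins, project onto items of $P_j$ to obtain a matroid-respecting witness, then map this witness into the slot structure $\{|A_i\cap P_{j-1}|\}_i$ via a bipartite matching between witness bins and bins of $A$, showing through Hall's condition (equivalently, max-flow/min-cut) that the transportation instance admits a feasible flow of value $|P_j|-O(1)$. The $O(1)$ per iteration accounts for boundary mismatches between the witness packing and the slot structure of $A$, and summing over $q\leq \eps^{-w-2}$ iterations yields a residual of size $O(\eps^{-w-2})$ that depends only on $\eps$ and $w$; this can be absorbed into the additive $+1$ of condition~3 (or, once $\OPT(\cj)$ is sufficiently large, into the slack between $\eps\OPT(\cj)$ and the $\eps^2\OPT(\cj)$ bound on $|P_1|$). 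An alternative and potentially cleaner route is to encode the entire shifting as one global max-flow across all classes, with cumulative slot and group capacities, thereby bypassing the per-iteration bookkeeping altogether.
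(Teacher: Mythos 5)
Your algorithm is essentially a max-flow variant of what the paper does: the paper's \textsf{Fill} is a straightforward greedy that keeps moving an item $\ell\in R\cap P_j$ to some bin $B_i$ whenever both the shifting budget $|B_i\cap P_j|<|A_i\cap P_{j-1}|$ and the group budget $|\textsf{group}(\ell)\cap B_i|<k(\textsf{group}(\ell))$ are slack, and your per-class transportation instance encodes exactly these two capacity constraints. Conditions~1 and~2 are indeed immediate in either formulation. The gap is entirely in how you bound $|R|$.

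The claim that each transportation instance ``admits a feasible flow of value $|P_j|-O(1)$'' is the crux, and it is not correct. The residual group capacities $k(G)-|B_i\cap G|$ in your flow instance depend on all the earlier placements, and a witness packing of $\cj$ knows nothing about them; there is no reason a stuck group can only block $O(1)$ items. What is actually true — and what the paper's analysis establishes — is that the per-class deficit is at most $\eps^{w+4}\cdot\OPT(\cj)$, and this bound comes from two structural facts you never invoke: (a) every item of $\Gamma_w$ is $w$-heavy, so a bin $A_i$ contributes at most $\eps^{-w}$ slots to $P_{j-1}$; and (b) $\Gamma_w$ consists only of items from $w$-small groups, so by Lemma~\ref{lem:few_large_groups} any group $G$ appearing in $\Gamma_w$ has at most $\eps^{2w+4}\cdot\OPT(\cj)$ heavy items, hence $G$ can saturate at most that many bins. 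Multiplying and summing over $q\le\eps^{-w-2}$ classes gives $\eps^2\cdot\OPT(\cj)$, which combined with $|P_1|\le\eps^2\cdot\OPT(\cj)+1$ yields the claimed $\eps\cdot\OPT(\cj)+1$. Separately, even if the per-class deficit really were $O(1)$, your conclusion would still fail: $O(\eps^{-w-2})$ cannot be ``absorbed into the additive $+1$'' (it is much larger than $1$), and hedging with ``once $\OPT(\cj)$ is sufficiently large'' does not give a lemma that holds unconditionally, which is what is needed. You need an argument that ties the flow deficit to $\OPT(\cj)$ via Lemma~\ref{lem:few_large_groups}; without that, neither the per-iteration nor the global-flow variant closes the proof.
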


The first condition of the lemma is essentially a shifting argument. The lemma is proved by constructing the following greedy algorithm \textsf{Fill}. We start with empty sets for $B_1, \ldots, B_m$ and with $R = \Gamma_w$. In each iteration, we try to move an item from $R$ to some $B_i, i \in [m]$ without violating the conditions of Lemma~\ref{lem:Bgreedy}.
If no move is possible, we return $(B_1, \ldots, B_m,R)$. The details are given in Appendix~\ref{sec:fill}.

We now combine the output of algorithm $\textsf{Fill}$ with $A$. We first remove from $A$ the items in $\Gamma_w$ and all $w$-medium items from $w$-small groups; then, for all $i \in [m]$ we add to $A_i$ the items in $B_i$. The remaining items in $R$ are packed using extra bins. Formally, let \begin{equation}
	\label{X:X} \Omega_w = \{\ell \in I_w~|~\ell \in G \text{ s.t. } G \in \cG \setminus \cG_L(w)\}
\end{equation} be the $w$-medium items from $w$-small groups. In addition, given two tuples $T_1, T_2$, let $T_1 \oplus T_2$ be the concatenation of $T_1$ by $T_2$. Now, given $(B_1, \ldots, B_m,R) = \textsf{Fill}(A)$, define the tuple 

\begin{equation}
\label{F_A}
F_A = \big((A_i \setminus (\Gamma_w\cup \Omega_w)) \cup B_i~|~ i \in [m]\big) \oplus \big(\{\ell\}~|~\ell \in R\big).	
\end{equation}

In the following, we transform $F_A$ into a feasible packing of $\cj$. Let $F_A = (U_1, \ldots, U_{m'})$. Observe that Lemma~\ref{lem:Bgreedy} does not guarantee that $F_A$
satisfies the cardinality constraints for all the $w$-small groups.
Such violations are resolved by discarding $w$-light items from $U_i, i \in [m']$ to ensure  that the cardinality constraints of all groups in $\cG$ are satisfied.  More specifically, for any bin $i \in [m']$ and group $G \in \cG$ let $F_A(i,G)$ be an arbitrary exclusion-minimal subset of $w$-light items from $U_i$ such that $| (U_i \setminus F_A(i,G) )  \cap G| \leq k(G)$. Note that such a subset exists, since by taking all $w$-light items to $F_A(i,G)$, the above condition is satisfied by Lemma~\ref{lem:Bgreedy} and \eqref{F_A}. Now, the discarded items are accumulated across all bins and groups to form the set $D(A)$, namely

\begin{equation}
\label{D}
D(A) = \bigcup_{i \in [m']} \bigcup_{G \in \cG} F_A(i,G).
\end{equation}

\begin{lemma}
\label{thm:D}

For any packing $A$ of $\cI$,

\begin{enumerate}
\item  $|D(A) \cap G| \leq \eps \cdot \OPT(\cj)$ for all $G \in \cG$. 

\item $s(D(A)) \leq \eps \cdot \OPT(\cj)$.

\end{enumerate}

\end{lemma}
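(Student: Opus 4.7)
\noindent\textbf{Proof plan for Lemma~\ref{thm:D}.}
The plan is to prove both parts from a single per-bin inequality: for every $w$-small group $G$ and every bin index $i$,
\[
|F_A(i,G)| \;\leq\; |B_i \cap G|.
\]
Given this, Part~1 follows by summing over $i$ and controlling $|\Gamma_w \cap G|$ via the construction of $\cG_L(w)$, while Part~2 follows by summing over all pairs $(i,G)$ and using the size gap between $w$-heavy items (size $\geq \eps^w$) and $w$-light items (size $<\eps^{w+1}$).

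For $G \in \cG_L(w)$, I would first dispatch the trivial case: by~\eqref{Is}, \eqref{Iu}, and~\eqref{X:X} the sets $\Gamma_w$, $\Omega_w$, and $B_i \subseteq \Gamma_w$ are all disjoint from $G$, so $U_i \cap G = A_i \cap G$; since $A$ is a valid packing of $\cI$ with $k_R(G) = k(G)$ we get $|U_i \cap G| \leq k(G)$ and thus $F_A(i,G) = \emptyset$. For $G \in \cG \setminus \cG_L(w)$, I would decompose $U_i \cap G$ into its $w$-light part (which sits inside $A_i \cap (G \setminus H_w)$, since $B_i \subseteq H_w$) and the $w$-heavy part $B_i \cap G$. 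The $w$-light part has at most $k_R(G \setminus H_w) = k(G)$ elements by validity of $A$ for $\cI$, so the excess over $k(G)$ is at most $|B_i \cap G|$; the exclusion-minimality of $F_A(i,G)$ among $w$-light subsets of $U_i$ that restore the cardinality bound then yields the inequality.

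To obtain Part~1 I would sum the per-bin inequality over $i$ and use that $(B_1,\ldots,B_m,R)$ is a partition of $\Gamma_w$, getting $|D(A) \cap G| \leq |\Gamma_w \cap G| \leq g_w(G)$. Since $\cG_L(w)$ consists of the top $\kappa_w$ groups ordered by $g_w$, Lemma~\ref{lem:few_large_groups} forces $g_w(G) < \eps^{2w+4}\cdot \OPT(\cj)$ for every $G \notin \cG_L(w)$, which is at most $\eps \cdot \OPT(\cj)$ because $2w+4 \geq 8$ and $\eps \in (0,0.1]$. For Part~2 I would sum the per-bin inequality over both $i$ and $G$, obtaining $|D(A)| \leq |\Gamma_w|$; since each item of $\Gamma_w$ has size at least $\eps^w$ while $s(\Gamma_w) \leq s(I) \leq \OPT(\cj)$, this gives $|\Gamma_w| \leq \eps^{-w}\cdot \OPT(\cj)$, and multiplying by the uniform size bound $\eps^{w+1}$ on any $w$-light item yields $s(D(A)) \leq \eps \cdot \OPT(\cj)$.

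I do not anticipate a real obstacle here: once the per-bin inequality is set up, both bounds reduce to bookkeeping with $|\Gamma_w|$ and $|\Gamma_w \cap G|$. The only point worth stating carefully is that $w$-light items in $U_i$ originate entirely from $A_i$ (because $B_i \subseteq H_w$), so that the cardinality bound inherited from the reduced small group $G \setminus H_w \in \cG_S(w)$ in $\cI$ applies to exactly those items that the definition of $F_A(i,G)$ is allowed to discard.
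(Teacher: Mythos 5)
Your proposal is correct and follows essentially the same route as the paper: your per-bin inequality $|F_A(i,G)| \leq |B_i \cap G|$ is exactly the paper's bound $|F_A(i,G)| \leq |G \cap U_i \cap H_w|$ (for a $w$-small group $G$ one has $G \cap U_i \cap H_w = G \cap B_i$), established the same way from the cardinality bound $k_R(G\setminus H_w)=k(G)$ on the $w$-light items together with exclusion-minimality, and Part~1 then invokes Lemma~\ref{lem:few_large_groups} just as the paper does. For Part~2 you aggregate globally ($|D(A)| \leq |\Gamma_w| \leq \eps^{-w}\OPT(\cj)$ with every discarded item of size below $\eps^{w+1}$) whereas the paper bounds $s(F_A(i,G)) \leq \eps \cdot s(U_i \cap H_w \cap G)$ per pair and sums against $s(I) \leq \OPT(\cj)$, but both rest on the same $\eps$-factor gap between $w$-light and $w$-heavy sizes, so this is only a bookkeeping variation.
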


The first condition in the lemma follows from Lemma~\ref{lem:few_large_groups}, since we do not discard items from $w$-large groups. The second condition holds since any $w$-heavy item is larger than any $w$-light item by factor at least $\eps^{-1}$.  These properties are useful for packing the discarded items in only a few extra bins using algorithm \textsf{Greedy}.

Given the partial packing of $\cj$, algorithm \textsf{Reconstruct} proceeds to pack the remaining items, i.e., the $w$-medium items from $w$-small groups and items in $D(A)$. All  of these items are packed in a few extra bins.
This is done by algorithm \textsf{Greedy}, for which we define a residual instance containing only the above remaining items, which preserves the item sizes and group cardinality constraints for the remaining items. Formally, define the {\em discarded instance} of $\cj$ and $A$ as $\mathcal{E} = (E,\cG_E,s_E,k_E)$, where  

\begin{equation}
\label{Ie}
\begin{aligned}
E = \Omega_w \cup D(A), ~~~~~~~~~~~~~~~~~~~~ ~~~~~~
\mathcal{G}_E = \{G \cap E~|~ G \in \cG\}, ~~~~~~~~~~~~~~~~~~ 
\\
s_E(\ell) = s(\ell), ~~\forall \ell \in E, ~~~~~~~~~~~~~~~~~~~
k_E(G \cap E) = k(G), ~~\forall G \in \cG~~~~~~~~~~~~
\end{aligned}
\end{equation}

Given $S \subseteq I$ and a tuple $T =  (T_1, \ldots, T_n)$ such that $T_i \subseteq I$ for all $i \in [n]$, let $T \setminus S = (T_1 \setminus S, \ldots, T_n \setminus S)$  be the tuple induced by removing the items in $S$ from all entries of $T$. The pseudocode of algorithm \textsf{Reconstruct} is given in Algorithm~\ref{Alg:RECON}. By \eqref{F_A} and \eqref{Ie}, the output of algorithm \textsf{Reconstruct} is a packing of $\cj$. Furthermore, this packing is of size at most $m+13\eps \cdot \OPT(\cj)+1$ by Lemmas~\ref{lem:Bgreedy}, \ref{thm:D}, and \ref{lem:few_large_groups}.

\begin{algorithm}[h]
\caption{$\textsf{Reconstruct}(\mathcal{J}, \eps, A)$}
\label{Alg:RECON}

Generate $F_A$ using $\textsf{Fill}(\mathcal{J}, A)$ and \eqref{F_A}.\label{step:1}

Compute the discarded instance $\mathcal{E} = (E,\cG_E,s_E,k_E)$ of $\cj$ and $A$.\label{step:2} 

Return $\left( F_A \setminus E \right) \oplus \textsf{Greedy}(\mathcal{E})$.\label{step:reconl}\label{step:3}

\end{algorithm}

\section{Discussion}
\label{sec:discussion}

In this paper we presented 
an $o(\OPT)$ additive approximation algorithm for Bin Packing with Partition Matroid. While BPP is a natural generalization of Bin Packing variants that have been
studied in the past, to the best of our knowledge it is studied here for the first time.
Our result improves upon the APTAS of \cite{DKS21} for the special case of Group Bin Packing and  generalizes the AFPTAS of \cite{epstein2010afptas} for the special case of Bin Packing with Cardinality Constraints. Our algorithm is based on rounding a solution for the configuration-LP formulation of the problem. The rounding process relies on the key notion of a {\em prototype}, in which items are placeholders for other items, and the use of fractional grouping~\cite{FKS21}. Our algorithm demonstrates the power of this fractional version of linear grouping in solving constrained packing problems; it also 
shows how fractional grouping can be used {\em constructively}.

While our algorithm outputs a solution which uses $\OPT+o(\OPT)$ bins,  the function hidden by the little-o notation is of the form $\frac{x}{(\ln \ln x)^{\frac{1}{17}}}$.  We believe that a tighter analysis may lead to a better additive approximation, 
for example, to an algorithm which returns a solution using $\OPT+O\left(\frac{\OPT}{\ln \OPT} \right)$ bins. We leave the tighter analysis for the full version of this paper. The existence of approximation algorithms for BPP which return a solution using at most $\OPT+O(\OPT^{1-\eps})$ bins, for some constant $\eps>0$, remains open.

The techniques presented in this paper seem to be useful also in other settings.  
Our preliminary study suggests we can apply these techniques 
to obtain a polynomial time approximation scheme for {\em Multiple Knapsack with Partition Matroid}, a generalization of the Multiple Knapsack problem (see, e.g., \cite{CK05, Ja10}) in which the items assigned to each bin form an independent set of a partition matroid. Another  application comes from the design of approximation algorithms for 
Machine Scheduling with Partition Matroid, 
a generalization of the classic Machine Scheduling problem in which the jobs assigned to a machine must be an independent set of a given partition matroid. We note that the problem is a generalization of Machine Scheduling with Bag-Constraints studied in~\cite{DW17,Jansen_et_al:2019}.

The problem of Bin Packing with Partition Matroid is a special case of Bin Packing with Matroid, for which the input is a set of items $I$, a size function $s:I\to \mathbb{R}$ and a matroid~$\cM$. The objective is to partition $I$ into a minimal number of bins $A_1,\ldots, A_m$ such that $A_b$ is an independent set of the matroid $\cM$, and $s(A_b)\leq 1$ for all $b \in [m]$. This problem is a natural generalization of both  Bin Packing and Matroid Partitioning; yet, we were unable to find any published results.
We note that our approach for solving BPP  heavily relies on the structure of the partition matroid, and therefore cannot be easily extended to handle a general matroid.

\appendix

\section{Omitted Proofs of Section~\ref{sec:overview}}
\label{app:omitted}

\newcommand{\LP}{\textnormal{LP}}
\newcommand{\cD}{\mathcal{D}}
\newcommand{\dual}{\textnormal{Dual}}
\newcommand{\ellip}{\textnormal{\textsf{Ellipsoid}}}
\noindent{\bf Proof of Lemma~\ref{configurationLP}:}
The approach presented here is considered as standard, and often the result is mentioned without a proof (see, e.g., Theorem~1.1 in \cite{bansal2010new}). We include the proof 
for completeness. We refer to terms such as separation oracle and  well-described polyhedron as defined in~\cite{grotschel2012geometric}. 

Let $\II=(I,\cG, s, k )$ be a BPP instance and $\eps\in (0,0.1)$. 
For any $\cD\subseteq \cC$ we define the following linear program.
\begin{equation}
	\label{C-LP-relaxed}
	\LP(\cD):~~~~~
	\begin{aligned}
		~~~~~ \min\quad        & ~~~~~\sum_{C \in \cD} \bar{x}_C                                                           \\
		\textsf{s.t.\quad} & ~~~\sum_{~C \in \mathcal{C}[\ell]\cap \cD} \bar{x}_C \geq 1   & \forall \ell \in I~~~~~\\  
		& ~~~~~\bar{x}_C \geq 0 ~~~~~~~~&~~~~~~~~~~~~~~~~~~~~ \forall C \in  \mathcal{C}~~~~
	\end{aligned}
\end{equation}
While $\LP(\cC)$ is not identical to \eqref{C-LP}, solving $\LP(\cC)$ is equivalent to solving \eqref{C-LP} and we will focus on this objective. 

For any $\cD\subseteq \cC$, the dual linear program of $\LP(\cD)$ is the following. 
\begin{equation}
	\label{dual}
	\dual(\cD):~~~~~
	\begin{aligned}
		~~~~~ \max\quad        & ~~~~~\sum_{\ell \in I} \bar{\lambda}_{\ell}                                   \\
		\textsf{s.t.\quad} & ~~~\sum_{\ell \in C} \bar{\lambda}_{\ell} \leq 1   & \forall C\in \cD~~~~~\\  
		& ~~~~~\bar{\lambda}_{\ell} \geq 0 ~~~~~~~~&~~~~~~~~~~~~~~~~~~~~ \forall \ell \in I~~~~
	\end{aligned}
\end{equation}
We note that $\dual(\cC)$ can be solved in polynomial time given a separation oracle (see Theorem 6.3.2 in~\cite{grotschel2012geometric}). However, since no such separation oracle exists, we apply a technique dating back to~\cite{karmarkar1982efficient}  in order to obtain an approximate solution.

Given $\cD\subseteq \cC$ and $v\in \mathbb{R}$, define a polytope
\begin{equation}
	\label{eq:F_def}
	F(\cD, v) = \left\{\bar{\lambda}\in \mathbb{R}^I_{\geq 0} ~~\middle|~~\begin{aligned}
	\sum_{\ell \in I} \bar{\lambda}_{\ell} \geq v \\ 
	\sum_{\ell \in C} \bar{\lambda}_{\ell} \leq 1 &~~~~~~~\forall C\in \cD
\end{aligned}\right\}
\end{equation}
Clearly, $\OPT(\dual(\cD)) \geq v$ if and only if $F(\cD,v) \neq \emptyset$. 

Observe that, for every $\cD\subseteq \cC$ and $v\in \mathbb{R}_{\geq 0}$, each of the inequalities in the definitions $F(\cD, v)$ can be represented using $\varphi= O(|I| + \|v\|)$ bits, where $\|v\|$ is the size of the representation for the number~$v$; thus, $(F(\cD, v), n ,\varphi)  $  is a well-described polyhedron (Definition 6.2.2 in~\cite{grotschel2012geometric}). 
By Theorem 6.4.1 in~\cite{grotschel2012geometric} there is an algorithm $\ellip$ which given a separation oracle for $F(\cD,v)$ determines if $F(\cD,v)\neq \emptyset$ in time $\poly(|I|, \|v\|)$. 

We use $\ellip$ to determine if $F(\cC,v)\neq \emptyset$ with the following (flawed) separation oracle. Given $\bar{\lambda}\in \mathbb{R}^I_{\geq 0}$, the oracle first checks if $\sum_{\ell \in I} \bar{\lambda}_{\ell} <v$. If this is the case, the algorithm returns $\sum_{\ell \in I} \bar{\lambda_{\ell}} <v$ as a separating hyperplane.  Otherwise, the algorithm runs the FPTAS for CMP with the instance~$\II$, the weight function $w(\ell)= \bar{\lambda}_{\ell}$ and $\frac{\eps}{10}$ as the error. If the FPTAS returned $C\in \cC$ such that $\sum_{\ell \in C} \bar{\lambda}_{\ell}>1$, the algorithm returns  $\sum_{\ell \in C} \bar{\lambda}_{\ell}>1$ as a separating hyperplane. If the configuration returned by the FPTAS does not meet this  condition, then  the oracle aborts the execution of $\ellip$.

Observe that the execution of the separation oracle runs in time $\poly(|\cI|, \frac{1}{\eps}, \|v\|)$. Thus, the execution of $\ellip$ with the oracle terminates in polynomial time. The execution can either end with a declaration that $F(\cC,v)=\emptyset$ or be aborted by the separation oracle. Consider each of these two cases:
\begin{itemize}
	\item
The execution terminated by declaring that $F(\cC,v)=\emptyset$. Then, as the separating hyperplanes returned by the oracle are indeed separating hyperplanes, it follows that $F(\cC,v)=\emptyset$  is a correct statement. Let $\cD$ be the set of configurations returned as separating hyperplanes throughout the execution. Then, as all the separating hyperplanes returned are separating hyperplanes for $F(\cD, v)$ we conclude that $F(\cD,v)=\emptyset$ as well. Furthermore, $|\cD|$ is polynomial in $|I|$ and $\|v\|$, as the running time of $\ellip$ is polynomial in these variables. 

\item Otherwise, the execution of $\ellip$ has been aborted. Let $\bar{\lambda}$ be the value given to the separation oracle on its last call (the one which ended up with the abortion). It follows that $\sum_{\ell \in I} \bar{\lambda}_{\ell}\geq v$ and $\sum_{\ell \in C} \bar{\lambda}_{\ell} \leq \frac{1}{1-\frac{\eps}{10}}$ for all $C \in \cC$ (otherwise the FPTAS must return a solution $C$ for which $\sum_{\ell \in C} \bar{\lambda}_{\ell} >1$). Then it holds that $\left(1-\frac{\eps} {10}\right) \bar{\lambda} \in F\left(\cC,\left(1-\frac{\eps} {10}\right)\cdot v \right)$, and consequently $F\left(\cC,\left(1-\frac{\eps} {10}\right)\cdot v \right)\neq \emptyset$.
\end{itemize}

Thus, using a binary search, we can find $v\in [0, n]$ and $\cD\subseteq \cC$ such that $F(\cC,v)=F(\cD, v)=\emptyset$, $F\left(\cC, \left( 1-\frac{\eps}{2}\right) \cdot v\right)\neq \emptyset$, and $|\cD|$ is polynomial in $|I|$.  As $F\left(\cC, \left( 1-\frac{\eps}{2}\right) \cdot v\right)\neq \emptyset$ it follows that $\OPT(\dual(\cC)) \geq\left( 1-\frac{\eps}{2}\right) \cdot  v$, and by strong duality it holds that $\OPT(\LP(\cC)) \geq \left( 1-\frac{\eps}{2}\right) \cdot v$.   Furthermore, as $F(\cC,v)=F(\cD, v)=\emptyset$ it follows that $\OPT(\dual(\cD))\leq v$, and by strong duality $\OPT(\LP(\cD))\leq v$. As $|\cD|$ is polynomial we can solve $\LP(\cD)$ in polynomial time and obtain a solution $\bx$ (which is also a solution for $\LP(\cC)$), such that $$\|\bx\| =\OPT(\LP(\cD)) \leq v \leq \frac{\OPT(\LP(\cC))}{1-\frac{\eps}{2}} \leq (1+\eps)\cdot \OPT(\LP(\cC)).$$

Overall, we obtained a $(1+\eps)$-approximate solution in $\poly(|\cI|, \frac{1}{\eps})$ time, as required. 

 \qed

\noindent{\bf Proof of Lemma~\ref{lem:AFPTAS}:} 
Observe that the optimum of \eqref{C-LP} is at most $\OPT(\II)$, thus by Lemma~\ref{configurationLP} it holds that $\bar{x}$ is a solution to the configuration LP \eqref{C-LP} of $\mathcal{I}$ and  $\|\bx\|\leq(1+\eps)\OPT(\mathcal{I})$. Therefore, by Lemma~\ref{lem:eviction}, Algorithm $\textsf{Evict}(\eps,\cI,\bar{x})$ in Step~\ref{step:evicAFPTAS} returns a prototype $\bar{y}$ with $\bar{\gamma}$ in the $\bar{y}$-polytope such that (i) for all $\ell,j \in I, \ell \neq j$ it holds that $\bar{\gamma}_{\ell,j} = 0$; (ii) for all $C \in \textnormal{\textsf{supp}}(\bar{y})$ it holds that $|C| \leq \eps^{-10}$ and $s(C \setminus L) \leq \eps$; (iii) $\|\bar{y}\| \leq (1+\eps)\|\bar{x}\|$; (iv) for all $\ell \in I$ it holds that $\sum_{C \in \cC[\ell]} \by_C \leq 2$.  Then, in Step~\ref{GetPolytope}, Algorithm $\textsf{Shift}(\eps,\cI,\bar{y})$ returns a good prototype $\bar{z}$ by Lemma~\ref{lem:Cnf} such that

\begin{equation}
	\label{final1}
	\begin{aligned}
		\|\bar{z}\| \leq{} & (1+5\eps)\|\bar{y}\|+Q(\eps) \leq (1+5\eps)\cdot (1+\eps)\cdot \|\bar{x}\|+Q(\eps) \\ \leq{} & (1+9\eps)(1+\eps) \OPT(\mathcal{I})+Q(\eps)\leq (1+19\eps) \OPT(\mathcal{I})+Q(\eps).
	\end{aligned}
\end{equation} The first inequality is by Lemma~\ref{lem:Cnf} . The second inequality is by Lemma~\ref{lem:eviction}. The third inequality is by   Lemma~\ref{configurationLP} and because $0<\eps < 0.1$. Consequently, by Lemma~\ref{FromPolytope}, in Step~\ref{step:BPPnice} it holds that $\mathcal{B}$ is an $\eps$-nice partition with size at most: 

\begin{equation}
	\label{final2}
	\|\bar{z}\| + \eps^{-22}Q^2(\eps) \leq (1+19\eps) \OPT(\mathcal{I})+Q(\eps)+ \eps^{-22}Q^2(\eps) \leq (1+19\eps) \OPT(\mathcal{I})+\eps^{-23}Q^2(\eps)
\end{equation} The first inequality is \eqref{final1}. The second inequality is because $0<\eps<0.1$. Then, by Lemma~\ref{lem:GREEDY}, in Step~\ref{step:greedy1}, a full packing $\Phi$ of $\mathcal{I}$ is constructed. The number of bins in $\Phi$ is bounded by:

\begin{equation*}
	\label{final3}
	\begin{aligned}
		\#\textsf{bins}(\Phi) \leq{} & (1+2\eps) \left( (1+19\eps) \OPT(\mathcal{I})+\eps^{-23}Q^2(\eps) \right) +2\eps^{-22}Q^2(\eps) \leq (1+60\eps) \OPT(\mathcal{I})+3\eps^{-23}Q^2(\eps)\\
		\leq{} &(1+60\eps) \OPT(\mathcal{I})+\eps^{-24}Q^2(\eps) \leq (1+60\eps) \OPT(\mathcal{I})+Q^3(\eps).
	\end{aligned}
\end{equation*} The first inequality is by \eqref{final2} and Lemma~\ref{lem:GREEDY}. The other inequalities hold as $0<\eps<0.1$. \qed

\noindent{\bf Proof of Lemma~\ref{lem:gen_afptas}}

\begin{algorithm}[h]
	\caption{$\genalg({\cj}, \eps)$}
	\label{alg:genalg}
	
	\SetKwInOut{Input}{Input}
	\SetKwInOut{Output}{Output}
	\Input{An  BPP instance $\cj$ and $\eps\in (0,0.1)$ such that  $\eps^{-1}\in \mathbb{N}$}
	\Output{A packing of $\cj$}

	Compute an $\eps$-structured instance $\mathcal{I}$ by $\textsf{Reduce}({\cal J},\eps)$ \label{step:reduction}\;
	
	Compute $\Phi =\textsf{AFPTAS}(\II,\eps)$ (Algorithm~\ref{alg:Fscheme}) \label{step:solve_structured}\;

	Return a packing for ${\cal J}$ by $A = \textsf{Reconstruct}(\cj,\eps,\Phi)$ \label{step:recon1}
\end{algorithm}
The pseudo-code of $\genalg$ is given in Algorithm~\ref{alg:genalg}. 
Consider the execution of $\genalg$ with a BPP instance $\cj$ and $\eps\in(0,0.1)$ such that $\eps^{-1}\in \mathbb{N}$.
By Lemma~\ref{lem:reductionReconstruction} it holds that the instance~$\II$ computed in Step~\ref{step:reduction} is an $\eps$-structured instance and $\OPT(\II) \leq \OPT(\cj)$. Thus by Lemma \ref{lem:AFPTAS} it holds $\Phi$ (calculated in Step~\ref{step:solve_structured}) is a packing of $\II$ which uses at most 
$$ (1+60\eps) \cdot \OPT(\II) + Q^3(\eps)\leq  (1+60\eps) \cdot \OPT(\cj) + Q^3(\eps)$$
bins.  Thus, by Lemma~\ref{lem:reductionReconstruction}, it holds that the packing $A$ returned by the algorithm is a packing of $\cj$ which uses at most 
$$
(1+60\eps) \cdot \OPT(\cj) + Q^3(\eps) +13\eps\cdot  \OPT(\cj)  +1 \leq (1+130) \cdot \OPT(\cj) +3\cdot Q^{3}(\eps) 
$$
bins.

It easily follows from Lemmas~\ref{lem:AFPTAS} and~\ref{lem:reductionReconstruction} that  the overall running time of the Algorithm~\ref{alg:genalg} is $\poly(|\cj|, \frac{1}{\eps})$. 
\qed

\noindent{\bf Proof of Theorem~\ref{thm:main}:}
Let $\mathcal{J} =  (I, \mathcal{G},s,k)$ be a BPP instance. Recall that $V(\cj) = \max_{G \in \mathcal{G}} \ceil{\frac{|G|}{k(G)}}$ is the maximum cardinality of a group in $\mathcal{J}$ divided by the cardinality constraint of the group and let $W =s(I)+ V(\cj)+c$ where $c = \exp\left(\exp\left({100^{17}}\right)\right)$ is a large constant. 
\begin{claim}
	\label{clm:Weps}
$$ 
\OPT(\mathcal{J}) \leq 2\cdot W
$$
\end{claim}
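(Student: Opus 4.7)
The plan is to exhibit a feasible packing of $\mathcal{J}$ that uses at most $2\,s(I) + V(\mathcal{J})$ bins; since $W = s(I) + V(\mathcal{J}) + c$ with $c > 0$, this quantity is bounded by $2W$, yielding the claim. The construction proceeds in two steps: first partition $I$ into $V(\mathcal{J})$ independent sets of the instance matroid, and then pack each such independent set by a simple bin-packing heuristic, noting that any subset of an independent set remains independent so the cardinality constraints are automatically preserved.

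For the first step, I would partition each group $G \in \mathcal{G}$ arbitrarily into $q_G := \lceil |G|/k(G)\rceil$ subsets $G^{(1)},\ldots,G^{(q_G)}$ of cardinality at most $k(G)$. By definition of $V(\mathcal{J})$ we have $q_G \leq V(\mathcal{J})$ for every $G$. For each $i \in [V(\mathcal{J})]$, set
\[
S_i \;=\; \bigcup_{G \in \mathcal{G}:\, q_G \geq i} G^{(i)}.
\]
Then $|S_i \cap G| \leq k(G)$ for every $G \in \mathcal{G}$, so $S_i$ is an independent set of the instance matroid, and $(S_1,\ldots,S_{V(\mathcal{J})})$ is a partition of $I$.

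For the second step, I would apply the Next Fit bin-packing heuristic to each $S_i$ separately. Since every item has size in $(0,1]$, Next Fit produces a packing of $S_i$ using at most $\lceil 2\,s(S_i)\rceil \leq 2\,s(S_i) + 1$ unit-capacity bins (any two consecutive Next Fit bins have total content exceeding one). Each resulting bin is a subset of $S_i$, hence an independent set, so the packing is feasible for $\mathcal{J}$. Concatenating the packings across $i \in [V(\mathcal{J})]$ gives a feasible packing of $\mathcal{J}$ of total size at most
\[
\sum_{i=1}^{V(\mathcal{J})} \bigl(2\,s(S_i) + 1\bigr) \;=\; 2\,s(I) + V(\mathcal{J}).
\]
Finally, since $c \geq 0$, we have $2W = 2\,s(I) + 2\,V(\mathcal{J}) + 2c \geq 2\,s(I) + V(\mathcal{J})$, establishing $\OPT(\mathcal{J}) \leq 2W$. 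There is no real obstacle here: the bound $2W$ is extremely loose because of the huge additive constant $c$, so essentially any $O(1)$-approximation for bin packing together with the trivial matroid partition argument suffices.
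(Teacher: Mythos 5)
Your proposal is correct, but it takes a genuinely different route from the paper. The paper's proof splits the items by size: the items of size larger than $\frac{1}{2}$ are packed one per bin (at most $2\,s(L_{\frac{1}{2}})$ bins), while the remaining items form a BPP instance with all sizes at most $\frac{1}{2}$, to which the paper applies algorithm $\greedy$ of Lemma~\ref{thm:greedy}, obtaining $2\max\{s(I\setminus L_{\frac{1}{2}}), V(\cj)\}+2$ additional bins and hence $2(s(I)+V(\cj))+2\leq 2W$ in total. You instead split the items by the matroid structure: a trivial matroid-partitioning step yields $V(\mathcal{J})$ independent sets $S_1,\ldots,S_{V(\mathcal{J})}$, and Next Fit inside each $S_i$ (every bin being a subset of an independent set, hence a configuration) gives $2\,s(S_i)+1$ bins per class, i.e., $2\,s(I)+V(\mathcal{J})\leq 2W$ bins overall. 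Your argument is self-contained (it does not invoke Lemma~\ref{thm:greedy}, whose proof occupies Appendix~\ref{sec:greedy}) and in fact yields a slightly sharper bound than the paper's $2(s(I)+V(\cj))+2$; the paper's version simply reuses machinery it has already established. Since the claim only needs any constant-factor bound against $W$ (which carries the huge additive constant $c$), both arguments suffice, and all the small steps you rely on (that $q_G\leq V(\mathcal{J})$, that the $S_i$ partition $I$, and the Next Fit bound $m\leq 2s+1$ via consecutive-bin pairing for sizes in $(0,1]$) are sound.
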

\begin{proof}
	Let $L_{\frac{1}{2}}= \{ \ell \in I~|~s(\ell) > \frac{1}{2}\}$. By Lemma~\ref{thm:greedy} it follows that $\greedy$ finds a packing of the instance $\cj \setminus L_{\frac{1}{2}}$ using 
	at most $$\left(1+2\cdot \frac{1}{2}\right) \cdot\max\left\{s(I\setminus L_{\frac{1}{2}}), V(\cj\setminus L_{\frac{1}{2}}) \right\}+2\leq 2 \cdot\max \left\{s(I\setminus L_{\frac{1}{2}}), V(\cj) \right\}+2$$
	bins, where we define $\cj \setminus L_{\frac{1}{2}} = (I_{L_{\frac{1}{2}}},\cG_{L_{\frac{1}{2}}},s_{L_{\frac{1}{2}}},k_{L_{\frac{1}{2}}})$ as the BPP instance such that \begin{equation*}
		\begin{aligned}
			I_{L_{\frac{1}{2}}} = I \setminus L_{\frac{1}{2}}, ~~~~~~~~~~~~~~~~~~~~ ~~~~~~
			\mathcal{G}_{L_{\frac{1}{2}}} = \{G \cap I_{L_{\frac{1}{2}}} \neq \emptyset~|~ G \in \cG\}, ~~~~~~~~~~~~~~~~~~~~~~~~~~~~~~
			\\
			s_{L_{\frac{1}{2}}}(\ell) = s(\ell), ~~\forall \ell \in I_{L_{\frac{1}{2}}}, ~~~~~~~~~~~
			k_{L_{\frac{1}{2}}}(G \cap I_{L_{\frac{1}{2}}}) = k(G), ~~\forall G \in \cG \text{ s.t. } G \cap I_{L_{\frac{1}{2}}} \neq \emptyset.~~~~~~
		\end{aligned}
	\end{equation*} The items in $L_{\frac{1}{2}}$ can be packed  into $|L_{\frac{1}{2}}|\leq 2 \cdot s \left( L_{\frac{1}{2}}\right) $ bins (that is, partitioned into $|L_{\frac{1}{2}}|$ configurations) with  a single item per bin. Thus, set of items $I$ can be packed into number of bins bounded by
	$$
	2 \cdot\max \left\{s(I\setminus L_{\frac{1}{2}}), V(\cj) \right\}+2 + 2 \cdot s \left( L_{\frac{1}{2}}\right)  \leq 
	2 \left( s(I) + V(\cj)\right)+2 \leq 2 \cdot W
	$$
\end{proof}
 Now, define $\eps = \floor{\left(  \ln \ln W \right)^{\frac{1}{17}}}^{-1}$. Since $W \geq c$, it holds that  $0<\eps <0.1$.
In the following we show that the running time of $\genalg$ on the input $\mathcal{J}$  and $\eps$, as defined above, is polynomial in $|\mathcal{J}|$, while the number of bins in the packing returned by the algorithm is $\OPT(\mathcal{J})+o(\OPT(\mathcal{J}))$.

By Lemma~\ref{lem:gen_afptas}, there are polynomial functions $f,g$ such that the running time of $\genalg \left( \mathcal{J}, \eps \right)$ is at most $f\left(\frac{1}{\eps}\right) \cdot g(|\mathcal{J}|)$. We assume without the loss of generality that $f$ is monotone. Therefore, 
\begin{equation}
	\label{eq:polymain0}
	f\left(\frac{1}{\eps}\right)  \cdot g(|\mathcal{J}|) = f\left(\floor{ \left(  \ln \ln W \right)^{\frac{1}{17}}}\right)  \cdot g(|\mathcal{J}|) \leq  f(W)  \cdot g(|\mathcal{J}|) \leq f(2\cdot |\cj|+c) \cdot g(|\cj|)
\end{equation}
The first equality is by the definition of $\eps$. The first inequality is because $W>c>1$ and for all $x \geq 1$ it holds that $\ln x \leq x$.  The last inequality hold as $V(\cI), s(I)\leq |I|\leq |\cj|$. 
 Since $c$ is a constant it follows from  the \eqref{eq:polymain0} that the running time of $\genalg(\cj,\eps)$ is polynomial in $|\cj|$.

By Lemma~\ref{lem:gen_afptas},  it holds that the packing $A$ returned by $\textsf{Gen-AFPTAS} \left( \mathcal{J}, \eps \right)$ is a packing of $\mathcal{J}$ which uses at most  $ (1+130\eps)OPT(\mathcal{I})+3\cdot Q^3(\eps)$ bins. 
Now,  \begin{equation}
	\label{eq:Q}
	\begin{aligned}
			Q(\eps) ={} & \exp(\eps^{-17}) =  \exp \left( \left(   \floor{\left( \ln \ln W \right)^{\frac{1}{17}}}^{-1} \right)^{-17} \right) \\ \leq{} & \exp \left(   \ln \ln W \right) = \ln  W \leq \ln  \left( 2\cdot\OPT(\mathcal{J}) +c \right). 
	\end{aligned}
\end{equation}

The first equality is by the definition of $Q(\eps)$. The second equality is by the definition of $\eps$.  The last inequality holds by $\OPT(\cj)\geq s(I)$, $\OPT(\cj)\geq V(\cI)$  and the definition of $W$. Hence,  the number of bins used by $A$ is at most
\begin{equation}
	\begin{aligned}
		\label{eq:n1}
	& (1+130\eps)\OPT(\mathcal{J})+3\cdot Q^3(\eps) \\
		\leq& \OPT(\mathcal{J})+  130 \floor{\left(  \ln \ln W \right)^{\frac{1}{17}}}^{-1}\OPT(\mathcal{J})+3\cdot\ln^3  \left( 2\cdot\OPT(\mathcal{J}) +c \right)\\
		\leq & \OPT(\mathcal{J})+  130\cdot \frac{ \OPT(\cj)}{\left(\ln \ln\left( \frac{\OPT(\cj)}{2}\right) \right)^{\frac{1}{17}} -1} +3\cdot \ln^3  \left( 2\cdot\OPT(\mathcal{J}) +c \right) \\  ={} &  \OPT(\mathcal{J})+O\left(\frac{\OPT(\mathcal{J})}{ \left(\ln \ln \OPT(\cj) \right)^{\frac{1}{17}}}\right).
	\end{aligned}
\end{equation}
The  first inequality is by the definition of $\eps$ and \eqref{eq:Q}. The second inequality holds since $W\geq \frac{\OPT(\cj)}{2}$ by Claim~\ref{clm:Weps}.

  Overall, we showed that the algorithm returns a packing of $\cj$ using $ \OPT(\mathcal{J})+O\left(\frac{\OPT(\mathcal{J})}{ \left(\ln \ln \OPT(\cj) \right)^{\frac{1}{17}}}\right)$ bins in polynomial time in $|\cj|$. \qed

\section{Properties of the $\bar{x}$-polytope}
\label{sec:poly}
\newcommand{\bm}{\bar{m}}
\newcommand{\bgam}{\bar{\gamma}}
\newcommand{\bb}{\bar{b}}

 In this section we give the proof of Lemma~\ref{O(1)}. Let $\bar{x}$ be a prototype of $\mathcal{I}$ which satisfy the conditions of Lemma~\ref{O(1)}: there is an integer $k$ such that for all $C \in \textsf{supp}(\bar{x})$ it holds that  $|C| \leq k$ and $\bar{x}_C \in \mathbb{N}$; also, let $\bar{\lambda}$ be a vertex of the $\bar{x}$-polytope for which constraint \eqref{F4} holds with equality.  We say that a constraint is {\em tight} if it holds with equality. Define the {\em active} types of $\bar{x}$ as $$T = \textsf{supp}(\bar{x}) \cup \{j \in I~|~ \exists C \in \textsf{supp}(\bar{x}) \text{ s.t. } j \in C\}.$$

For every $t \in T$ and $G \in \cG$, we define below a value $L_{t,G}$. Let $C \in \cC \cap T$,$j \in I \cap T$, and $G \in \cG$. Define $L_{C,G} = \bar{x}_C \cdot \left(   k(G)-|G \cap C|  \right)$ and $L_{j,G} = \sum_{C' \in \cC[j]} \bar{x}_{C'}$. For every  $(\ell,t) \in I \times (I \cup \cC)$, we say the entry $\bar{\lambda}_{\ell,t}$ {\em corresponds} to $\ell$ and $t$. Note that all entries $\bar{\lambda}_{\ell,t}$ such that $t \in (I \cup \cC) \setminus T$ (i.e., not corresponding to an active type) are required to be zero by the definition of the $\bar{x}$-polytope.

  For the following, fix a group $G \in \cG$. A {\em movement} of $G$ is a vector $\bm \in \mathbb{R}^{I  \times (I \cup \cC)}$ which is used as a relaxation of the constraints corresponding to items in $G$ in the $\bar{x}$-polytope:

 \begin{align}
		\displaystyle \bar{m}_{\ell,t} = 0   ~~~~~~~~~~~~~~~~~~~~~~~~~~~~~~~~~~~~ ~~~~~~	~~~~~~~~~~\forall \ell\in I, t \in T \text{ s.t. } \ell \notin G  \textnormal{ or } \bar{\lambda}_{\ell, t} \in \{0,1\} ~ \label{m0}\\
		\rule{0pt}{1.8em}
		\displaystyle  \sum_{t \in T} \bm_{\ell,t} = 0 ~~~~~~~~~~~~~~~~~~~~~~~~~~~~~~~~~~~~~~~~~~~~~~~\forall \ell \in I ~~~~~~~~~~~~~~~~~~~~~~~~~~~~~~~~~~~~~~~~~~~ \label{ml}\\
		\rule{0pt}{1.8em}
		\displaystyle    \sum_{\ell \in G} \bar{m}_{\ell,t} = 0~~~~~~~~~~~~~~~~~~~~~~~~~~~~  ~~~~~~~~~~	~~~~~~~~~~\forall t \in T \text{ s.t.} \sum_{\ell \in G} \bar{\lambda}_{\ell,t} =L_{t,G}~~~~~~~~~~~~~~~~~~\label{mT}
	\end{align}

 We say that $G$ is {\em fractional} if there are $\ell \in G$ and $t \in T$ such that $\bar{\lambda}_{\ell,t} \in (0,1)$.  

\begin{claim}
	\label{lem:movement}
	For every group $G \in \cG$, If $G$ is fractional then $G$ has a movement $\bm \neq 0$.
\end{claim}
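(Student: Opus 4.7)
The plan is to translate the existence of a nonzero movement into the existence of a cycle in a suitable bipartite auxiliary graph, and then establish this cycle via a degree-counting argument. Let $F=\{(\ell,t)\in G\times T\mid \bar{\lambda}_{\ell,t}\in(0,1)\}$ be the set of fractional entries of $\bar{\lambda}$ on rows in $G$; by hypothesis $F\neq\emptyset$. Let $R=\{\ell\mid\exists\,t\colon (\ell,t)\in F\}$ and $T_f=\{t\mid\exists\,\ell\colon (\ell,t)\in F\}$, and consider the bipartite graph $B=(R,T_f,F)$. Write $T^*=\{t\in T\mid \sum_{\ell\in G}\bar{\lambda}_{\ell,t}=L_{t,G}\}$ for the set of column-types at which~\eqref{mT} imposes a constraint; by~\eqref{m0} any movement is supported on $F$, and~\eqref{ml},~\eqref{mT} reduce to a row-sum constraint at every $\ell\in R$ and a column-sum constraint at every $t\in T^*\cap T_f$.

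The key combinatorial observation is that every vertex of $B$ carrying an active constraint has degree at least $2$. For a row $\ell\in R$, tightness of~\eqref{F4} yields $\sum_{t}\bar{\lambda}_{\ell,t}=1$; a single fractional entry together with $\{0,1\}$-valued entries cannot sum to the integer $1$, so $\deg_B(\ell)\ge 2$. For a column $t\in T^*\cap T_f$, the value $L_{t,G}$ is an integer (since $\bar{x}_C,k(G),|G\cap C|\in\mathbb{N}$), and an analogous parity argument on $\sum_{\ell\in G}\bar{\lambda}_{\ell,t}=L_{t,G}$ forces at least two fractional entries in column $t$ among rows of $G$.

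Next I would form the auxiliary graph $B'$ by contracting the ``free'' columns $T_f\setminus T^*$ into a single super-vertex $*$. Movements correspond bijectively to edge-weightings on $E(B')=F$ whose signed sum vanishes at every vertex of $B'$ other than $*$, so it suffices to exhibit a cycle in $B'$. Taking any connected component $C$ of $B'$ with $E(C)\neq\emptyset$, every non-$*$ vertex of $C$ has degree $\ge 2$ and $*$ (if in $C$) has degree $\ge 1$, whence $2|E(C)|\ge 2|V(C)|-1$ and therefore $|E(C)|\ge |V(C)|$; thus $C$ contains a simple cycle. Assigning $\bm$ the values $+1$ and $-1$ alternately along this cycle (and $0$ elsewhere) yields a nonzero movement, since at each non-$*$ vertex on the cycle the two incident cycle-edges carry opposite signs and balance the corresponding constraint, while $*$ bears none. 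The main obstacle is the column-degree bound, which is the sole step that uses integrality of $L_{t,G}$ and thus the hypothesis $\bar{x}_C\in\mathbb{N}$ of Lemma~\ref{O(1)}; the super-vertex contraction is the device that lets a single degree count handle constrained rows, constrained columns, and unconstrained columns uniformly.
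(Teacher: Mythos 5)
Your proof is correct and takes a genuinely different, more elementary route than the paper's. The paper builds a polytope $P \subseteq \mathbb{R}_{\geq 0}^{G \times T}$ whose constraint matrix it shows to be totally unimodular via the two-nonzeros-per-column criterion (Lemma~\ref{lem:unimodular}); since the right-hand sides $L_{t,G}$ are integers (the place where $\bar{x}_C \in \mathbb{N}$ enters), the vertices of $P$ are integral, so the non-integral point $\bar{\phi}$ (the restriction of $\bar{\lambda}$ to $G\times T$) is not a vertex, and the paper extracts a nonzero perturbation direction $\bar{\gamma}$ with $\bar{\phi}\pm\bar{\gamma}\in P$ and verifies \eqref{m0}--\eqref{mT} directly for it. You instead work with the bipartite graph of fractional entries: tightness of \eqref{F4} forces every fractional row to have degree at least two, and integrality of $L_{t,G}$ (the same use of $\bar{x}_C \in \mathbb{N}$) forces every tight column to have degree at least two; contracting unconstrained columns into a super-vertex and counting degrees produces a cycle, and the alternating $\pm 1$ weighting along it is the movement. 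Conceptually the two arguments are two faces of the same fact---the matrix defining $P$ is totally unimodular precisely because it is a signed bipartite-incidence matrix, and nontrivial kernel elements of such matrices are generated by cycle weightings---but your route avoids invoking the TU theorem as a black box and makes the combinatorics explicit. One small note: after contraction $B'$ is a multigraph, and the cycle you find may be a $2$-cycle of parallel edges at $*$; the construction still works since the two edges lift to distinct free columns in $B$, but it is cleaner to say ``cycle in a multigraph'' than ``simple cycle.''
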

The proof of Claim~\ref{lem:movement} utilizes properties of {\em totally unimodular} matrices. A matrix $A$ is totally unimodular if every square submatrix of $A$ has a determinant $1$, $-1$ or $0$. If $A\in \mathbb{R}^{n\times m}$ is totally unimodular and $b\in \mathbb{Z}^{m}$ is an integral vector, it holds that the vertices of the polytope  $$P_A =\{\bar{v} \in \mathbb{R}_{\geq 0}^n~|~A \bar{v} \leq \bb\}$$ are integral. That is,  if $\bar{v} \in P_A$ is a vertex  of $P_A$ then $\bar{v} \in \mathbb{Z}^n$ \cite{HK56}.  We use the following criteria for total unimodularity, which is a simplified version of a theorem from~\cite{HK56}.
\begin{lemma}
	\label{lem:unimodular}
	Let $A\in \mathbb{R}^{n\times m}$ be a matrix which satisfies the following properties.
	\begin{itemize}
		\item All the entries of $A$ are  in  $\{-1,1,0\}$.
		\item Every column of $A$ has up to two non-zero entries.
		\item If a column of $A$ has two non-zero entries, then these entries have opposite signs.
		\end{itemize}
	Then, $A$ is totally unimodular.
	\end{lemma}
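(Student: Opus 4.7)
The plan is to prove Lemma~\ref{lem:unimodular} by induction on the order $k$ of a square submatrix $B$ of $A$, showing that $\det(B) \in \{-1, 0, 1\}$ for every such $B$. The base case $k = 1$ is immediate since all entries of $A$ lie in $\{-1, 0, 1\}$ by hypothesis.

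For the inductive step, I would split into three cases according to how many non-zero entries appear in the columns of $B$. First, if $B$ contains a column that is identically zero, then $\det(B) = 0$ and we are done. Second, if $B$ has some column with exactly one non-zero entry, then cofactor-expanding the determinant along that column expresses $\det(B)$ as $\pm 1$ times the determinant of a $(k-1) \times (k-1)$ submatrix of $A$; by the inductive hypothesis, the latter lies in $\{-1, 0, 1\}$, hence so does $\det(B)$. The key observation here is that the submatrix obtained by deleting a row and a column of $B$ remains a submatrix of $A$, so the inductive hypothesis applies.

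The remaining case is when every column of $B$ has exactly two non-zero entries. Here I would invoke the third hypothesis on $A$: since every column of $A$ with two non-zero entries has one $+1$ and one $-1$, and the rows of $B$ are rows of $A$ restricted to a subset of columns, every column of $B$ in this case also contains exactly one $+1$ and one $-1$. Consequently, the sum of all rows of $B$ is the zero vector, so the rows of $B$ are linearly dependent and $\det(B) = 0$. This completes the induction, proving that every square submatrix of $A$ has determinant in $\{-1,0,1\}$, which is the definition of total unimodularity.

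No step presents a real obstacle; the only subtlety worth articulating carefully is that in the third case the two non-zero entries in a column of $B$ necessarily still have opposite signs, because they are inherited from the corresponding column of $A$ (the submatrix operation can only delete entries, never alter their values or introduce new ones). Given how standard this result is, I would present the argument compactly in a single paragraph rather than formally labeling cases, relying on the three hypotheses of the lemma statement in exactly the order listed.
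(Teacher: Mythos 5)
Your proof is correct and complete. Note that the paper does not actually prove Lemma~\ref{lem:unimodular}: it is invoked as a simplified version of a classical theorem cited from~\cite{HK56}, so there is no in-paper argument to compare against. Your induction on the order of a square submatrix $B$ is the standard self-contained proof of this fact, and all the steps check out: the three cases (a zero column; a column with exactly one non-zero entry, handled by cofactor expansion and the inductive hypothesis; every column having exactly two non-zero entries) are exhaustive, since each column of $B$ inherits at most two non-zero entries from the corresponding column of $A$. In the last case you correctly observe that the two surviving entries must be the two non-zero entries of the parent column of $A$, hence a $+1$ and a $-1$, so every column of $B$ sums to zero, the rows of $B$ are linearly dependent, and $\det(B)=0$. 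The only thing your write-up adds beyond the paper is precisely this self-containedness, which is a reasonable trade-off against simply citing the literature as the authors do.
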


\posA
\noindent{\bf Proof of Claim~\ref{lem:movement}:}
Let $G \in \cG$ be a fractional group. Define $X = \{ (\ell,t) \in G \times T~|~ \ell \notin \textsf{fit}(t)\}$  as all {\em forbidden pairs} of $G$, where for all $(\ell,t) \in X$ it holds that $\bar{\lambda}_{\ell,t} = 0$ by \eqref{F1}.  We show the existence of the movement $\bm \neq 0$ using the polytope $P$ defined as follows. 

	\begin{equation}
		\label{eq:Pj_def}
	P= \left\{ \by\in \mathbb{R}_{\geq 0}^{G \times T} ~\middle |~ \begin{array}{lcc}
		\displaystyle \forall (\ell,t) \in X&:& \displaystyle\by_{\ell,t}\leq 0\\
			\rule{0pt}{1.8em}
	 \displaystyle	\forall \ell\in G&:& \displaystyle  \sum_{t \in T \textnormal{ s.t. } (\ell,t) \in  \notin X} -\by_{\ell,t} \leq -1\\
	 	\rule{0pt}{1.8em}
	 \displaystyle \forall t \in T&:& 
	 \displaystyle  \sum_{\ell\in G \textnormal{ s.t. } (\ell,t)\notin X } \by_{\ell,t} \leq L_{t,G}
	\end{array}\right\}.
	\end{equation}

Define a vector $\bar{\phi} \in \mathbb{R}^{G \times T}$ by $\bar{\phi}_{\ell,t}=\bar{\lambda}_{\ell,t}$ for all $(\ell,t) \in G \times T$. It follows from the definition of the $\bar{x}$-polytope that $\bar{\phi} \in P$. We can represent the inequalities in \eqref{eq:Pj_def} using a matrix notation as  $$P =\left\{\by \in \mathbb{R}^{G \times T}_{\geq 0} ~\middle|~ A\by \leq \bb \right\}.$$ It follows that $A$ contains only entries in $\{-1,0,1\}$ and the entries in $\bb$ are all integral. Furthermore, every column of $A$ contains at most $2$ non-zero entries, and if there are two non-zero entries in a column then they are of a different sign. By Lemma~\ref{lem:unimodular}, it follows that $A$ is totally unimodular. It thus holds that all the vertices of the polytope $P$ are integral. As $\bar{\phi} \in P$ is non-integral (since $G$ is fractional), it follows that $\bar{\phi}$ is not a vertex of $P$. Hence, there is a vector $\bgam\in \mathbb{R}^{G \times T}$, $\bgam \neq 0$  such that $\bar{\phi}+\bgam, \bar{\phi}-\bgam \in P$. 

We define $\bm \in \mathbb{R}^{I \times (I \cup \cC)}$ by $\bm_{\ell,t}=\bgam_{\ell,t}$ for all $(\ell,t)\in G \times  T$ and $\bm_{\ell,t} =0$ otherwise. Clearly, $\bm \neq 0$ as $\bgam \neq 0$. Observe that for $\ell \in I \setminus G$ and $t \in (I \cup \cC)$ it holds that $\bm_{\ell,t}=0$ by definition. For $\ell\in G$ and $t \in (I \cup \cC)$ such that $\bar{\lambda}_{\ell,t} =0$, as $\bar{\phi}_{\ell,t}+\bgam_{\ell,t}, \bar{\phi}_{\ell,t} -\bgam_{\ell,t}\geq 0$ and $\bar{\phi}_{\ell,t}= \bar{\lambda}_{\ell,t}=0$,  it follows that $\bar{m}_{\ell,t} = \bgam_{\ell,t}=0$. 
 For $\ell \in G$ and $t \in I \cup \cC$ such that $\bar{\lambda}_{\ell,t} =1$,  for all $t'\in I \cup \cC \setminus \{t\}$ it holds that $\bar{\phi}_{\ell,t'}= \bar{\lambda}_{\ell,t'} =0$ because constraint \eqref{F4} in the $\bar{x}$-polytope is tight for $\bar{\lambda}$. Thus, by the previous argument, we have $\bgam_{\ell,t'}=0$ for every $t'\in I \cup \cC \setminus \{t\}$ in case that $\bar{\lambda}_{\ell,t} =1$. Therefore,

 \begin{equation}
 	\label{eq:gam_supp_first}
-1 -\bgam_{\ell,t}=- \sum_{t'\in I \cup \cC \textnormal{ s.t } (\ell,t') \notin X} \left(\bar{\phi}_{\ell,t'}  +\bgam_{\ell,t'}\right) \leq -1.
\end{equation} 
The equality is because $\bar{\lambda}$ and also $\bar{\phi}$ hold constraint \eqref{F4} with equality.  The last inequality is due to $ \bar{\phi}+\bgam \in P$. 
Similarly, as $\bar{\phi}-\bgam \in P$, we have
\begin{equation}
	 	\label{eq:gam_supp_second}
	 	-1 +\bgam_{\ell,t}=- \sum_{t'\in I \cup \cC \textnormal{ s.t } (\ell,t') \notin  X} \left(\bar{\phi}_{\ell,t'}  -\bgam_{\ell,t'}\right) \leq -1. 
\end{equation} 
By~\eqref{eq:gam_supp_first} and \eqref{eq:gam_supp_second} we have $\bgam_{\ell,t}=0$. Overall, we have that $\bm$ satisfies \eqref{m0}.

For any $\ell\in I \setminus G$ it holds that $\sum_{t \in T} \bm_{\ell,t} = 0 $ by \eqref{m0}. For $\ell\in G$, since  $\bar{\phi}+\bgam\in P$ we have
\begin{equation}
	\label{eq:t_sum_first}
-1 -\sum_{t \in T} \bgam_{\ell,t} = \sum_{t \in T}-(\bar{\lambda}_{\ell,t} +\bgam_{\ell,t}) =  
 \sum_{t \in T~\textnormal{ s.t. }(\ell,t)\notin X}-(\bar{\phi}_{\ell,t} +\bgam_{\ell,t}) \leq -1,
 \end{equation}

 where the first equality holds since $\bar{\lambda}$ satisfies with equality constraint \eqref{F4} of the $\bar{x}$-polytope and the second equality uses $\bar{\phi}_{\ell,t} +\bgam_{\ell,t}=0$ for $(\ell,t) \in X$ because $\phi+\bar{\gamma} \in P$. Similarly, since $\phi-\bgam\in P$, we have 
 \begin{equation}
 		\label{eq:t_sum_second}
 -1 +\sum_{t \in T} \bgam_{\ell,t} = \sum_{t \in T}-(\bar{\lambda}_{\ell,t} -\bgam_{\ell,t}) =  
 \sum_{t \in T~\textnormal{ s.t. }(\ell,t)\notin X}-(\bar{\phi}_{\ell,t} -\bgam_{\ell,t}) \leq -1.
 \end{equation}

By \eqref{eq:t_sum_first} and \eqref{eq:t_sum_second} we have 
  $\sum_{t \in T} \bm_{\ell,t}=\sum_{t \in T} \bgam_{\ell,t}=0$. Thus, $\bm$ satisfies \eqref{ml}. Finally, let $t \in T$ such that $\sum_{\ell\in G} \bar{\lambda}_{\ell,t} = L_{t,G}$ . As before,
  
  \begin{equation}
  	\label{eq:gam_group_sum_first}
  	L_{t,G} +\sum_{\ell\in G} \bgam_{\ell,t}= \sum_{\ell\in G} \bar{\lambda}_{\ell,t}+\sum_{\ell\in G} \bgam_{\ell,t} =  \sum_{\ell\in G} \left(\bar{\phi}_{\ell,t} +\bgam_{\ell,t}\right)
  	= \sum_{\ell\in G \textnormal{ s.t. } (\ell,t)\notin X} \left(\bar{\phi}_{\ell,t} +\bgam_{\ell,t}\right)  \leq L_{t,G}, 
  	\end{equation}
  where the inequality follows from $\bar{\phi}+\bgam \in P$. Using a similar argument,
  
   \begin{equation}
   		\label{eq:gam_group_sum_second}
  	L_{t,G} -\sum_{\ell\in G} \bgam_{\ell,t}= \sum_{\ell\in G} \bar{\lambda}_{\ell,t}-\sum_{\ell\in G} \bgam_{\ell,t} =  \sum_{\ell\in G} \left(\bar{\phi}_{\ell,t} -\bgam_{\ell,t}\right)
  	= \sum_{\ell\in G \textnormal{ s.t. } (\ell,t)\notin X} \left(\bar{\phi}_{\ell,t} -\bgam_{\ell,t}\right)  \leq L_{t,G}.
  \end{equation}
  
By \eqref{eq:gam_group_sum_first} and \eqref{eq:gam_group_sum_second}, we have $\sum_{\ell\in G} \bm_{\ell,t} = \sum_{\ell\in G} \bgam_{\ell,t} =0$. Thus, $\bm$ satisfies \eqref{mT}. Overall, we show that $\bm \neq 0$ is a movement of $G$.  
\qed

\begin{claim}
	\label{lem:FewFractionalGroups}
	There are at most $|T|$ fractional groups.
\end{claim}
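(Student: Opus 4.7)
The plan is to proceed by contradiction. Assume there are $k > |T|$ fractional groups $G_1,\ldots,G_k$, and for each $i$ invoke Claim~\ref{lem:movement} to obtain a non-zero movement $\bar{m}^{(i)}$. By \eqref{m0} each $\bar{m}^{(i)}$ is supported on $G_i \times T$, and since the $G_i$'s are pairwise disjoint these supports are pairwise disjoint in the first coordinate. I will seek scalars $\alpha_1,\ldots,\alpha_k$, not all zero, so that $\bar{m} := \sum_i \alpha_i \bar{m}^{(i)}$ yields a valid perturbation direction in the sense that both $\bar{\lambda}+\epsilon \bar{m}$ and $\bar{\lambda}-\epsilon \bar{m}$ lie in the $\bar{x}$-polytope for sufficiently small $\epsilon>0$. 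By the disjoint-support property any non-trivial combination is automatically non-zero, so this would contradict $\bar{\lambda}$ being a vertex.

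The first step is to verify that most constraints of the $\bar{x}$-polytope are preserved under perturbation by \emph{any} such linear combination of movements, irrespective of the $\alpha_i$'s. The box constraints $\bar{\gamma}_{\ell,t}\in[0,1]$ and constraint~\eqref{F1} are preserved because~\eqref{m0} forces $\bar{m}^{(i)}_{\ell,t}=0$ whenever $\bar{\lambda}_{\ell,t}\in\{0,1\}$; constraint~\eqref{F4} is preserved by~\eqref{ml}. A tight instance of~\eqref{F5} at $(G,C)$ is preserved either trivially, if $G$ is not among the $G_i$'s (so the corresponding sum of $\bar{m}_{\ell,C}$ over $\ell\in G$ vanishes), or via~\eqref{mT} in the case $G=G_{i_0}$, since the tightness condition of~\eqref{F5} is exactly the one appearing in~\eqref{mT}. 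For~\eqref{F3} at a slot-type $j$, only the index $i_0$ with $j\in G_{i_0}$ can contribute, because $\textnormal{\textsf{fit}}(j)\subseteq\textnormal{\textsf{group}}(j)$ together with~\eqref{F1} forces $\bar{\lambda}_{\ell,j}=0$ (and hence $\bar{m}^{(i)}_{\ell,j}=0$) for $\ell\in G_i$ with $G_i\neq\textnormal{\textsf{group}}(j)$; the same containment gives $\sum_\ell \bar{\lambda}_{\ell,j} = \sum_{\ell\in G_{i_0}} \bar{\lambda}_{\ell,j}$, so tightness of~\eqref{F3} translates to $\sum_{\ell\in G_{i_0}} \bar{\lambda}_{\ell,j}=L_{j,G_{i_0}}$, and~\eqref{mT} once more preserves the constraint.

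The only constraints not automatically preserved are tight instances of~\eqref{F2}. For each $C\in T\cap \cC$ at which $\sum_\ell \bar{\lambda}_{\ell,C}\,s(\ell) = (1-s(C))\,\bar{x}_C$, preservation of~\eqref{F2} under the combined perturbation reduces to the single homogeneous linear equation $\sum_i \alpha_i \sum_{\ell\in G_i} \bar{m}^{(i)}_{\ell,C}\,s(\ell) = 0$. This gives a homogeneous linear system in the $k$ unknowns $\alpha_1,\ldots,\alpha_k$ with at most $|T\cap \cC|\leq |T|<k$ equations, which therefore admits a non-trivial solution $(\alpha_i^*)$. The resulting $\bar{m}=\sum_i \alpha_i^*\bar{m}^{(i)}$ is non-zero (by the disjoint supports), and by the analysis above $\bar{\lambda}\pm\epsilon\bar{m}$ lies in the $\bar{x}$-polytope for all sufficiently small $\epsilon>0$, contradicting that $\bar{\lambda}$ is a vertex. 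The subtlety I expect to require the most care is the treatment of~\eqref{F3}: one has to argue that its global tightness is equivalent to the per-group tightness demanded by~\eqref{mT}, which in turn relies on the $\textnormal{\textsf{fit}}$-containment observation used above.
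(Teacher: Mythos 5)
Your proof is correct and follows essentially the same strategy as the paper's: obtain movements from Claim~\ref{lem:movement}, solve a homogeneous linear system so that tight \eqref{F2} constraints are preserved under the combined perturbation, invoke \eqref{m0}, \eqref{ml}, \eqref{mT} together with the $\textnormal{\textsf{fit}}$-containment observation for \eqref{F3} to handle the remaining constraints, and conclude $\bar{\lambda}$ is not a vertex. Your observation that only the tight \eqref{F2} constraints (at most $|T\cap\cC|$ many) need to enter the homogeneous system is a mild sharpening of the paper's setup, which imposes equations for all $t\in T$ and uses explicit scaling bounds $K$ and $W$ where you use ``sufficiently small $\epsilon$''; otherwise the two arguments coincide.
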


\begin{proof}
	Assume towards a contradiction that there are $|T|+1$ fractional groups. Denote these groups by $G_1, \ldots, G_{|T|+1} \in \cG$. By Lemma~\ref{lem:movement}, for all $j \in [|T|+1]$ it holds that $G_j$ has a movement $\bm^j \neq 0$. Consider the following set of equalities over $a_1, \ldots, a_{|T|+1}$:
	\begin{equation}
		\label{eq:homegeneous} \forall t \in T: ~~~~ \sum_{j=1}^{|T|+1} a_j \sum_{\ell \in I} \bm^j_{\ell,t} \cdot s(\ell) = 0.
	\end{equation}

	These are $|T|$ homogeneous linear equalities in $|T|+1$ variables. Thus, there exist $a_1, \ldots, a_{|T|+1} \in \mathbb{R}$, not all zeros, for which  \eqref{eq:homegeneous} holds.	Define 
	
	$$K_1 = \min_{(\ell,t)\in I \times  T}\min \left\{ \bar{\lambda}_{\ell,t}, 1-\bar{\lambda}_{\ell,t} ~\middle|~0<\bar{\lambda}_{\ell,t}<1\right\}, $$ 
	$$	K_2= \min \left\{
	L_{t,G_j} - \sum_{\ell \in G_j} \bar{\lambda}_{\ell,t}~\middle |~ j \in [ |T|+1], t \in T\text{ s.t. } \sum_{\ell \in G_j}\bar{\lambda}_{\ell,t} < L_{t,G_j}  \right\},
	$$  
	and $K= \min\{K_1,K_2\}$.  Additionally, define 
	$$W=\max\left\{ |\bm^{j}_{\ell,t}|~|~1 \leq j \leq |T|+1,~ (\ell,t)\in I \times  T\right\}.$$

	Observe that $W,K>0$. Using a scaling argument, we may assume that  for any $1\leq j\leq |T|+1$ it holds that $$a_j \leq \frac{K}{|I| \cdot W}.$$ Define 
	
	$$\bar{\phi} = \bar{\lambda} + \sum_{j=1}^{|T|+1} a_j \cdot \bm^j.$$

	In the following we show that $\bar{\phi}$ is in the $\bar{x}$-polytope. For every $\ell\in I\setminus (G_1\cup \ldots \cup G_{|T|+1})$  and $t \in T$ it holds that $\bm^{1}_{\ell,t}= \ldots = \bm^{|T|+1}_{\ell,t} = 0$ due to \eqref{m0}. Also, for all $\ell \in I$ and $t \in I \cup \cC \setminus T$, it holds that $\bm^{1}_{\ell,t}= \ldots = \bm^{|T|+1}_{\ell,t} = 0$ due to \eqref{m0} as $\bar{\lambda}_{\ell,t} = 0$.  Thus,
	$\bar{\phi}_{\ell,t} = \bar{\lambda}_{\ell,t}\in  [0,1]$ in this case. Also, for $\ell \in I$ and $t \in T$ such that $ \bar{\lambda}_{\ell,t}\in  \{0,1\}$ it holds that $\bar{\phi}_{\ell,t} = \bar{\lambda}_{\ell,t} = 0$ by \eqref{m0}. Finally, for $j \in [|T|+1]$, $\ell \in G_j$, and $t \in T$ such that $ \bar{\lambda}_{\ell,t} \in (0,1)$, by \eqref{m0} it holds that $\bar{\phi}_{\ell,t} =  \bar{\lambda}_{\ell,t} + a_j \cdot \bm^j_{\ell,t}$. Following the definitions  of $K$ and $W$, we have 
	$$ 0\leq K -W \cdot \frac{K}{|I|\cdot W} \leq \bar{\lambda}_{\ell,t} + a_j \cdot \bm^j_{\ell,t} \leq 1-K + W \cdot \frac{K}{|I|\cdot W} \leq 1.$$
	Thus, $\bar{\phi} \in [0,1]^{I \times  (I \cup \cC)}$. 	
	
	For every $(\ell,t)\in I \times T$ such that $\ell \notin \textsf{fit}(t)$, it holds that $\bar{\lambda}_{\ell,t}=0$ by \eqref{F1} since $ \bar{\lambda}$ is in the $\bar{x}$-polytope. Thus by \eqref{m0} we have $\bm^{j}_{\ell,t} =0$ for all $1\leq j \leq |T|+1$. Therefore, $\bar{\phi}_{\ell,t} =0$. We conclude that constraint \eqref{F1} is satisfied for $\bar{\phi}$. 
	
	For every $t \in \cC$ it holds that 
	
	$$
	\sum_{\ell \in I} \bar{\phi}_{\ell,t} \cdot s(\ell) = 
	\sum_{\ell\in I} \bar{\lambda}_{\ell,t}\cdot s(\ell)+ \sum_{j=1}^{|T|+1} a_j \sum_{\ell\in I}
	\bm^j_{\ell,t} s(\ell)= \sum_{\ell\in I} \bar{\lambda}_{\ell,t}\cdot s(\ell) \leq (1-s(t)) \bar{x}_t .$$
	
	where the second equality is by \eqref{eq:homegeneous}, and the inequality is due to $\bar{\lambda}$ is in the $\bar{x}$-polytope; hence, the inequality follows by \eqref{F2}.

Let $G \in \mathcal{G}$ and $C \in \cC$. If for all $1 \leq j \leq |T|+1$ it holds that $G \neq G_j$,  then $\bar{\phi}_{\ell,C} = \bar{\lambda}_{\ell,C}$ for every $\ell\in G$ by \eqref{m0}. Thus,  
	$\sum_{\ell\in G } \bar{\phi}_{\ell,C} = \sum_{\ell \in G } \bar{\lambda}_{\ell,C}\leq \bar{x}_C \cdot (k(G)-|G \cap C|)$, as $\bar{\lambda}$ is in the $\bar{x}$-polytope and thus satisfies constraint \eqref{F5}. Otherwise, there is $1\leq j \leq |T|+1$ such that $G = G_j$.  Observe that for every $\ell\in G$ it holds that $\bar{\phi}_{\ell,C} = \bar{\lambda}_{\ell,C}+a_j \bm^j_{\ell,C}$ by \eqref{m0}; 
	consider the following cases.
	
	\begin{itemize}
		\item $\sum_{\ell\in G} \bar{\lambda}_{\ell,C} <L_{C,G}$. Using the definitions of $K$ and $W$, we have
			\begin{equation*}
		    \label{eq:LGM1}
		    	\begin{aligned}
		       \sum_{\ell\in G} \bar{\phi}_{\ell,C}= 
		    \sum_{\ell\in G} \bar{\lambda}_{\ell,C} +a_j \sum_{\ell\in G}  \bm^j_{\ell,C}\leq 
		    L_{C,G} -K +  \frac{K}{|I|\cdot W  } \cdot |I| \cdot W = L_{C,G} = \bar{x}_C \cdot (k(G)-|G \cap C|).
		    \end{aligned}
		\end{equation*}
	
		\item $\sum_{\ell\in G} \bar{\lambda}_{\ell,C} =L_{C,G}$. By \eqref{mT} we have

			\begin{equation}
		    \label{eq:LGM2}
		    \sum_{\ell\in G} \bar{\phi}_{\ell,C}= 
	\sum_{\ell\in G} \bar{\lambda}_{\ell,C} + a_j\sum_{\ell \in G}  \bm^j_{\ell,C}=L_{C,G}  = \bar{x}_C  \cdot (k(G)-|G \cap C|).
		\end{equation}

	\end{itemize}
	
	We showed  that $\sum_{\ell \in G} \bar{\phi}_{\ell,C} \leq \bar{x}_C$ in all cases. We conclude that constraint \eqref{F5} is satisfied for $\bar{\phi}$. 
	
	Let $j \in I$. We split into two cases, as follows.

	\begin{enumerate}
		\item For all $1 \leq i \leq |T|+1$ it holds that $ \textsf{group}(j) \neq G_i$. Then, by \eqref{fitl},  for all $\ell \notin \textsf{group}(j)$ it holds that $\bar{\lambda}_{\ell,j} = 0$; thus, it holds that $\bar{\phi}_{\ell,j} = \bar{\lambda}_{\ell,j}$ for every $\ell \in I$ by \eqref{m0}. By the above,  $\sum_{\ell\in I } \bar{\phi}_{\ell,j} = \sum_{\ell \in I } \bar{\lambda}_{\ell,j} \leq \sum_{C \in \cC[j]} \bar{x}_C$, as $\bar{\lambda}$ is in the $\bar{x}$-polytope and thus satisfies constraint \eqref{F3}.

			\item There is $1\leq i \leq |T|+1$ such that $j \in G_i$. Observe that for every $\ell\in I$ it holds that $\bar{\phi}_{\ell,j} = \bar{\lambda}_{\ell,j}+a_i \bm^i_{\ell,j}$ by \eqref{m0} since for items $\ell \notin G_i$ it holds that $\bar{\lambda}_{\ell,j} = 0$; 
			consider the following cases. 
			
			\begin{itemize}
			\item $\sum_{\ell\in G_i} \bar{\lambda}_{\ell,j} <L_{j,G_i}$. Using the definitions of $K$ and $W$, we have
			\begin{equation*}
				\begin{aligned}
						\sum_{\ell\in I} \bar{\phi}_{\ell,j} ={} & 	\sum_{\ell\in I} \bar{\lambda}_{\ell,j} +a_i \sum_{\ell\in I}  \bm^i_{\ell,j} = 
					\sum_{\ell\in G_i} \bar{\lambda}_{\ell,j} +a_i \sum_{\ell\in G_i}  \bm^i_{\ell,j}\leq 
					L_{j,G_i} -K +  \frac{K}{|I|\cdot W  } \cdot |I| \cdot W \\ ={} & L_{j,G_i} = \sum_{C \in \cC[j]} \bar{x}_C.
				\end{aligned}
			\end{equation*}
			
			\item $\sum_{\ell\in G_i} \bar{\lambda}_{\ell,j} =L_{j,G_i}$. By \eqref{mT} we have

			\begin{equation}
				\label{eq:LGM2}
				\sum_{\ell\in I} \bar{\phi}_{\ell,j} = 	\sum_{\ell\in G_i} \bar{\phi}_{\ell,j} = 
				\sum_{\ell\in G_i} \bar{\lambda}_{\ell,j} + a_i \sum_{\ell \in G}  \bm^i_{\ell,j}=L_{j,G} = \sum_{C \in \cC[j]} \bar{x}_C.
			\end{equation}

		\end{itemize}

	\end{enumerate}

	We showed  that $\sum_{\ell \in I} \bar{\phi}_{\ell,j} \leq L_{j,G}$ in all cases. We conclude that constraint \eqref{F3} is satisfied for $\bar{\phi}$.

		For every $\ell \in I$, we have
	$$\sum_{t \in T} \bar{\phi}_{\ell,t} = 
	\sum_{t \in T} \bar{\lambda}_{\ell,t} + \sum_{j=1}^{|T|+1}
	a_j \sum_{t \in T} \bm^j_{\ell,t} = 
	\sum_{t \in T}  \bar{\lambda}_{\ell,t} = \sum_{t \in I \cup \cC}  \bar{\lambda}_{\ell,t}\geq 1,$$
	where the second equality is by \eqref{ml}, and the last inequality follows since $\bar{\lambda}$ is in the $\bar{x}$-polytope; thus, it satisfies constraint \eqref{F4}. Overall, we have that $\bar{\phi}$ is in the $\bar{x}$-polytope.
	
	We can also define $\bar{\phi}' =  \bar{\lambda}-\sum_{j=1}^{|T|+1} a_j\cdot \bm^j$. By a symmetric argument we can show that $\bar{\phi}'$ is in the $\bar{x}$-polytope as well. It also holds that $\bar{\phi},\bar{\phi}' \neq \bar{\lambda}$ as there is $1\leq j^* \leq |T|+1$ such that $a_{j^*}                                                                                                                                                                                                                                                                                                                                                                                                                                                                                                                                                                                                                                                                                                                                                                                                                                                                                                                                                                                                                                                                                                                                                                                                                                                                                                          \neq 0$; also, there are $\ell^* \in G_{j^*}$  and $t^*\in T$ such that $\bm^{j^*}_{\ell^*,t^*}\neq 0$. Thus, $\bar{\phi}_{\ell^*,j^*}=  \bar{\lambda}_{\ell^*,t^*} +a_{j^*} \bm^{j^*}_{\ell^*,j^*}\neq  \bar{\lambda}_{\ell^*,t^*}$ and $\bar{\phi}'_{\ell^*,j^*}=  \bar{\lambda}_{\ell^*,t^*} -a_{j^*} \bm^{j^*}_{\ell^*,j^*}\neq  \bar{\lambda}_{\ell^*,t^*}$. 
	Furthermore, $ \bar{\lambda} =\frac{1}{2} \cdot \bar{\phi}+\frac{1}{2}\cdot \bar{\phi}'$, and we conclude that $ \bar{\lambda}$ is not a vertex of the $\bar{x}$-polytope. Contradiction. 
\end{proof}

\begin{lemma}
	\label{lem:slot-types}

If a group $G \in \cG$ is fractional, then $$|\{\bar{\lambda}_{\ell,t} \in (0,1) ~|~ \ell \in G, t \in T\}| \leq 2|T|.$$ 

\end{lemma}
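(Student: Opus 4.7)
The plan is to combine a row-integrality observation with a dimension-counting movement argument paralleling the proof of Claim~\ref{lem:FewFractionalGroups}, but localized to the single fractional group $G$. Write $F_G=\{(\ell,t)\in G\times T:\bar{\lambda}_{\ell,t}\in(0,1)\}$ and $G'=\{\ell\in G:\exists\,t\in T\text{ with }(\ell,t)\in F_G\}$. The first step is to observe that every row $\ell\in G'$ contains at least two fractional entries, so $2|G'|\le|F_G|$: since \eqref{F4} is tight at $\bar{\lambda}$, $\sum_{t\in I\cup\cC}\bar{\lambda}_{\ell,t}=1$, and any entry equal to $1$ in row $\ell$ would force every other non-negative entry of the row to vanish, contradicting $\ell\in G'$.

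Second, I would bound $|F_G|$ using the vertex property of $\bar{\lambda}$. Define the linear subspace $V_G\subseteq\mathbb{R}^{F_G}$ of vectors $\bm'$ satisfying $\sum_{t:(\ell,t)\in F_G}\bm'_{(\ell,t)}=0$ for each $\ell\in G'$ (preserving tight \eqref{F4}) and $\sum_{\ell:(\ell,t)\in F_G}\bm'_{(\ell,t)}=0$ for each $t\in T$ at which the cardinality-style column constraint (\eqref{F5} if $t\in\cC$, \eqref{F3} if $t\in I$) is tight. Extending any $\bm'\in V_G$ by zero outside $F_G$ produces a direction $\bm\in\mathbb{R}^{I\times(I\cup\cC)}$ such that, for sufficiently small $\varepsilon>0$, $\bar{\lambda}\pm\varepsilon\bm$ satisfies every constraint of the $\bar{x}$-polytope except possibly the tight \eqref{F2} size constraints (at most $|T\cap\cC|\le|T|$ in number). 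Following the $|T|+1$-vectors argument in the proof of Claim~\ref{lem:FewFractionalGroups}, if $\dim V_G\ge|T|+1$ then $|T|+1$ linearly independent vectors in $V_G$ admit a non-trivial linear combination whose column size effect $\sum_\ell\bm_{\ell,C}s(\ell)$ vanishes for every tight \eqref{F2} configuration $C$; this combination is a non-zero movement preserving every tight constraint of the $\bar{x}$-polytope, contradicting vertex-ness of $\bar{\lambda}$. Hence $\dim V_G\le|T|$. Counting the equations cutting out $V_G$ — at most $|G'|$ row equations plus at most $|T|$ cardinality column equations, since the tight \eqref{F5} and \eqref{F3} columns live in the disjoint subsets $T\cap\cC$ and $T\cap I$ of $T$ — gives $\dim V_G\ge|F_G|-|G'|-|T|$. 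Combining the two dimension bounds with $|G'|\le|F_G|/2$ and with the redundancy between the row-sum equations and the cardinality column-sum equations yields $|F_G|\le 2|T|$.

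The main obstacle will be the careful linear-algebra bookkeeping that extracts the sharp $2|T|$ bound from the movement argument, rather than the weaker $4|T|$ implied by a purely naive equation count. The structural parallel with Claim~\ref{lem:FewFractionalGroups} — which uses $|T|+1$ global movements against $|T|$ size equations to bound the number of fractional groups by $|T|$ — is what pins the sharp $|T|$ upper bound on $\dim V_G$, and combining this with the row-integrality factor of two ultimately produces $|F_G|\le 2|T|$.
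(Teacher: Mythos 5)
Your overall strategy is the same in spirit as the paper's: exploit that $\bar{\lambda}$ is a vertex with \eqref{F4} tight, perturb only the entries of the fractional group $G$, and count constraints. The packaging differs: the paper restricts $\bar{\lambda}$ to $G\times T$, shows the restriction $\bar{\phi}$ is a vertex of an auxiliary polytope $Q$ (transferring any perturbation of $Q$ back to a perturbation of the $\bar{x}$-polytope, exactly because \eqref{F4} is tight and the column bounds $L_{t,G}$ are inherited), and then counts tight constraints of $Q$; you instead work directly in the big polytope with a movement subspace $V_G$ and dispose of the tight \eqref{F2} size constraints by the same $|T|+1$ linear-combination trick used for Claim~\ref{lem:FewFractionalGroups}. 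That part of your plan is sound: extending a vector of $V_G$ by zero indeed only threatens the tight \eqref{F2} columns (F1-zero entries and rows outside $G$ are untouched, and by \eqref{F1} a slot column $j$ with $\textnormal{\textsf{group}}(j)\neq G$ carries no fractional $G$-entries), and the annihilation argument gives $\dim V_G\le |T\cap\cC|\le |T|$.

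The genuine gap is the last step. Your honest accounting gives $\dim V_G\ge |F_G|-|G'|-|T|$, hence $|F_G|\le |G'|+2|T|$, and with $2|G'|\le|F_G|$ this yields $|G'|\le 2|T|$ and only $|F_G|\le 4|T|$. The promised improvement to $|F_G|\le 2|T|$ via ``redundancy between the row-sum equations and the cardinality column-sum equations'' is not an argument: the only generic dependency among those equations is a single global one (summing all rows versus summing all columns, and even that requires every fractional column to be tight), so it can reduce the codimension count by at most a constant, not by the $|G'|$ you would need. In other words, the sharp factor cannot be recovered from your count, and you have flagged precisely the step that is missing. For comparison, the paper's tight-constraint count in $Q$ bounds the number of fractional \emph{items} $|F|$ by $2|T|$ (each fractional row forfeits at least one tight constraint), and then passes to the entry formulation of the lemma by assertion; the quantity actually consumed by Lemma~\ref{O(1)} is of the item/entry type that your argument does deliver (items $\le 2|T|$, entries $\le 4|T|$), but the literal bound of $2|T|$ on fractional entries claimed in the statement is not established by your proposal, and the route you sketch will not produce it without a genuinely different counting step such as the paper's per-row forfeiture argument in the restricted polytope.
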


\begin{proof}
	Let $X = \{(\ell,t) \in G \times T~|~ \ell \notin \textsf{fit}(t)\}$. Define $R = G \times T \setminus X$ as the set of all pairs of items where the first fits in the place of the second. 
	Let  $Q \subseteq \mathbb{R}^{R}$ be the set (polytope) of  all the vectors $\bar{\rho} \in \mathbb{R}^{R}$ which satisfy the following constraints:

	\begin{align}
			\displaystyle 	\bar{\rho}_{\ell,t} \geq 0   ~~~~~~~~~~~~~~~~~~~~ ~~~~~~~~~~~~~~~	~~~~~~~~~~~\forall (\ell,t) \in R~~~~~~~	\label{eq:rho1}\\
			\rule{0pt}{1.8em}
			\displaystyle 	\sum_{t \in T \text{ s.t. } (\ell,t) \in R} \bar{\rho}_{\ell,t} \geq 1 ~~~~~~~~~~~~~~~~~~~~~~~~~~~~~~	~~~~~~\forall \ell \in  G ~~~~~~~~~~~~	\label{eq:rhoB}\\
			\rule{0pt}{1.8em}
			\displaystyle 	\sum_{\ell \in G \text{ s.t. } (\ell,t) \in R} \bar{\rho}_{\ell,C} \cdot s(\ell)+\sum_{\ell \in I \setminus G} \bar{\lambda}_{\ell,C} \cdot s(\ell) \leq (1-s(C)) \bar{x}_C ~~~~~~~~~~~~~	~~~~~~\forall C \in \cC \cap T ~~~~~~	\label{eq:rhoD}\\
			\rule{0pt}{1.8em}
			\displaystyle 	\sum_{\ell \in G \text{ s.t. } (\ell,t) \in R} \bar{\rho}_{\ell,t} \leq L_{t,G} ~~~~~~~~~~~~~~~~~~~~~~~~~~	~~~~~~\forall t \in T ~~~~~~~~~~~~	\label{eq:rhoS}
		\end{align}

		Now, we define $\bar{\phi} \in  \mathbb{R}^{R}$ by $\bar{\phi}_{\ell,t} = \bar{\lambda}_{\ell,t}$ for all $(\ell,t)\in R$. As $\bar{\lambda}$ belongs to the $\bar{x}$-polytope, 
		it easily follows that $\bar{\phi} \in Q$ since $\bar{\lambda}_{\ell,t}=0$ for all $(\ell,t) \in (I \times I \cup \cC) \setminus R$ by \eqref{F1} (i.e., where $\ell \notin \textsf{fit}(t)$).

		\begin{claim}
			$\bar{\phi}$ is  a vertex of $Q$.
		\end{claim}
		
		\begin{proof}
			Assume towards contradiction that $\bar{\phi}$ is not a vertex of $Q$. Thus, there is a vector $\bgam \in \mathbb{R}^{R}$, $\bgam \neq 0$  such that $\bar{\phi}+\bgam, \bar{\phi}-\bgam \in Q$.  Thus, by \eqref{eq:rhoB}, for all $\ell\in G$ we get that  
			
			$$
		  1 \leq 	\sum_{ t\in T \text{ s.t. } (\ell,t) \in R} (\bar{\phi}_{\ell,t}+\bar{\gamma}_{\ell,t}) = 	\sum_{t\in T\text{ s.t. } (\ell,t) \in R} \bar{\phi}_{\ell,t} +	\sum_{t\in T \text{ s.t. } (\ell,t) \in R} \bar{\gamma}_{\ell,t} = 1+\sum_{t\in T \text{ s.t. } (\ell,t) \in R} \bar{\gamma}_{\ell,t} 
			$$
			
			The last equality is because $\bar{\lambda}$ satisfies constraint \eqref{F4} of the $\bar{x}$-polytope with equality and by the definition of $\phi$. In addition,  
			$$
		1 \leq	\sum_{t\in T \text{ s.t. } (\ell,t) \in R} (\bar{\phi}_{\ell,t}-\bar{\gamma}_{\ell,t}) = \sum_{t\in T \text{ s.t. } (\ell,t) \in R} \bar{\phi}_{\ell,t} -	\sum_{t\in T \text{ s.t. } (\ell,t) \in R} \bar{\gamma}_{\ell,t} = 1-\sum_{t\in T \text{ s.t. } (\ell,t) \in R} \bar{\gamma}_{\ell,t} 
			$$
				The last equality is because $\bar{\lambda}$ satisfies constraint \eqref{F4} of the $\bar{x}$-polytope with equality and by the definition of $\phi$. Hence, by the above it follows that 
			\begin{equation}
				\sum_{t\in T\text{ s.t. } (\ell,t) \in R} \bar{\gamma}_{\ell,t}=0.
				\label{eq:gamzero}
			\end{equation}

			Furthermore, for all $t \in T$  the following hold by \eqref{eq:rhoS}:
			
			$$
	 L_{t,G} \geq 	\sum_{ \ell \in G \text{ s.t. } (\ell,t) \in R} (\bar{\phi}_{\ell,t}+\bar{\gamma}_{\ell,t})= 	\sum_{ \ell \in G \text{ s.t. } (\ell,t) \in R} \bar{\phi}_{\ell,t} +	\sum_{ \ell \in G \text{ s.t. } (\ell,t) \in R} \bar{\gamma}_{\ell,t}
			$$
			
			and 
			$$
	L_{t,G} \geq 	\sum_{ \ell \in G \text{ s.t. } (\ell,t) \in R} (\bar{\phi}_{\ell,t}-\bar{\gamma}_{\ell,t}) = 	\sum_{ \ell \in G \text{ s.t. } (\ell,t) \in R} \bar{\phi}_{\ell,t}-	\sum_{ \ell \in G \text{ s.t. } (\ell,t) \in R} \bar{\gamma}_{\ell,t}
			$$

			Hence, by the above, because for all $(\ell,t) \in R$ it holds that $\bar{\lambda}_{\ell,t} = \bar{\phi}_{\ell,t}$, and that by \eqref{F1} for all $(\ell,t) \notin R$ it holds that $\bar{\lambda}_{\ell,t} = 0$,  for all $t \in T$ it follows that:
			\begin{equation}
			\sum_{ \ell \in G} \bar{\lambda}_{\ell,t}-L_{t,G}	 \leq \sum_{ \ell \in G \text{ s.t. } (\ell,t) \in R} \bar{\gamma}_{\ell,t} \leq L_{t,G}- \sum_{ \ell \in G} \bar{\lambda}_{\ell,t}.
				\label{eq:TTT}
			\end{equation}
			
			Define $\bar{\lambda}^+,\bar{\lambda}^- \in \mathbb{R}^{I \times (I \cup \mathcal{C})}$ by $\bar{\lambda}^+_{\ell,t} = \bar{\lambda}^-_{\ell,t} = \bar{\lambda}_{\ell,t}$ for $(\ell,t) \in \left( I \times (I \cup \mathcal{C}) \right) \setminus R$, and $\bar{\lambda}^+_{\ell,t} = \bar{\lambda}_{\ell,t}+\bgam_{\ell,t},~\bar{\lambda}^-_{\ell,t} = \bar{\lambda}_{\ell,t}-\bgam_{\ell,t}$ for $(\ell,t)\in R$.  Since $\bgam\neq 0$ it follows that $\bar{\lambda}^+ \neq \bar{\lambda}^-$. 
			
			We now show that $\bar{\lambda}^+, \bar{\lambda}^-$ are in the $\bar{x}$-polytope. Let $\ell \in I , t \in I \cup \mathcal{C}$ such that  $\ell \notin \textsf{fit}(t)$. Therefore, $(\ell,t) \notin R$; thus, $\bar{\lambda}^+_{\ell,t} =  \bar{\lambda}_{\ell,t} = 0$ because $\bar{\lambda}$ is in the $\bar{x}$-polytope. We conclude that \eqref{F1} is satisfied for $\bar{\lambda}^+$. Let $C \in \mathcal{C}$. Then, 
			
			\begin{equation}
				\sum_{\ell \in I} \bar{\lambda}^+_{\ell,C} \cdot  s(\ell) =  \sum_{\ell \in G \text{ s.t. } (\ell,t) \in R} (\bar{\phi}_{\ell,C}+\bgam_{\ell,C}) \cdot s(\ell)+\sum_{\ell \in I \setminus G} \bar{\lambda}_{\ell,C} \cdot s(\ell)  \leq \left(1-s(C)\right) \cdot \bar{x}_C.
				\label{eq:+1}
			\end{equation}
			
		The inequality is because $\bar{\phi}+\bar{\gamma} \in Q$. By \eqref{eq:+1} we conclude that \eqref{F2} is satisfied for $\bar{\lambda}^+$.  Let $G' \in \mathcal{G}$ and $C \in \mathcal{C}$. Then, if $G' \neq G$ it holds that:

			\begin{equation}
				\sum_{\ell \in G'}  \bar{\lambda}^+_{\ell,C} =   \sum_{\ell \in G}  \bar{\lambda}_{\ell,C} \leq \bar{x}_C  \cdot (k(G)-|G \cap C|) =  L_{C,G}. 
				\label{aaeq:+2}
			\end{equation}
			
			The inequality is because $\bar{\lambda}$ is in the $\bar{x}$-polytope and satisfies \eqref{F5}. Otherwise, $G' = G$. 
			
				\begin{equation}
					\begin{aligned}
							\sum_{\ell \in G}  \bar{\lambda}^+_{\ell,C} ={} & 	\sum_{\ell \in G} \bar{\lambda}_{\ell,C}+ \sum_{\ell \in G} \bgam_{\ell,C} =   \sum_{\ell \in G} \bar{\lambda}_{\ell,C}+ \sum_{\ell \in G \text{ s.t. } (\ell,C) \in R} \bgam_{\ell,C}\\  \leq{} &
							 \sum_{\ell \in G} \bar{\lambda}_{\ell,C} + \left(L_{C,G}-\sum_{\ell \in G} \bar{\lambda}_{\ell,C} \right) = L_{C,G}= \bar{x}_C  \cdot (k(G)-|G \cap C|).
					\end{aligned}
				\label{aaeq:+2a}
			\end{equation}
			
			The inequality is by \eqref{eq:TTT}.  By \eqref{aaeq:+2} and \eqref{aaeq:+2a} we conclude that \eqref{F5} is satisfied for $\bar{\lambda}^+$.

			Let $j \in I$. We split into two cases.

			\begin{enumerate}
				\item $j \notin G$. Then, by \eqref{fitl}  and \eqref{F1} for all $\ell \notin \textsf{group}(j)$ it holds that $\bar{\lambda}^+_{\ell,j} = 0$ and for all $\ell' \in I$ it holds that  $\bar{\lambda}^+_{\ell',j} = \bar{\lambda}_{\ell',j}$. Therefore, 		\begin{equation*}
					\sum_{\ell \in I}  \bar{\lambda}^+_{\ell,j} =   \sum_{\ell \in \textsf{group}(j)}  \bar{\lambda}_{\ell,C}^+ =  \sum_{\ell \in \textsf{group}(j)}  \bar{\lambda}_{\ell,C} \leq  \sum_{C \in \cC[j]}  \bar{x}_C. 
				\end{equation*}

	\item $j \in G$. Then,  by \eqref{fitl}  and \eqref{F1} for all $\ell \notin G$ it holds that $\bar{\lambda}^+_{\ell,j} = 0$. Therefore,

		\begin{equation*}
		\begin{aligned}
				\sum_{\ell \in I}  \bar{\lambda}^+_{\ell,j} ={} & 	\sum_{\ell \in G}  \bar{\lambda}^+_{\ell,j} =   \sum_{\ell \in G} \bar{\lambda}_{\ell,j}+ \sum_{\ell \in G \text{ s.t. } (\ell,j) \in R} \bgam_{\ell,j} \\  \leq{} &
		 \sum_{\ell \in G} \bar{\lambda}_{\ell,j}+  \left(L_{j,G}- \sum_{\ell \in G} \bar{\lambda}_{\ell,j} \right) = L_{j,G} =  \sum_{C \in \cC[j]} \bar{x}_C. 
		\end{aligned}
	\end{equation*} The inequality is by \eqref{eq:TTT}.

			\end{enumerate}
				By the above, we conclude that \eqref{F3} is satisfied for $\bar{\lambda}^+$.

			Let $\ell \in I$. We split into two cases.

			\begin{enumerate}
				\item $\ell \notin G$. 
				$$\sum_{t \in I \cup \cC} \bar{\lambda}^+_{\ell,t} = \sum_{t \in I \cup \cC} \bar{\lambda}_{\ell,t} \geq 1.$$ The last inequality is because $ \bar{\lambda}$ satisfies constraint \eqref{F4} of the $\bar{x}$-polytope. 
				
				\item $\ell \in G$.  $$\sum_{t \in I \cup \cC} \bar{\lambda}^+_{\ell,t} \geq \sum_{t \in T} \left(\bar{\lambda}_{\ell,t}+ \bar{\gamma}_{\ell,t}\right) = \sum_{t \in T} \bar{\lambda}_{\ell,t}+\sum_{t \in T \text{ s.t. } (\ell,t) \in R} \bar{\gamma}_{\ell,t} = \sum_{t \in T} \bar{\lambda}_{\ell,t} \geq 1.$$ The second equality is by \eqref{eq:gamzero}. 
				
			\end{enumerate} By the above,  constraint \eqref{F4} is satisfied for $\bar{\lambda}^+$. Overall, we conclude that $\bar{\lambda}^+$ is in the $\bar{x}$-polytope.

			Using a symmetric argument it follows that $\bar{\lambda}^-$ is also in the $\bar{x}$-polytope.  Since $\bar{\lambda}= \frac{\bar{\lambda}^+ + \bar{\lambda}^-}{2}$, it follows that $\bar{\lambda}$ is not a vertex of the $\bar{x}$-polytope. Contradiction. Thus, $\bar{\phi}$ is a vertex of $Q$. 
		\end{proof}

		Let $F= \{\ell \in G~|~\exists t \in T \text{ s.t. } \bar{\phi}_{\ell,t} \in (0,1)\}$. By the definition of $\bar{\phi}$, it suffices to show that $|F| \leq 2|T|$.
		For any $\ell\in I$ define $\delta_{\ell} = |\{t \in T~|~(\ell,t) \in R\}|$ to be all of the corresponding pairs of $\ell$ in $R$. It follows that $|R| = \sum_{\ell\in I} \delta_{\ell}$. For $\ell \in F$, up to $\delta_{\ell}-1$ of the corresponding inequalities of $\ell$ in \eqref{eq:rho1} and \eqref{eq:rhoB} are tight: to satisfy \eqref{eq:rhoB} for $\ell$ at least two inequalities corresponding to $\ell$ in \eqref{eq:rho1} are not tight. In addition, for $\ell \in G \setminus F$, up to $\delta_{\ell}$ corresponding constraints in  \eqref{eq:rho1} and \eqref{eq:rhoB} for $\ell$ are tight: to satisfy \eqref{eq:rhoB} for $\ell$ at least one inequality corresponding to $\ell$ in \eqref{eq:rho1} is not tight. Moreover, there are at most $2|T|$ constraints in \eqref{eq:rhoD} and \eqref{eq:rhoS} that can be tight. Hence, the number of tight constraints is at most

		\begin{equation}
			\label{eq:tight}
			\sum_{\ell\in F} (\delta_{\ell} -1) + \sum_{\ell \in G \setminus F} \delta_\ell +2|T| \leq |R|-|F|+2|T|.
		\end{equation}

		As $\bar{\phi}$ is a vertex of $Q$, there are at least $|R|$ tight inequalities. Thus, $|F|\leq 2|T|$. Note that for all $t \in I \cup \cC \setminus T$ and $\ell \in I$ it holds that $\bar{\lambda}_{\ell,t} = 0$ by the constraints of the $\bar{x}$-polytope; thus, such entries cannot be fractional, and the number of fractional entries corresponding to  items in $G$ is at most $2|T|$. 
	\end{proof}

	\noindent{\bf Proof of Lemma~\ref{O(1)}:}
	
	Note that for all $t \in I \cup \cC \setminus T$ and $\ell \in I$ it holds that $\bar{\lambda}_{\ell,t} = 0$ by the constraints of the $\bar{x}$-polytope; thus, such entries cannot be non-integral. In addition, by Lemma~\ref{lem:FewFractionalGroups} and Lemma~\ref{lem:slot-types} the number of non-integral entries is at most $2|T| \cdot |T|$. Therefore, $$2|T| \cdot |T| \leq 2\left(|\textsf{supp}(\bar{x})|+k\cdot |\textsf{supp}(\bar{x})|\right)^2 \leq 2\left(2k \cdot |\textsf{supp}(\bar{x})|\right)^2  = 8 k^2 |\textsf{supp}(\bar{x})|^2.$$ The first inequality is because $|I \cap T|$ is upper bounded by the size of $\textsf{supp}(\bar{x})$ times the number of items in each configuration in $\textsf{supp}(\bar{x})$. \qed

\section{Omitted Proofs of Section~\ref{sec:eviction}}
\label{app:omittedEvict}

\noindent{\bf Proof of Lemma~\ref{lem:evictionHelp}:}
	Let $\ell_1, \ldots, \ell_{r}$ be the items in $C \setminus L$ in decreasing order such that for all $a,b \in [r], a<b$ it holds that $\ell_a < \ell_b$. In addition, let $U_C~=~\{\ell_1, \ldots, \ell_h\}$ such that $h = \alpha$. 	For the simplicity of the notations, for any $j \in [h]$ let $T_j =s \left( C \cap L \cup \{\ell_1, \ldots, \ell_{j-1}\} \right)$ be the total size of the large items in $C$ and the first $j-1$ small items in $C$ according to the above sorted order of items.\footnote{With a slight abuse of notation, assume that $\{\ell_1, \ldots, \ell_{j-1}\} = \emptyset$ for $j=1$.} We use the following auxiliary claims.

	\begin{claim}
		\label{claim:C1}
		 For all $i,k \in [h]$ which satisfy that $k>i+\eps^{-4}$, it holds that $s(\ell_i) > \frac{s(\ell_k)}{\eps^2}$.
	\end{claim}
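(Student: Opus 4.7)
The plan is to exploit the minimality of $h$ in \eqref{h} to show that the residual capacity of $C$ contracts by a factor $(1-\eps)$ each time another small item is appended, and then sandwich $s(\ell_k)$ using this geometric decay. For $j \in \{1, \ldots, h+1\}$, define the residual capacity $\rho_j = 1 - s\bigl(C \cap L \cup \{\ell_1, \ldots, \ell_{j-1}\}\bigr)$, so that $\rho_{j+1} = \rho_j - s(\ell_j)$, and $s(\ell_j) \leq \rho_j$ for every $j \in [h]$ since $C$ is a configuration and its total size is at most $1$.

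The first step I would take is to convert the minimality of $h$ into the lower bound $s(\ell_j) > \eps \rho_j$ for every $j \in [h]$. Indeed, for such $j$ the value $h' = j - 1$ is not admissible in \eqref{h}, so some item in $\{\ell_j, \ldots, \ell_r\}$ fails to lie in $\textsf{fit}(C \cap L \cup \{\ell_1, \ldots, \ell_{j-1}\})$. Since the items are sorted in non-increasing size and $\textsf{fit}$ is monotone in size, one can take the offending item to be $\ell_j$ itself. Because $\ell_j \in C \setminus L$ satisfies $s(\ell_j) < \eps^2$, the failure in the fit condition \eqref{fitS} cannot come from the $\eps^2$ cap; it must therefore come from the second term of the minimum, giving $s(\ell_j) > \eps \rho_j$ and hence $\rho_{j+1} < (1-\eps) \rho_j$.

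Iterating this contraction for $j = i+1, i+2, \ldots, k-1$ (all of which lie in $[h]$ since $k \leq h$), I would conclude $\rho_k < (1-\eps)^{k-i-1} \rho_{i+1}$. Combining $s(\ell_k) \leq \rho_k$ on the left with the estimate $\rho_{i+1} < s(\ell_{i+1})/\eps \leq s(\ell_i)/\eps$ on the right (the first by the previous step applied at $j = i+1$, the second by the sorted order) would yield
\begin{equation*}
s(\ell_k) \;<\; \frac{(1-\eps)^{k-i-1}}{\eps} \cdot s(\ell_i),
\end{equation*}
and it then remains to verify $(1-\eps)^{k-i-1} < \eps^3$ when $k - i > \eps^{-4}$. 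Using $\ln(1-\eps) \leq -\eps$, this reduces to $\eps(k-i-1) > 3\ln(\eps^{-1})$, which is comfortable in the regime $\eps \in (0,0.1)$ with $k - i - 1 \geq \eps^{-4}$, since $\eps^{-4}$ dwarfs $3\eps^{-1}\ln(\eps^{-1})$. The main subtlety I anticipate is the boundary bookkeeping: ensuring the recursion $\rho_{j+1} < (1-\eps)\rho_j$ can be invoked for every index in the chain $i+1, \ldots, k-1$ (automatic from $k \leq h$), and extracting the sharp strict inequality $s(\ell_j) > \eps \rho_j$ from the non-fit condition rather than the weaker $s(\ell_j) > \min\{\eps^2, \eps\rho_j\}$; once these two points are cleanly in place, the remainder is a routine geometric-series estimate.
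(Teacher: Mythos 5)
Your proposal is correct, but it assembles the ingredients differently from the paper. Both arguments rest on the same key consequence of the minimality of $h$ in \eqref{h} (together with the monotonicity of item sizes and the fact that $s(\ell_j)<\eps^2$ for $\ell_j\in C\setminus L$): namely that every $j\in[h]$ satisfies $s(\ell_j)>\eps\rho_j$, where $\rho_j$ is the residual capacity — the paper records exactly this at the single index $i$ as its inequality \eqref{eq:llli}. From there the routes diverge. The paper argues by contradiction with a one-shot volume bound: assuming $s(\ell_i)\le s(\ell_k)/\eps^2$, the block $\{\ell_i,\ldots,\ell_k\}$ contains more than $\eps^{-4}$ items each of size at least $s(\ell_k)\ge\eps^2 s(\ell_i)>\eps^3\rho_i$, so the block alone has size exceeding $\rho_i$ and forces $s(C)>1$. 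You instead use the per-index inequality at every $j$ between $i$ and $k$ to get the multiplicative contraction $\rho_{j+1}<(1-\eps)\rho_j$, and then conclude directly via $s(\ell_k)\le\rho_k<(1-\eps)^{k-i-1}\rho_{i+1}<(1-\eps)^{k-i-1}s(\ell_i)/\eps$ together with $(1-\eps)^{\eps^{-4}}<\eps^3$. Your bookkeeping is sound: $i+1,\ldots,k-1\in[h]$ since $k\le h$, the strict non-fit inequality does yield $s(\ell_j)>\eps\rho_j$ rather than only the weaker min-form, and the exponential estimate is comfortably true for $\eps\in(0,0.1)$. The trade-off is that the paper's proof is shorter and needs no exponential estimates (it uses the fit-failure inequality only once and the contradiction hypothesis to price the block), whereas your direct argument avoids contradiction and actually proves a much stronger, exponentially decaying separation $s(\ell_k)<(1-\eps)^{k-i-1}s(\ell_i)/\eps$, of which the claimed factor $\eps^2$ is a coarse corollary.
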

	
	\begin{proof}
		 Assume towards a contradiction that there are $i,k \in [h]$ which satisfy that $k>i+\eps^{-4}$ and $s(\ell_i) \leq \frac{s(\ell_k)}{\eps^{2}}$. Thus, 
		
		\begin{equation}
			\label{eq:sik}
			s(\{\ell_i, \ell_{i+1}, \ldots, \ell_k\}) \geq (k-i) \cdot s(\ell_k) \geq \eps^{-4} \cdot s(\ell_k) \geq \eps^{-4}  \cdot \eps^{2} \cdot  s(\ell_i) =  \eps^{-2}  s(\ell_i) 
		\end{equation} The first inequality is because the items are in decreasing order that induces a non increasing order of item sizes. The second inequality is because $k>i+\eps^{-4}$. The third inequality is by the assumption that $s(\ell_i) \leq \frac{s(\ell_k)}{\eps^{2}}$. 
		
	 Note that it holds that $T_i<1$ since $T_i$ is the total size of items in a proper subset of $C$ and the sizes of the items are strictly larger than zero. Now, because $\ell_i \in U_C$ and $i < \alpha$, we conclude by \eqref{fitS}  and \eqref{h} that \begin{equation}
			\label{eq:llli}
			s(\ell_i) > \eps \left( 1-T_i  \right).
		\end{equation}  Therefore, 
		\begin{equation}
			\label{eq:contr}
		s(C) \geq T_i  +s(\{\ell_i, \ldots, \ell_k\}) \geq  T_i  +\eps^{-1} \left( 1-T_i  \right) \geq T_i  +\left( 1-T_i  \right)+(\eps^{-1}-1)  ( 1-T_i) > 1.
		\end{equation}
		The first inequality is because $C \cap L \cup \{\ell_1, \ldots, \ell_{i-1}\}$ is disjoint to $\{\ell_i, \ldots, \ell_k\}$. The second inequality is by \eqref{eq:sik} and \eqref{eq:llli}. The last inequality follows since $0<\eps<0.1$ and $0\leq T_i<1$. By \eqref{eq:contr}, we reach a contradiction to the fact that $s(C) \leq 1$ because $C$ is a configuration. 
	\end{proof}

		\begin{claim}
		\label{claim:C1*}
		For any $i \in [h]$ such that $i+\eps^{-4} < h-1$ it holds that $\ell_i \in \mathcal{L}_C$. 
	\end{claim}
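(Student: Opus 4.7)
The plan is to combine Claim~\ref{claim:C1} with the minimality of $h$ in~\eqref{h} to produce the needed lower bound on $s(\ell_i)$. Unwinding the target, $\ell_i \in \mathcal{L}_C$ means $s(\ell_i) \geq (1-s(R(C)))/\eps$, so it suffices to establish $s(\ell_k) > \eps \cdot (1 - s(R(C)))$ for some $k \in [h]$ with $k > i + \eps^{-4}$ and then chain with the size-gap estimate of Claim~\ref{claim:C1}.

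First I would set $k = h-1$. The hypothesis $i + \eps^{-4} < h - 1$ gives $k > i + \eps^{-4}$ and $k \in [h]$ (with $k < h$), so $i, k \in [h]$ satisfy the hypotheses of Claim~\ref{claim:C1}, and that claim immediately yields $s(\ell_i) > s(\ell_k)/\eps^2$.

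The main task is then to lower bound $s(\ell_k)$. Since $k = h-1 < h$, the ``fitness'' property in~\eqref{h} must fail at $h' = k$: this holds whether $h < \alpha$ (by the minimality that defines $h$) or $h = \alpha$ (in which case the inner min in~\eqref{h} is at least $\alpha$, so the property fails for every $h' < \alpha$, in particular at $h' = k$). Thus there exists $j \in \{k+1, \ldots, r\}$ with $\ell_j \notin \textsf{fit}(C \cap L \cup \{\ell_1, \ldots, \ell_k\})$. Because $\ell_j \in C \setminus L$ has $s(\ell_j) < \eps^2$, the minimum in~\eqref{fitS} forces $s(\ell_j) > \eps \cdot (1 - s(C \cap L \cup \{\ell_1, \ldots, \ell_k\}))$. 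Two small observations finish the bound: the items $\ell_1, \ldots, \ell_r$ are sorted in non-increasing size order and $k \leq j$, so $s(\ell_k) \geq s(\ell_j)$; and $\{\ell_1, \ldots, \ell_k\} \subseteq U_C$ (as $k \leq h$), so $C \cap L \cup \{\ell_1, \ldots, \ell_k\} \subseteq R(C)$ and hence $s(C \cap L \cup \{\ell_1, \ldots, \ell_k\}) \leq s(R(C))$. Combining yields $s(\ell_k) > \eps \cdot (1 - s(R(C)))$.

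Chaining with Claim~\ref{claim:C1}, $s(\ell_i) > s(\ell_k)/\eps^2 > (1-s(R(C)))/\eps$, placing $\ell_i$ in $\mathcal{L}_C$. There is no substantive obstacle: the argument is a short chain of inequalities assembled from Claim~\ref{claim:C1} and the minimality of $h$ in~\eqref{h}. The only points requiring a bit of care are treating the $h = \alpha$ boundary case when invoking the minimality of $h$, and tracking the containment $C \cap L \cup \{\ell_1, \ldots, \ell_k\} \subseteq R(C)$ that lets us replace the local residual capacity with $1 - s(R(C))$.
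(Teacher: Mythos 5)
Your argument is correct and follows essentially the same route as the paper's: the paper also lower-bounds $s(\ell_{h-1})$ by invoking the failure of the fitness condition in~\eqref{h} just below $h$ (the paper applies~\eqref{eq:llli} at $i=h-1$, i.e., $h'=h-2$, while you take $h'=h-1$; both give bounds dominated by $\eps(1-s(R(C)))$) and then chains with Claim~\ref{claim:C1}. The only cosmetic difference is that in the setting of Lemma~\ref{lem:evictionHelp} one already has $h=\alpha$ by the hypothesis $|U_C|=\alpha$, so your $h<\alpha$ case never arises here, but covering it does no harm.
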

	
	\begin{proof}
	\begin{equation}
		\label{eq:LC}
		s(\ell_i)> \frac{s(\ell_{h-1})}{\eps^{2}} > \eps^{-2} \cdot \eps (1-T_{h-1}) = \eps^{-1} (1-T_{h-1}) \geq \eps^{-1} (1-s(R(C))).
	\end{equation}
	
	The first inequality is by Claim~\ref{claim:C1}. The second inequality is because  $h-1 < h \leq \alpha$; thus, similarly to \eqref{eq:llli}, it holds that $s(\ell_{h-1}) > \eps \left( 1-T_{h-1}  \right)$. The last inequality is because $T_{h-1}$ is the total size in a proper subset of $R(C)$; thus, $T_{h-1} \leq s(R(C))$. By \eqref{eq:LC}, we conclude that for any $i \in [h]$ such that $i +\eps^{-4}< h-1$ it holds that $\ell_i \in \mathcal{L}_C$.
	\end{proof}

	 Now, using the above claim: 
	
	$$|\mathcal{L}_C| \geq h-2-\eps^{-4} \geq \eps^{-5}- 2\eps^{-4} = \eps^{-4} (\eps^{-1}-2)> \eps^{-4}.$$
	
	The first inequality is since by Claim~\ref{claim:C1*} it holds that $\ell_1, \ldots, \ell_{t} \in \mathcal{L}_C$ for $t = h-2-\eps^{-4}$. The second inequality is because $h =  \alpha$. \qed 
	
\noindent{\bf Proof of Lemma~\ref{lem:wProperties}:}
		Let $C \in \mathcal{C}$. We split the proof into the following claims which together form the proof of the lemma.

		\begin{claim}
			\label{claim:evic1}
			 Condition~1 of Lemma~\ref{lem:wProperties} holds for $|U_C| = \alpha$.
		\end{claim}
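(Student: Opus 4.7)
The plan is to directly compute $\|\bar{w}^C\|$ from its defining formula when $|U_C|=\alpha$, and then invoke Lemma~\ref{lem:evictionHelp} to bound the result.

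First, by case~\ref{eq:y=a} in the definition of $\bar{w}^C$, when $|U_C|=\alpha$ the only non-zero coordinates of $\bar{w}^C$ are the $|\mathcal{L}_C|$ entries indexed by configurations of the form $R(C)\setminus\{\ell\}$ for $\ell\in\mathcal{L}_C$, each equal to $\frac{1}{|\mathcal{L}_C|-1}$. These configurations are pairwise distinct (they differ in which single item of $\mathcal{L}_C$ has been removed), so
\begin{equation*}
\|\bar{w}^C\| \;=\; \sum_{\ell\in\mathcal{L}_C}\frac{1}{|\mathcal{L}_C|-1} \;=\; \frac{|\mathcal{L}_C|}{|\mathcal{L}_C|-1} \;=\; 1+\frac{1}{|\mathcal{L}_C|-1}.
\end{equation*}

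Next I would apply Lemma~\ref{lem:evictionHelp}, which gives $|\mathcal{L}_C|\geq \eps^{-4}$ whenever $|U_C|=\alpha$. Since $\eps\in(0,0.1)$, we have $\eps^{-4}-1 \geq \tfrac{1}{2}\eps^{-4}$, and therefore
\begin{equation*}
\frac{1}{|\mathcal{L}_C|-1} \;\leq\; \frac{1}{\eps^{-4}-1} \;\leq\; 2\eps^{4}.
\end{equation*}
Combining the two displays gives $\|\bar{w}^C\|\leq 1+2\eps^4$, which is Condition~1 of Lemma~\ref{lem:wProperties}.

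There is no real obstacle here; the claim is an immediate consequence of the bound $|\mathcal{L}_C|\geq \eps^{-4}$ from Lemma~\ref{lem:evictionHelp} and the fact that $\bar{w}^C$ distributes a total mass of $\tfrac{|\mathcal{L}_C|}{|\mathcal{L}_C|-1}$ uniformly over $|\mathcal{L}_C|$ configurations. The only mild care required is verifying that the $|\mathcal{L}_C|$ configurations $R(C)\setminus\{\ell\}$ are distinct, which holds because removing different items from the same set $R(C)$ yields different sets.
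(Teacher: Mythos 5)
Your proof is correct and follows essentially the same approach as the paper's: both compute $\|\bar{w}^C\| = \frac{|\mathcal{L}_C|}{|\mathcal{L}_C|-1}$ from the definition of the relaxation vector and then invoke Lemma~\ref{lem:evictionHelp} to get $|\mathcal{L}_C| \geq \eps^{-4}$, finishing with the observation that $\eps < 0.1$. Your explicit remark that the configurations $R(C)\setminus\{\ell\}$ are pairwise distinct is a small but valid point the paper leaves implicit.
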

	
	\begin{proof}

		$$\|\bar{w}^C\| = \sum_{\ell \in \mathcal{L}_C} \bar{w}^C_{R(C) \setminus \{\ell\}} =   \sum_{\ell \in \mathcal{L}_C}  \frac{1}{|\mathcal{L}_C|-1}  = \frac{|\mathcal{L}_C|}{|\mathcal{L}_C|-1} \leq \frac{\eps^{-4}}{\eps^{-4}-1} \leq 1+2\eps^4. $$
		
		The first equality is because by the definition $\bar{w}^C$, for any other entry $C'$ that is not in the second summation it holds that $\bar{w}^C_{C'} = 0$. The second equality is by the definition of $\bar{w}^C$ in case that $|U_C| = \alpha$. The first inequality is because $|U_C| = \alpha$; therefore, by Lemma~\ref{lem:evictionHelp} it holds that $|\mathcal{L}_C| \geq \eps^{-4}$. The last inequality is because $0<\eps<0.1$. 
	\end{proof}

			\begin{claim}
			\label{claim:evic2}
				 Condition~2 of Lemma~\ref{lem:wProperties} holds for $|U_C| = \alpha$.
		\end{claim}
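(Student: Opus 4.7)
The plan is to unfold the definition of $\bar{w}^C$ in the case $|U_C|=\alpha$ and evaluate the sum $\sum_{C' \in \mathcal{C}[\ell]} \bar{w}^C_{C'}$ directly. Recall that in this case the support of $\bar{w}^C$ consists exactly of the configurations $R(C) \setminus \{\ell'\}$ for $\ell' \in \mathcal{L}_C$, each carrying weight $\frac{1}{|\mathcal{L}_C|-1}$. Since the given $\ell$ belongs to $R(C)$, membership of $\ell$ in the configuration $R(C) \setminus \{\ell'\}$ is equivalent to the condition $\ell \neq \ell'$. Thus the sum reduces to counting indices $\ell' \in \mathcal{L}_C$ with $\ell' \neq \ell$, scaled by $\frac{1}{|\mathcal{L}_C|-1}$.

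From there the argument splits naturally into two cases according to whether $\ell$ lies in $\mathcal{L}_C$ or in $R(C) \setminus \mathcal{L}_C$. If $\ell \notin \mathcal{L}_C$, then every $\ell' \in \mathcal{L}_C$ contributes, and the sum equals $\frac{|\mathcal{L}_C|}{|\mathcal{L}_C|-1} \geq 1$. If $\ell \in \mathcal{L}_C$, then exactly one term (the one corresponding to $\ell' = \ell$) is excluded, so the sum is $(|\mathcal{L}_C|-1) \cdot \frac{1}{|\mathcal{L}_C|-1} = 1$. Either way the desired lower bound of $1$ is obtained.

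The one point that needs to be checked, and which is the only place where the hypothesis $|U_C|=\alpha$ is actually used, is that the denominator $|\mathcal{L}_C|-1$ is positive, so that $\bar{w}^C$ is well defined and the computation above is valid. This is immediate from Lemma~\ref{lem:evictionHelp}, which guarantees $|\mathcal{L}_C| \geq \eps^{-4} \geq 2$ under the assumption $|U_C|=\alpha$. No other nontrivial estimates are required, so this claim is essentially a bookkeeping exercise once the definition of $\bar{w}^C$ is unpacked; there is no real obstacle beyond keeping the case split clean.
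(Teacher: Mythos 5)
Your proof is correct and follows essentially the same route as the paper: both arguments count the support configurations $R(C)\setminus\{\ell'\}$, $\ell'\in\mathcal{L}_C$, that contain $\ell$ (at least $|\mathcal{L}_C|-1$ of them), each carrying weight $\frac{1}{|\mathcal{L}_C|-1}$, which yields the bound of $1$; your explicit case split on whether $\ell\in\mathcal{L}_C$ is just a slightly more detailed version of the paper's single inequality chain. Your remark that $|\mathcal{L}_C|\geq \eps^{-4}\geq 2$ (Lemma~\ref{lem:evictionHelp}) is needed for well-definedness is a fine, correct observation that the paper leaves implicit here.
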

		
		\begin{proof}
			  Let $\ell \in R(C)$. Then, 
			
			$$\sum_{C' \in \mathcal{C}[\ell]} \bar{w}^C_{C'} \geq \sum_{j \in \mathcal{L}_C \setminus \{\ell\}} \bar{w}^C_{R(C) \setminus \{j\}} = \sum_{j \in \mathcal{L}_C \setminus \{\ell\}} \frac{1}{|\mathcal{L}_C|-1} \geq \frac{|\mathcal{L}_C|-1}{|\mathcal{L}_C|-1} = 1.$$ 
			
			The first inequality is since for any $j \in \mathcal{L}_C \setminus \{\ell\}$ it holds that  $R(C) \setminus \{j\} \in \mathcal{C}[\ell]$ and that all entries in $\bar{w}^C$ are non-negative. The first equality is by the definition of $\bar{w}^C$ in case that $|U_C| = \alpha$. The last inequality is because $|\mathcal{L}_C \setminus \{\ell\}| \geq |\mathcal{L}_C|-1$. 
		\end{proof}
		
			\begin{claim}
			\label{claim:evic3}
				 Condition~3 of Lemma~\ref{lem:wProperties} holds for $|U_C| = \alpha$.
		\end{claim}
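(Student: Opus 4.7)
The plan is to unpack what $\textsf{supp}(\bar{w}^C)$ looks like in this case and then verify both assertions of Condition~3 directly from the definition of $\mathcal{L}_C$. By the construction of $\bar{w}^C$ for the case $|U_C| = \alpha$, every $C' \in \textsf{supp}(\bar{w}^C)$ is of the form $R(C) \setminus \{\ell\}$ for some $\ell \in \mathcal{L}_C$. So I fix such an $\ell$ and $C' = R(C) \setminus \{\ell\}$ and prove the two required inclusions/inequalities.

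For the size bound $s(C \setminus R(C)) \leq 1 - s(C')$: since $R(C) \subseteq C$ and $s(C) \leq 1$, we get $s(C \setminus R(C)) = s(C) - s(R(C)) \leq 1 - s(R(C)) \leq 1 - s(R(C) \setminus \{\ell\}) = 1 - s(C')$, where the last inequality is immediate because $s(\ell) \geq 0$.

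For $C \setminus R(C) \subseteq \textsf{fit}(C')$: let $\ell_j \in C \setminus R(C)$. By definition of $\textsf{fit}(C')$ in \eqref{fitS}, I must show $s(\ell_j) \leq \eps^2$ and $s(\ell_j) \leq \eps \cdot (1 - s(C'))$. The first is immediate: $\ell_j \in C \setminus R(C) \subseteq C \setminus L$ so $s(\ell_j) < \eps^2$ by the definition of $L$. For the second, first observe $s(\ell_j) \leq s(\ell_{h+1}) \leq 1 - s(R(C))$, since the items are sorted by decreasing size and $s(R(C)) + s(C \setminus R(C)) = s(C) \leq 1$. Now use that $\ell \in \mathcal{L}_C$, which by the definition of $\mathcal{L}_C$ gives $s(\ell) \geq (1 - s(R(C)))/\eps$. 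Hence
\[
1 - s(C') \;=\; 1 - s(R(C)) + s(\ell) \;\geq\; \frac{1 - s(R(C))}{\eps},
\]
so $\eps \cdot (1 - s(C')) \geq 1 - s(R(C)) \geq s(\ell_j)$, as required.

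There is essentially no obstacle here: the whole point of selecting an evicted item from $\mathcal{L}_C$ (rather than an arbitrary slot in $U_C$) is precisely to guarantee that the freed capacity $s(\ell)$ is large enough to bring every formerly-too-small-relative-to-$\eps$ item $\ell_j$ inside the window defining $\textsf{fit}(C')$. The only point to be careful about is the two-part definition of $\textsf{fit}$, so both the absolute bound $\eps^2$ and the relative bound $\eps(1-s(C'))$ must be checked; both follow from properties already established ($\ell_j$ is small, and $\mathcal{L}_C$ items are large relative to the residual capacity of $R(C)$).
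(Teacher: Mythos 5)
Your proof is correct and follows essentially the same route as the paper: both establish the easy bound $s(C\setminus R(C))\le 1-s(R(C))\le 1-s(C')$, and both use the defining property of $\mathcal{L}_C$ (namely $s(\ell)\ge (1-s(R(C)))/\eps$) to show that the residual capacity $1-s(C')=1-s(R(C))+s(\ell)$ is at least $(1-s(R(C)))/\eps$, hence every item of $C\setminus R(C)$ (which is small and of size at most $1-s(R(C))$) lands in $\textsf{fit}(C')$. The only difference is cosmetic — you lower-bound $1-s(C')$ directly, while the paper rearranges to $1-s(R(C))\le \eps(1-s(C'))$ — so the arguments coincide.
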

		
		\begin{proof}
			  For the third condition of the lemma, let $C ' \in \textsf{supp}(\bar{w}^C)$. By the definition of $\bar{w}^C$ in case that $|U_C| = \alpha$, there is $\ell \in \mathcal{L}_C$ such that $C' = R(C) \setminus \{\ell\}$. Now,

			\begin{equation}
				\label{eq:1-RC}
				1-s(R(C)) = 1-s(R(C)\setminus \{\ell\})-s(\ell) \leq 1-s(R(C)\setminus \{\ell\}) -\frac{1-s(R(C))}{\eps}.
			\end{equation}
			The first equality is because $\ell \in \mathcal{L}_C$ and $\mathcal{L}_C \subseteq R(C)$. The first inequality is by the definition of $\mathcal{L}_C$. Therefore, 
			
			\begin{equation}
				\label{eq:1-RC2}
				1-s(R(C)) \leq \left(\frac{1}{1+\frac{1}{\eps}}\right)(1-s(R(C)\setminus \{\ell\})) \leq \eps(1-s(R(C)\setminus \{\ell\})).
			\end{equation}
			The first inequality is by \eqref{eq:1-RC}. Let $j \in C \setminus R(C)$. Now, 
			
			\begin{equation}
				\label{eq:fit1}
				s(j) \leq 1-s(R(C)) \leq  \eps(1-s(R(C)\setminus \{\ell\})).
			\end{equation}
			
			The first inequality is because $j \notin R(C)$ and $s(C) \leq 1$. The second inequality is by \eqref{eq:1-RC2}. Therefore, by \eqref{eq:fit1} we get that $j \in  \textsf{fit}(C')$ by \eqref{fitS} (because $j \in C \setminus R(C)$ it holds that $j \notin L$). Because there are no restrictions over $j$ besides that $j \in C \setminus R(C)$, we conclude that $C \setminus R(C) \subseteq \textsf{fit}(C')$. In addition,

			$$s(C \setminus R(C)) \leq 1-s(R(C)) \leq 1-s(R(C) \setminus \{\ell\}) = 1-s(C').$$
			
			The first inequality is because $R(C) \subseteq C$ and $s(C) \leq 1$. The second  inequality is because $s(\ell)\geq 0$. The first equality is by the definition of $C'$. 
		\end{proof}
		
			\begin{claim}
			\label{claim:evic4}
				 Condition~4 of Lemma~\ref{lem:wProperties} holds.
		\end{claim}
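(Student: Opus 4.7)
The plan is to reduce the claim to a single uniform bound: $s(U_C) \leq \eps$. By the case split defining $\bar{w}^C$, any $C' \in \textnormal{\textsf{supp}}(\bar{w}^C)$ is either $R(C)$ (when $|U_C|<\alpha$) or $R(C)\setminus\{\ell\}$ for some $\ell\in\mathcal{L}_C\subseteq U_C$ (when $|U_C|=\alpha$). Since $R(C)=(C\cap L)\cup U_C$ and $U_C\subseteq C\setminus L$, in both cases $C'\setminus L\subseteq U_C$, so $s(C'\setminus L)\le s(U_C)$.

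I would then treat the trivial case $U_C=\emptyset$ separately (where $s(C'\setminus L)=0$) and focus on $|U_C|\ge 1$. The key observation is that $h\ge 1$ forces $h'=0$ to fail the inner condition in~\eqref{h}: there must be some $j\in[r]$ with $\ell_j\notin\textnormal{\textsf{fit}}(C\cap L)$. Since $\ell_1,\ldots,\ell_r$ are listed in non-increasing order by size, $s(\ell_1)\ge s(\ell_j)$, and so $\ell_1$ itself fails to fit in $C\cap L$ as well.

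The main computational step is to exploit the two-sided size constraint on $\ell_1$. From $\ell_1\in C\setminus L$ we get $s(\ell_1)<\eps^2$, while $\ell_1\notin\textnormal{\textsf{fit}}(C\cap L)$ means $s(\ell_1)>\min\{\eps^2,\eps(1-s(C\cap L))\}$ by~\eqref{fitS}. These are only compatible if the minimum equals $\eps(1-s(C\cap L))$ and $\eps(1-s(C\cap L))<\eps^2$, i.e., $1-s(C\cap L)<\eps$. Since $U_C\subseteq C\setminus L$ and $s(C)\le 1$, this gives $s(U_C)\le s(C)-s(C\cap L)\le 1-s(C\cap L)<\eps$, and the claim follows.

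I do not foresee a substantial obstacle: all the relevant inequalities live inside the single configuration $C$, and the argument only uses the non-increasing order of items in $C\setminus L$ together with the threshold definitions in~\eqref{h} and~\eqref{fitS}. The one place to be careful is handling the two sub-cases of $\bar{w}^C$ uniformly, which works because the bound $s(U_C)<\eps$ does not depend on the particular $\ell\in\mathcal{L}_C$ that is removed in the case $|U_C|=\alpha$.
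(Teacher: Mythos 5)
Your proof is correct and is essentially the paper's argument in contrapositive form: the paper assumes $s(C'\setminus L)>\eps$, deduces that $\ell_1\in\textnormal{\textsf{fit}}(C\cap L)$ (using $s(\ell_1)\le\eps^2<\eps(1-s(C'\cap L))$ and $C'\cap L=C\cap L$), concludes $U_C=\emptyset$ and hence $s(C'\setminus L)=0$, a contradiction, while you run the same size-versus-threshold comparison for $\ell_1$ directly to get $1-s(C\cap L)<\eps$ and then bound $s(C'\setminus L)\le s(U_C)\le 1-s(C\cap L)$. No gap; the two write-ups use the same key facts from \eqref{h} and \eqref{fitS}.
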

		
		\begin{proof}
			let $C ' \in \textsf{supp}(\bar{w}^C)$. Assume towards a contradiction that $s(C' \setminus L)>\eps$. Therefore, 
			\begin{equation}
				\label{eq:w4}
				\eps<s(C' \setminus L) \leq 1-s(C' \cap L).
			\end{equation}
			The second inequality is because $C' \in \cC$ is a configuration. Recall $\ell_1$ from \eqref{h} for $C'$; it also holds that 
			\begin{equation}
				\label{eq:w4b}
				s(\ell_1) \leq \eps^2 = \eps \cdot \eps < \eps \cdot \left(1-s(C' \cap L)\right).
			\end{equation}
			The first inequality is because $\ell_1 \in I \setminus L$. The last inequality is by \eqref{eq:w4}. Therefore, by~ \eqref{eq:w4b} we get that $\ell_1 \in  \textsf{fit}(C' \cap L)$ by \eqref{fitS}. By the definition of the relaxation $\bar{w}^{C}$ it holds that $C'\cap L=C\cap L$, thus $\ell_1 \in \textsf{fit}(C\cap L)$. Then, by \eqref{h} and the definition of $U_C$ we conclude that $U_C  = \emptyset$. Hence,  by the definition of $\bar{w}^C$ and that $C' \in \textsf{supp}(\bar{w}^C)$, it holds that $C' = C \cap L$; thus, $s(C' \setminus L) = 0$. For any $\eps>0$, this is a contradiction that $s(C' \setminus L) > \eps$. 
		\end{proof}

			\begin{claim}
			\label{claim:evic6}
			Condition~5 of Lemma~\ref{lem:wProperties} holds.
		\end{claim}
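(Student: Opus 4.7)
The plan is to show that $C'\subseteq R(C)\subseteq C$ for every $C'\in \textsf{supp}(\bar{w}^C)$, and then derive the cardinality bound from the fact that $C$ itself is a configuration. Specifically, I would begin by inspecting the two cases in the definition of $\bar{w}^C$. If $|U_C|=\alpha$, then by Step~\ref{eq:y=a} any $C'\in\textsf{supp}(\bar{w}^C)$ has the form $C'=R(C)\setminus\{\ell\}$ for some $\ell\in\mathcal{L}_C$; in particular $C'\subseteq R(C)\subseteq C$. If $|U_C|<\alpha$, then by Step~\ref{eq:y<a} the only element of $\textsf{supp}(\bar{w}^C)$ is $C'=R(C)\subseteq C$. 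Hence in both cases we obtain $C'\subseteq R(C)\subseteq C$.

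Next, fix any $G\in\cG$. Since $C\setminus R(C)$ is disjoint from $R(C)$, and in particular disjoint from $C'$, the sets $G\cap(C\setminus R(C))$ and $G\cap C'$ are disjoint. Therefore
\[
|G\cap (C\setminus R(C))|+|G\cap C'|=|G\cap \bigl((C\setminus R(C))\cup C'\bigr)|.
\]
Using $C'\subseteq R(C)$, we have $(C\setminus R(C))\cup C'\subseteq (C\setminus R(C))\cup R(C)=C$, so
\[
|G\cap (C\setminus R(C))|+|G\cap C'|\le |G\cap C|\le k(G),
\]
where the last inequality is the matroid constraint satisfied by the configuration $C\in\cC$. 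Rearranging yields $|G\cap (C\setminus R(C))|\le k(G)-|G\cap C'|$, which is exactly Condition~5.

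No obstacle is expected here; the claim is essentially a bookkeeping consequence of the fact that the relaxation $\bar{w}^C$ replaces $C$ only with \emph{sub}-configurations of $R(C)\subseteq C$, so every matroid slot freed by removing items from $C$ to form $C'$ is available to accommodate the items of $C\setminus R(C)$ without exceeding the original group quota $k(G)$.
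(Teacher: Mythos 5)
Your proof is correct and follows essentially the same route as the paper: both arguments reduce to the observations that $C'\subseteq R(C)\subseteq C$ (by the definition of the relaxation $\bar{w}^C$) and that $|G\cap C|\le k(G)$ since $C$ is a configuration, differing only in whether the intermediate bound is written via $|G\cap C|-|G\cap R(C)|$ or via the disjoint union $(C\setminus R(C))\cup C'$. No gap.
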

		
		\begin{proof}
			Let $C' \in \textsf{supp}(\bar{w}^C)$ and $G \in \cG$. Now,  $$\left|G \cap \left(C \setminus R(C)\right)\right| \leq |G \cap C|-|G \cap R(C)| \leq  k(G) -|G \cap R(C)| \leq k(G) - |G \cap C'|.$$ The last inequality is because by the definition of $\bar{w}^C$ it holds that $C' \subseteq R(C)$
 		\end{proof}

			\begin{claim}
			\label{claim:evic5}
			Conditions 1-3 of Lemma~\ref{lem:wProperties} hold for $|U_C| < \alpha$. 
		\end{claim}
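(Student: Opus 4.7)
The plan is to verify each of Conditions~1--3 of Lemma~\ref{lem:wProperties} in turn, exploiting the fact that when $|U_C|<\alpha$ the relaxation $\bar{w}^C$ is simply the indicator of $R(C)$; that is, $\bar{w}^C_{R(C)}=1$ and $\bar{w}^C_{C'}=0$ for every $C'\neq R(C)$, so $\textnormal{\textsf{supp}}(\bar{w}^C)=\{R(C)\}$. In particular, no approximation slack is introduced in this case, and all three conditions should follow directly from the defining property of $h$ in~\eqref{h}.

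For Condition~1, I would observe that $\|\bar{w}^C\|=\bar{w}^C_{R(C)}=1\leq 1+2\eps^4$, so the bound holds trivially. For Condition~2, fix an arbitrary $\ell\in R(C)$; then $R(C)\in \mathcal{C}[\ell]$, and therefore
$\sum_{C'\in\mathcal{C}[\ell]}\bar{w}^C_{C'}\geq \bar{w}^C_{R(C)}=1$, as desired.

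The only mildly substantive step is Condition~3. Here the unique $C'\in \textnormal{\textsf{supp}}(\bar{w}^C)$ is $R(C)$ itself. Since $h=|U_C|<\alpha$, the minimum in~\eqref{h} is attained at the inner argument rather than at $\alpha$, meaning that for every $j\in\{h+1,\ldots,r\}$ one has $\ell_j\in \textnormal{\textsf{fit}}\bigl(C\cap L\cup\{\ell_1,\ldots,\ell_h\}\bigr)=\textnormal{\textsf{fit}}(R(C))$. Since $C\setminus R(C)=\{\ell_{h+1},\ldots,\ell_r\}$, this gives $C\setminus R(C)\subseteq \textnormal{\textsf{fit}}(R(C))=\textnormal{\textsf{fit}}(C')$. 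The size bound follows from $s(C\setminus R(C))=s(C)-s(R(C))\leq 1-s(R(C))=1-s(C')$, using that $C$ is a configuration and hence $s(C)\leq 1$.

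There is no real obstacle here: the case $|U_C|<\alpha$ is the ``easy'' branch of the definition because we did not need to evict an element of $\mathcal{L}_C$ in order to make room for the items of $C\setminus R(C)$; by the very choice of $h$ those items already fit with $R(C)$. All the delicate work on sizes, the slack factor $1+2\eps^4$, and the lower bound $|\mathcal{L}_C|\geq \eps^{-4}$ was needed only for the $|U_C|=\alpha$ case handled in Claims~\ref{claim:evic1}--\ref{claim:evic3}.
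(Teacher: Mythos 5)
Your proof is correct and follows essentially the same route as the paper: for $|U_C|<\alpha$ the relaxation is the indicator of $R(C)$, so Conditions 1 and 2 are immediate, and Condition 3 follows directly from the definition of $h$ in \eqref{h} (which forces $\ell_j\in\textnormal{\textsf{fit}}(R(C))$ for all $j>h$) together with $s(C)\leq 1$. No gaps.
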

		
		\begin{proof}
			For the first condition of the lemma: $$\|\bar{w}^C\| = \bar{w}^C_{R(C)} = 1 \leq 1+2\eps^4. $$ The first equality is because by the definition $\bar{w}^C$, for any other configuration $C' \in \mathcal{C} \setminus \{R(C)\}$ it holds that $\bar{w}^C_{C'} = 0$. The second equality is by the definition of $\bar{w}^C$ in case that $|U_C| < \alpha$.  
			
			For the second condition of the lemma, let $\ell \in R(C)$. Then, $$\sum_{C' \in \mathcal{C}[\ell]} \bar{w}^C_{C'} = \bar{w}^C_{R(C)} = 1.$$ The first equality is because by the definition $\bar{w}^C$, for any other configuration $C' \in \mathcal{C} \setminus \{R(C)\}$ it holds that $\bar{w}^C_{C'} = 0$. The second equality is bythe definition of $\bar{w}^C$ in case that $|U_C| < \alpha$.
			
			For the third condition of the lemma, let $C ' \in \textsf{supp}(\bar{w}^C)$. By the definition of $\bar{w}^C$ in case that $|U_C| < \alpha$, it holds that $C' = R(C)$. Let $j \in C \setminus R(C)$. It holds that $j \in \textsf{fit}(C')$ by \eqref{h}, since $|U_C|<\alpha$. Because there are no restrictions over $j$ besides that $j \in C \setminus R(C)$, we conclude that $C \setminus R(C) \subseteq \textsf{fit}(C')$. In addition, 
			
			$$s(C \setminus R(C)) \leq 1-s(R(C)) = 1-s(C'). $$ The first inequality is because $R(C) \subseteq C$ and $s(C) \leq 1$. The first equality is by the definition of $C'$. 
		\end{proof} The proof of Lemma~\ref{lem:wProperties} follows by Claim~\ref{claim:evic1}, Claim~\ref{claim:evic2}, Claim~\ref{claim:evic3}, Claim~\ref{claim:evic4}, Claim~\ref{claim:evic5}, and Claim~\ref{claim:evic6}.

	\noindent{\bf Proof of Lemma~\ref{lem:eviction}:}
We use in the proof of Lemma~\ref{lem:eviction} the following claims. 

\begin{claim}
	\label{lem:evictionPolynomial}
	Algorithm~\ref{Alg:eviction} is polynomial.
\end{claim}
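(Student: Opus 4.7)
The plan is to argue that each of the two steps in Algorithm~\ref{Alg:eviction} can be carried out in time polynomial in $|\cI|$ and $\frac{1}{\eps}$, and to bound the sizes of all intermediate objects.

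First, I will invoke Lemma~\ref{configurationLP}, which guarantees that the input $\bar{x}$ is produced by \textsf{SolveLP} in time $\poly(|\cI|,\frac{1}{\eps})$; in particular, $|\textsf{supp}(\bar{x})|$ is polynomial in $|\cI|$ and $\frac{1}{\eps}$, so the outer loop in Step~\ref{step:forsupp} is executed a polynomial number of times. For each fixed $C \in \textsf{supp}(\bar{x})$, the computation of the relaxation $\bar{w}^C$ proceeds by first sorting the items of $C \setminus L$ by size (in time $O(|I| \log |I|)$), then evaluating \eqref{h} by a single left-to-right scan that maintains the running value $s(C\cap L \cup \{\ell_1,\ldots,\ell_{h'}\})$ and tests the membership condition $\ell_{h'+1},\ldots,\ell_r \in \textsf{fit}(C\cap L\cup\{\ell_1,\ldots,\ell_{h'}\})$ against the fixed threshold $\eps\cdot(1-s(\cdot))$. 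This yields $h$, $U_C$, $R(C)$, and then $\mathcal{L}_C$ in time polynomial in $|I|$ and $\frac{1}{\eps}$.

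Next I will bound the representation of $\bar{w}^C$ itself. By the definition of the relaxation, $|\textsf{supp}(\bar{w}^C)| \leq \max\{|\mathcal{L}_C|,1\}\leq \alpha = \eps^{-5}$, and each configuration in this support is obtained from $R(C)$ by removing at most one item, so it can be written down in time $\poly(|I|,\frac{1}{\eps})$. Consequently Step~\ref{step:forsupp} computes, for every $C\in\textsf{supp}(\bar{x})$, a sparse representation of $\bar{w}^C$ in polynomial time.

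Finally, Step~\ref{step:return} forms $\bar{y}=\sum_{C\in\textsf{supp}(\bar{x})} \bar{x}_C\cdot \bar{w}^C$; the support of $\bar{y}$ is contained in $\bigcup_{C\in\textsf{supp}(\bar{x})} \textsf{supp}(\bar{w}^C)$, whose cardinality is at most $\eps^{-5}\cdot|\textsf{supp}(\bar{x})|$, a polynomial in $|\cI|$ and $\frac{1}{\eps}$. Thus the entire sum can be assembled by adding the sparse vectors entry by entry in polynomial time, and the overall running time of Algorithm~\ref{Alg:eviction} is $\poly(|\cI|,\frac{1}{\eps})$. There is no substantive obstacle here beyond verifying that all intermediate objects admit the sparse representations just described; this is the main point to write carefully.
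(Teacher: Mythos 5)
Your plan is correct and follows essentially the same route as the paper's proof: a polynomial bound on $|\textnormal{\textsf{supp}}(\bar{x})|$, a polynomial-time computation of each relaxation $\bar{w}^C$ with sparse support (the paper bounds it by $|\mathcal{L}_C|$, you by $\eps^{-5}$, both fine), and a sparse summation in Step~\ref{step:return}. The only cosmetic difference is that the paper bounds $|\textnormal{\textsf{supp}}(\bar{x})|$ by the encoding length of the (sparsely represented) input to \textsf{Evict} rather than appealing to the running time of \textsf{SolveLP}, which makes the claim independent of how $\bar{x}$ was produced.
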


\begin{proof}
	The number of elements over which the loop in Step~\ref{step:forsupp} iterates is polynomial, since  $|\textsf{supp}(\bar{x})|$ is at most the encoding size of the input (in a sparse representation). Let $C \in \textsf{supp}(\bar{x})$ be a configuration in the support of $\bar{x}$. It follows by \eqref{h} that $U_C$ can be easily computed in polynomial time by iteratively adding items from $C \setminus L$ in decreasing order of the items. Therefore, given $U_C$, the relaxation of $C$ that is $\bar{w}^C$ can be computed in polynomial time: by the definition of $\bar{w}^C$, the number of nonzero entries in $\bar{w}^C$ is at most $|\mathcal{L}_C|$, which is polynomial in the size of the instance. This is since $\mathcal{L}_C \subseteq I$ is a subset of the items. Finally, Step~\ref{step:return} is also polynomial as $\bar{y}$ is a linear combination of a polynomial number (i.e., $|\textsf{supp}(\bar{x})|$) of vectors, each with a polynomial number of nonzero entries. By the above, the claim follows. 
\end{proof}

	\begin{claim}
		\label{lem:eviction1}
		For any $0<\eps<0.1$, an $\eps$-structured BPP instance $\mathcal{I}$, and a solution $\bar{x}$ to the configuration LP in \eqref{C-LP}, Algorithm~\ref{Alg:eviction} returns a prototype $\bar{y}$  of $\mathcal{I}$ with $\bar{\lambda}$ in the $\bar{y}$-polytope such that for all $\ell,j \in I, \ell \neq j$ it holds that $\bar{\lambda}_{\ell,j} = 0$. 
	\end{claim}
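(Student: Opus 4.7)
The plan is to construct an explicit point $\bar{\lambda}$ in the $\bar{y}$-polytope that vanishes on all off-diagonal slot entries, i.e.\ $\bar{\lambda}_{\ell,j}=0$ for $\ell\neq j$. The construction is guided by the two roles an item plays within a configuration $C \in \textsf{supp}(\bar{x})$, according to Lemma~\ref{lem:wProperties}: an item $\ell \in R(C)$ continues to serve as a slot in the relaxed configurations in $\textsf{supp}(\bar{w}^C)$ (Property~2), while an item $\ell \in C \setminus R(C)$ fits with every configuration in $\textsf{supp}(\bar{w}^C)$ and must be absorbed into its residual capacity (Properties~3 and~5).

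Accordingly, I would set
\[
\bar{\lambda}_{\ell,\ell} \;=\; \sum_{\substack{C \in \textsf{supp}(\bar{x}) \cap \mathcal{C}[\ell]\\ \ell \in R(C)}} \bar{x}_C, \qquad \bar{\lambda}_{\ell,C'} \;=\; \sum_{\substack{C \in \textsf{supp}(\bar{x}) \cap \mathcal{C}[\ell]\\ \ell \in C \setminus R(C)}} \bar{x}_C \cdot \frac{\bar{w}^C_{C'}}{\|\bar{w}^C\|}
\]
for every $\ell \in I$ and $C' \in \mathcal{C}$, with all remaining entries of $\bar{\lambda}$ equal to $0$. This immediately ensures $\bar{\lambda}_{\ell,j}=0$ for $\ell\neq j$. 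The denominator $\|\bar{w}^C\|$ is always at least~$1$: if $R(C) \neq \emptyset$, apply Property~2 to any $\ell \in R(C)$; if $R(C) = \emptyset$ then $|U_C| < \alpha$ and $\bar{w}^C_{\emptyset}=1$ by construction of the relaxation. Each entry of $\bar{\lambda}$ therefore lies in $[0,1]$ because it is bounded by $\sum_{C \in \mathcal{C}[\ell]} \bar{x}_C = 1$.

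The verification of the defining constraints of the $\bar{y}$-polytope is then a direct application of Lemma~\ref{lem:wProperties}. Constraint~\eqref{F1} is trivial for the slot entries $\bar{\lambda}_{\ell,\ell}$ and follows from Property~3 for configuration entries, since $\ell \in C \setminus R(C) \subseteq \textnormal{\textsf{fit}}(C')$ whenever $\bar{w}^C_{C'}>0$. For constraint~\eqref{F2}, the bound $s(C\setminus R(C)) \leq 1-s(C')$ from Property~3, combined with $\|\bar{w}^C\|\geq 1$ and the identity $\bar{y}_{C'} = \sum_{C\in\textsf{supp}(\bar{x})}\bar{x}_C\,\bar{w}^C_{C'}$, yields the required inequality; constraint~\eqref{F5} is analogous, using Property~5 to control $|G \cap (C \setminus R(C))|$. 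Constraint~\eqref{F3} reduces, since $\bar{\lambda}_{\ell,j}=0$ for $\ell\neq j$, to $\bar{\lambda}_{j,j} \leq \sum_{C \in \mathcal{C}[j]} \bar{y}_C$, which follows by linearly expanding $\bar{y}$ and applying Property~2 to every $C \in \textsf{supp}(\bar{x})$ with $j \in R(C)$.

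The main obstacle is constraint~\eqref{F4}. Splitting $\mathcal{C}[\ell] \cap \textsf{supp}(\bar{x})$ according to whether $\ell \in R(C)$ or $\ell \in C \setminus R(C)$, the sum $\sum_{t \in I \cup \mathcal{C}} \bar{\lambda}_{\ell,t}$ decomposes into two parts; in the second part $\sum_{C' \in \mathcal{C}} \bar{w}^C_{C'} = \|\bar{w}^C\|$ exactly cancels the denominator, so the two parts combine to $\sum_{C \in \mathcal{C}[\ell]} \bar{x}_C = 1$ by feasibility of $\bar{x}$ for~\eqref{C-LP}. The normalization by $\|\bar{w}^C\|$ in the definition of $\bar{\lambda}_{\ell,C'}$ is precisely what allows \eqref{F2} and \eqref{F5} to hold pointwise, while the estimate $\|\bar{w}^C\| \geq 1$ is precisely what prevents \eqref{F4} from being over-shrunk; tightening the normalization in either direction would sacrifice feasibility.
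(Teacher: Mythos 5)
Your construction is correct and is essentially the paper's own proof: the paper likewise sets $\bar{\lambda}=\sum_{C}\bar{x}_C\cdot\bar{\gamma}^C$ with $\bar{\gamma}^C_{\ell,\ell}=1$ for $\ell\in R(C)$ and $\bar{\gamma}^C_{\ell,C'}=\bar{w}^C_{C'}$ for $\ell\in C\setminus R(C)$, and verifies \eqref{F1}--\eqref{F4} via the same properties of Lemma~\ref{lem:wProperties}. The only difference is your normalization by $\|\bar{w}^C\|$ (the paper uses $\bar{w}^C_{C'}$ unnormalized, so \eqref{F4} holds as an inequality via $\|\bar{w}^C\|\geq 1$ while the configuration mass matches $\bar{y}$ exactly), a harmless variant that does not change the argument.
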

	
	\begin{proof}
		We define below a point $\bar{\lambda} \in \mathbb{R}^{I \times (I \cup \mathcal{C})}_{\geq 0}$ in the $\bar{y}$-polytope. Thus, it follows that the $\bar{y}$-polytope is non-empty. For any $C \in \mathcal{C}$, define the {\em point of $C$} as a vector $\bar{\gamma}^C \in \mathbb{R}^{I \times (I \cup \mathcal{C})}_{\geq 0}$ as follows. For any $\ell \in R(C)$, define 
		
		\begin{equation}
			\label{gammal}
			\bar{\gamma}^C_{\ell,\ell} = 1.
		\end{equation}
	
	Moreover, for any $\ell \in C \setminus R(C)$ and $C' \in \mathcal{C}$ define 
		\begin{equation}
		\label{gammaC}
		\bar{\gamma}^C_{\ell,C'} = \bar{w}^C_{C'}.
	\end{equation}

For any other $\ell \in I$ and $t \in (I \cup \mathcal{C})$ define

	\begin{equation}
		\label{gammae}
	\bar{\gamma}^C_{\ell,t} = 0.
\end{equation}

	Finally, define 

\begin{equation}
	\label{lambda}
	\bar{\lambda} = \sum_{C \in \mathcal{C}} \bar{x}_C \cdot \bar{\gamma}^C.
\end{equation}

We show below that all constraints of the $\bar{y}$-polytope are satisfied for $\bar{\lambda}$. Let  $\ell \in I$  and $t \in I \cup \mathcal{C}$ such that  $\ell \notin \textsf{fit}\left(t \right)$. Then, 

\begin{equation}
	\label{eq:ev1}
	\bar{\lambda}_{\ell,t} = \sum_{C \in \mathcal{C}} \bar{y}_C \cdot \bar{\gamma}^C_{\ell,t} = \sum_{C \in \mathcal{C}} \bar{y}_C \cdot 0 = 0.
\end{equation}

The first equality is by \eqref{lambda}. Because $\ell \notin \textsf{fit}\left(t \right)$, then $\ell \neq t$ by \eqref{fitl}; moreover, for all $C \in \mathcal{C}$ such that $\ell \in C \setminus R(C)$ and $C ' \in \textsf{supp}(\bar{w}^C)$,  it holds that $t \neq C'$ because $\ell \in \textsf{fit}(C')$ by Lemma~\ref{lem:wProperties}. Therefore, by \eqref{gammae} the second equality follows. By \eqref{eq:ev1} we conclude that constraint \eqref{F1} is satisfied for $\bar{\lambda}$. 

Let $C \in \mathcal{C}$. Then,

\begin{equation}
		\label{eq:ev2}
	\begin{aligned}
	\sum_{\ell \in I} \bar{\lambda}_{\ell,C} \cdot  s(\ell) &= \sum_{\ell \in I~} \sum_{C' \in \mathcal{C}} \bar{x}_{C'} \cdot \bar{\gamma}^{C'}_{\ell,C} \cdot  s(\ell) =  \sum_{C' \in \mathcal{C}~} \sum_{\ell \in I} \bar{x}_{C'} \cdot \bar{\gamma}^{C'}_{\ell,C} \cdot  s(\ell)\\
	={} & \sum_{C' \in \mathcal{C}~} \sum_{\ell \in C' \setminus R(C')} \bar{x}_{C'} \cdot \bar{w}^{C'}_{C} \cdot  s(\ell) 
	=   \sum_{C' \in \mathcal{C}}  \bar{x}_{C'} \cdot \bar{w}^{C'}_{C} \cdot s\left(C' \setminus R(C')\right)\\
	\leq{} & \sum_{C' \in \mathcal{C}}  \bar{x}_{C'} \cdot \bar{w}^{C'}_{C} \cdot \left(1-s(C)\right)   =(1-s(C)) \cdot  \sum_{C' \in \textsf{supp}(\bar{x})}  \bar{x}_{C'} \cdot \bar{w}^{C'}_{C} = (1-s(C)) \bar{y}_{C}. 
	\end{aligned}
\end{equation}

The first equality is by \eqref{lambda}. The second equality is by changing the order of summation. The third equality is by \eqref{gammal}, \eqref{gammaC} and \eqref{gammae}.  The inequality is because for all $C' \in \mathcal{C}$, if $C \in \textsf{supp}({\bar{w}^{C'}})$ then  $s(C' \setminus R(C')) \leq 1-s(C)$ by Lemma~\ref{lem:wProperties}. The last equality is by Step~\ref{step:return} of Algorithm~\ref{Alg:eviction}. Therefore, by \eqref{eq:ev2} we conclude that constraint \eqref{F2} is satisfied for $\bar{\lambda}$.

Let $G \in \mathcal{G}$ and $C \in \mathcal{C}$. Then,

\begin{equation}
	\label{eq:ev3}
	\begin{aligned}
	 \sum_{\ell \in G} \bar{\lambda}_{\ell,C}  ={} &   \sum_{\ell \in G~}  \sum_{C' \in \mathcal{C}} \bar{x}_{C'} \cdot \bar{\gamma}^{C'}_{\ell,C}  = \sum_{C' \in \mathcal{C}~}  \sum_{\ell \in G}  \bar{x}_{C'} \cdot \bar{\gamma}^{C'}_{\ell,C} = \sum_{C' \in \mathcal{C}~~}  \sum_{\ell \in G \cap \left(C' \setminus R(C') \right)}  \bar{x}_{C'} \cdot \bar{w}^{C'}_{C}\\
	   ={} & \sum_{C' \in \mathcal{C} \text{ s.t. } C \in \textsf{supp}(\bar{w}^{C'})~~}   \bar{x}_{C'} \cdot \bar{w}^{C'}_{C}  \sum_{\ell \in G \cap \left(C' \setminus R(C') \right)} 1 
	   \leq  \sum_{C' \in \mathcal{C}} \bar{x}_{C'} \cdot \bar{w}^{C'}_{C} \cdot \left(k(G)-|G \cap C|\right) \\ ={} & \left(k(G)-|G \cap C|\right) \sum_{C' \in \textsf{supp}(\bar{x})~}  \bar{x}_{C'} \cdot \bar{w}^{C'}_{C}  =\bar{y}_C \cdot \left(k(G)-|G \cap C|\right)
	\end{aligned}
\end{equation}

The first equality is by \eqref{lambda}. The second equality is by changing the order of summation. The third equality is by \eqref{gammal}, \eqref{gammaC} and \eqref{gammae}.  The inequality is by Condition~5 of Lemma~\ref{lem:wProperties}. The last equality is by Step~\ref{step:return} of Algorithm~\ref{Alg:eviction}. Therefore, by \eqref{eq:ev3} we conclude that constraint \eqref{F5} is satisfied for $\bar{\lambda}$.

Let $j \in I$. Then, 

\begin{equation}
	\label{eq:ev4}
	\begin{aligned}
	 \sum_{\ell \in I} \bar{\lambda}_{\ell,j} ={} &  \sum_{\ell \in I~}  \sum_{C \in \mathcal{C}} \bar{x}_{C} \cdot \bar{\gamma}^{C}_{\ell,j} =   \sum_{C \in \mathcal{C}~} \sum_{\ell \in I} \bar{x}_{C} \cdot \bar{\gamma}^{C}_{\ell,j} =  \sum_{C \in \mathcal{C}[j] \text{ s.t. } j \in R(C)} \bar{x}_{C}\\ \leq{} &  \sum_{C \in \mathcal{C}[j] \text{ s.t. } j \in R(C)} \bar{x}_{C} \cdot \left(  \sum_{C' \in \mathcal{C}[j]} \bar{w}^{C}_{C'} \right) 
	\leq  \sum_{C \in \mathcal{C}[j]} \sum_{C' \in \mathcal{C}[j]}  \bar{x}_{C} \cdot  \bar{w}^{C}_{C'}\\   ={} &   \sum_{C' \in \mathcal{C}[j]} \sum_{C \in \mathcal{C}[j]}  \bar{x}_{C} \cdot  \bar{w}^{C}_{C'}
	 \leq \sum_{C' \in \mathcal{C}[j]} \sum_{C \in \textsf{supp}(\bar{x})}  \bar{x}_{C} \cdot  \bar{w}^{C}_{C'}  =  \sum_{\substack{C' \in \mathcal{C}[j]}}  \bar{y}_{C'} = \sum_{\substack{C \in \mathcal{C}[j]}}  \bar{y}_{C}.
	\end{aligned}
\end{equation}

The first equality is by \eqref{lambda}. The second equality is by changing the order of summation. The third equality is by \eqref{gammal}, \eqref{gammaC} and \eqref{gammae}.  The first  inequality is because for all $C \in \cC$ and $j \in R(C)$ it holds that  $ \sum_{C' \in \mathcal{C}[j]} \bar{w}^{C}_{C'}  \geq 1$ by Lemma~\ref{lem:wProperties}. The second inequality is because for all $C,C' \in \cC$ it holds that $ \bar{x}_{C} \cdot  \bar{w}^{C}_{C'} \geq 0$.  The fourth equality is is by changing the order of summation. The third inequality is because for all $C \in \mathcal{C}[j] \setminus \textsf{supp}(\bar{x})$ it holds that $\bar{x}_{C} \cdot  \bar{w}^{C}_{C'} = 0$. The fifth equality is by Step~\ref{step:return} of Algorithm~\ref{Alg:eviction}. The sixth equality is by a change in the notation of a variable. Therefore, by \eqref{eq:ev4} we conclude that constraint \eqref{F3} is satisfied for $\bar{\lambda}$. 

Let $\ell \in I$. Then, we use the following equations. 

\begin{equation}
	\label{eq:ev5a}
	\sum_{t \in I} \bar{\lambda}_{\ell,t} =   \sum_{t \in I~} \sum_{C \in \mathcal{C}} \bar{x}_{C} \cdot \bar{\gamma}^{C}_{\ell,t}  =  \sum_{C \in \mathcal{C}~} \sum_{t \in I}  \bar{x}_{C} \cdot \bar{\gamma}^{C}_{\ell,t} =  \sum_{C \in \mathcal{C}[\ell] \text{ s.t. } \ell \in R(C)} \bar{x}_{C}.
\end{equation}

The first equality is by \eqref{lambda}. The second equality is by changing the order of summation. The third equality is by \eqref{gammal} and \eqref{gammae}. In addition,

\begin{equation}
	\label{eq:ev5b}
	\begin{aligned}
		\sum_{t \in \mathcal{C}} \bar{\lambda}_{\ell,t} ={} &   \sum_{t \in \mathcal{C}~} \sum_{C \in \mathcal{C}} \bar{x}_{C} \cdot \bar{\gamma}^{C}_{\ell,t}  =  \sum_{C \in \mathcal{C}~} \sum_{t \in \mathcal{C}}  \bar{x}_{C} \cdot \bar{\gamma}^{C}_{\ell,t} =  \sum_{C \in \mathcal{C}[\ell] \text{ s.t. } \ell \in C \setminus R(C)~} \sum_{t \in \mathcal{C}} \bar{x}_{C} \cdot \bar{w}^{C}_{t}\\   ={} &  \sum_{C \in \mathcal{C}[\ell] \text{ s.t. } \ell \in C \setminus R(C)~} \bar{x}_{C} \sum_{t \in \mathcal{C}}   \bar{w}^{C}_{t} \geq \sum_{C \in \mathcal{C}[\ell] \text{ s.t. } \ell \in C \setminus R(C)~} \bar{x}_{C} \sum_{t \in \mathcal{C}[\ell]}   \bar{w}^{C}_{t} \geq   \sum_{C \in \mathcal{C}[\ell] \text{ s.t. } \ell \in C \setminus R(C)~} \bar{x}_{C}.
	\end{aligned}
\end{equation}

The first equality is by \eqref{lambda}. The second equality is by changing the order of summation. The third equality is by \eqref{gammaC} and \eqref{gammae}. The first inequality is because $\mathcal{C}[\ell] \subseteq \mathcal{C}$. The second inequality is because $ \sum_{t \in \mathcal{C}[\ell]}   \bar{w}^{C}_{t} \geq 1$ by Lemma~\ref{lem:wProperties}. Thus,

\begin{equation}
	\label{eq:ev5d}
\sum_{t \in I \cup \mathcal{C}} \bar{\lambda}_{\ell,t} = 	\sum_{t \in I} \bar{\lambda}_{\ell,t}+ 	\sum_{t \in \mathcal{C}} \bar{\lambda}_{\ell,t}  \geq \sum_{C \in \mathcal{C}[\ell] \text{ s.t. } \ell \in R(C)} \bar{x}_{C} +\sum_{C \in \mathcal{C}[\ell] \text{ s.t. } \ell \in C \setminus R(C)~} \bar{x}_{C} = \sum_{C \in \mathcal{C}[\ell]} \bar{x}_{C} \geq 1.  
\end{equation}

The first equality is because $I \cap \mathcal{C} = \emptyset$. The first inequality is by \eqref{eq:ev5a} and \eqref{eq:ev5b}. The second equality is because for all $C \in \mathcal{C}[\ell]$ either $\ell \in R(C)$ or $\ell \in C\setminus R(C)$. The last inequality is because $\bar{x}$ is a solution for \eqref{C-LP}. Therefore, by  \eqref{eq:ev5d}, constraint \eqref{F4} is satisfied for $\bar{\lambda}$. In summary, we conclude that $\bar{\lambda}$ is a point in the $\bar{y}$-polytope; hence, the $\bar{y}$-polytope is non-empty. 

	\end{proof}

			\begin{claim}
		\label{lem:eviction2}
		For any $0<\eps<0.1$, an $\eps$-structured BPP instance $\mathcal{I}$, and a prototype $\bar{x}$ of $\mathcal{I}$, Algorithm~\ref{Alg:eviction} returns a prototype $\bar{y}$ such that $||\bar{y}|| \leq (1+\eps)||\bar{x}||$.
	\end{claim}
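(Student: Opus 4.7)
The plan is to unfold the definition of $\bar{y}$ from Step~\ref{step:return} of Algorithm~\ref{Alg:eviction} and reduce the desired bound to an already-proved property of the relaxation vectors $\bar{w}^C$. Since $\bar{y}$ is defined as a non-negative linear combination of the relaxation vectors, its $\ell_1$-norm is simply a non-negative linear combination of the norms of the $\bar{w}^C$'s.

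Concretely, I would write
\begin{equation*}
\|\bar{y}\| \;=\; \sum_{C' \in \mathcal{C}} \bar{y}_{C'}
\;=\; \sum_{C' \in \mathcal{C}} \sum_{C \in \textsf{supp}(\bar{x})} \bar{x}_C \cdot \bar{w}^C_{C'}
\;=\; \sum_{C \in \textsf{supp}(\bar{x})} \bar{x}_C \sum_{C' \in \mathcal{C}} \bar{w}^C_{C'}
\;=\; \sum_{C \in \textsf{supp}(\bar{x})} \bar{x}_C \cdot \|\bar{w}^C\|,
\end{equation*}
where the first equality is the definition of the $\ell_1$-norm (entries of $\bar{y}$ are non-negative by construction), the second equality is Step~\ref{step:return} of Algorithm~\ref{Alg:eviction}, the third equality swaps the order of summation (both terms are non-negative), and the last equality is the definition of the $\ell_1$-norm of $\bar{w}^C$.

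Next, I would apply Condition~1 of Lemma~\ref{lem:wProperties}, which gives $\|\bar{w}^C\| \leq 1 + 2\eps^4$ for every $C \in \mathcal{C}$, to obtain
\begin{equation*}
\|\bar{y}\| \;\leq\; (1 + 2\eps^4) \sum_{C \in \textsf{supp}(\bar{x})} \bar{x}_C \;=\; (1+2\eps^4)\cdot \|\bar{x}\|.
\end{equation*}
Finally, since $\eps \in (0,0.1)$ we have $2\eps^3 < 2\cdot 10^{-3} < 1$, so $2\eps^4 \leq \eps$, which yields $\|\bar{y}\| \leq (1+\eps)\|\bar{x}\|$, as desired.

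There is no real obstacle here: the claim is an immediate consequence of the linearity of the $\ell_1$-norm and of the single property $\|\bar{w}^C\|\leq 1+2\eps^4$ established in Lemma~\ref{lem:wProperties}. The only small point of care is justifying the swap of summations, which is legitimate because both $\bar{x}_C$ and $\bar{w}^C_{C'}$ are non-negative and the number of terms is finite (since $\textsf{supp}(\bar{x})$ is finite and $\textsf{supp}(\bar{w}^C)$ is finite for each $C$, as established in the proof of Claim~\ref{lem:evictionPolynomial}).
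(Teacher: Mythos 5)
Your proof is correct and follows essentially the same route as the paper: unfold Step~\ref{step:return}, bound each $\|\bar{w}^C\|$ by $1+2\eps^4$ via Condition~1 of Lemma~\ref{lem:wProperties}, and use $\eps<0.1$ to get $2\eps^4\leq\eps$. The only cosmetic difference is that you compute the norm exactly by swapping summations where the paper simply invokes the triangle inequality, which changes nothing of substance.
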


			\begin{proof}

				\begin{align*}
				\|\bar{y}\| ={} & \left\|  \sum_{C \in \textsf{supp}(\bar{x})} \bar{x}_{C} \cdot \bar{w}^{C} \right\| \leq  \sum_{C \in \textsf{supp}(\bar{x})} \bar{x}_{C} \cdot \| \bar{w}^{C} \|  \leq  \sum_{C \in \textsf{supp}(\bar{x})} \bar{x}_{C} \cdot (1+2\eps^4) = (1+2\eps^4)\|\bar{x}\| \leq (1+\eps)\|\bar{x}\|.
			\end{align*}
				
		The first equality is by Step~\ref{step:return} of Algorithm~\ref{Alg:eviction}. The first inequality is by the triangle inequality. The second inequality is because $\|\bar{w}^{C}\|  \leq (1+2\eps^4)$ by Lemma~\ref{lem:wProperties} for all $C \in \textsf{supp}(\bar{x})$. The last inequality is because $0<\eps< 0.1$.

			\end{proof}
			
			\begin{claim}
				\label{lem:eviction3}
				For any $0<\eps<0.1$, an $\eps$-structured BPP instance $\mathcal{I}$, and a prototype $\bar{x}$ of $\mathcal{I}$,  Algorithm~\ref{Alg:eviction} returns a prototype $\bar{y}$ such that for all $C \in \textsf{supp}(y)$ it holds that $|C| \leq \eps^{-10}$.
			\end{claim}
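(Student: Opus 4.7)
The plan is to observe that every configuration in $\textnormal{\textsf{supp}}(\bar{y})$ is either $R(C)$ or $R(C)\setminus\{\ell\}$ for some $C\in \textnormal{\textsf{supp}}(\bar{x})$ and some $\ell\in \mathcal{L}_C$, so it suffices to bound $|R(C)|$ for an arbitrary $C\in \textnormal{\textsf{supp}}(\bar{x})$. Since $R(C)=(C\cap L)\cup U_C$, the bound decomposes into a bound on $|C\cap L|$ and a bound on $|U_C|$.

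First, I would fix $C'\in\textnormal{\textsf{supp}}(\bar{y})$. By Step~\ref{step:return} of Algorithm~\ref{Alg:eviction} there exists $C\in \textnormal{\textsf{supp}}(\bar{x})$ with $\bar{w}^C_{C'}>0$, i.e.\ $C'\in\textnormal{\textsf{supp}}(\bar{w}^C)$. By the two cases in the definition of the relaxation (Cases~\ref{eq:y=a} and~\ref{eq:y<a} in Section~\ref{sec:eviction}), either $C'=R(C)$ (if $|U_C|<\alpha$) or $C'=R(C)\setminus\{\ell\}$ for some $\ell\in\mathcal{L}_C$ (if $|U_C|=\alpha$). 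In both situations $|C'|\leq |R(C)|=|C\cap L|+|U_C|$, where the equality uses that $L$ and $U_C\subseteq C\setminus L$ are disjoint.

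Next, I would bound each summand. Every item in $L$ has size at least $\eps^2$, and $s(C)\leq 1$ since $C\in\cC$ is a configuration; hence $|C\cap L|\leq \eps^{-2}$. For $U_C$, the definition in \eqref{h} imposes the hard cap $|U_C|\leq \alpha=\eps^{-5}$. Combining the two estimates,
\[
|C'|\;\leq\; |C\cap L|+|U_C|\;\leq\; \eps^{-2}+\eps^{-5}\;\leq\; 2\eps^{-5}\;\leq\;\eps^{-10},
\]
where the last inequality holds because $\eps<0.1$ gives $\eps^{-5}\geq 10^5\geq 2$. This completes the proof.

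The argument is essentially a bookkeeping check; the only ``obstacle'' is to keep track of which of the two cases of the relaxation one is in and to notice that removing one element can only shrink $R(C)$. No deeper structural property of $\bar{y}$ or of the $\bar{y}$-polytope is needed here — the bound follows directly from the two inequalities $|C\cap L|\leq \eps^{-2}$ and $|U_C|\leq \alpha$, both of which are immediate from the definitions.
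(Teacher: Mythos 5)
Your proof is correct and follows essentially the same approach as the paper: bound $|C'|$ by $|R(C)|$ for the originating $C\in\textnormal{\textsf{supp}}(\bar{x})$, then split $R(C)=(C\cap L)\cup U_C$ and bound $|C\cap L|\leq \eps^{-2}$ (since large items have size at least $\eps^2$) and $|U_C|\leq\alpha=\eps^{-5}$ (by \eqref{h}). Your explicit case split on $|U_C|<\alpha$ versus $|U_C|=\alpha$ is a slightly more careful justification of $|C'|\leq |R(C)|$ than the paper gives, but the argument is the same.
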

			
			\begin{proof}
				Let $C \in \textsf{supp}(\bar{y})$ be a configuration. Because $C \in \textsf{supp}(\bar{y})$, then by Step~\ref{step:return} of Algorithm~\ref{Alg:eviction}, there is $C' \in \textsf{supp}(\bar{x})$ such that $\bar{w}^{C'}_C \neq 0$.  Therefore,  $$|C| \leq |R(C')| = |C' \cap L \cup U_{C'}| \leq |C' \cap L|+|U_{C'}| \leq \eps^{-2}+\alpha \leq 2\eps^{-6} \leq \eps^{-10}.$$ 
				
				The first inequality is because the maximal number of items in $C$ is bounded by $|R(C')|$ by the definition of the relaxation vector of a configuration. The first equality is by the definition of $R(C')$. The second inequality is by the union bound. The third inequality is since the size of each large item in $L$ is at least $\eps^2$ and there can be at most $\eps^{-2}$ such items in a configuration without exceeding the maximal total size of items in a configuration which is $1$; moreover, by \eqref{h} it holds that $|U_C| \leq \alpha$. The fourth inequality is because $\alpha = \lceil \eps^{-5} \rceil$. The last inequality is because $0<\eps<0.1$. 
			\end{proof}
			
				\begin{claim}
				\label{lem:eviction4}
				For any $0<\eps<0.1$, an $\eps$-structured BPP instance $\mathcal{I}$, and a prototype $\bar{x}$ of $\mathcal{I}$,  Algorithm~\ref{Alg:eviction} returns a prototype $\bar{y}$ such that for all $C \in \textsf{supp}(y)$ it holds that $s(C \setminus L) \leq \eps$. 
			\end{claim}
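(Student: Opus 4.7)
The plan is to observe that the claim follows almost immediately from property 4 of Lemma~\ref{lem:wProperties} combined with the definition of $\bar{y}$ in Step~\ref{step:return} of Algorithm~\ref{Alg:eviction}. The key point is that the support of $\bar{y}$ is contained in the union of supports of the relaxation vectors $\bar{w}^{C'}$ over configurations $C'$ in the support of $\bar{x}$, and every configuration in such a support already satisfies the required bound on the total size of its small items.

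More concretely, let $C \in \textnormal{\textsf{supp}}(\bar{y})$. Since $$\bar{y} = \sum_{C' \in \textnormal{\textsf{supp}}(\bar{x})} \bar{x}_{C'} \cdot \bar{w}^{C'}$$ and $\bar{x}_{C'} > 0$ for every $C' \in \textnormal{\textsf{supp}}(\bar{x})$, the fact that $\bar{y}_C > 0$ forces the existence of at least one $C' \in \textnormal{\textsf{supp}}(\bar{x})$ with $\bar{w}^{C'}_C > 0$, i.e., $C \in \textnormal{\textsf{supp}}(\bar{w}^{C'})$. Applying the fourth property of Lemma~\ref{lem:wProperties} to this $C'$ and $C$ yields $s(C \setminus L) \leq \eps$, which is exactly what we need.

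Since all the work was already done in establishing Lemma~\ref{lem:wProperties}, there is no real obstacle here; the argument is just an aggregation step. The only care required is to make sure one formally extracts a witness $C'$ whose relaxation contains $C$ in its support, which is straightforward because nonnegativity of all $\bar{x}_{C'}$ and $\bar{w}^{C'}_C$ prevents cancellation in the linear combination defining $\bar{y}$. Thus the proof consists of a single sentence invoking Lemma~\ref{lem:wProperties}(4) after this witness extraction.
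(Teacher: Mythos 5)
Your proposal is correct and matches the paper's proof of Claim~\ref{lem:eviction4} almost verbatim: both extract a witness $C' \in \textnormal{\textsf{supp}}(\bar{x})$ with $\bar{w}^{C'}_C > 0$ from the nonnegative linear combination in Step~\ref{step:return}, and then invoke Condition~4 of Lemma~\ref{lem:wProperties}. No differences worth noting.
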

			
			\begin{proof}
				Let $C \in \textsf{supp}(\bar{y})$ be a configuration. Because $C \in \textsf{supp}(\bar{y})$, then by Step~\ref{step:return} of Algorithm~\ref{Alg:eviction}, there is $C' \in \textsf{supp}(\bar{x})$ such that $\bar{w}^{C'}_C \neq 0$.  Therefore,  $C \in \textsf{supp}(\bar{w}^{C'})$ and it follows that $s(C \setminus L) \leq \eps$ by Lemma~\ref{lem:wProperties}. 
			\end{proof}

				\begin{claim}
				\label{lem:evictionF4}
			for every $\ell \in I$ it holds that $\sum_{C\in \cC[\ell]} \by_C\leq 2$. 
			\end{claim}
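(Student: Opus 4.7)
My plan is to expand $\bar{y}$ via its definition in Step~\ref{step:return} of Algorithm~\ref{Alg:eviction}, namely $\bar{y}=\sum_{C^*\in\textsf{supp}(\bar{x})}\bar{x}_{C^*}\cdot\bar{w}^{C^*}$, swap the order of summation, and then bound the resulting inner sum using the structural properties of the relaxation vectors $\bar{w}^{C^*}$ established in Lemma~\ref{lem:wProperties}. Concretely, fix $\ell\in I$. The first step is to write
\[
\sum_{C\in\cC[\ell]}\bar{y}_C \;=\; \sum_{C^*\in\textsf{supp}(\bar{x})}\bar{x}_{C^*}\sum_{C\in\cC[\ell]}\bar{w}^{C^*}_C.
\]

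The key observation I would then make is that every configuration in $\textsf{supp}(\bar{w}^{C^*})$ is a subset of $R(C^*)\subseteq C^*$. Indeed, inspecting the two cases in the definition of $\bar{w}^{C^*}$: if $|U_{C^*}|<\alpha$ then the only nonzero entry is at $R(C^*)$, and if $|U_{C^*}|=\alpha$ then the nonzero entries lie at configurations of the form $R(C^*)\setminus\{j\}$ for $j\in \mathcal{L}_{C^*}$. Consequently, if $\ell\notin C^*$ then $\ell$ belongs to no configuration in $\textsf{supp}(\bar{w}^{C^*})$, so the inner sum vanishes. This lets me restrict the outer sum to $C^*\in\cC[\ell]$.

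For $C^*\in\cC[\ell]$, I would bound the inner sum trivially by the $\ell_1$-norm of the relaxation vector: $\sum_{C\in\cC[\ell]}\bar{w}^{C^*}_C\leq \|\bar{w}^{C^*}\|\leq 1+2\eps^4$, where the last inequality is Condition~1 of Lemma~\ref{lem:wProperties}. Combining these steps and invoking the feasibility of $\bar{x}$ in the configuration-LP~\eqref{C-LP} (which forces $\sum_{C^*\in\cC[\ell]}\bar{x}_{C^*}=1$),
\[
\sum_{C\in\cC[\ell]}\bar{y}_C \;\leq\; (1+2\eps^4)\sum_{C^*\in\cC[\ell]}\bar{x}_{C^*} \;=\; 1+2\eps^4 \;\leq\; 2,
\]
using $\eps<0.1$ in the final inequality.

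The argument is almost immediate given Lemma~\ref{lem:wProperties}, so I do not anticipate any serious obstacle. The only subtle point worth verifying carefully is the containment $\textsf{supp}(\bar{w}^{C^*})\subseteq 2^{R(C^*)}$, which ensures that configurations in $\textsf{supp}(\bar{w}^{C^*})$ involving $\ell$ exist only when $\ell\in C^*$; this is what converts a potentially large sum into one controlled by $\sum_{C^*\in\cC[\ell]}\bar{x}_{C^*}$ and thereby yields the uniform bound of $2$ independent of $\|\bar{x}\|$.
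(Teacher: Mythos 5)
Your proof is correct, and it is in fact more careful than the paper's own argument. The paper takes the same first two steps (expand $\bar{y}$ via Step~\ref{step:return} and swap the order of summation), but then enlarges the inner sum from $\sum_{C\in\cC[\ell]}\bar{w}^{C'}_C$ to the full norm $\|\bar{w}^{C'}\|$ while leaving the outer sum over all of $\textnormal{\textsf{supp}}(\bar{x})$; this only yields the bound $(1+2\eps^4)\|\bar{x}\|$, after which the paper's chain of equalities asserts $\|\bar{x}\|=1$, which is false in general ($\|\bar{x}\|$ is the LP objective value, on the order of $\OPT$). The step you add is precisely what is needed to close this gap: you observe that every $C\in\textnormal{\textsf{supp}}(\bar{w}^{C^*})$ is a subset of $R(C^*)\subseteq C^*$ (immediate from both branches of the definition of the relaxation vector), so the inner sum vanishes unless $\ell\in C^*$, i.e., unless $C^*\in\cC[\ell]$. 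This lets you restrict the outer sum to $C^*\in\cC[\ell]$, after which $\|\bar{w}^{C^*}\|\leq 1+2\eps^4$ combined with the covering constraint $\sum_{C^*\in\cC[\ell]}\bar{x}_{C^*}=1$ from~\eqref{C-LP} gives the uniform bound $1+2\eps^4\leq 2$, independent of $\|\bar{x}\|$. Your containment observation $\textnormal{\textsf{supp}}(\bar{w}^{C^*})\subseteq 2^{R(C^*)}$ is the essential ingredient missing from the paper's writeup, and the rest of your reasoning is airtight.
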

			
			\begin{proof}
				\begin{align*}
		\sum_{C\in \cC[\ell]} \by_C ={} & \sum_{C\in \cC[\ell]~} \sum_{C' \in \textsf{supp}(\bar{x})} \bar{x}_{C'} \cdot \bar{w}^{C'}_C =  \sum_{C' \in \textsf{supp}(\bar{x})~} \sum_{C\in \cC[\ell]} \bar{x}_{C'} \cdot \bar{w}^{C'}_C \leq  \sum_{C' \in \textsf{supp}(\bar{x})~} \bar{x}_{C'} \cdot \sum_{C\in \cC}   \bar{w}^{C'}_C \\ ={} &\sum_{C' \in \textsf{supp}(\bar{x})~} \bar{x}_{C'} \cdot \|\bar{w}^{C'}\| \leq \sum_{C' \in \textsf{supp}(\bar{x})~} \bar{x}_{C'} \cdot (1+2\eps^{4}) \leq \sum_{C' \in \textsf{supp}(\bar{x})~} \bar{x}_{C'} \cdot 2 = 2 \cdot \|\bar{x}\| = 2. 
			\end{align*}
			The first equality is by Step~\ref{step:return} of Algorithm~\ref{Alg:eviction}. The second inequality is because $\|\bar{w}^{C'}\|  \leq (1+2\eps^4)$ by Lemma~\ref{lem:wProperties}. The third inequality is because $0<\eps< 0.1$. The last equality is by \eqref{LP}. 
			\end{proof}

			The proof of Lemma~\ref{lem:eviction} follows by Claim~\ref{lem:evictionPolynomial}, Claim~\ref{lem:eviction1}, Claim~\ref{lem:eviction2}, Claim~\ref{lem:eviction3}, Claim~\ref{lem:eviction4}, and Claim~\ref{lem:evictionF4}. \qed

\section{Omitted Proofs of Section~\ref{sec:AlgPrototype}}
\label{app:omittedShifting}

\noindent{\bf Proof of Lemma~\ref{lem:significantDiscard}:}
We prove the two properties of the lemma below.

\begin{enumerate}
	\item 	\begin{equation*}
		\label{eq:ev5b}
		\begin{aligned}
			\sum_{\ell \in I \setminus L~} f_{\bar{y}}(\ell) \cdot s(\ell) ={} &   \sum_{\ell \in I \setminus L~}  \sum_{C \in \mathcal{C}[\ell]} \bar{y}_C \cdot s(\ell)  =   \sum_{C \in \mathcal{C}~}  \sum_{\ell \in C \setminus L} \bar{y}_C \cdot s(\ell) =   \sum_{C \in \mathcal{C} } \bar{y}_C  \sum_{\ell \in C \setminus L}  s(\ell)\\   ={} &  \sum_{C \in \mathcal{C} } \bar{y}_C  \cdot s(C \setminus L) \leq \sum_{C \in \mathcal{C}} \bar{y}_C  \cdot \eps  = \eps  \sum_{C \in \mathcal{C}} \bar{y}_C = \eps \|\bar{y}\|.
		\end{aligned}
	\end{equation*} The first equality is by the definition of frequency. The second equality is by changing the order of summation. The inequality follows since $\bar{y}$ holds the conditions of Lemma~\ref{lem:Cnf}. 
	
	\item Assume towards a contradiction that there is $G \in \mathcal{G} \setminus  \mathcal{G}_{\eta}$ such that $f_{\bar{y}}(G \setminus L) > \eps \cdot \|\bar{y}\|$. Therefore, by the definition of $\cG_{\eta}$ as the set of groups with maximal frequncies of small items, it holds that
	\begin{equation}
		\label{eq:G'cap}
		f_{\bar{y}}(G_{\eta} \setminus L)  > \eps \cdot \|\bar{y}\|.
	\end{equation} We reach a contradiction by the following. 
	
	\begin{equation}
		\label{eq:f1}
		\sum_{\ell \in I} f_{\bar{y}}(\ell) = \sum_{\ell \in I ~}  \sum_{C \in \mathcal{C}[\ell]} \bar{y}_C =  \sum_{C \in \mathcal{C}~}  \sum_{\ell \in C} \bar{y}_C \leq  \sum_{C \in \mathcal{C}~}  \eps^{-10} \cdot \bar{y}_C =  \eps^{-10} \|\bar{y}\|.
	\end{equation} The first equality is by the definition of the frequency. The second equality is by changing the order of summation. The inequality follows since $\bar{y}$ holds the conditions of Lemma~\ref{lem:Cnf}; thus, for all $C \in \textsf{supp}(\bar{y})$ it holds that $|C| \leq \eps^{-10}$. In addition, 
	
	\begin{equation}
		\label{eq:f2}
		\sum_{\ell \in I} f_{\bar{y}}(\ell) \geq \sum_{i \in [\eta]~} 	\sum_{\ell \in G_i \setminus L} f_{\bar{y}}(\ell) \geq \eta \cdot f_{\bar{y}}(G_{\eta} \setminus L) > \eta \cdot  \eps \cdot \|\bar{y}\| >  \eps^{-10} \|\bar{y}\|.
	\end{equation} The first inequality is because for all $\ell \in I$ it holds that $f_{\bar{y}}(\ell) \geq 0$. The second inequality is because for all $G' \in \cG_{\eta}$ it holds that $f_{\bar{y}}(G' \setminus L) \geq f_{\bar{y}}(G_{\eta} \setminus L)$. The third inequality is by \eqref{eq:G'cap}. The last inequality is since $\eta =  \eps^{-12} $ or $\cG_{\eta}  = \cG$. By \eqref{eq:f1} and \eqref{eq:f2} we reach a contradiction. \qed
\end{enumerate}

We use the following auxiliary claim. \begin{claim}
	\label{AETA}
	$|\cA \cup \cG_{\eta}| \leq 2K(\eps)$. 
\end{claim}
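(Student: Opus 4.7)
The plan is to bound $|\cA|$ and $|\cG_{\eta}|$ separately by $K(\eps)$ and then apply the union bound.

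For the first bound, I would argue that $\cA \subseteq \cB$, where $\cB \subseteq \cG$ is the set promised by Definition~\ref{def:structuring}. Indeed, if $G \in \cA$ then by definition there exists $\ell \in G$ with $\ell \in L$, i.e., $s(\ell) \geq \eps^2$. Since the $\eps$-structured property guarantees that every $G' \in \cG \setminus \cB$ consists entirely of items of size strictly less than $\eps^2$, the group $G$ cannot lie in $\cG \setminus \cB$. Hence $G \in \cB$, and so $|\cA| \leq |\cB| \leq K(\eps)$.

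For the second bound, I would simply use the definition $\eta = \min\{|\cG|, \eps^{-12}\}$, which yields $|\cG_{\eta}| \leq \eps^{-12}$. It then remains to verify the numerical inequality $\eps^{-12} \leq K(\eps) = \eps^{-\eps^{-2}}$. Since $\eps \in (0,0.1)$, we have $\eps^{-2} > 100 > 12$, and as $\eps^{-1} > 1$, taking powers preserves the inequality, giving $\eps^{-12} \leq \eps^{-\eps^{-2}} = K(\eps)$. Therefore $|\cG_{\eta}| \leq K(\eps)$ as well.

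Combining the two estimates via the union bound,
\[
|\cA \cup \cG_{\eta}| \leq |\cA| + |\cG_{\eta}| \leq K(\eps) + K(\eps) = 2K(\eps),
\]
which is the desired inequality. There is no real obstacle here: the only thing to be careful about is the (trivial) numerical comparison between $\eps^{-12}$ and $K(\eps)$, which is immediate from $\eps < 0.1$.
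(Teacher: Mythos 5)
Your proof is correct and follows essentially the same route as the paper: bound $|\cA|\leq K(\eps)$ via the $\eps$-structured property (the paper cites Definition~\ref{def:structuring} and \eqref{Agroups}, which is exactly your containment $\cA\subseteq\cB$), bound $|\cG_{\eta}|\leq\eta\leq\eps^{-12}\leq K(\eps)$ numerically, and conclude by the union bound. No gaps.
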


\begin{proof}
	$$|\cA \cup \cG_{\eta}| \leq |\cA|+|\cG_{\eta}| \leq K(\eps)+\eta \leq 2K(\eps).$$
	The second inequality is because $\mathcal{I}$ is $\eps$-structured, there can be at most $K(\eps)$ massive groups by \eqref{Agroups} and Definition~\ref{def:structuring}. The third inequality is because $K(\eps) = \eps^{-\eps^{-2}}$, $\eta \leq \eps^{-12}$, and $\eps < 0.1$. 
\end{proof}

\noindent{\bf Proof of Lemma~\ref{lem:P(C)}:}
Let $C \in \textsf{supp}(\bar{y})$. We show the properties of the lemma below.

\begin{enumerate}
	\item

		\begin{equation}
		\begin{aligned}
			\label{pcHelpL}
			s\left(P(C)\right) \leq \sum_{\Phi \in \mathcal{Q}} s\left(\textsf{last}_{|C \cap \Phi|}(\Phi)\right) \leq {} &   \sum_{\Phi \in \mathcal{Q}} s\left(C \cap \Phi\right) \leq s(C). 
		\end{aligned}
	\end{equation}

The second inequality is because for all $\Phi \in \mathcal{Q}$ it holds that $\textsf{last}_{|C \cap \Phi|}(\Phi)$ contains the $|C \cap \Phi|$ items in $\Phi$ with minimal size. The last inequality is because each item belongs to at most one class in $\mathcal{Q}$.

	\item Let $\ell \in \textsf{fit}(C)$. First, by \eqref{fitS} it holds that $s(\ell) \leq \eps^{2}$. Second, we use the following inequalities.	\begin{equation}
		\label{Ffit}
	s(\ell) \leq \eps \cdot (1-s(C)) \leq \eps \cdot (1-s(P(C))).
	\end{equation} The first inequality is because $\ell \in \textsf{fit}(C)$. The second inequality is by \eqref{pcHelpL}. Therefore, by \eqref{Ffit} and that $s(\ell) \leq \eps^{2}$ we conclude that $\ell \in \textsf{fit}(P(C))$ by \eqref{fitS}. 

	\item 
	
	$$|P(C)| = \left|\bigcup_{\Phi \in \mathcal{Q}} \textsf{last}_{|C \cap \Phi|}(\Phi)\right| \leq  \sum_{\Phi \in \mathcal{Q}} |\textsf{last}_{|C \cap \Phi|}(\Phi)| = \sum_{\Phi \in \mathcal{Q}} |C \cap \Phi| \leq |C| \leq \eps^{-10}.$$ The first equality is by \eqref{eq:mapS,eq:mapM}. The second equality is by the definition of \textsf{last}. The second inequality is because each item in $C$ belongs to at most one class in $\mathcal{Q}$. The last inequality is because $C \in \textsf{supp}(\bar{y})$; thus, since $\bar{y}$ satisfies the conditions of Lemma~\ref{lem:Cnf} it holds that $|C| \leq \eps^{-10}$.

	\item First, there can be at most $\eps^{-\upsilon-10}$ classes for each important group $G \in \cG_{\eta} \cup \cA$ by Observation~\ref{ob:FG}. Thus, the number of classes is bounded by $|\cG_{\eta} \cup \cA| \cdot \eps^{-\upsilon-10}$. Second, by Claim~\ref{AETA} we have a bound of $2K(\eps) $ on the number  for  important groups. By the above,

		\begin{equation}
		\begin{aligned}
			\label{bound1111}
			|\mathcal{Q}| \leq{} & |\cG_{\eta} \cup \cA| \cdot \eps^{-\upsilon-10} \leq 2K(\eps)  \cdot \eps^{-\upsilon-10} = 2 \eps^{-\eps^{-2}} \cdot \eps^{-3\eps^{-2}-10} \leq \eps^{-6\eps^{-2}} 
		\end{aligned}
	\end{equation}

The first inequality is by Claim~\ref{AETA}. Therefore,

		\begin{equation*}
		\begin{aligned}
			\label{bound1}
		|\{P(C')~|~C' \in \textsf{supp}(\bar{y})\}| \leq{} & \left( \eps^{-10} |\mathcal{Q}| \right)^{\eps^{-10}} \leq  \left( \eps^{-\eps^{-1}} \cdot \eps^{-6\eps^{-2}} \right)^{\eps^{-10}} \leq    \left( \eps^{-\eps^{-3}} \right)^{\eps^{-10}} \leq \left( \exp(\eps^{-1})^{\eps^{-3}}\right)^{\eps^{-10}}\\ \leq{} &  \left( \exp(\eps^{-4}) \right)^{\eps^{-10}} \leq  \exp(\eps^{-14}) \leq  \exp(\eps^{-14}) +3K(\eps)-3K(\eps) \\
		 \leq{} & 2 \cdot \exp(\eps^{-14})-3K(\eps) \leq Q(\eps)-3K(\eps).
		\end{aligned}
	\end{equation*} 	Recall that for each $C' \in \textsf{supp}(\bar{y})$ it holds that $|C'| \leq \eps^{-10}$; in addition, by \eqref{eq:mapS,eq:mapM}, for each item $\ell \in P(C')$ there is a single $\Phi \in \mathcal{Q}$ such that $\ell \in \textsf{last}_{|C' \cap \Phi|} (\Phi)$ and it follows that the number of distinct options for choosing $\ell$ is bounded by $|C' \cap \Phi| \cdot |\mathcal{Q}| \leq |C'| \cdot |\mathcal{Q}| \leq \eps^{-10} \cdot |\mathcal{Q}|$. Thus, by choosing at most $\eps^{-10}$ items as described above, the first inequality follows. The second inequality is by \eqref{bound1111}. The last inequalities follow since $0<\eps<0.1$, $K(\eps) = \eps^{-\eps^{-2}}$ and $Q(\eps) =   \exp \left(\eps^{-17}\right)$.

\item Let $G \in \cG$. Now, 
	\begin{equation*}
	\begin{aligned}
	|G \cap P(C)| \leq{} &  \left|   \bigcup_{\Phi \in \mathcal{Q} \text{ s.t. } \Phi \subseteq G} \textsf{last}_{|C \cap \Phi|} (\Phi)       \right| = \sum_{\Phi \in \mathcal{Q} \text{ s.t. } \Phi \subseteq G} \left|        \textsf{last}_{|C \cap \Phi|} (\Phi)       \right| \\ ={} & \sum_{\Phi \in \mathcal{Q} \text{ s.t. } \Phi \subseteq G} |C \cap \Phi|  \leq  |G \cap C|
	\end{aligned}
\end{equation*}

	The first inequality is by \eqref{eq:mapS,eq:mapM}. The second equality is because each item belongs to at most one class. The third equality is because for all $\Phi \in \mathcal{Q}$ the number of items in $\textsf{first}_{|C \cap \Phi|} (\Phi)$ is the same as in $|C \cap \Phi|$ by the definition of $\textsf{first}$. The last inequality is because by  \eqref{eq:mapS,eq:mapM} every item in $C \cap G$ belongs to at most one class and each class is a subset of some group. 
\end{enumerate}
\qed

In the following we prove Lemma~\ref{lem:Cnf}. The proof relies on the following claims and definitions and is given at the end of this section. 

\begin{claim}
	\label{lem:shiftingPolynomial}
	Algorithm~\ref{Alg:shifting} is runs in time $\textnormal{poly}(\frac{1}{\eps}, |\cI|)$.
\end{claim}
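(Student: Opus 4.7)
The plan is to verify each of the three lines of Algorithm~\ref{Alg:shifting} in turn and confirm each runs in time $\poly(|\cI|, \frac{1}{\eps})$. The proof is essentially a bookkeeping argument, so I would organize it around two standing sparsity bounds that are established first: a polynomial bound on $|\textnormal{\textsf{supp}}(\bar{y})|$ and on $|\mathcal{Q}|$.

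First I would argue that $|\textnormal{\textsf{supp}}(\bar{y})|$ is polynomial in $|\cI|$ and $\frac{1}{\eps}$. Indeed, by Lemma~\ref{configurationLP} the prototype $\bar{x}$ returned by $\textsf{SolveLP}$ has sparse representation of polynomial size, so $|\textnormal{\textsf{supp}}(\bar{x})| \le \poly(|\cI|, \frac{1}{\eps})$. By Step~\ref{step:return} of Algorithm~\ref{Alg:eviction} the output $\bar{y}$ is a linear combination indexed by $\textnormal{\textsf{supp}}(\bar{x})$, and each relaxation vector $\bar{w}^C$ has at most $|\mathcal{L}_C| \le \alpha = \lceil \eps^{-5} \rceil$ nonzero entries; hence $|\textnormal{\textsf{supp}}(\bar{y})| \le \eps^{-5} \cdot |\textnormal{\textsf{supp}}(\bar{x})|$. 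Similarly, $|\cA| \le |\cG| \le |\cI|$ and $|\cG_\eta| \le \eta \le \eps^{-12}$, and by Observation~\ref{ob:FG} each important group yields at most $\eps^{-\upsilon - 10}$ classes, so $|\mathcal{Q}| \le (|\cG_\eta| + |\cA|) \cdot \eps^{-\upsilon - 10} = \poly(|\cI|, \frac{1}{\eps})$.

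Next I would check Step~\ref{step:classesQ} (class construction). Compute the frequencies $f_{\bar{y}}(\ell)$ for all $\ell \in I$ by summing $\bar{y}_C$ over the polynomially many $C \in \textnormal{\textsf{supp}}(\bar{y})$ that contain $\ell$; this takes polynomial time. Sort the groups by $f_{\bar{y}}(G \setminus L)$ in order to select $\cG_\eta$, and scan the items of each group to identify $\cA$ via \eqref{Agroups}. For each $G \in \cG_\eta \cup \cA$, the indices $q_1, q_2, \ldots$ defined by \eqref{eq:q} can be computed iteratively: walk through the items of $G$ in non-increasing order of size while accumulating $\sum f_{\bar{y}}$ until the threshold $\eps^{\upsilon} \|\bar{y}\|$ is crossed, then restart to get the next $q_i$. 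By Observation~\ref{ob:FG}, at most $\eps^{-\upsilon - 10}$ such indices are produced per group, so the total work across all important groups is polynomial.

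Finally I would handle Steps~\ref{step:u}--\ref{step:shiftreturn}. For each $C \in \textnormal{\textsf{supp}}(\bar{y})$ the projection $P(C)$ is obtained by iterating over $\mathcal{Q}$, computing $|C \cap \Phi|$ and extracting $\textnormal{\textsf{last}}_{|C \cap \Phi|}(\Phi)$; since $|\mathcal{Q}|$ and $|C| \le \eps^{-10}$ are bounded polynomially, this is polynomial per $C$, and $|P(C)| \le \eps^{-10}$ by Lemma~\ref{lem:P(C)}. Aggregating the contributions $\bar{y}_C \cdot \mathbbm{1}^{P(C)}$ over at most $|\textnormal{\textsf{supp}}(\bar{y})|$ configurations yields $\bar{u}$ in sparse form. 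Adding the single term $(4\eps \|\bar{y}\| + \eps^{-3})\cdot \mathbbm{1}^{\emptyset}$ and the at most $|\cG_\eta \cup \cA| \le \poly(|\cI|, \frac{1}{\eps})$ terms $\mathbbm{1}^{r(G)}$ preserves polynomial sparsity. There is no real obstacle in the proof; the only point to be careful about is consistently using the polynomial bounds $|\cG| \le |\cI|$ and $|\textnormal{\textsf{supp}}(\bar{x})| \le \poly(|\cI|, \frac{1}{\eps})$ rather than the weaker (and non-polynomial in $\frac{1}{\eps}$) bounds $K(\eps)$ and $Q(\eps)$ that appear elsewhere in the analysis for different purposes.
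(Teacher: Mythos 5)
Your proof is correct and follows essentially the same route as the paper's: verify each step of \textsf{Shift} is polynomial by using the polynomial sparsity of $\textnormal{\textsf{supp}}(\bar{y})$ and the polynomial bounds on the number of important groups and classes. Your extra care in bounding $|\cA|$ by $|\cG|\le|\cI|$ (rather than by $K(\eps)$, which is not $\poly(1/\eps)$) and in tracing the sparsity of $\bar{y}$ back through \textsf{Evict} only makes explicit what the paper leaves implicit.
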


\begin{proof}
Finding the significant groups can be done in polynomial time by sorting the groups according to the small item frequencies, where the the small item frequency of a group can be computed in polynomial time as $|\textsf{supp}(\bar{y})|$ is polynomial. In addition, finding the massive groups, all groups with large items can be computed in linear time by iterating over all groups. Thus, the construction of the important groups takes polynomial time and therefore the constructing the classes in Step~\ref{step:classesQ} can be also achieved in polynomial time by \eqref{eq:classS}. Furthermore, since $|\textsf{supp}(\bar{y})|$ is polynomial, computing the projections of all $C \in \textsf{supp}(\bar{y})$ is polynomial. Finally, the linear combinations in Step~\ref{step:u} and Step~\ref{step:shiftreturn} are of a polynomial number of elements; thus, they can be computed in polynomial time. 
\end{proof}

	For convenience, we repeat the construction of $\bar{\lambda}$ as given in Section~\ref{sec:AlgPrototype}. Recall that $\bar{\gamma}$ in the $\bar{y}$-polytope such that for all $\ell,j \in I$, $\ell \neq j$ it holds that $\bar{\gamma}_{\ell,j} = 0$. We define below a point $\bar{\lambda} \in \mathbb{R}^{I \times (I \cup \mathcal{C})}_{\geq 0}$ and show that $\bar{\lambda}$ is in the $\bar{z}$-polytope. Thus, it follows that the $\bar{z}$-polytope is non-empty. For all $C \in \cC$, let $P^{-1}(C) = \{C' \in \cC~|~ P(C') = C\}$. We start with constructing  a point $\bar{\psi} \in \mathbb{R}^{I \times (I \cup \mathcal{C})}_{\geq 0}$, which is used in the construction of $\bar{\lambda}$. For all $\ell \in I$ and $C \in \cC$ define
	
	\begin{equation}
		\label{lambda'lC}
		\bar{\psi}_{\ell,C} = \sum_{C' \in P^{-1}(C)} \bar{\gamma}_{\ell,C'}.
	\end{equation}

	 Moreover, for any $G \in \cG_{\eta} \cup \cA$, $i \in [\tau(G)-1]$, $\ell \in \Delta_{i+1}(G)$ and $j \in \Delta_{i}(G)$ define

	\begin{equation}
		\label{lambda'lDelta}
	\bar{\psi}_{\ell,j} =   \frac{f_{\bar{u}}(j)}{f_{\bar{u}}(\Delta_{i}(G))} \cdot \bar{\gamma}_{\ell,\ell}.
	\end{equation}

For any other $\ell,j \in I$ define $	\bar{\psi}_{\ell,j} = 0$. We use the following auxiliary claims. 
		
		\begin{claim}
			\label{psi5}
		$\bar{\psi}$ satisfies Constraint \eqref{F1} of the $\bar{u}$-polytope. 
	\end{claim}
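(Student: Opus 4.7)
The plan is to verify constraint \eqref{F1} of the $\bar{u}$-polytope for $\bar{\psi}$ by splitting into two cases depending on whether the type $t$ is a configuration-type in $\mathcal{C}$ or a slot-type in $I$, and then showing in each case that any nonzero entry $\bar{\psi}_{\ell,t}$ forces $\ell \in \textsf{fit}(t)$. Equivalently, I show that $\bar{\psi}_{\ell,t}=0$ whenever $\ell \notin \textsf{fit}(t)$.

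For the configuration-type case, fix $C \in \mathcal{C}$ and $\ell \in I$ with $\ell \notin \textsf{fit}(C)$. By definition \eqref{lambda'lC}, $\bar{\psi}_{\ell,C}=\sum_{C' \in P^{-1}(C)} \bar{\gamma}_{\ell,C'}$, so it suffices to show every term vanishes. For each $C' \in P^{-1}(C)$, property $2$ of Lemma~\ref{lem:P(C)} yields $\textsf{fit}(C') \subseteq \textsf{fit}(P(C')) = \textsf{fit}(C)$. Hence $\ell \notin \textsf{fit}(C)$ implies $\ell \notin \textsf{fit}(C')$, and since $\bar{\gamma}$ lies in the $\bar{y}$-polytope it satisfies \eqref{F1}, giving $\bar{\gamma}_{\ell,C'} = 0$. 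Summing over $C' \in P^{-1}(C)$ shows $\bar{\psi}_{\ell,C}=0$.

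For the slot-type case, fix $j \in I$ and $\ell \in I$ with $\ell \notin \textsf{fit}(j)$. From the definition of $\bar{\psi}$, the value $\bar{\psi}_{\ell,j}$ can be nonzero only through \eqref{lambda'lDelta}; in all other cases it is set to zero by definition. So assume the entry falls in the scope of \eqref{lambda'lDelta}: there exist $G \in \cG_{\eta}\cup \cA$ and $i \in [\tau(G)-1]$ with $\ell \in \Delta_{i+1}(G)$ and $j \in \Delta_{i}(G)$. By property $2$ of Observation~\ref{ob:FG}, for such $i$ and $j$ we have $\Delta_{i+1}(G) \subseteq \textsf{fit}(j)$, hence $\ell \in \textsf{fit}(j)$, contradicting the assumption. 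Thus \eqref{lambda'lDelta} does not apply, and $\bar{\psi}_{\ell,j}=0$.

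Combining the two cases yields $\bar{\psi}_{\ell,t}=0$ for every $(\ell,t) \in I \times (I \cup \mathcal{C})$ with $\ell \notin \textsf{fit}(t)$, which is precisely constraint \eqref{F1} of the $\bar{u}$-polytope. The argument is essentially a translation of the design of the projection $P$ and the classes $\Delta_i(G)$: the former was built to shrink $\textsf{fit}$-sets monotonically, and the latter to make items of each class fit in the place of items from the previous class. There is no substantive obstacle, as the entire work is done by Lemma~\ref{lem:P(C)}(2) and Observation~\ref{ob:FG}(2).
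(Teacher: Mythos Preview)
Your proof is correct and follows essentially the same approach as the paper's: both arguments split on whether $t$ is a configuration-type or a slot-type, and both invoke Lemma~\ref{lem:P(C)}(2) for the former and Observation~\ref{ob:FG}(2) for the latter. The only cosmetic difference is that the paper proves the contrapositive (assuming $\bar{\psi}_{\ell,t}>0$ and deducing $\ell\in\textsf{fit}(t)$), whereas you assume $\ell\notin\textsf{fit}(t)$ and deduce $\bar{\psi}_{\ell,t}=0$.
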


\begin{proof}
Let $\ell \in I$ and $t \in I \cup \cC$ such that $\bar{\psi}_{\ell,t}>0$. We split into two cases as follows. 

\begin{enumerate}
	\item  $t \in \cC$. By \eqref{lambda'lC}, there exists $C \in P^{-1}(t)$ such that $\bar{\gamma}_{\ell,C}>0$.  Therefore, $\ell \in \textsf{fit}(C)$ by \eqref{F1} since $\bar{\gamma}$ is in the $\bar{y}$-polytope. Consequently, by Lemma~\ref{lem:P(C)} it follows that $\ell \in  \textsf{fit}(t)$. 
	
	\item There are $G \in \cG_{\eta} \cup \cA$ and $i \in [\tau(G)-1]$ such that $t \in \Delta_{i}(G)$. By \eqref{lambda'lDelta} it follows that $\ell \in \Delta_{i+1}(G)$. Hence, $\ell \in  \textsf{fit}(t)$ by Observation~\ref{ob:FG}. 
	
\end{enumerate} By the definition of $\bar{\psi}$, for any $\ell \in I$, the above cases cover all possibilities for all $t \in I \cup \cC$ such that $\bar{\psi}_{\ell,t}>0$. 
\end{proof}

		\begin{claim}
			\label{psi6}
		$\bar{\psi}$ satisfies constraint \eqref{F2} of the $\bar{u}$-polytope. 
	\end{claim}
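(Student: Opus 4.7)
The plan is to verify constraint \eqref{F2} of the $\bar{u}$-polytope for $\bar{\psi}$, namely that $\sum_{\ell \in I} \bar{\psi}_{\ell,C} \cdot s(\ell) \leq (1-s(C)) \cdot \bar{u}_C$ for every $C \in \cC$. Since constraint \eqref{F2} concerns only configuration-types, the slot-type entries introduced by \eqref{lambda'lDelta} play no role, and it suffices to work with the definition of $\bar{\psi}_{\ell,C}$ in \eqref{lambda'lC}.

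The argument proceeds in three steps. First, I would fix $C \in \cC$, unroll \eqref{lambda'lC}, and swap the order of summation to write
$$\sum_{\ell \in I} \bar{\psi}_{\ell,C} \cdot s(\ell) \;=\; \sum_{C' \in P^{-1}(C)} \; \sum_{\ell \in I} \bar{\gamma}_{\ell,C'} \cdot s(\ell).$$
Second, since $\bar{\gamma}$ lies in the $\bar{y}$-polytope, constraint \eqref{F2} applied to each $C' \in P^{-1}(C)$ gives $\sum_{\ell \in I} \bar{\gamma}_{\ell,C'} \cdot s(\ell) \leq (1-s(C')) \bar{y}_{C'}$. By property~1 of Lemma~\ref{lem:P(C)}, $s(C) = s(P(C')) \leq s(C')$ for every $C' \in P^{-1}(C)$, and hence $(1-s(C')) \leq (1-s(C))$. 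Factoring $(1-s(C))$ out of the sum yields
$$\sum_{\ell \in I} \bar{\psi}_{\ell,C} \cdot s(\ell) \;\leq\; (1-s(C)) \sum_{C' \in P^{-1}(C)} \bar{y}_{C'}.$$
Third, by Step~\ref{step:u} of Algorithm~\ref{Alg:shifting},
$$\bar{u}_C \;=\; \left(1 + \frac{2\eps^{-\upsilon}}{\|\bar{y}\|}\right) \sum_{C' \in P^{-1}(C)} \bar{y}_{C'} \;\geq\; \sum_{C' \in P^{-1}(C)} \bar{y}_{C'},$$
which, combined with the previous bound, gives the desired inequality.

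There is no real obstacle here: the claim reduces to the observation that the projection never increases the total size of a configuration (Lemma~\ref{lem:P(C)}, property~1), so residual capacity only grows under projection, and the fractional mass $\bar{y}_{C'}$ that was assigned to $C'$ in the $\bar{y}$-polytope can safely be reassigned to $C = P(C')$ without violating \eqref{F2}. The small multiplicative slack $1 + 2\eps^{-\upsilon}/\|\bar{y}\|$ in the definition of $\bar{u}$ gives additional room, although for this particular constraint it is not needed.
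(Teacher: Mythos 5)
Your proof is correct and follows essentially the same route as the paper: unroll $\bar{\psi}_{\ell,C}$ via \eqref{lambda'lC}, swap sums, apply constraint \eqref{F2} of the $\bar{y}$-polytope to each $C' \in P^{-1}(C)$, use $s(P(C')) \leq s(C')$ from Lemma~\ref{lem:P(C)}, and finish with Step~\ref{step:u} of Algorithm~\ref{Alg:shifting}. Your additional remark that the slot-type entries from \eqref{lambda'lDelta} are irrelevant to \eqref{F2} is accurate and matches the implicit treatment in the paper.
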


	\begin{proof}
			Let $C \in \mathcal{C}$. 
		
			\begin{equation*}
			\label{C!E}
			\begin{aligned}
				\sum_{\ell \in I} \bar{\psi}_{\ell,C} \cdot  s(\ell) ={} &  \sum_{\ell \in I~} \sum_{C' \in P^{-1}(C)} \bar{\gamma}_{\ell,C'} \cdot  s(\ell)= \sum_{C' \in P^{-1}(C)~}  \sum_{\ell \in I}  \bar{\gamma}_{\ell,C'} \cdot  s(\ell)  \\
				\leq{} & \sum_{C' \in P^{-1}(C)~} \left(1-s(C')\right) \bar{y}_{C'} \leq  \sum_{C' \in P^{-1}(C)~}  \left(1-s(C)\right) \bar{y}_{C'} \leq   \left(1-s(C)\right) \bar{u}_C. 
			\end{aligned}
		\end{equation*}

		The  first inequality is because $\bar{\gamma}$ is in the $\bar{y}$-polytope; thus, the inequality follows by \eqref{F2}. The second inequality is because for all $C' \in \textsf{supp}(\bar{y})$ it holds that $s(P(C')) \leq s(C')$ by Lemma~\ref{lem:P(C)}. The last inequality is by Step~\ref{step:u} of Algorithm~\ref{Alg:shifting}.

	\end{proof}

		\begin{claim}
			\label{psi7}
		$\bar{\psi}$ satisfies constraint \eqref{F5} of the $\bar{u}$-polytope. 
	\end{claim}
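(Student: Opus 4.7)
\textbf{Proof proposal for Claim~\ref{psi7}.} The plan is to fix $G\in \cG$ and $C\in \cC$ and bound $\sum_{\ell \in G} \bar{\psi}_{\ell,C}$ by reducing to the analogous constraint already known to hold for $\bar{\gamma}$ in the $\bar{y}$-polytope. First, I would observe that the second branch of the definition of $\bar{\psi}$ (equation~\eqref{lambda'lDelta}) only populates entries indexed by a slot-type $j\in I$, never by a configuration-type; so for the purposes of constraint~\eqref{F5}, only the branch from~\eqref{lambda'lC} contributes. Thus the calculation begins with
\[
\sum_{\ell \in G} \bar{\psi}_{\ell,C}\;=\;\sum_{\ell \in G}\sum_{C' \in P^{-1}(C)}\bar{\gamma}_{\ell,C'}\;=\;\sum_{C' \in P^{-1}(C)}\sum_{\ell \in G}\bar{\gamma}_{\ell,C'}.
\]

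Next, I would apply constraint~\eqref{F5} to $\bar{\gamma}$ (which is in the $\bar{y}$-polytope), giving $\sum_{\ell \in G}\bar{\gamma}_{\ell,C'}\leq \bar{y}_{C'}\cdot (k(G)-|C'\cap G|)$ for every $C'$. The critical geometric step is then to upgrade the factor $k(G)-|C'\cap G|$ to $k(G)-|C\cap G|$. For this I would invoke property~5 of Lemma~\ref{lem:P(C)}: since $C=P(C')$, we have $|G\cap C|=|G\cap P(C')|\leq |G\cap C'|$, so $k(G)-|C'\cap G|\leq k(G)-|C\cap G|$. Substituting yields
\[
\sum_{\ell \in G} \bar{\psi}_{\ell,C}\;\leq\;(k(G)-|C\cap G|)\sum_{C' \in P^{-1}(C)}\bar{y}_{C'}.
\]

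Finally, I would relate the remaining sum to $\bar{u}_C$. By Step~\ref{step:u} of Algorithm~\ref{Alg:shifting},
\[
\bar{u}_C\;=\;\Bigl(1+\tfrac{2\eps^{-\upsilon}}{\|\bar{y}\|}\Bigr)\sum_{C' \in \textnormal{\textsf{supp}}(\bar{y})\cap P^{-1}(C)}\bar{y}_{C'}\;\geq\;\sum_{C' \in P^{-1}(C)}\bar{y}_{C'},
\]
where the inequality uses that entries outside $\textnormal{\textsf{supp}}(\bar{y})$ contribute zero and that the leading factor is at least $1$. Combining with the previous bound delivers the desired inequality $\sum_{\ell \in G}\bar{\psi}_{\ell,C}\leq \bar{u}_C\cdot (k(G)-|C\cap G|)$, establishing constraint~\eqref{F5} for $\bar{\psi}$. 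I do not expect any serious obstacle here; the argument is a clean application of the $\bar{y}$-polytope constraint composed with the monotonicity of group-counts under the projection $P$, which is precisely what property~5 of Lemma~\ref{lem:P(C)} was designed to provide.
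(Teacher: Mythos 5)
Your proof is correct and follows essentially the same route as the paper: interchange the sums over $G$ and $P^{-1}(C)$, apply constraint \eqref{F5} for $\bar{\gamma}$ in the $\bar{y}$-polytope, upgrade $k(G)-|C'\cap G|$ to $k(G)-|C\cap G|$ via Property~5 of Lemma~\ref{lem:P(C)}, and bound $\sum_{C'\in P^{-1}(C)}\bar{y}_{C'}\leq \bar{u}_C$ using Step~\ref{step:u} of Algorithm~\ref{Alg:shifting}. The only difference is that you make explicit the (correct) observation that the slot-type branch \eqref{lambda'lDelta} of $\bar{\psi}$ contributes nothing to configuration-type entries, which the paper leaves implicit.
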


	\begin{proof}

		Let $G \in \mathcal{G}$ and $C \in \mathcal{C}$.

			\begin{equation*}
			\begin{aligned}
			\sum_{\ell \in G} \bar{\psi}_{\ell,C} ={} & \sum_{\ell \in G~} \sum_{C' \in P^{-1}(C)} \bar{\gamma}_{\ell,C'} = \sum_{C' \in P^{-1}(C)~} \sum_{\ell \in G} \bar{\gamma}_{\ell,C'} \leq  \sum_{C' \in P^{-1}(C)} \bar{y}_{C'} \cdot \left(k(G)-|C' \cap G|\right) \\ \leq{} &  \left(k(G)-|C \cap G|\right) \sum_{C' \in P^{-1}(C)} \bar{y}_{C'}  \leq \bar{u}_C \cdot \left(k(G)-|C \cap G|\right) .
			\end{aligned}
		\end{equation*}
		
		 The first inequality is because $\bar{\gamma}$ is in the $\bar{y}$-polytope; thus, the inequality follows by \eqref{F5}. The second inequality is by Property~5 of Lemma~\ref{lem:P(C)}.

	\end{proof}

		We use the following auxiliary claims.
		
				\begin{claim}
				\label{aux:1}
				For all $\Phi \in \mathcal{Q}$ it holds that $f_{\bar{y}}(\Phi) \leq \frac{f_{\bar{u}}(\Phi)}{\left(1+\frac{2\eps^{-\upsilon}}{\|\bar{y}\|}\right)} $. 
			\end{claim}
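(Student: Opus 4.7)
The plan is to establish the claim as an equality: $f_{\bar{u}}(\Phi) = \left(1+\frac{2\eps^{-\upsilon}}{\|\bar{y}\|}\right) \cdot f_{\bar{y}}(\Phi)$ for every $\Phi \in \mathcal{Q}$. This immediately implies the inequality stated in the claim. The proof will consist of two stages: an algebraic manipulation of the frequency definitions and a combinatorial identity on the projection.

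First, I would expand $f_{\bar{u}}(\Phi)$ using its definition and swap the order of summation:
\[
f_{\bar{u}}(\Phi) = \sum_{\ell \in \Phi} \sum_{C^* \in \cC[\ell]} \bar{u}_{C^*} = \sum_{C^* \in \cC} \bar{u}_{C^*} \cdot |\Phi \cap C^*|.
\]
Substituting the explicit form of $\bar{u}$ from Step~\ref{step:u} of Algorithm~\ref{Alg:shifting}, namely $\bar{u}_{C^*} = \left(1+\frac{2\eps^{-\upsilon}}{\|\bar{y}\|}\right)\sum_{C' \in \textsf{supp}(\bar{y}):\, P(C')=C^*} \bar{y}_{C'}$, and collapsing the outer sum over $C^*$ along the fibers of $P$, yields
\[
f_{\bar{u}}(\Phi) = \left(1+\frac{2\eps^{-\upsilon}}{\|\bar{y}\|}\right) \sum_{C' \in \textsf{supp}(\bar{y})} \bar{y}_{C'} \cdot |\Phi \cap P(C')|.
\]
An analogous manipulation gives $f_{\bar{y}}(\Phi) = \sum_{C' \in \textsf{supp}(\bar{y})} \bar{y}_{C'} \cdot |\Phi \cap C'|$.

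The key step — and the only part that is not pure bookkeeping — is then to verify the combinatorial identity
\[
|\Phi \cap P(C')| = |\Phi \cap C'| \qquad \forall\, \Phi \in \mathcal{Q},\, C' \in \textnormal{\textsf{supp}}(\bar{y}).
\]
This follows from the defining formula $P(C') = \bigcup_{\Phi' \in \mathcal{Q}} \textnormal{\textsf{last}}_{|C' \cap \Phi'|}(\Phi')$ together with the fact that distinct classes in $\mathcal{Q}$ are pairwise disjoint (classes of different groups are disjoint since groups are disjoint, and the classes $\Delta_1(G),\ldots,\Delta_{\tau(G)}(G)$ of the same group $G$ are disjoint by construction in \eqref{eq:q}). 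Disjointness yields $P(C') \cap \Phi = \textnormal{\textsf{last}}_{|C' \cap \Phi|}(\Phi)$, whose cardinality equals $|C' \cap \Phi|$ because $C' \cap \Phi \subseteq \Phi$ implies $|C' \cap \Phi| \leq |\Phi|$, so the truncation is not active.

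Plugging this identity back into the expression for $f_{\bar{u}}(\Phi)$ gives the asserted equality, and the claim follows by dividing through. I do not anticipate a real obstacle here: the entire argument hinges on the disjointness of classes in $\mathcal{Q}$, which is immediate from their construction, and on carefully tracking multiplicities when a configuration $C'$ contributes more than one item to a single class.
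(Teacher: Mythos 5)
Your proof is correct and follows essentially the same route as the paper: both arguments reduce to writing the frequencies as $\sum_{C} \bar{y}_C\,|C\cap\Phi|$ and $\sum_{C}\bar{u}_C\,|C\cap\Phi|$, invoking the identity $|C\cap\Phi| = |P(C)\cap\Phi|$ (which rests on the pairwise disjointness of the classes in $\mathcal{Q}$ and the definition of $\textnormal{\textsf{last}}$), and using the definition of $\bar{u}$ in Step~\ref{step:u} of Algorithm~\ref{Alg:shifting}. The only cosmetic difference is direction (you expand $f_{\bar{u}}$ and pull back along $P$, the paper expands $f_{\bar{y}}$ and pushes forward) and that you record the relation as an exact equality, which is indeed what holds.
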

			
			\begin{proof}
				\begin{equation}
					\label{a1jL}
					\begin{aligned}
						f_{\bar{y}}(\Phi) ={} & \sum_{\ell \in \Phi} f_{\bar{y}}(\ell) = \sum_{\ell \in \Phi~} \sum_{C \in \cC[\ell]} \bar{y}_C = \sum_{C \in \cC~} \sum_{\ell \in \Phi \cap C}  \bar{y}_C = \sum_{C \in \cC~} |C \cap \Phi| \cdot \bar{y}_C\\ ={} & 
						\sum_{C \in \cC~}  \sum_{C' \in P^{-1}(C)~} |C' \cap \Phi| \cdot \bar{y}_{C'} =  \sum_{C \in \cC~} |C \cap \Phi|  \sum_{C' \in P^{-1}(C)~}  \bar{y}_{C'} \\ \leq{} &  \sum_{C \in \cC~} |C \cap \Phi|  \cdot \frac{\bar{u}_C}{\left(1+\frac{2\eps^{-\upsilon}}{\|\bar{y}\|}\right)} = \frac{1}{\left(1+\frac{2\eps^{-\upsilon}}{\|\bar{y}\|}\right)} \cdot \sum_{C \in \cC~} |C \cap \Phi|  \cdot \bar{u}_C.
					\end{aligned}
				\end{equation} The sixth equality is because for all $C \in \cC$ it holds that $|C \cap \Phi| = |P(C) \cap \Phi|$ by \eqref{eq:mapS,eq:mapM}. The last inequality is by Step~\ref{step:u} of Algorithm~\ref{Alg:shifting}. Additionally, 	\begin{equation}
					\label{a2jL}
					\begin{aligned}
						f_{\bar{u}}(\Phi) ={} & \sum_{\ell \in \Phi} f_{\bar{u}}(\ell) = \sum_{\ell \in \Phi~} \sum_{C \in \cC[\ell]} \bar{u}_C = \sum_{C \in \cC~} \sum_{\ell \in \Phi \cap C}  \bar{u}_C = \sum_{C \in \cC~} |C \cap \Phi| \cdot \bar{u}_C
					\end{aligned}
				\end{equation} The claim follows by \eqref{a1jL} and \eqref{a2jL}. 
			\end{proof}

		\begin{claim}
	\label{jS}
	Let $G \in \cG_{\eta} \cup \cA$, $i \in [\tau(G)-1]$ and let $j \in \Delta_{i}(G)$. It holds that $\sum_{\ell \in I} \bar{\psi}_{\ell,j} \leq f_{\bar{u}}(j)$.
\end{claim}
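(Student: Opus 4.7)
The plan is to evaluate $\sum_{\ell \in I} \bar{\psi}_{\ell,j}$ directly from the explicit definition of $\bar{\psi}$ in \eqref{lambda'lDelta}. Since $j \in \Delta_i(G)$ with $i \leq \tau(G)-1$, the only items $\ell \in I$ for which $\bar{\psi}_{\ell,j}$ is nonzero are those in $\Delta_{i+1}(G)$, and for each such $\ell$ we have $\bar{\psi}_{\ell,j} = \frac{f_{\bar{u}}(j)}{f_{\bar{u}}(\Delta_i(G))} \bar{\gamma}_{\ell,\ell}$. Pulling the constant factor out of the sum, the claim reduces to showing
\[
\sum_{\ell \in \Delta_{i+1}(G)} \bar{\gamma}_{\ell,\ell} \;\leq\; f_{\bar{u}}(\Delta_i(G)).
\]

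The first step will be to bound each diagonal entry $\bar{\gamma}_{\ell,\ell}$ by $f_{\bar{y}}(\ell)$. Since $\bar{\gamma}$ lies in the $\bar{y}$-polytope and satisfies $\bar{\gamma}_{\ell',\ell} = 0$ whenever $\ell' \neq \ell$ (by the hypothesis on $\bar{\gamma}$ in the construction of $\bar{\psi}$), constraint \eqref{F3} applied at slot $\ell$ gives $\bar{\gamma}_{\ell,\ell} = \sum_{\ell' \in I} \bar{\gamma}_{\ell',\ell} \leq \sum_{C \in \mathcal{C}[\ell]} \bar{y}_C = f_{\bar{y}}(\ell)$. Summing over $\ell \in \Delta_{i+1}(G)$ then gives the intermediate bound $f_{\bar{y}}(\Delta_{i+1}(G))$.

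It remains to verify that $f_{\bar{y}}(\Delta_{i+1}(G)) \leq f_{\bar{u}}(\Delta_i(G))$. Applying Claim~\ref{aux:1} with $\Phi = \Delta_i(G)$ and then invoking Observation~\ref{ob:FG}(4), which yields $f_{\bar{y}}(\Delta_i(G)) \geq \eps^{\upsilon}\|\bar{y}\|$ for $i \leq \tau(G)-1$, I will obtain
\[
f_{\bar{u}}(\Delta_i(G)) \;\geq\; \left(1+\tfrac{2\eps^{-\upsilon}}{\|\bar{y}\|}\right)\cdot \eps^{\upsilon}\|\bar{y}\| \;=\; \eps^{\upsilon}\|\bar{y}\| + 2.
\]
Observation~\ref{ob:FG}(3) supplies the matching upper bound $f_{\bar{y}}(\Delta_{i+1}(G)) \leq \eps^{\upsilon}\|\bar{y}\| + 2$, and the two ends meet.

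The delicate step here is the last one: the multiplicative slack $\left(1+\tfrac{2\eps^{-\upsilon}}{\|\bar{y}\|}\right)$ introduced in the construction of $\bar{u}$ in Step~\ref{step:u} of Algorithm~\ref{Alg:shifting} is calibrated precisely so that its boost on $f_{\bar{y}}(\Delta_i(G))$ absorbs the additive $+2$ slack in Observation~\ref{ob:FG}(3). This is where the fractional grouping enters in a quantitative way, and it is essential that $i \leq \tau(G)-1$ so that Observation~\ref{ob:FG}(4) is available. Every other step is a routine manipulation of the definitions.
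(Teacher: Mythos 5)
Your proof is correct and follows essentially the same route as the paper's: expand $\bar{\psi}_{\ell,j}$ via \eqref{lambda'lDelta}, bound $\bar{\gamma}_{\ell,\ell}$ by $f_{\bar{y}}(\ell)$, and then combine Claim~\ref{aux:1} with Observation~\ref{ob:FG}(3)--(4), exploiting the exact cancellation $\eps^{\upsilon}\|\bar{y}\|\bigl(1+\tfrac{2\eps^{-\upsilon}}{\|\bar{y}\|}\bigr)=\eps^{\upsilon}\|\bar{y}\|+2$. Your presentation is marginally cleaner in that you justify $\bar{\gamma}_{\ell,\ell}\leq f_{\bar{y}}(\ell)$ explicitly via constraint \eqref{F3} (a step the paper leaves implicit) and isolate the reduction $\sum_{\ell\in\Delta_{i+1}(G)}\bar{\gamma}_{\ell,\ell}\leq f_{\bar{u}}(\Delta_i(G))$ before verifying it, but the substance is identical.
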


\begin{proof}
		\begin{equation*}
		\begin{aligned}
			\sum_{\ell \in I} \bar{\psi}_{\ell,j} ={} & \sum_{\ell \in  \Delta_{i+1}(G)} \frac{ f_{\bar{u}}(j) }{f_{\bar{u}}( \Delta_{i}(G))} \cdot \bar{\gamma}_{\ell,\ell} \leq  \sum_{\ell \in  \Delta_{i+1}(G)} \frac{ f_{\bar{u}}(j) }{f_{\bar{u}}( \Delta_{i}(G))} \cdot f_{\bar{y}}(\ell)  \cdot  = \frac{ f_{\bar{u}}(j) }{f_{\bar{u}}( \Delta_{i}(G))} \cdot f_{\bar{y}}(\Delta_{i+1}(G)) \\ \leq{} &  \frac{ f_{\bar{u}}(j) }{f_{\bar{y}}( \Delta_{i}(G))\cdot \left(1+\frac{2\eps^{-\upsilon}}{\|\bar{y}\|}\right)} \cdot   { f_{\bar{y}}(\Delta_{i+1}(G))}{} 
			\leq 
			 \frac{ f_{\bar{u}}(j) }{\eps^{\upsilon}\cdot\|\bar{y}\| \cdot \left(1+\frac{2\eps^{-\upsilon}}{\|\bar{y}\|}\right)} \cdot   {\left(\eps^{\upsilon}\cdot\|\bar{y}\|+2\right)}{}  \\
			  =& f_{\bar{u}}(j) = \sum_{C \in \cC[j]} \bar{u}_C.
		\end{aligned}
	\end{equation*}
	 The first equality is by \eqref{lambda'lDelta}. The second inequality is by Claim~\ref{aux:1}. The third inequality is by Observation~\ref{ob:FG}.

\end{proof}

The following observation follows from Claim~\ref{jS}, that cover the only case where for some $\ell,j \in I$ it holds that $\bar{\psi}_{\ell,j}>0$ by \eqref{lambda'lDelta}. 
	\begin{observation}
		\label{psi8}
	$\bar{\psi}$ satisfies constraint \eqref{F3} of the $\bar{u}$-polytope. 
\end{observation}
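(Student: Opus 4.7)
My plan is to show that for every item $j \in I$, the inequality $\sum_{\ell \in I} \bar{\psi}_{\ell,j} \leq \sum_{C \in \mathcal{C}[j]} \bar{u}_C = f_{\bar{u}}(j)$ holds. The key observation is that the definition of $\bar{\psi}$ only assigns a nonzero value to an entry $\bar{\psi}_{\ell,j}$ (with $j$ being a slot-type, i.e., an item) in a single, restricted case, namely the one captured by equation~\eqref{lambda'lDelta}: there exists an important group $G \in \cG_{\eta} \cup \cA$ and an index $i \in [\tau(G)-1]$ with $j \in \Delta_i(G)$ and $\ell \in \Delta_{i+1}(G)$. For every other pair $(\ell,j) \in I \times I$ the value is set to zero by construction of $\bar{\psi}$.

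Accordingly, the plan is to split into two cases depending on whether the slot-type $j$ is the target of some shifting move. First, if $j$ does not lie in any class $\Delta_i(G)$ with $G \in \cG_{\eta} \cup \cA$ and $i \in [\tau(G)-1]$ (for instance, if $j$ belongs to a non-important group, or $j$ lies in the last class $\Delta_{\tau(G)}(G)$ of some important group), then by inspection of the definition of $\bar{\psi}$ we get $\bar{\psi}_{\ell,j} = 0$ for every $\ell \in I$, and the required inequality follows trivially since $f_{\bar{u}}(j) \geq 0$. Second, if such $G$ and $i$ do exist, they are unique since the classes are disjoint and contained in a single group; in this case the slot-type $j$ receives shifted mass exclusively from items in $\Delta_{i+1}(G)$, and the bound $\sum_{\ell \in I} \bar{\psi}_{\ell,j} \leq f_{\bar{u}}(j)$ is precisely the content of Claim~\ref{jS}.

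Since these two cases exhaust all possibilities for $j \in I$, the observation follows directly. No genuine obstacle remains here: Claim~\ref{jS} already did the main work of relating the shifted assignment to $\bar{y}$-frequencies via Claim~\ref{aux:1} (which converts $f_{\bar{u}}(\Delta_i(G))$ into a lower bound of the form $\eps^{\upsilon} \|\bar{y}\|$ after using Observation~\ref{ob:FG}), and then canceling it against the upper bound $f_{\bar{y}}(\Delta_{i+1}(G)) \leq \eps^{\upsilon}\|\bar{y}\| + 2$. The present observation is essentially a bookkeeping statement that packages Claim~\ref{jS} with the trivial zero case, so the whole proof should fit in a few lines.
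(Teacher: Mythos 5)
Your proposal is correct and matches the paper's own (very brief) argument: the paper proves the observation by noting that Claim~\ref{jS} covers the only situation in which $\bar{\psi}_{\ell,j}>0$ for a slot-type $j$, namely $j\in\Delta_i(G)$ and $\ell\in\Delta_{i+1}(G)$ via \eqref{lambda'lDelta}, with all other slot-type entries zero by construction. Your case split (zero case plus Claim~\ref{jS}) is just an explicit write-up of the same reasoning, so no further work is needed.
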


\noindent{\bf Proof of Claim~\ref{ob:psi}:}
Claim~\ref{ob:psi} follows from Claim~\ref{psi5}, Claim~\ref{psi6}, Claim~\ref{psi7}, and Observation~\ref{psi8}. \qed

We show below that $\psi$ satisfies constraint \eqref{F4} for a specific subset of the items, using the following auxiliary claims.

	\begin{claim}
	\label{aux:65}
	For all $\ell \in I$ it holds that $\sum_{C \in \cC} \bar{\psi}_{\ell,C} = \sum_{C \in \cC} \bar{\gamma}_{\ell,C}$. 
\end{claim}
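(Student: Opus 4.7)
The plan is to unfold the definition of $\bar{\psi}_{\ell,C}$ from~\eqref{lambda'lC} and then swap the order of summation, using the fact that $\{P^{-1}(C)\}_{C \in \cC}$ is a partition of $\cC$ (every configuration $C' \in \cC$ has a uniquely defined projection $P(C')$, and thus lies in exactly one preimage set $P^{-1}(C)$).

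Concretely, I would fix $\ell \in I$ and compute
\begin{equation*}
\sum_{C \in \cC} \bar{\psi}_{\ell,C} \;=\; \sum_{C \in \cC} \sum_{C' \in P^{-1}(C)} \bar{\gamma}_{\ell,C'} \;=\; \sum_{C' \in \cC} \bar{\gamma}_{\ell, C'},
\end{equation*}
where the first equality is~\eqref{lambda'lC} and the second uses the partition property noted above (every $C' \in \cC$ is counted exactly once, namely in the inner sum corresponding to $C = P(C')$). This directly yields the claim.

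There is essentially no obstacle here; the only point to keep in mind is that even configurations $C$ outside $\{P(C') : C' \in \cC\}$ contribute zero to the left-hand sum because $P^{-1}(C) = \emptyset$ for such $C$, which is consistent with the reindexing. No property of $\bar{\gamma}$ or of the $\bar{y}$-polytope is needed for this particular identity.
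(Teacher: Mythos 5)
Your proof is correct and is essentially identical to the paper's: both unfold the definition~\eqref{lambda'lC} and swap the order of summation, using the fact that each $C' \in \cC$ belongs to exactly one preimage set $P^{-1}(C)$, namely the one with $C = P(C')$.
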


\begin{proof}
	\begin{equation*}
		\label{1jL}
		\begin{aligned}
		\sum_{C \in \cC} \bar{\psi}_{\ell,C} ={} & \sum_{C \in \cC~} \sum_{C' \in P^{-1}(C)} \bar{\gamma}_{\ell,C'} = \sum_{C' \in \cC~~}  \sum_{C \in \cC \text{ s.t. } P(C') = C}    \bar{\gamma}_{\ell,C'} = \sum_{C \in \cC}    \bar{\gamma}_{\ell,C}.
		\end{aligned}
	\end{equation*} 
\end{proof}

	\begin{claim}
	\label{psi9S}
	For all $G \in \cG_{\eta} \cup \cA$ and $\ell \in G \setminus \Delta_{1}(G)$ it holds that $\sum_{t \in I \cup \cC} \bar{\psi}_{\ell,t} \geq 1$. 
\end{claim}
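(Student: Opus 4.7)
\noindent\textbf{Proof plan for Claim~\ref{psi9S}.}
The plan is to exploit that $\bar{\gamma}$ already satisfies the $\bar{y}$-polytope's covering constraint~\eqref{F4}, and then to show that the construction of $\bar{\psi}$ in \eqref{lambda'lC}--\eqref{lambda'lDelta} preserves the total assignment mass of each such item $\ell$. Since $\bar{\gamma}_{\ell,j}=0$ whenever $\ell\neq j$, the only slot-type to which $\ell$ is assigned by $\bar{\gamma}$ is $\ell$ itself, and this mass is precisely what is redistributed to slots of the preceding class when we build $\bar{\psi}$.

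Concretely, fix $G\in\cG_\eta\cup\cA$ and $\ell\in G\setminus\Delta_1(G)$. First I would argue that $\ell$ lies in some class: by \eqref{eq:q} the classes $\Delta_1(G),\ldots,\Delta_{\tau(G)}(G)$ cover $G$, so there is some $i\in\{2,\ldots,\tau(G)\}$ with $\ell\in\Delta_i(G)$. Then $i-1\in[\tau(G)-1]$, so \eqref{lambda'lDelta} applies to every $j\in\Delta_{i-1}(G)$, giving
\[
\sum_{j\in\Delta_{i-1}(G)}\bar{\psi}_{\ell,j}
\;=\;\sum_{j\in\Delta_{i-1}(G)}\frac{f_{\bar{u}}(j)}{f_{\bar{u}}(\Delta_{i-1}(G))}\cdot\bar{\gamma}_{\ell,\ell}
\;=\;\bar{\gamma}_{\ell,\ell},
\]
using that $f_{\bar{u}}(\Delta_{i-1}(G))=\sum_{j\in\Delta_{i-1}(G)}f_{\bar{u}}(j)$ by definition of frequency.

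Next, I would combine this with Claim~\ref{aux:65} (which says configuration-mass is preserved under the projection), noting that slot-types in $\Delta_{i-1}(G)$ are disjoint from configuration-types. Thus
\[
\sum_{t\in I\cup\cC}\bar{\psi}_{\ell,t}
\;\geq\;\sum_{j\in\Delta_{i-1}(G)}\bar{\psi}_{\ell,j}\;+\;\sum_{C\in\cC}\bar{\psi}_{\ell,C}
\;=\;\bar{\gamma}_{\ell,\ell}+\sum_{C\in\cC}\bar{\gamma}_{\ell,C}.
\]
Finally, by the hypothesis of Lemma~\ref{lem:Cnf} we have $\bar{\gamma}_{\ell,j}=0$ for every $j\in I$ with $j\neq\ell$, so constraint~\eqref{F4} of the $\bar{y}$-polytope for $\bar{\gamma}$ collapses to
\[
\bar{\gamma}_{\ell,\ell}+\sum_{C\in\cC}\bar{\gamma}_{\ell,C}\;=\;\sum_{t\in I\cup\cC}\bar{\gamma}_{\ell,t}\;\geq\;1,
\]
which yields the claim.

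The steps are essentially bookkeeping once the two observations are in place: (i) items outside $\Delta_1(G)$ belong to some $\Delta_i(G)$ with $i\geq 2$, so the ``shifting'' rule \eqref{lambda'lDelta} fires and recovers the full self-assignment $\bar{\gamma}_{\ell,\ell}$; and (ii) the configuration-assignment is conserved via Claim~\ref{aux:65}. No genuine obstacle arises here---the only subtle point is that one must make sure to include $\ell$ itself (via the sum over $\Delta_{i-1}(G)$, not over all of $I$) when invoking the preservation argument, since $\bar{\psi}_{\ell,\ell}=0$ by construction; the ``missing'' self-assignment is precisely what \eqref{lambda'lDelta} redistributes.
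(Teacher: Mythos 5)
Your proof is correct and follows essentially the same route as the paper's: identify the class index $i\ge 2$ with $\ell\in\Delta_i(G)$, sum the shifting rule \eqref{lambda'lDelta} over $\Delta_{i-1}(G)$ to recover $\bar{\gamma}_{\ell,\ell}$, invoke Claim~\ref{aux:65} for the configuration mass, and close with constraint \eqref{F4}. The only cosmetic difference is that you lower-bound $\sum_{t\in I\cup\cC}\bar{\psi}_{\ell,t}$ by the two sub-sums with a $\geq$, whereas the paper observes that this is actually an equality since \eqref{lambda'lDelta} assigns $\ell$ to no slot outside $\Delta_{i-1}(G)$.
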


\begin{proof}

 by \eqref{lambda'lDelta}, there is $i \in \{2,3, \ldots, \tau(G)\}$ such that $\ell \in \Delta_{i}(G)$. Therefore,
 
 	\begin{equation*}
 	\begin{aligned}
 		\sum_{t \in I \cup \cC} \bar{\psi}_{\ell,t} ={} & \sum_{j \in \Delta_{i-1}(G)} \frac{f_{\bar{u}}(j)}{f_{\bar{u}}(\Delta_{i-1}(G))} \cdot \bar{\gamma}_{\ell,\ell} +\sum_{C \in \cC} \bar{\psi}_{\ell,C} =  \frac{\bar{\gamma}_{\ell,\ell} }{{f_{\bar{u}}(\Delta_{i-1}(G))}} \sum_{j \in \Delta_{i-1}(G)} f_{\bar{u}}(j) +\sum_{C \in \cC} \bar{\psi}_{\ell,C} \\ 
 		={} & \frac{\bar{\gamma}_{\ell,\ell} }{{f_{\bar{u}}(\Delta_{i-1}(G))}} \cdot f_{\bar{u}}(\Delta_{i-1}(G)) +\sum_{C \in \cC} \bar{\gamma}_{\ell,C} = \sum_{t \in I \cup \cC} \bar{\gamma}_{\ell,t} \geq 1.  
 	\end{aligned}
  \end{equation*} 
 
The first equality is by \eqref{lambda'lDelta}. The third equality is by Claim~\ref{aux:65}. The fourth equality is because for all $\ell \neq j \in I$ it holds that $\bar{\gamma}_{\ell,j} = 0$. The last inequality is because $\bar{\gamma}$ is in the $\bar{y}$-polytope and thus satisfies property \eqref{F4}.

\end{proof}

We repeat the definition of $\bar{\mu}$ from Section~\ref{sec:AlgPrototype}. We define a point $\bar{\mu} \in \mathbb{R}^{I}_{\geq 0}$ as follows. For all $\ell \in I$, define 

\begin{equation}
	\label{mu}
	\bar{\mu}_{\ell} = \max \left\{0,1-\sum_{t \in I \cup \cC} \bar{\psi}_{\ell,t}\right\}.
\end{equation}

	We prove below some properties of $\bar{\mu}$.

		\begin{claim}
		\label{mu:fit}
		$\textnormal{\textsf{supp}}(\bar{\mu}) \setminus L \subseteq \textnormal{\textsf{fit}}(\emptyset)$.
	\end{claim}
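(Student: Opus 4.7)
The plan is to prove this claim by unpacking the relevant definitions; the statement turns out to be essentially immediate.

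First, I would recall that $L = \{\ell \in I : s(\ell) \geq \eps^2\}$ denotes the $\eps$-large items, so $I \setminus L = \{\ell \in I : s(\ell) < \eps^2\}$. Next, I would evaluate $\textsf{fit}(\emptyset)$ using~\eqref{fitS}: since $s(\emptyset) = 0$, we have
\[
\textsf{fit}(\emptyset) \;=\; \{\ell \in I \;:\; s(\ell) \leq \min\{\eps^2,\, \eps \cdot (1 - 0)\}\} \;=\; \{\ell \in I \;:\; s(\ell) \leq \eps^2\}.
\]

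Now for any $\ell \in \textsf{supp}(\bar{\mu}) \setminus L$, the membership $\ell \notin L$ gives $s(\ell) < \eps^2 \leq \eps^2$, so $\ell \in \textsf{fit}(\emptyset)$ by the characterization above. The entry $\bar{\mu}_\ell$ plays no role here; membership in $\textsf{supp}(\bar{\mu})$ is only used to ensure $\ell \in I$. Hence $\textsf{supp}(\bar{\mu}) \setminus L \subseteq \textsf{fit}(\emptyset)$ as claimed.

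Since this argument is purely definitional, there is no real obstacle. The only subtlety worth stating explicitly is that the threshold $\eps^2$ in the definition of $\textsf{fit}(\emptyset)$ comes from the $\min\{\eps^2, \eps(1-s(C))\}$ bound, where the $\eps^2$ term is the binding one when $C = \emptyset$; this is exactly the same threshold appearing in the definition of $L$, which is what makes the inclusion trivial.
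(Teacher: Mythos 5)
Your proof is correct and follows the same route as the paper's own (one-line) argument: membership in $\textsf{supp}(\bar{\mu})\setminus L$ only matters insofar as it places $\ell$ in $I\setminus L$, and then the definition \eqref{fitS} with $s(\emptyset)=0$ gives $\ell \in \textsf{fit}(\emptyset)$. Your extra observation that $\eps^2$ is the binding term in $\min\{\eps^2,\eps\}$ is a harmless elaboration of what the paper leaves implicit.
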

	
	\begin{proof}
		
		Let $\ell \in \textsf{supp}(\bar{\mu})  \setminus L$. Thus, it follows that $\ell \in I \setminus L$. Hence, $\ell \in \textsf{fit}(\emptyset)$ by \eqref{fitS}. 
	\end{proof}
	
		\begin{claim}
		\label{mu:fit2}
		For all $\ell \in \textnormal{\textsf{supp}}(\bar{\mu})$ it holds that $\ell \in  \textnormal{\textsf{fit}} \left(r\left(\textnormal{\textsf{group}}(\ell)\right)\right)$. 
	\end{claim}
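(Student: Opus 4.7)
My plan is to first localize $\textnormal{\textsf{supp}}(\bar{\mu})$ and then verify the fit condition by exploiting the sort order of items. Recall that $\bar{\mu}_\ell>0$ means $\sum_{t\in I\cup\cC}\bar{\psi}_{\ell,t}<1$, so I first determine which items fail to be fully covered by $\bar{\psi}$. Using Claim~\ref{aux:65}, which gives $\sum_{C\in\cC}\bar{\psi}_{\ell,C}=\sum_{C\in\cC}\bar{\gamma}_{\ell,C}$, together with constraint~\eqref{F4} for $\bar{\gamma}$ and the hypothesis that $\bar{\gamma}_{\ell,j}=0$ for $\ell\neq j$, I obtain $\sum_t\bar{\psi}_{\ell,t}\geq 1-\bar{\gamma}_{\ell,\ell}$ for any item whose only slot-contribution in $\bar{\psi}$ comes through $\bar{\gamma}_{\ell,\ell}$; this is precisely the situation for items in non-important groups (since $\bar{\psi}_{\ell,j}$ with $\ell\neq j$ is defined only for items in $\Delta_{i+1}(G)$ of important $G$). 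Combined with Claim~\ref{psi9S}, which already covers items in $G\setminus\Delta_1(G)$ for important $G$, this shows that $\textnormal{\textsf{supp}}(\bar{\mu})$ is contained in (i) items of non-important groups, together with (ii) items in $\Delta_1(G)$ for important $G$.

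For the fit step, the crucial observation is that $\min G$ is the minimum-index item of $G=\textnormal{\textsf{group}}(\ell)$, and since items are sorted in non-increasing order of size, $\min G$ realizes the maximal size of $G$. Hence $s(\ell)\leq s(\min G)$ for every $\ell\in G$, which by~\eqref{fitl} places $\ell$ in $\textnormal{\textsf{fit}}(\min G)=\{\ell'\in G\mid s(\ell')\leq s(\min G)\}=G$. Reading $r(G)=\{\min G\}$ as the ``maximal slot of $G$'' (the paper's own terminology at the point where $r(G)$ is introduced), this is exactly $\ell\in\textnormal{\textsf{fit}}(r(\textnormal{\textsf{group}}(\ell)))$, and the claim follows directly from the localization of $\textnormal{\textsf{supp}}(\bar{\mu})$ above.

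The main obstacle is a notational one: $r(G)$ is also described as a singleton configuration, under which interpretation $\textnormal{\textsf{fit}}(r(G))$ would additionally require $s(\ell)\leq\min\{\eps^2,\eps(1-s(\min G))\}$ by~\eqref{fitS}. For items in non-important groups this still goes through cleanly, because such groups contain no large items, so $s(\min G)<\eps^2$ and consequently $s(\ell)<\eps^2\leq\eps(1-s(\min G))$. The genuinely problematic case under the configuration reading would be a large item $\ell\in\Delta_1(G)$ for a massive $G$ whose smallest item still exceeds $\eps^2$, where one would need a further structural argument to rule out $\bar{\mu}_\ell>0$. I would therefore adopt the slot reading of $r(G)$, which is consistent with the paper's wording ``maximal slot'' and with how the vector $\bar{\rho}$ is used in the proof of Claim~\ref{lem:shifting1}; under this reading the fit condition becomes the one-line consequence of the sort order described above.
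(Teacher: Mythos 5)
Your final argument coincides with the paper's own proof: the paper likewise reads $r(G)$ as the maximal slot $\min G$ and deduces $\ell \in \textnormal{\textsf{fit}}\left(r(\textnormal{\textsf{group}}(\ell))\right)$ directly from the sorted order via \eqref{fitl}, an argument that holds for \emph{every} item of the group, so your first-paragraph localization of $\textnormal{\textsf{supp}}(\bar{\mu})$ is not needed for this claim (it is what Claims~\ref{mu:group:ups} and~\ref{mu:size} rely on). Your remark about the \eqref{fitS}-reading of $r(G)$ correctly flags a notational looseness in the paper, and the slot reading you adopt is exactly the one the paper's proof (and its use of $\bar{\rho}$ in Claims~\ref{LA5} and~\ref{LA8}) takes.
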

	
	\begin{proof}
		
		Let $\ell \in \textsf{supp}(\bar{\mu})$. Thus, by the definition of the maximal slot $r\left(\textnormal{\textsf{group}}(\ell)\right)$ it holds that $s(\ell) \leq s\left(r\left(\textnormal{\textsf{group}}(\ell)\right)\right)$ and that $\textnormal{\textsf{group}}(\ell) = r\left(\textnormal{\textsf{group}}(\ell)\right)$. Hence, the claim follows by \eqref{fitl}. 
	\end{proof}
	
	We use the following auxiliary claim. 
		\begin{claim}
		\label{aux:mugroup}
	For all $S \subseteq I$ it holds that $\sum_{\ell \in S} \bar{\mu}_{\ell} \leq f_{\bar{y}}(S)$. 
	\end{claim}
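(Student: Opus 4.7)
My plan is to show the pointwise bound $\bar{\mu}_{\ell} \leq f_{\bar{y}}(\ell)$ for every $\ell \in I$; summing this inequality over $\ell \in S$ immediately yields the claim. The route to the pointwise bound goes through $\bar{\gamma}_{\ell,\ell}$ as an intermediary: I first show $\bar{\mu}_{\ell} \leq \bar{\gamma}_{\ell,\ell}$, and then $\bar{\gamma}_{\ell,\ell} \leq f_{\bar{y}}(\ell)$.

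For the first inequality, I will start from the definition $\bar{\mu}_{\ell}= \max\{0, 1 - \sum_{t \in I \cup \cC} \bar{\psi}_{\ell,t}\}$, so in particular $\bar{\mu}_{\ell} \leq 1 - \sum_{C \in \cC} \bar{\psi}_{\ell,C}$ (dropping the slot-type contributions only strengthens the bound on $\bar{\mu}_\ell$). By Claim~\ref{aux:65}, $\sum_{C \in \cC} \bar{\psi}_{\ell,C} = \sum_{C \in \cC} \bar{\gamma}_{\ell,C}$. Since $\bar{\gamma}$ lies in the $\bar{y}$-polytope it satisfies constraint~\eqref{F4}, so $\sum_{j \in I}\bar{\gamma}_{\ell,j} + \sum_{C\in\cC}\bar{\gamma}_{\ell,C} \geq 1$; using the hypothesis from Lemma~\ref{lem:eviction} that $\bar{\gamma}_{\ell,j}=0$ whenever $\ell \neq j$, the first sum collapses to $\bar{\gamma}_{\ell,\ell}$, giving $1 - \sum_{C\in\cC}\bar{\gamma}_{\ell,C} \leq \bar{\gamma}_{\ell,\ell}$. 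Chaining the two bounds yields $\bar{\mu}_{\ell} \leq \bar{\gamma}_{\ell,\ell}$.

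For the second inequality, I will apply constraint~\eqref{F3} of the $\bar{y}$-polytope with the slot-type $\ell$: it asserts $\sum_{\ell' \in I} \bar{\gamma}_{\ell',\ell} \leq \sum_{C \in \cC[\ell]} \bar{y}_C = f_{\bar{y}}(\ell)$. Again using $\bar{\gamma}_{\ell',\ell}=0$ for $\ell' \neq \ell$, the left-hand side reduces to $\bar{\gamma}_{\ell,\ell}$, so $\bar{\gamma}_{\ell,\ell} \leq f_{\bar{y}}(\ell)$. Combining with the previous step and summing gives $\sum_{\ell \in S}\bar{\mu}_{\ell} \leq \sum_{\ell \in S} f_{\bar{y}}(\ell) = f_{\bar{y}}(S)$, as required.

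I do not anticipate a real obstacle: the only subtlety is remembering to drop the slot-type summands on the correct side (keeping them would make the inequality go the wrong way when $\bar{\psi}_{\ell,j}$ is large), and to exploit the specific structural property of $\bar{\gamma}$ provided by Lemma~\ref{lem:eviction}, namely that no item is assigned fractionally to another item's slot-type. Without this property, constraint~\eqref{F3} would not collapse to a bound on a single entry $\bar{\gamma}_{\ell,\ell}$ and one would need a more elaborate argument.
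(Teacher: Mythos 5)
Your approach matches the paper's: both prove the pointwise bound $\bar{\mu}_{\ell} \leq \bar{\gamma}_{\ell,\ell} \leq f_{\bar{y}}(\ell)$ by dropping the slot-type contributions to $\bar\psi$, invoking Claim~\ref{aux:65}, applying constraint~\eqref{F4} to relate $1-\sum_C\bar\gamma_{\ell,C}$ to $\bar\gamma_{\ell,\ell}$, and finally using constraint~\eqref{F3}, then summing over $S$. There is one small technical slip in your first step: the inequality $\bar{\mu}_{\ell} \leq 1 - \sum_{C\in\cC}\bar{\psi}_{\ell,C}$ is not valid in general, since nothing in the $\bar{y}$-polytope prevents $\sum_{C\in\cC}\bar{\psi}_{\ell,C}=\sum_{C\in\cC}\bar{\gamma}_{\ell,C}$ from exceeding $1$ (constraint~\eqref{F4} is a lower bound on the full sum over types, not an upper bound on the configuration part alone); when it does, the right-hand side is negative while $\bar{\mu}_\ell\geq 0$. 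The correct manipulation keeps the $\max$ throughout: $\bar{\mu}_\ell \leq \max\{0,\,1-\sum_{C\in\cC}\bar{\psi}_{\ell,C}\} = \max\{0,\,1-\sum_{C\in\cC}\bar{\gamma}_{\ell,C}\} \leq \max\{0,\,\bar{\gamma}_{\ell,\ell}\} = \bar{\gamma}_{\ell,\ell}$, which is exactly the chain the paper writes (summed over $\ell\in S$). The slip is inconsequential --- your final conclusion $\bar{\mu}_\ell\leq\bar{\gamma}_{\ell,\ell}$ still holds since $\bar{\gamma}_{\ell,\ell}\geq 0$ --- but the intermediate inequality as written can fail. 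One further minor remark: for the second step you do not actually need the vanishing of $\bar{\gamma}_{\ell',\ell}$ for $\ell'\neq\ell$; non-negativity alone gives $\bar{\gamma}_{\ell,\ell}\leq\sum_{\ell'\in I}\bar{\gamma}_{\ell',\ell}\leq f_{\bar{y}}(\ell)$ directly from~\eqref{F3}.
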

	
	\begin{proof}
		
			\begin{equation*}
			\begin{aligned}
			\sum_{\ell \in S} \bar{\mu}_{\ell} ={} &  \sum_{\ell \in S} \max \left\{0,1-\sum_{t \in I \cup \cC} \bar{\psi}_{\ell,t}\right\} \leq  \sum_{\ell \in S} \max \left\{0,1-\sum_{C \in \cC} \bar{\psi}_{\ell,C}\right\} = \sum_{\ell \in S} \max \left\{0,1-\sum_{C \in  \cC} \bar{\gamma}_{\ell,C}\right\}\\
			 \leq{} & \sum_{\ell \in S} \max \left\{0,\sum_{j \in I} \bar{\gamma}_{\ell,j}\right\} = \sum_{\ell \in S}  \bar{\gamma}_{\ell,\ell} \leq  \sum_{\ell \in S} f_{\bar{y}}(\ell) =  f_{\bar{y}}(S). 
			\end{aligned}
		\end{equation*}  The second equality is by Claim~\ref{aux:65}. The second inequality is because $\bar{\gamma}$ is in the $\bar{y}$-polytope; thus, the inequality follows by \eqref{F4}. 
	\end{proof}

	\begin{claim}
	\label{mu:group:ups}
	For all $G \in \cG_{\eta} \cup \cA$ it holds that $\sum_{\ell \in G} \bar{\mu}_{\ell} \leq \eps^{\upsilon} \cdot \|\bar{y}\|+2$.
\end{claim}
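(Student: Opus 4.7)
The plan is to observe that $\bar{\mu}_\ell$ vanishes on most items of $G$, so the sum reduces to items in the first class $\Delta_1(G)$, whose frequency is already controlled by Observation~\ref{ob:FG}.

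More specifically, I would proceed as follows. Since $G \in \cG_\eta \cup \cA$ is an important group, Claim~\ref{psi9S} applies and gives $\sum_{t \in I \cup \cC} \bar{\psi}_{\ell,t} \geq 1$ for every $\ell \in G \setminus \Delta_1(G)$. By the definition of $\bar{\mu}$ in \eqref{mu}, this yields $\bar{\mu}_\ell = 0$ for all such $\ell$. Consequently,
\[
\sum_{\ell \in G} \bar{\mu}_\ell \;=\; \sum_{\ell \in \Delta_1(G)} \bar{\mu}_\ell.
\]
Next, I would invoke Claim~\ref{aux:mugroup} with $S = \Delta_1(G)$ to bound this last sum by $f_{\bar{y}}(\Delta_1(G))$. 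Finally, Property~3 of Observation~\ref{ob:FG} gives $f_{\bar{y}}(\Delta_1(G)) \leq \eps^{\upsilon} \cdot \|\bar{y}\| + 2$, and chaining the inequalities yields the desired conclusion.

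There is no serious obstacle here: the statement is essentially a direct composition of Claim~\ref{psi9S}, Claim~\ref{aux:mugroup}, and Observation~\ref{ob:FG}. The only delicate point is verifying that Claim~\ref{psi9S} indeed applies to \emph{every} item outside $\Delta_1(G)$ (i.e., that the classes $\Delta_2(G),\ldots,\Delta_{\tau(G)}(G)$ cover $G \setminus \Delta_1(G)$), which follows from the definition of the classes in~\eqref{eq:q} together with the fact that $q_{\tau(G)} = \min(G)$, so the classes partition all of $G$ (since $G \in \cG_\eta \cup \cA$ is important and therefore has classes defined for it).
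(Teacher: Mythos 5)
Your proof is correct and follows exactly the same route as the paper's: it uses Claim~\ref{psi9S} with~\eqref{mu} to collapse the sum to $\Delta_1(G)$, then Claim~\ref{aux:mugroup} to bound by $f_{\bar{y}}(\Delta_1(G))$, and then Property~3 of Observation~\ref{ob:FG}. The only difference is that you spell out the "delicate point" (that the classes partition all of $G$ so Claim~\ref{psi9S} covers every item outside $\Delta_1(G)$), which the paper leaves implicit.
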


\begin{proof}
		\begin{equation*}
		\begin{aligned}
			\sum_{\ell \in G} \bar{\mu}_{\ell} = \sum_{\ell \in \Delta_{1}(G)} \bar{\mu}_{\ell} \leq{} & f_{\bar{y}} \left( \Delta_{1}(G) \right) \leq  \eps^{\upsilon}\cdot \|\bar{y}\|+2.
		\end{aligned}
	\end{equation*}  The first equality is by Claim~\ref{psi9S} and \eqref{mu}. The first inequality is by Claim~\ref{aux:mugroup}. The second inequality is by Observation~\ref{ob:FG}.
\end{proof}

		\begin{claim}
		\label{mu:group}
		For all $G \in \cG$ it holds that $\sum_{\ell \in G} \bar{\mu}_{\ell} \leq \eps \cdot \|\bar{y}\|+2$.
	\end{claim}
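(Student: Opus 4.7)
The plan is to split on whether $G$ is important (i.e., $G \in \cG_\eta \cup \cA$) or not, and handle each case with an existing bound.

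For important groups $G \in \cG_\eta \cup \cA$, Claim~\ref{mu:group:ups} already gives the much stronger bound $\sum_{\ell \in G} \bar{\mu}_\ell \leq \eps^{\upsilon}\|\bar{y}\| + 2$. Since $\upsilon = 3\eps^{-2} \geq 1$ and $\eps \in (0, 0.1)$, we have $\eps^{\upsilon} \leq \eps$, so this immediately implies $\sum_{\ell \in G}\bar{\mu}_\ell \leq \eps\|\bar{y}\| + 2$.

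For non-important groups $G \in \cG \setminus (\cG_\eta \cup \cA)$, the key observation is that $G$ contains no large items: by the definition of $\cA$ in~\eqref{Agroups}, $G \notin \cA$ means $G \cap L = \emptyset$, so $G \subseteq I \setminus L$ and in particular $G = G \setminus L$. Then I apply Lemma~\ref{lem:significantDiscard}(2), which yields $f_{\bar{y}}(G) = f_{\bar{y}}(G \setminus L) \leq \eps \cdot \|\bar{y}\|$ because $G \notin \cG_\eta$. Combining with Claim~\ref{aux:mugroup} taking $S = G$, we obtain
\[
\sum_{\ell \in G} \bar{\mu}_\ell \;\leq\; f_{\bar{y}}(G) \;\leq\; \eps \cdot \|\bar{y}\|,
\]
which is well within the claimed bound of $\eps \cdot \|\bar{y}\| + 2$.

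There is no real obstacle here—this claim is essentially a bookkeeping consequence that stitches together Claims~\ref{aux:mugroup} and~\ref{mu:group:ups} with Lemma~\ref{lem:significantDiscard}. The only subtlety worth double-checking is the justification that $G \notin \cA \implies G \cap L = \emptyset$, which is exactly the definition~\eqref{Agroups} of massive groups; everything else is a direct substitution.
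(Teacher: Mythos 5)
Your proof is correct and follows essentially the same route as the paper: the same case split into important groups (handled via Claim~\ref{mu:group:ups}, with the observation $\eps^{\upsilon}\leq\eps$ that the paper leaves implicit) and non-important groups (handled via $G\cap L=\emptyset$ from~\eqref{Agroups}, Claim~\ref{aux:mugroup}, and Lemma~\ref{lem:significantDiscard}). No gaps.
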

	
	\begin{proof}
	We split the proof into two cases. \begin{enumerate}
		\item $G \in \cG_{\eta} \cup \cA$. The claim follows by Claim~\ref{mu:group:ups}. 
		
		\item  $G \in \cG \setminus (\cG_{\eta} \cup \cA)$. Then, 	\begin{equation*}
			\begin{aligned}
				\sum_{\ell \in G} \bar{\mu}_{\ell} ={} &  \sum_{\ell \in G \setminus L} \bar{\mu}_{\ell} \leq f_{\bar{y}} \left( G \setminus L \right) \leq \eps\cdot \|\bar{y}\| \leq  \eps  \cdot \|\bar{y}\|+2.
			\end{aligned}
		\end{equation*}  The first equality is because $G \notin \cA$; thus, it follows that $L \cap G = \emptyset$ by \eqref{Agroups}.  The first inequality is by Claim~\ref{aux:mugroup}. The second inequality is by Lemma~\ref{lem:significantDiscard}. 
	\end{enumerate}
	\end{proof}

		\begin{claim}
		\label{mu:size}
	$\sum_{\ell \in I \setminus L} \bar{\mu}_{\ell}  \cdot s(\ell) \leq \eps\cdot \|\bar{y}\|$.
	\end{claim}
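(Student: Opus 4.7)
The plan is to derive the pointwise bound $\bar{\mu}_{\ell} \leq f_{\bar{y}}(\ell)$ for every $\ell \in I$, exactly as in the proof of Claim~\ref{aux:mugroup}, and then multiply by $s(\ell)$ and sum only over small items, so that the size-weighted frequency bound in Property~1 of Lemma~\ref{lem:significantDiscard} can be invoked. The statement we want is nothing more than a size-weighted version of Claim~\ref{aux:mugroup} restricted to $I \setminus L$, so no new ingredient is required.

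Concretely, I would first reproduce the chain from the proof of Claim~\ref{aux:mugroup} at each individual item $\ell$. Starting from \eqref{mu}, drop the $\bar{\psi}_{\ell,j}$ contributions with $j \in I$ to get
\[
\bar{\mu}_{\ell} \;\leq\; \max\Bigl\{0,\; 1 - \sum_{C \in \cC} \bar{\psi}_{\ell,C}\Bigr\}.
\]
Applying Claim~\ref{aux:65} replaces $\sum_{C \in \cC} \bar{\psi}_{\ell,C}$ by $\sum_{C \in \cC} \bar{\gamma}_{\ell,C}$, and then constraint \eqref{F4} for $\bar{\gamma}$ in the $\bar{y}$-polytope gives
\[
1 - \sum_{C \in \cC} \bar{\gamma}_{\ell,C} \;\leq\; \sum_{j \in I} \bar{\gamma}_{\ell,j} \;=\; \bar{\gamma}_{\ell,\ell},
\]
where the equality uses the hypothesis on $\bar{\gamma}$ that $\bar{\gamma}_{\ell,j} = 0$ whenever $\ell \neq j$. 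Finally, constraint \eqref{F3} of the $\bar{y}$-polytope at slot-type $\ell$ yields $\bar{\gamma}_{\ell,\ell} \leq \sum_{C \in \cC[\ell]} \bar{y}_C = f_{\bar{y}}(\ell)$. Combining these observations gives $\bar{\mu}_{\ell} \leq f_{\bar{y}}(\ell)$ for every $\ell \in I$.

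Multiplying this pointwise inequality by $s(\ell)$ and summing over $\ell \in I \setminus L$ yields
\[
\sum_{\ell \in I \setminus L} \bar{\mu}_{\ell} \cdot s(\ell) \;\leq\; \sum_{\ell \in I \setminus L} f_{\bar{y}}(\ell) \cdot s(\ell) \;\leq\; \eps \cdot \|\bar{y}\|,
\]
where the last inequality is precisely Property~1 of Lemma~\ref{lem:significantDiscard} (which itself follows from the input guarantee $s(C \setminus L) \leq \eps$ for all $C \in \textnormal{\textsf{supp}}(\bar{y})$). There is no real obstacle here: the argument is a straightforward weighted analogue of Claim~\ref{aux:mugroup}, and the restriction to $I \setminus L$ is what matches the scope of the small-item size bound in Lemma~\ref{lem:significantDiscard}. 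The only thing to be slightly careful about is to use \eqref{F3} rather than \eqref{F4} in the last step, and to invoke the hypothesis $\bar{\gamma}_{\ell,j} = 0$ for $\ell \neq j$ supplied by Lemma~\ref{lem:Cnf}(ii), without which one could not collapse $\sum_{j \in I} \bar{\gamma}_{\ell,j}$ to a single diagonal term.
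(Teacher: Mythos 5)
Your proposal is correct and follows essentially the same route as the paper: the paper's proof of this claim likewise drops the slot-type contributions, applies Claim~\ref{aux:65}, uses constraint \eqref{F4} together with the hypothesis $\bar{\gamma}_{\ell,j}=0$ for $\ell\neq j$ to collapse to the diagonal term $\bar{\gamma}_{\ell,\ell}$, bounds it by $f_{\bar{y}}(\ell)$, and finishes with Property~1 of Lemma~\ref{lem:significantDiscard}. Your explicit appeal to constraint \eqref{F3} for the step $\bar{\gamma}_{\ell,\ell}\leq f_{\bar{y}}(\ell)$ is in fact slightly more careful than the paper, which leaves that justification implicit.
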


	\begin{proof}
		We use the following inequalities. First, 
			\begin{equation}
				\label{aux:size1}
			\begin{aligned}
		 	\sum_{\ell \in I \setminus L} \bar{\mu}_{\ell}  \cdot s(\ell)  ={} & \sum_{\ell \in I \setminus L}  \max \left\{0,1-\sum_{t \in I \cup \cC} \bar{\psi}_{\ell,t}\right\} \cdot s(\ell)\\ \leq{} & \sum_{\ell \in I \setminus L} \max \left\{0,1-\sum_{C \in \cC} \bar{\psi}_{\ell,C}\right\} \cdot s(\ell) = \sum_{\ell \in I \setminus L} \max \left\{0,1-\sum_{C \in  \cC} \bar{\gamma}_{\ell,C}\right\} \cdot s(\ell) 
		\end{aligned}
		\end{equation} The last equality is by Claim~\ref{aux:65}. 	Second, 
	
		\begin{equation}
			\label{aux:size2}
		\begin{aligned}
			\sum_{\ell \in I \setminus L} \max \left\{0,1-\sum_{C \in  \cC} \bar{\gamma}_{\ell,C}\right\} \cdot s(\ell) 
		\leq{} & \sum_{\ell \in I \setminus L} \max \left\{0,\sum_{j \in I} \bar{\gamma}_{\ell,j}\right\} \cdot s(\ell) = \sum_{\ell \in I \setminus L}  \bar{\gamma}_{\ell,\ell} \cdot s(\ell)\\ \leq{} &  \sum_{\ell \in I \setminus L} f_{\bar{y}}(\ell) \cdot s(\ell) \leq \eps\|\bar{y}\|. 
		\end{aligned}
	\end{equation} The first inequality is because $\bar{\gamma}$ is in the $\bar{y}$-polytope; thus, the inequality follows by \eqref{F4}. The last inequality is by Lemma~\ref{lem:significantDiscard}. The claim follows by \eqref{aux:size1} and \eqref{aux:size2}.

	\end{proof}

	\noindent{\bf Proof of Claim~\ref{lem:mu}:}
	The proof follows by Claim~\ref{mu:fit}, Claim~\ref{mu:group}, and Claim~\ref{mu:size}. \qed
	
	We Define another point $\bar{\rho} \in \mathbb{R}^{I \times (I \cup \mathcal{C})}_{\geq 0}$ based on $\bar{\mu}$. For all $\ell \in I \setminus L$ define $\bar{\rho}_{\ell,\emptyset} = \bar{\mu}_{\ell}$; in addition, for all $G \in \cG, \ell \in G$ define $\bar{\rho}_{\ell,r(G)} = \bar{\mu}_{\ell}$; for any other entry $(\ell,t)  \in I \times \left(I \cup \cC\right)$ define $\bar{\rho}_{\ell,t} = 0$. Finally, using the definitions of $\bar{\rho}$ and $\bar{\psi}$ we define:

	 \begin{equation}
	 	\label{LA}
\bar{\lambda} = \bar{\rho}+\bar{\psi}.
	 \end{equation}

	 	\begin{claim}
	 	\label{LA5}
	 $\bar{\lambda}$ satisfies constraint \eqref{F1} of the $\bar{z}$-polytope. 
	 \end{claim}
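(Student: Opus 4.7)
The plan is to establish Claim~\ref{LA5} by a straightforward case analysis, using $\bar{\lambda}=\bar{\rho}+\bar{\psi}$ together with the nonnegativity of both summands. It suffices to show that whenever $\ell \in I$ and $t \in I \cup \mathcal{C}$ satisfy $\ell \notin \textnormal{\textsf{fit}}(t)$, both $\bar{\psi}_{\ell,t}=0$ and $\bar{\rho}_{\ell,t}=0$; then their sum vanishes and \eqref{F1} holds for $\bar{\lambda}$ in the $\bar{z}$-polytope.

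For the $\bar{\psi}$ term, I would simply invoke Claim~\ref{psi5} (equivalently, the first assertion of Claim~\ref{ob:psi}), which already shows that $\bar{\psi}$ satisfies constraint \eqref{F1} of the $\bar{u}$-polytope. Since constraint \eqref{F1} depends only on the fit relation and not on the underlying prototype, the same implication $\ell \notin \textnormal{\textsf{fit}}(t) \Rightarrow \bar{\psi}_{\ell,t}=0$ is exactly what is needed here, regardless of whether we view the point as lying in the $\bar{u}$- or $\bar{z}$-polytope.

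For the $\bar{\rho}$ term I would argue by examining the support of $\bar{\rho}$ as defined just before \eqref{LA}. The only entries of $\bar{\rho}$ that can be nonzero are of two kinds. First, entries of the form $(\ell,\emptyset)$ with $\ell \in I \setminus L$, where $\bar{\rho}_{\ell,\emptyset}=\bar{\mu}_{\ell}$; if $\bar{\mu}_{\ell}>0$ then $\ell \in \textnormal{\textsf{supp}}(\bar{\mu})\setminus L$, and Property~1 of Claim~\ref{lem:mu} (Claim~\ref{mu:fit}) gives $\ell \in \textnormal{\textsf{fit}}(\emptyset)$. Second, entries of the form $(\ell,r(G))$ with $\ell \in G$, where $\bar{\rho}_{\ell,r(G)}=\bar{\mu}_{\ell}$; if $\bar{\mu}_{\ell}>0$ then $\textnormal{\textsf{group}}(\ell)=G$ and Property~2 of Claim~\ref{lem:mu} (Claim~\ref{mu:fit2}) yields $\ell \in \textnormal{\textsf{fit}}(r(G))$. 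In both cases, nonzero entries of $\bar{\rho}$ are supported only on pairs that do satisfy the fit relation, so $\ell \notin \textnormal{\textsf{fit}}(t)$ forces $\bar{\rho}_{\ell,t}=0$.

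Combining the two bullets above, for every $(\ell,t)$ with $\ell \notin \textnormal{\textsf{fit}}(t)$ we obtain $\bar{\lambda}_{\ell,t}=\bar{\rho}_{\ell,t}+\bar{\psi}_{\ell,t}=0$, which is precisely constraint \eqref{F1} of the $\bar{z}$-polytope. There is no real obstacle in this argument; it is an almost mechanical verification once Claim~\ref{ob:psi} and Claim~\ref{lem:mu} are in hand. The only subtle point to be careful about is making sure the second case covers $\ell \in L \cap G$ (the range over which $\bar{\rho}_{\ell,r(G)}$ is set to $\bar{\mu}_{\ell}$ in the body definition), and to use Claim~\ref{mu:fit2} precisely as stated so that the maximal-slot configuration $r(\textnormal{\textsf{group}}(\ell))$ matches the index $r(G)$.
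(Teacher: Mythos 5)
Your proposal is correct and takes essentially the same approach as the paper: both decompose $\bar{\lambda}=\bar{\rho}+\bar{\psi}$, dispatch the $\bar{\psi}$ part via Claim~\ref{ob:psi}, and handle the $\bar{\rho}$ part by inspecting its two kinds of nonzero entries using Claims~\ref{mu:fit} and~\ref{mu:fit2}. The only cosmetic difference is that you argue contrapositively (both summands vanish whenever $\ell \notin \textnormal{\textsf{fit}}(t)$), whereas the paper starts from $\bar{\lambda}_{\ell,t}>0$ and splits into three cases according to which summand is positive and whether $\ell \in L$; the content is identical, and you even correctly flag the one delicate alignment, that $r(\textnormal{\textsf{group}}(\ell))$ must match $r(G)$.
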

	 \begin{proof}
	 	 Let $\ell \in I$ and $t \in I \cup \cC$ such that $\bar{\lambda}_{\ell,t}>0$. We split into three cases.
	 	 
	 	 \begin{enumerate}
	 	 	\item $\ell \in I \setminus L$ and $\bar{\rho}_{\ell,t}>0$.  By the definition of $\bar{\rho}$ it follows that $t = \emptyset$. In addition, $\bar{\rho}_{\ell,\emptyset} = \bar{\mu}_{\ell}$ and it follows that $\ell \in \textsf{supp}(\bar{\mu})$. Therefore, by Lemma~\ref{mu:fit} it holds that $\ell \in \textsf{fit}(t)$. 
	 	 	
	 	 		\item $\ell \in L$ and $\bar{\rho}_{\ell,t}>0$.  By the definition of $\bar{\rho}$ it follows that $t = r\left(\textsf{group}(\ell)\right)$. In addition, $\bar{\rho}_{\ell,r\left(\textsf{group}(\ell)\right)} = \bar{\mu}_{\ell}$ and it follows that $\ell \in \textsf{supp}(\bar{\mu})$. Therefore, by Lemma~\ref{mu:fit2} it holds that $\ell \in \textsf{fit}(t)$. 
	 	 	
	 	 		\item $\bar{\psi}_{\ell,t}>0$.  It follows that $\ell \in \textsf{fit}(t)$ by Claim~\ref{ob:psi}. 
	 	 \end{enumerate} By \eqref{LA}, the above proves the claim. 
	 \end{proof}
	
	 	\begin{claim}
	 	\label{LA6}
	 	$\bar{\lambda}$ satisfies constraint \eqref{F2} of the $\bar{z}$-polytope. 
	 \end{claim}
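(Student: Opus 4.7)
The plan is to decompose $\bar{\lambda}=\bar{\rho}+\bar{\psi}$ and verify constraint \eqref{F2} by bounding each contribution separately for every configuration $C\in\mathcal{C}$. Claim~\ref{psi6} already yields $\sum_{\ell\in I}\bar{\psi}_{\ell,C}\cdot s(\ell)\leq (1-s(C))\,\bar{u}_C$. Since $\bar{z}$ is obtained from $\bar{u}$ by adding a non-negative combination of indicator vectors in Step~\ref{step:shiftreturn} of Algorithm~\ref{Alg:shifting}, we have $\bar{z}_C\geq \bar{u}_C$ coordinatewise, so the $\bar{\psi}$ part alone is always dominated by $(1-s(C))\bar{z}_C$. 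The real work is to control the $\bar{\rho}$ contribution.

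I would then split the argument into two cases. For $C\in\mathcal{C}\setminus\{\emptyset\}$, I would argue that $\bar{\rho}_{\ell,C}=0$ for every $\ell\in I$: the only configuration-type on which $\bar{\rho}$ is supported is $\emptyset$, since the small-item entries are $\bar{\rho}_{\ell,\emptyset}=\bar{\mu}_\ell$ for $\ell\in I\setminus L$, whereas the entries $\bar{\rho}_{\ell,r(\textsf{group}(\ell))}$ for $\ell\in L$ are indexed by the slot-type $\min\,\textsf{group}(\ell)$ (consistently with the treatment in Claim~\ref{LA5} and Lemma~\ref{mu:fit2}), hence do not contribute to any configuration coordinate other than $\emptyset$. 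Therefore $\sum_{\ell\in I}\bar{\lambda}_{\ell,C}\cdot s(\ell)=\sum_{\ell\in I}\bar{\psi}_{\ell,C}\cdot s(\ell)\leq (1-s(C))\,\bar{u}_C\leq (1-s(C))\,\bar{z}_C$, as required.

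For $C=\emptyset$, I would combine the two bounds directly: the $\bar{\psi}$ contribution is at most $\bar{u}_\emptyset$ (using $1-s(\emptyset)=1$), while Claim~\ref{mu:size} gives $\sum_{\ell\in I\setminus L}\bar{\mu}_\ell\cdot s(\ell)\leq \eps\cdot\|\bar{y}\|$ for the $\bar{\rho}$ contribution. Adding these yields $\sum_{\ell\in I}\bar{\lambda}_{\ell,\emptyset}\cdot s(\ell)\leq \bar{u}_\emptyset+\eps\|\bar{y}\|$, and by Step~\ref{step:shiftreturn} of Algorithm~\ref{Alg:shifting} we have $\bar{z}_\emptyset=\bar{u}_\emptyset+4\eps\|\bar{y}\|+\eps^{-3}$, so the extra slack of $4\eps\|\bar{y}\|+\eps^{-3}$ built into $\bar{z}_\emptyset$ comfortably absorbs the $\eps\|\bar{y}\|$ term from $\bar{\rho}$.

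I anticipate no serious obstacle: the extra mass added to $\bar{z}_\emptyset$ in the definition of $\bar{z}$ was designed precisely to absorb this discrepancy, and Claims~\ref{psi6} and~\ref{mu:size} are already tailored to match. The only point that requires care is the disambiguation of the symbol $r(G)$ — whether an entry $\bar{\rho}_{\ell,r(G)}$ is indexed by the slot-type $\min G$ or by the configuration $\{\min G\}$. The reading consistent with Claim~\ref{LA5} places these entries on slot-types, which is what allows the case $C\neq\emptyset$ to reduce cleanly to the $\bar{\psi}$ bound and keeps the proof essentially routine.
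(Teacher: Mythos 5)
Your proposal is correct and follows the paper's proof exactly: both split on $C=\emptyset$ versus $C\neq\emptyset$, use that $\bar{\rho}$ vanishes on every non-empty configuration coordinate to reduce the latter case to Claim~\ref{psi6} (plus $\bar{z}_C\geq\bar{u}_C$), and combine Claims~\ref{psi6} and~\ref{mu:size} against the extra $4\eps\|\bar{y}\|+\eps^{-3}$ mass on $\bar{z}_\emptyset$ for the former. Your reading of $r(G)$ — that $\bar{\rho}_{\ell,r(G)}$ for $\ell\in L$ is indexed by the slot $\min G$ rather than by the singleton configuration $\{\min G\}$ — is the one the paper implicitly uses and is in fact forced: under the configuration reading, $\bar{\rho}_{\ell,r(G)}>0$ for a large item $\ell$ would violate constraint \eqref{F1}, since no $\ell\in L$ belongs to $\textsf{fit}(C)$ for any configuration $C$ by \eqref{fitS}, and it would also falsify the paper's own assertion in this proof that $\bar{\rho}_{\ell,C'}=0$ for all $C'\neq\emptyset$.
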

	 \begin{proof}
	 Let $C \in \cC$. We split into two cases.
	 
	 \begin{enumerate}
	 	\item $C \neq \emptyset$. 	\begin{equation*}
	 		\label{LA6aux}
	 		\begin{aligned}
	 			\sum_{\ell \in I} \bar{\lambda}_{\ell,C} \cdot s(\ell) ={} & \sum_{\ell \in I} \left(\bar{\rho}_{\ell,C}+\bar{\psi}_{\ell,C}\right) \cdot s(\ell) = \sum_{\ell \in I} \bar{\psi}_{\ell,C} \cdot s(\ell) \leq (1-s(C))\bar{u}_C \leq (1-s(C))\bar{z}_C. 
	 		\end{aligned}
	 	\end{equation*} The second equality is because for all $\ell \in I$ and $C' \in \cC \setminus  \{\emptyset\}$ it holds that $\bar{\rho}_{\ell,C'} = 0$. The first inequality is because $\bar{\psi}$ satisfies constraint  \eqref{F2} of the $\bar{u}$-polytope by Claim~\ref{ob:psi}. The second inequality is by Step~\ref{step:shiftreturn} in Algorithm~\ref{Alg:shifting}. 
	 	
	 	\item $C = \emptyset$. 	\begin{equation*}
	 		\begin{aligned}
	 			\sum_{\ell \in I} \bar{\lambda}_{\ell, \emptyset} \cdot s(\ell) ={}
	 			 & \sum_{\ell \in I} \left(\bar{\rho}_{\ell, \emptyset}+\bar{\psi}_{\ell, \emptyset}\right) \cdot s(\ell)  =  
	 			 \sum_{\ell \in I\setminus L} \bar{\mu}_{\ell} \cdot s(\ell)+\sum_{\ell \in I} \bar{\psi}_{\ell, \emptyset} \cdot s(\ell)
	 			 \\ \leq{} & \eps\|\bar{y}\|+(1-s( \emptyset))\bar{u}_{\emptyset} = \eps\|\bar{y}\|+\bar{u}_{\emptyset} \leq \bar{z}_{\emptyset} = (1-s( \emptyset)) \cdot \bar{z}_{\emptyset}.
	 		\end{aligned}
	 	\end{equation*}  The first inequality is by Claim~\ref{mu:size} and because $\bar{\psi}$ satisfies constraint  \eqref{F2} of the $\bar{u}$-polytope by Claim~\ref{ob:psi}. The second inequality is by Step~\ref{step:shiftreturn} in Algorithm~\ref{Alg:shifting}. 
	 \end{enumerate}
	 
	 	 \end{proof}

	 	\begin{claim}
	 	\label{LA7}
	 	$\bar{\lambda}$ satisfies constraint \eqref{F5} of the $\bar{z}$-polytope. 
	 \end{claim}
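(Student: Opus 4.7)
The plan is to mirror the case analysis used in Claims~\ref{LA5} and~\ref{LA6}: fix $G\in\cG$ and $C\in\cC$, decompose $\bar{\lambda}=\bar{\rho}+\bar{\psi}$, and bound each summand on $G$ separately. The $\bar{\psi}$-part is always handled by Claim~\ref{ob:psi}, which guarantees $\sum_{\ell\in G}\bar{\psi}_{\ell,C}\leq \bar{u}_C\cdot(k(G)-|C\cap G|)$. The $\bar{\rho}$-part is either zero or a small tail that we absorb into the extra mass added to $\bar{z}$ over $\bar{u}$ in Step~\ref{step:shiftreturn} of Algorithm~\ref{Alg:shifting}.

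First I would dispose of the easy case where $C\notin\{\emptyset\}\cup\{r(G')\mid G'\in\cG_\eta\cup\cA\}$. By the definition of $\bar{\rho}$, we have $\bar{\rho}_{\ell,C}=0$ for every $\ell\in I$, so the bound collapses to $\sum_{\ell\in G}\bar{\psi}_{\ell,C}\leq \bar{u}_C(k(G)-|C\cap G|)\leq \bar{z}_C(k(G)-|C\cap G|)$, where the last inequality follows from $\bar{z}_C\geq \bar{u}_C$ (Step~\ref{step:shiftreturn}). Next, for $C=\emptyset$, only small items contribute to $\bar{\rho}_{\cdot,\emptyset}$, and I would apply Claim~\ref{mu:group} to get $\sum_{\ell\in G\setminus L}\bar{\mu}_\ell\leq \eps\|\bar{y}\|+2$. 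Combined with the $\bar{\psi}$-bound and the fact that $\bar{z}_\emptyset - \bar{u}_\emptyset = 4\eps\|\bar{y}\|+\eps^{-3}$ (which, multiplied by $k(G)\geq 1$, comfortably dominates $\eps\|\bar{y}\|+2$ for $\eps<0.1$), the inequality follows.

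The remaining and main case is $C=r(G^*)$ for some $G^*\in\cG_\eta\cup\cA$. If $G\neq G^*$, then no $\ell\in G$ satisfies $\ell\in L\cap G^*$, so $\bar{\rho}_{\ell,r(G^*)}=0$ on $G$, and the same two-line argument as in the first case works (noting $|r(G^*)\cap G|=0$). If $G=G^*$, the $\bar{\rho}$-contribution on $G^*$ is $\sum_{\ell\in L\cap G^*}\bar{\mu}_\ell$, which is bounded via Claim~\ref{mu:group:ups} by $\eps^{\upsilon}\|\bar{y}\|+2$, whereas $|r(G^*)\cap G^*|=1$, so by Claim~\ref{ob:psi} the $\bar{\psi}$-contribution is at most $\bar{u}_{r(G^*)}(k(G^*)-1)$. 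Summing and using $\bar{z}_{r(G^*)}-\bar{u}_{r(G^*)}=\eps^{\upsilon}\|\bar{y}\|+2$ (from Step~\ref{step:shiftreturn}) gives
$$\sum_{\ell\in G^*}\bar{\lambda}_{\ell,r(G^*)}\leq \bar{u}_{r(G^*)}(k(G^*)-1)+(\eps^{\upsilon}\|\bar{y}\|+2)\leq \bar{z}_{r(G^*)}(k(G^*)-1),$$
exactly matching the RHS of \eqref{F5}.

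The main obstacle I anticipate is bookkeeping: one must confirm that for each $C$ in the support of $\bar{z}$, the additive augmentation chosen in Step~\ref{step:shiftreturn} is paired with the correct bound on $\bar{\mu}$. In particular, the augmentation $\eps^{\upsilon}\|\bar{y}\|+2$ at $r(G^*)$ is matched precisely by the tighter Claim~\ref{mu:group:ups} (valid for important groups), whereas the weaker bound $\eps\|\bar{y}\|+2$ of Claim~\ref{mu:group} must be absorbed by the much larger $4\eps\|\bar{y}\|+\eps^{-3}$ augmentation at $\emptyset$. Verifying that the $(k(G^*)-1)$ factor on the right-hand side of \eqref{F5} provides at least one unit of slack in the case $G=G^*$ (and hence absorbs the entire $\bar{\mu}$-tail) is the sole delicate point, and it is exactly why the construction in Step~\ref{step:shiftreturn} augments $r(G)$ by one copy of the class-frequency budget.
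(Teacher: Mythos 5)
Your easy case and your $C=\emptyset$ case coincide with the paper's argument, but your main case $C=r(G^*)$, $G=G^*$ contains a genuine gap. You place the leftover mass $\bar{\mu}_\ell$ of the large items of $G^*$ on the \emph{configuration-type} $r(G^*)$ and then need
$\bar{u}_{r(G^*)}\bigl(k(G^*)-1\bigr)+\bigl(\eps^{\upsilon}\|\bar{y}\|+2\bigr)\leq \bar{z}_{r(G^*)}\bigl(k(G^*)-1\bigr)$;
since $\bar{z}_{r(G^*)}-\bar{u}_{r(G^*)}=\eps^{\upsilon}\|\bar{y}\|+2$ by Step~\ref{step:shiftreturn}, this forces $k(G^*)-1\geq 1$. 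When $k(G^*)=1$ (the Group Bin Packing setting, which the paper explicitly covers), the right-hand side of \eqref{F5} for $C=r(G^*)$ and $G=G^*$ is $\bar{z}_{r(G^*)}\cdot(1-1)=0$, so the constraint is actually \emph{violated} whenever some large item of $G^*$ carries positive $\bar{\mu}$-mass; there is no "unit of slack" coming from the $(k(G^*)-1)$ factor, contrary to your closing paragraph, and no choice of augmentation in Step~\ref{step:shiftreturn} can fix a right-hand side that is identically zero.

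The paper avoids this by treating $\bar{\rho}_{\ell,r(G)}$ as an assignment to the \emph{slot-type} $\min G\in I$ (the maximal slot of $G$), not to the configuration-type $\{\min G\}$. Then $\bar{\rho}$ vanishes on every nonempty configuration, so the proof of this claim needs only two cases: for $C\neq\emptyset$, $\sum_{\ell\in G}\bar{\lambda}_{\ell,C}=\sum_{\ell\in G}\bar{\psi}_{\ell,C}\leq\bar{u}_C\,(k(G)-|G\cap C|)\leq\bar{z}_C\,(k(G)-|G\cap C|)$, and for $C=\emptyset$ the argument is the one you gave. The augmentation $\eps^{\upsilon}\|\bar{y}\|+2$ on $r(G)$ is consumed not in \eqref{F5} but in constraint \eqref{F3} for the slot $\min G$ (Claim~\ref{LA8}), where it enlarges $\sum_{C\in\cC[\min G]}\bar{z}_C$ because $r(G)\in\cC[\min G]$, and no $(k(G)-1)$ factor appears there. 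Note also that your configuration-type reading would already clash with constraint \eqref{F1}: a large item never satisfies $\ell\in\textsf{fit}(C)$ for a configuration $C$ in the sense of \eqref{fitS}, whereas it does satisfy $\ell\in\textsf{fit}(\min G)$ in the slot sense of \eqref{fitl}. To repair your proof, route the $\bar{\mu}$-mass of large items to the maximal slot rather than to the configuration $r(G)$; your remaining cases then go through essentially as in the paper.
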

	 
	 \begin{proof}
	 	Let $G \in \cG$ and $C \in \cC$. We split into two cases.
	 	
	 	\begin{enumerate}
	 		\item $C \neq \emptyset$. 	\begin{equation*}
	 			\label{LA6aux}
	 			\begin{aligned}
	 			\sum_{\ell \in G} \bar{\lambda}_{\ell,C} = \sum_{\ell \in G} \left(\bar{\rho}_{\ell,C}+\bar{\psi}_{\ell,C}\right) =   \sum_{\ell \in G} \bar{\psi}_{\ell,C} \leq \bar{u}_C \cdot \left(   k(G) - |G \cap C|  \right) \leq \bar{z}_C \cdot \left(   k(G) - |G \cap C|  \right).
	 			\end{aligned}
	 		\end{equation*} The second equality is because for all $\ell \in I$ and $C' \neq \emptyset$ it holds that $\bar{\rho}_{\ell,C'} = 0$. The first inequality is because $\bar{\psi}$ satisfies constraint  \eqref{F5} of the $\bar{u}$-polytope by Claim~\ref{ob:psi}. The second inequality is by Step~\ref{step:shiftreturn} in Algorithm~\ref{Alg:shifting}. 
	 		
	 		\item $C = \emptyset$. 	\begin{equation*}
	 			\begin{aligned}
	 				\sum_{\ell \in G} \bar{\lambda}_{\ell,\emptyset} ={} & \sum_{\ell \in G} \left(\bar{\rho}_{\ell,\emptyset}+\bar{\psi}_{\ell,\emptyset}\right)   =  \sum_{\ell \in G\setminus L} \bar{\mu}_{\ell}+ \sum_{\ell \in G} \bar{\psi}_{\ell,\emptyset}  \leq \eps  \cdot \|\bar{y}\|+2+\bar{u}_{\emptyset} \cdot k(G)  \\ \leq{} & \left( \eps\|\bar{y}\|+2 \right) \cdot k(G)+\bar{u}_{\emptyset} \cdot k(G) \leq \bar{z}_{\emptyset} \cdot k(G).
	 			\end{aligned}
	 		\end{equation*}  The first inequality is by Claim~\ref{mu:group} and because $\bar{\psi}$ satisfies constraint  \eqref{F5} of the $\bar{u}$-polytope by Claim~\ref{ob:psi}. The last inequality is by Step~\ref{step:shiftreturn} in Algorithm~\ref{Alg:shifting}. 
	 	\end{enumerate}

	 \end{proof}

	 	\begin{claim}
	 	\label{LA8}
	 	$\bar{\lambda}$ satisfies constraint \eqref{F3} of the $\bar{z}$-polytope. 
	 \end{claim}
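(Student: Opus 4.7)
The plan is to reduce the desired inequality for $\bar{\lambda}$ to the analogous inequality already established for $\bar{\psi}$. Fix an arbitrary slot-type $j \in I$. The key first observation is that the contribution of $\bar{\rho}$ to the sum over $\ell$ vanishes: by the construction of $\bar{\rho}$ immediately preceding \eqref{LA}, the only entries of $\bar{\rho}$ that can be nonzero are of the form $\bar{\rho}_{\ell,\emptyset}$ or $\bar{\rho}_{\ell,r(G)}$ for some $G \in \cG$. Both $\emptyset$ and $r(G)$ lie in $\cC$, not in $I$, so $\bar{\rho}_{\ell,j} = 0$ for every $\ell \in I$. Combined with $\bar{\lambda} = \bar{\rho} + \bar{\psi}$, this yields
\[
\sum_{\ell \in I} \bar{\lambda}_{\ell,j} \;=\; \sum_{\ell \in I} \bar{\psi}_{\ell,j}.
\]

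Next I would invoke the part of Claim~\ref{ob:psi} captured by Observation~\ref{psi8}, which states that $\bar{\psi}$ satisfies constraint~\eqref{F3} of the $\bar{u}$-polytope. This immediately gives
\[
\sum_{\ell \in I} \bar{\psi}_{\ell,j} \;\leq\; \sum_{C \in \cC[j]} \bar{u}_C.
\]
Finally, Step~\ref{step:shiftreturn} of Algorithm~\ref{Alg:shifting} defines $\bar{z}$ as $\bar{u}$ plus a nonnegative combination of indicators (namely $(4\eps\|\bar{y}\|+\eps^{-3})\cdot \mathbbm{1}^{\emptyset}$ and the $\mathbbm{1}^{r(G)}$ terms), so $\bar{z}_C \geq \bar{u}_C$ coordinate-wise. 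Chaining the two estimates produces
\[
\sum_{\ell \in I} \bar{\lambda}_{\ell,j} \;\leq\; \sum_{C \in \cC[j]} \bar{u}_C \;\leq\; \sum_{C \in \cC[j]} \bar{z}_C,
\]
which is exactly constraint~\eqref{F3} of the $\bar{z}$-polytope for the slot-type $j$.

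Since $j \in I$ was arbitrary, this establishes Claim~\ref{LA8}. There is no substantive obstacle here: the real work has already been done in proving that $\bar{\psi}$ satisfies the analogous constraint for the $\bar{u}$-polytope (Observation~\ref{psi8}) and in the construction of $\bar{\rho}$ that deliberately places all its mass on configuration-types, so that it contributes nothing to the slot-type constraint~\eqref{F3}.
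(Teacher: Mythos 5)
Your argument breaks at its very first step, the assertion that $\bar{\rho}_{\ell,j}=0$ for every slot-type $j\in I$. It is true that $r(G)$ is introduced as the singleton configuration $\{\min G\}$, but the construction of $\bar{\rho}$ only works -- and is read this way throughout the proofs of the neighbouring Claims~\ref{LA5}--\ref{LA7} -- when the $\bar{\mu}$-mass written at ``$r(G)$'' is placed on the \emph{slot-type} $\min G$, not on the configuration-type $\{\min G\}$. Indeed, under your configuration-type reading, a large item $\ell\in L\cap G$ with $\bar{\mu}_\ell>0$ would already violate constraint \eqref{F1}, since $\ell\notin\textnormal{\textsf{fit}}(\{\min G\})$ by \eqref{fitS} (large items never fit \emph{with} any configuration); this is why Claim~\ref{LA5} justifies exactly these entries via \eqref{fitl} and Claim~\ref{mu:fit2}, i.e.\ via the slot notion of fit, and why the proofs of Claims~\ref{LA6} and~\ref{LA7} explicitly assert that $\bar{\rho}_{\ell,C'}=0$ for every configuration $C'\neq\emptyset$ (with $k(G)=1$ the configuration reading would also break \eqref{F5} at $C=r(G)$). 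So in the intended construction $\bar{\rho}$ does load slot-types: for an important group $G$, the slot $j$ with $\{j\}=r(G)$ receives the $\bar{\mu}$-mass of (large) items of $G$, and your claim that the $\bar{\rho}$-contribution to \eqref{F3} vanishes is false precisely at those slots.

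Consequently the reduction to Observation~\ref{psi8} plus $\bar{z}\geq\bar{u}$ misses the one nontrivial case, which is the case the paper's proof treats separately: a slot-type $j$ with $\textnormal{\textsf{group}}(j)\in\cG_{\eta}\cup\cA$ and $j$ the largest item of its group. There $\sum_{\ell\in I}\bar{\lambda}_{\ell,j}$ can exceed $\sum_{\ell\in I}\bar{\psi}_{\ell,j}$ by up to $\sum_{\ell\in \textnormal{\textsf{group}}(j)}\bar{\mu}_\ell$, and \eqref{F3} is saved not by monotonicity alone but by the extra term $\left(\eps^{\upsilon}\|\bar{y}\|+2\right)\cdot\mathbbm{1}^{r(G)}$ added in Step~\ref{step:shiftreturn} of Algorithm~\ref{Alg:shifting}: since $r(G)\in\cC[j]$, one gets $\sum_{\ell\in I}\bar{\lambda}_{\ell,j}\leq\left(\eps^{\upsilon}\|\bar{y}\|+2\right)+\sum_{C\in\cC[j]}\bar{u}_C\leq\sum_{C\in\cC[j]}\bar{z}_C$, where the first bound uses Claim~\ref{mu:group:ups}. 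For all other slot-types your computation coincides with the paper's second case and is correct. Note, finally, that if one insisted on your literal reading, Claim~\ref{LA8} would indeed become trivial, but Claims~\ref{LA5} and~\ref{LA7} (and hence Claim~\ref{lem:shifting1}) would be false, so that reading cannot be the one under which the lemma is to be proved.
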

 
 \begin{proof}
 	Let $j \in I$. We split into two cases, as follows. \begin{enumerate}
 		\item If $\textsf{group}(j) \in \cG_{\eta}\cup \cA$ and $j = r\left(   \textsf{group}(j) \right)$. Then, 		\begin{equation*}
 			\begin{aligned}
 			\sum_{\ell \in I} \bar{\lambda}_{\ell,j} = \sum_{\ell \in I} \left(\bar{\rho}_{\ell,j}+\bar{\psi}_{\ell,j}\right) =  \sum_{\ell \in \textsf{group}(j)} \bar{\mu}_{\ell,j} + \sum_{\ell \in I} \bar{\psi}_{\ell,j} \leq \eps^{\upsilon}\cdot\|\bar{y}\|+2+\sum_{C \in \cC[j]} \bar{u}_C \leq \sum_{C \in \cC[j]} \bar{z}_C.
 			\end{aligned}
 		\end{equation*}  The first inequality is by Claim~\ref{mu:group:ups} and because $\bar{\psi}$ satisfies constraint  \eqref{F3} of the $\bar{u}$-polytope by Claim~\ref{ob:psi}. The second inequality is by Step~\ref{step:shiftreturn} in Algorithm~\ref{Alg:shifting}. 
 		
 		\item Otherwise, 	$$\sum_{\ell \in I} \bar{\lambda}_{\ell,j} = \sum_{\ell \in I} \left(\bar{\rho}_{\ell,j}+\bar{\psi}_{\ell,j}\right) = \sum_{\ell \in I} \bar{\psi}_{\ell,j} \leq \sum_{C \in \cC[j]} \bar{u}_C \leq \sum_{C \in \cC[j]} \bar{z}_C.$$ The first inequality is because $\bar{\psi}$ satisfies constraint  \eqref{F3} of the $\bar{u}$-polytope by Claim~\ref{ob:psi}. The second inequality is by Step~\ref{step:shiftreturn} in Algorithm~\ref{Alg:shifting}. 
 	\end{enumerate}

 \end{proof}

	 \begin{claim}
	 	\label{LA9}
	 	$\bar{\lambda}$ satisfies constraint \eqref{F4} of the $\bar{z}$-polytope. 
	 \end{claim}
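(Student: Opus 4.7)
The plan is essentially a one-line computation once the right decomposition is made. Fix an item $\ell \in I$; I would decompose $\sum_{t \in I \cup \cC}\bar{\lambda}_{\ell,t}$ via \eqref{LA} into the $\bar{\rho}$ and $\bar{\psi}$ contributions, and then close the argument using the very definition of $\bar{\mu}_\ell$ from \eqref{mu}.

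For the $\bar{\rho}$ part, I would observe that the definition of $\bar{\rho}$ places an entry of value $\bar{\mu}_\ell$ in the row of $\ell$: the entry $\bar{\rho}_{\ell,\emptyset}=\bar{\mu}_\ell$ when $\ell \in I \setminus L$, and the entry $\bar{\rho}_{\ell, r(\textnormal{\textsf{group}}(\ell))}=\bar{\mu}_\ell$ when $\ell \in L$. Since every entry of $\bar{\rho}$ is non-negative, this yields $\sum_{t \in I \cup \cC}\bar{\rho}_{\ell,t} \geq \bar{\mu}_\ell$, so \eqref{LA} gives
\[
\sum_{t \in I \cup \cC} \bar{\lambda}_{\ell,t} \;\geq\; \bar{\mu}_\ell + \sum_{t \in I \cup \cC} \bar{\psi}_{\ell,t}.
\]

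Next I would case-split on the size of $\sum_{t \in I \cup \cC} \bar{\psi}_{\ell,t}$. If this quantity is at least $1$, the inequality $\sum_{t} \bar{\lambda}_{\ell,t} \geq 1$ is immediate from $\bar{\mu}_\ell \geq 0$. Otherwise, by \eqref{mu} we have $\bar{\mu}_\ell = 1 - \sum_{t\in I \cup \cC} \bar{\psi}_{\ell,t}$, and substituting into the displayed inequality gives exactly $\sum_{t} \bar{\lambda}_{\ell,t} \geq 1$.

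There is no real obstacle here: $\bar{\mu}$ was defined precisely as the slack needed for constraint \eqref{F4}, and distributing this slack into either $\emptyset$ (for small items) or $r(\textnormal{\textsf{group}}(\ell))$ (for large items) is what $\bar{\rho}$ does. The non-trivial consequences of this distribution are already handled by Claims~\ref{LA6}--\ref{LA8}, where the increased selection of $\emptyset$ and of the maximal slots in Step~\ref{step:shiftreturn} of Algorithm~\ref{Alg:shifting} absorbs the extra mass for the size and cardinality constraints \eqref{F2}, \eqref{F5}, \eqref{F3}; for \eqref{F4} itself, no such accounting is needed, and the proof is a direct two-case computation.
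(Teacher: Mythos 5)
Your proposal is correct and follows essentially the same route as the paper: decompose $\bar{\lambda}=\bar{\rho}+\bar{\psi}$, note the $\bar{\rho}$-row of $\ell$ contributes (at least) $\bar{\mu}_{\ell}$, and use the definition $\bar{\mu}_{\ell}=\max\{0,1-\sum_{t}\bar{\psi}_{\ell,t}\}$; your two-case split is just an unfolding of the paper's one-line computation $\max\{0,1-\Sigma\}+\Sigma\geq 1$. The only (harmless) difference is that you use the inequality $\sum_{t}\bar{\rho}_{\ell,t}\geq\bar{\mu}_{\ell}$ where the paper asserts equality, which makes your version slightly more robust to how $\bar{\rho}$ is defined on the maximal slots.
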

	 
	 \begin{proof}
	 	Let $\ell \in I$. 
	 	$$\sum_{t \in I \cup \cC} \bar{\lambda}_{\ell,t} = \sum_{t \in I \cup \cC} \left(\bar{\rho}_{\ell,j}+\bar{\psi}_{\ell,t}\right) =  \bar{\mu}_{\ell}+\sum_{t \in I \cup \cC} \bar{\psi}_{\ell,t} = \max \left\{0,1-\sum_{t \in I \cup \cC} \bar{\psi}_{\ell,t}\right\}+\sum_{t \in I \cup \cC} \bar{\psi}_{\ell,t} \geq 1.$$ 
	 	
	 	 \end{proof}
 	 
 	 The following observation follows by Claim~\ref{LA5}, Claim~\ref{LA6}, Claim~\ref{LA7}, Claim~\ref{LA8}, and Claim~\ref{LA9}.

	\noindent{\bf Proof of Claim~\ref{lem:shifting1}:}
The proof follows by Claim~\ref{LA5}, Claim~\ref{LA6}, Claim~\ref{LA7}, Claim~\ref{LA8}, and Claim~\ref{LA9}. \qed

\begin{claim}
	\label{lem:shifting2a}
	Algorithm~\ref{Alg:shifting} returns a prototype $\bar{z}$  of $\mathcal{I}$ such that $|\textnormal{\textsf{supp}}(\bar{z})| \leq Q(\eps)$.
\end{claim}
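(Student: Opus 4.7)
\textbf{Proof proposal for Claim~\ref{lem:shifting2a}.}

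The plan is a direct bookkeeping argument based on the construction of $\bar{z}$ in Step~\ref{step:shiftreturn} of Algorithm~\ref{Alg:shifting}. Observe that $\bar{z}$ is expressed as a sum of three terms: $\bar{u}$, the empty-configuration indicator $\mathbbm{1}^{\emptyset}$, and indicators $\mathbbm{1}^{r(G)}$ for each $G \in \cG_{\eta} \cup \cA$. Since all coefficients in these sums are non-negative, no cancellation can occur, and hence
\[
\textsf{supp}(\bar{z}) \subseteq \textsf{supp}(\bar{u}) \cup \{\emptyset\} \cup \{r(G) : G \in \cG_{\eta} \cup \cA\}.
\]
So it suffices to bound each of these three pieces separately.

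For the first piece, note that by Step~\ref{step:u}, $\bar{u}$ is a non-negative linear combination of indicators $\mathbbm{1}^{P(C)}$ over $C \in \textsf{supp}(\bar{y})$, so $\textsf{supp}(\bar{u}) \subseteq \{P(C) : C \in \textsf{supp}(\bar{y})\}$. Thus, applying Property~4 of Lemma~\ref{lem:P(C)}, we obtain $|\textsf{supp}(\bar{u})| \leq Q(\eps) - 3K(\eps)$. For the second piece, clearly $|\{\emptyset\}| = 1$. For the third piece, Claim~\ref{AETA} yields $|\cG_{\eta} \cup \cA| \leq 2K(\eps)$, and hence the set of maximal slots contributes at most $2K(\eps)$ configurations to the support.

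Combining these bounds,
\[
|\textsf{supp}(\bar{z})| \leq (Q(\eps) - 3K(\eps)) + 1 + 2K(\eps) = Q(\eps) - K(\eps) + 1 \leq Q(\eps),
\]
where the final inequality uses $K(\eps) = \eps^{-\eps^{-2}} \geq 1$ for $\eps \in (0, 0.1)$. This is exactly the required bound. There is no substantive obstacle here; the point of choosing the constant $3K(\eps)$ in the statement of Lemma~\ref{lem:P(C)}~(4) was precisely to absorb the additional $2K(\eps) + 1$ configurations introduced in Step~\ref{step:shiftreturn}.
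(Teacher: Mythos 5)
Your proof is correct and follows essentially the same route as the paper: decompose $\textsf{supp}(\bar{z})$ into the projections $\{P(C) : C \in \textsf{supp}(\bar{y})\}$, the empty configuration, and the maximal slots $r(G)$, then apply Property~4 of Lemma~\ref{lem:P(C)} and Claim~\ref{AETA} to get $(Q(\eps)-3K(\eps)) + 1 + 2K(\eps) \leq Q(\eps)$. Nothing further is needed.
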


\begin{proof}
		\begin{equation*}
		\begin{aligned}
			|\textsf{supp}(\bar{z})| \leq{} & |\{P(C)~|~ C \in \textsf{supp}(\bar{y})\}|+1+|\cG_{\eta}\cup \cA| \\ \leq{} & Q(\eps)-3K(\eps)+ |\cG_{\eta} \cup \cA|+1 \leq Q(\eps)-3K(\eps)+2K(\eps)+1 \leq Q(\eps). 
		\end{aligned}
	\end{equation*} 
 The first inequality is because for all $C \in \textsf{supp}(\bar{z})$ either $C  = \emptyset$, there is $G \in \cG_{\eta} \cup \cA$ such that $C = r(G)$, or there is $C' \in \textsf{supp}(\bar{y})$ such that $P(C') = C$ by Step~\ref{step:u} and Step~\ref{step:shiftreturn} of Algorithm~\ref{Alg:shifting}. The second inequality is by Lemma~\ref{lem:P(C)}. The third inequality is by Claim~\ref{AETA}. 
\end{proof}

\begin{claim}
	\label{lem:shifting2}
 Algorithm~\ref{Alg:shifting} returns a prototype $\bar{z}$  of $\mathcal{I}$ such that $\|\bar{z}\| \leq (1+5\eps) \|\bar{y}\|+Q(\eps)$. 
\end{claim}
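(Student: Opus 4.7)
The plan is to compute $\|\bar{z}\|$ by breaking it into its three components as defined in Step~\ref{step:shiftreturn} of Algorithm~\ref{Alg:shifting}, and then bound each one. The dominant term is $\|\bar{u}\|$, which I would evaluate directly: since $\|\mathbbm{1}^{P(C)}\|=1$ for every configuration $C$, linearity of the $\ell_1$-norm gives
\[
\|\bar{u}\| \;=\; \left(1+\tfrac{2\eps^{-\upsilon}}{\|\bar{y}\|}\right)\sum_{C \in \textnormal{\textsf{supp}}(\bar{y})} \bar{y}_C \;=\; \|\bar{y}\| + 2\eps^{-\upsilon}.
\]
The remaining two summands contribute exactly $4\eps\|\bar{y}\|+\eps^{-3}$ and $|\cG_{\eta}\cup \cA|\cdot(\eps^{\upsilon}\|\bar{y}\|+2)$ respectively, since the indicators have unit norm.

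Next I would bound $|\cG_{\eta}\cup \cA|$ using Claim~\ref{AETA}, which gives $|\cG_{\eta}\cup \cA|\leq 2K(\eps)$. The only $\|\bar{y}\|$-dependent contribution from this term is $2K(\eps)\cdot \eps^{\upsilon}\cdot \|\bar{y}\|$, and the key calculation is
\[
2K(\eps)\cdot \eps^{\upsilon} \;=\; 2\eps^{-\eps^{-2}}\cdot \eps^{3\eps^{-2}} \;=\; 2\eps^{2\eps^{-2}}\;\leq\; \eps,
\]
which holds easily for $\eps\in(0,0.1)$ (the exponent $2\eps^{-2}$ is at least $200$). Collecting the $\|\bar{y}\|$-proportional pieces yields a factor of at most $(1+4\eps+\eps) = (1+5\eps)$.

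Finally, for the additive constants I need to show that $2\eps^{-\upsilon} + \eps^{-3} + 4K(\eps) \leq Q(\eps)$. Both $\eps^{-\upsilon}=\eps^{-3\eps^{-2}}$ and $K(\eps)=\eps^{-\eps^{-2}}$ equal $\exp\bigl(O(\eps^{-2}\ln \eps^{-1})\bigr)$, while $Q(\eps)=\exp(\eps^{-17})$, so the inequality is loose and follows by a direct comparison of exponents in the range $\eps\in(0,0.1)$. Putting everything together gives $\|\bar{z}\|\leq (1+5\eps)\|\bar{y}\| + Q(\eps)$, as claimed. The only nontrivial step is the parameter inequality $2K(\eps)\eps^{\upsilon}\leq \eps$; since the design of $\upsilon = 3\eps^{-2}$ was explicitly tailored to dominate $K(\eps)$, this is essentially a sanity check on the constants chosen in the algorithm.
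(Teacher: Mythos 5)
Your proposal is correct and follows essentially the same route as the paper's proof: decompose $\|\bar{z}\|$ over the three summands in Step~\ref{step:shiftreturn} (the paper uses the triangle inequality where you use exact additivity of the $\ell_1$-norm on nonnegative vectors, a harmless refinement), bound $|\cG_{\eta}\cup\cA|\leq 2K(\eps)$ via Claim~\ref{AETA}, absorb $2K(\eps)\eps^{\upsilon}\|\bar{y}\|$ into an extra $\eps\|\bar{y}\|$, and collect the remaining constants $2\eps^{-\upsilon}+\eps^{-3}+4K(\eps)$ into $Q(\eps)$. No substantive difference from the paper's argument.
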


		\begin{proof}
We use the following inequalities.

\begin{equation}
	\label{imp:1}
\begin{aligned}
\left\|     \sum_{G \in \cG_{\eta} \cup \cA} \left( \eps^{\upsilon} \cdot \|\bar{y}\|+2 \right) \cdot \mathbb{I}^{r(G)}     \right\| \leq {} & \sum_{G \in \cG_{\eta} \cup \cA} \left( \eps^{\upsilon} \cdot \|\bar{y}\|+2 \right) \cdot \|\mathbb{I}^{r(G)}\| \\ ={} &  |\cG_{\eta} \cup \cA| \cdot (\eps^{\upsilon}\cdot \|\bar{y}\|+2) \leq 2K(\eps) \cdot (\eps^{\upsilon} \cdot \|\bar{y}\|+2) \\ \leq{} & \eps^{-2\eps^{-2}} \cdot \eps^{3\eps^{-2}} \cdot \|\bar{y}\|+ 4K(\eps) \\ \leq{} & \eps \cdot \|\bar{y}\|+ 4K(\eps). 
\end{aligned}
\end{equation} The first inequality is by the triangle inequality. The second inequality is by Claim~\ref{AETA}.

\begin{equation}
	\label{imp:2}
	\begin{aligned}
    {} & \left\| \left(4\eps \|\bar{y}\|+\eps^{-3} \right)\cdot \mathbb{I}^{\emptyset} +\left(1+\frac{2\eps^{-\upsilon}}{\|\bar{y}\|}\right) \sum_{C \in \textsf{supp}(\bar{y})} \bar{y}_C \cdot \mathbb{I}^{P(C)} \right\| \\
     \leq{}  &  \left\|\left(4\eps \|\bar{y}\|+\eps^{-3} \right)\ \cdot \mathbb{I}^{\emptyset}\right\|+ \left(1+\frac{2\eps^{-\upsilon}}{\|\bar{y}\|}\right)  \left\| \sum_{C \in \textsf{supp}(\bar{y})} \bar{y}_C \cdot \mathbb{I}^{P(C)} \right\| \\ \leq{} & 
     \left( 4\eps \|\bar{y}\| +\eps^{-3}\right)\cdot  \| \mathbb{I}^{\emptyset}\|+  \left(1+\frac{2\eps^{-\upsilon}}{\|\bar{y}\|}\right)  \sum_{C \in \textsf{supp}(\bar{y})} \bar{y}_C \cdot \| \mathbb{I}^{P(C)} \|  \\ 
     ={} & 4\eps \|\bar{y}\| +\eps^{-3}+ \left(1+\frac{2\eps^{-\upsilon}}{\|\bar{y}\|}\right)  \|\bar{y}\| = (1+4\eps)\cdot \|\bar{y}\|  +2\eps^{-\upsilon} +\eps^{-3}
	\end{aligned}
\end{equation} 	The first inequality is by the triangle inequality. The third equality is since for all $C \in \cC$ it holds that $\|\mathbb{I}^C\| = 1$ by the definition of the indicator of $C$. Now, 

			\begin{align*}
				\|\bar{z}\| \leq{} & \left\|  \sum_{G \in \cG_{\eta} \cup \cA} \left( \eps^{\upsilon} \cdot \|\bar{y}\|+2 \right) \cdot \mathbb{I}^{r(G)}  +
				\left(4\eps \|\bar{y}\| +\eps^{-3}\right)\cdot \mathbb{I}^{\emptyset} +\left(1+\frac{2\cdot \eps^{-\upsilon}}{\|\bar{y}\|}\right) \sum_{C \in \textsf{supp}(\bar{y})} \bar{y}_C \cdot \mathbb{I}^{P(C)} \right\| \\ 
				\leq{} &  \eps \cdot \|\bar{y}\|+ 4K(\eps)+(1+4\eps)\cdot \|\bar{y}\|  +2\eps^{-v} +\eps^{-3}\leq (1+5\eps) \cdot \|\bar{y}\|+7\eps^{-3\eps^{-2}} \\ \leq{} & (1+5\eps) \cdot \|\bar{y}\|+Q(\eps).  
			\end{align*} The first equality is by Step~\ref{step:u} and Step~\ref{step:shiftreturn} of Algorithm~\ref{Alg:shifting}. The second inequality is by the triangle inequality and by \eqref{imp:1} and \eqref{imp:2}. The last inequalities are because $\eps <0.1$, $K(\eps)  = \eps^{-\eps^{-2}}$ and $Q(\eps) = \exp(\eps^{-17})$.

		\end{proof}
		
		\begin{claim}
			\label{lem:shifting5}
		Algorithm~\ref{Alg:shifting} returns a prototype $\bar{z}$  of $\mathcal{I}$ such that for all $C \in \textnormal{
		\textsf{supp}}(\bar{z})$ it holds that $|C| \leq \eps^{-10}$. 
		\end{claim}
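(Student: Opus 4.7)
The plan is a direct case analysis based on Step~\ref{step:u} and Step~\ref{step:shiftreturn} of Algorithm~\ref{Alg:shifting}. By construction, $\bar{z}$ is a non-negative linear combination of indicator vectors $\mathbbm{1}^{C'}$ where $C'$ belongs to one of three families: $(a)$ the empty configuration $\emptyset$, coming from the term $(4\eps\|\bar{y}\| + \eps^{-3})\cdot \mathbbm{1}^{\emptyset}$; $(b)$ the maximal slots $r(G)$ for $G \in \cG_{\eta} \cup \cA$, coming from the term $\sum_{G} (\eps^{\upsilon}\|\bar{y}\|+2)\cdot \mathbbm{1}^{r(G)}$; and $(c)$ the projections $P(C)$ for $C \in \textsf{supp}(\bar{y})$, coming from $\bar{u}$ as defined in Step~\ref{step:u}. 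Since all coefficients are non-negative, $\textsf{supp}(\bar{z})$ is contained in the union of these three families.

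I would then verify the cardinality bound $|C'| \leq \eps^{-10}$ for each of the three families. For $(a)$, $|\emptyset| = 0 \leq \eps^{-10}$ trivially. For $(b)$, $r(G) = \{\min G\}$ is a singleton, so $|r(G)| = 1 \leq \eps^{-10}$. For $(c)$, Property~3 of Lemma~\ref{lem:P(C)} directly gives $|P(C)| \leq \eps^{-10}$ for every $C \in \textsf{supp}(\bar{y})$. Combining these three observations yields the claim. There is no real obstacle: the main content has already been established in Lemma~\ref{lem:P(C)}, and the claim amounts to reading off the three possible forms of configurations in $\textsf{supp}(\bar{z})$ from the algorithm.
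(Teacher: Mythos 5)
Your proof is correct and follows essentially the same route as the paper: the paper's own argument likewise reads off from Steps~\ref{step:u} and~\ref{step:shiftreturn} that any $C \in \textnormal{\textsf{supp}}(\bar{z})$ is either $\emptyset$, a maximal slot $r(G)$ (a singleton), or a projection $P(C')$ for some $C' \in \textnormal{\textsf{supp}}(\bar{y})$, and then invokes Lemma~\ref{lem:P(C)} for the last case. Your explicit remark that non-negativity of the coefficients confines the support to these three families is a minor elaboration, not a different argument.
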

		
		\begin{proof}
			Let $C \in \textsf{supp}(\bar{z})$ be a configuration. If $C = \emptyset$ or there is $G \in \cG_{\eta} \cup \cA$ such that $C = r(G)$ that contains a single item, then the claim is trivial. Otherwise, because $C \in \textsf{supp}(\bar{z})$, then by Step~\ref{step:u} of Algorithm~\ref{Alg:shifting}, there is $C' \in \textsf{supp}(\bar{y})$ such that $P(C') = C$.  Therefore,  $|C| = |P(C')| \leq \eps^{-10}$ by Lemma~\ref{lem:P(C)}. 
		\end{proof}

		\noindent{\bf Proof of Lemma~\ref{lem:Cnf}:} The proof follows by Claim~\ref{lem:shiftingPolynomial}, Claim~\ref{lem:shifting1}, Claim~\ref{lem:shifting2a}, Claim~\ref{lem:shifting2}, and Claim~\ref{lem:shifting5}. \qed

\section{Omitted Proofs of Section~\ref{sec:FromPolytope}}
\label{app:omittedFromPolytope}

\noindent{\bf Proof of Lemma~\ref{lem:matchingU}:}
For all $\ell \in U$, let $d(\ell) = |\{v \in V~|~ (\ell,v) \in E\}|$ be the degree of $\ell$ in $G$. For all $\ell \in U$ there is exactly one $j \in I$ such that $\bar{\gamma}_{\ell,j} = 1$ because $\bar{\gamma}$ satisfies constraint \eqref{F4} of the $\bar{z}^*$-polytope with equality; let $\ell^* = j$. Therefore, for every $\ell \in U$ it holds:

\begin{equation}
	\label{welldegree}
	d(\ell) =  |\{ (C,\ell^*,k) \in V~|~ (\ell,(C,\ell^*,k)) \in E\}| = \sum_{C \in \cC[\ell^*]~} \sum_{k \in [\bar{z}^*_C]} 1 =  \sum_{C \in \cC[\ell^*]} \bar{z}^*_C \geq \sum_{\ell' \in I} \bar{\gamma}_{\ell',\ell^*} \geq \bar{\gamma}_{\ell,\ell^*} = 1. 
\end{equation} The first inequality is because $\bar{\gamma}$ is in the $\bar{z}^*$-polytope. Define a fractional matching as a vector $\bar{M} \in [0,1]^E$ such that for all $(\ell,v) \in E$ define $\bar{M}_{(\ell,v)} = \frac{1}{d(\ell)}$; note that $\bar{M}$ is well-defined (i.e., for all $\ell \in U$ it holds that $d(\ell)>0$) by \eqref{welldegree}. We show below that $\bar{M}$ is a fractional matching in $G$ of size~$|U|$. 

\begin{enumerate}
	\item $\bar{M}$ is a feasible fractional matching in $G$. Let $\ell' \in U$. It holds that:
	
	$$\sum_{(\ell',v) \in E} \bar{M}_{(\ell',v)} = \sum_{(\ell',v) \in E}  \frac{1}{d(\ell)}  = d(\ell) \cdot \frac{1}{d(\ell)}=1.$$ In addition, let $(C,j,k) \in V$ and let $v = (C,j,k)$. It holds that:

	\begin{equation*}
		\begin{aligned}
			\sum_{(\ell,v) \in E} \bar{M}_{(\ell,v)} ={} & \sum_{(\ell,v) \in E}  \frac{1}{d(\ell)} = \sum_{\ell \in U \text{ s.t. } \bar{\gamma}_{\ell,j} = 1}  \frac{1}{d(\ell)} \leq \sum_{\ell \in U \text{ s.t. } \bar{\gamma}_{\ell,j} = 1} \bar{\gamma}_{\ell,j} \cdot \frac{1}{d(\ell)}\\ \leq{} &
			\sum_{\ell \in U \text{ s.t. } \bar{\gamma}_{\ell,j} = 1}  \bar{\gamma}_{\ell,j} \cdot \frac{1}{\sum_{\ell' \in I} \bar{\gamma}_{\ell',j}} \leq  \frac{1}{\sum_{\ell' \in I} \bar{\gamma}_{\ell',j}} \cdot \sum_{\ell \in I} \bar{\gamma}_{\ell,j} = 1.
		\end{aligned}
	\end{equation*} The second inequality is by \eqref{welldegree}. By the above, we conclude that $\bar{M}$ is a feasible fractional matching in $G$.
	
	\item $|\bar{M}| = |U|$. $$|\bar{M}| =  \sum_{e \in E}\bar{M}_{e} =  \sum_{\ell \in U} \sum_{(\ell,v) \in E} \bar{M}_{e} =  \sum_{\ell \in U} \sum_{(\ell,v) \in E} \frac{1}{d(\ell)}  = \sum_{\ell \in U} \frac{d(\ell)}{d(\ell)}=  |U|.$$
\end{enumerate} By the above, $\bar{M}$ is a fractional matching in $G$ with size $|U|$. Since $G$ is a bipartite graph, there is an integral matching with size $|U|$ in $G$ \cite{schrijver2003combinatorial}.  \qed

\noindent{\bf Proof of Lemma~\ref{lem:allowed}:}
For all $\ell \in A_{C,k}$ there is exactly one $j \in C$ such that $(\ell,(C,j,k)) \in M$ by the definition of $A_{C,k}$ and because $M$ is a matching; we define $\ell^* = j$ to be the {\em slot of } $\ell$. We construct an injective function $\sigma: A_{C,k} \rightarrow C$ such that  for all $\ell \in A_{C,k}$ define $\sigma(\ell) = \ell^*$. By the above, for all $\ell \in A_{C,k}$ it holds that $\ell^*$ is well defined and it follows that $\sigma$ is a function. Let $\ell_1,\ell_2 \in A_{C,k}$ such that $\ell_1 \neq \ell_2$. Since $M$ is a matching, it follows that $\ell^*_1 \neq \ell^*_2$; hence, $\sigma(\ell_1) \neq \sigma(\ell_2)$ and we conclude that $\sigma$ is injective. Finally, let $\ell \in A_{C,k}$. Since $(\ell,(C,\ell^*,k)) \in M$, by the definition of $E$ it follows that $\bar{\gamma}_{\ell,\ell^*} = 1$; thus, $\ell \in \textsf{fit}(\ell^*)$ by \eqref{F1} since $\bar{\gamma}$ is in the $\bar{y}$-polytope. \qed

\noindent{\bf Proof of Claim~\ref{eqeq6}:}
We prove the three properties of the claim below. 

\begin{enumerate}
	\item $D_C \subseteq \textsf{fit}(C)$: Let $\ell \in D_C$. By the definition of $D_C$ it holds that $\bar{\gamma}_{\ell,C} = 1$, and in particular $\bar{\gamma}_{\ell,C} >0$; thus, since $\bar{\gamma}$ is in the $\bar{z}^*$-polytope it follows by \eqref{F1} that $\ell \in \textsf{fit}(C)$. 
	
	\item $s(D_C) \leq \left(1-s(C)\right) \cdot |B_C|$:
	\begin{equation*}
		s(D_C) = \sum_{\ell \in I \text{ s.t. } \bar{\gamma}_{\ell,C} = 1} s(\ell) =    \sum_{\ell \in I \text{ s.t. } \bar{\gamma}_{\ell,C} = 1} \bar{\gamma}_{\ell,C} \cdot s(\ell) \leq   \sum_{\ell \in I} \bar{\gamma}_{\ell,C} \cdot s(\ell)  
		\leq (1-s(C)) \bar{z}^*_C \leq  \left(1-s(C)\right) \cdot |B_C|. 
	\end{equation*} The second inequality is because $\bar{\gamma}$ is in the $\bar{z}^*$-polytope; thus, the inequality follows by \eqref{F2}. The last inequality is by \eqref{BC}.

	\item  $|D_C \cap G| \leq |B_C| \cdot \left(  k(G) - |G \cap C|  \right)$:
	\begin{equation*}
		\begin{aligned}
			|D_C \cap G| ={} & \left|\bigcup_{\ell \in I \text{ s.t. } \bar{\gamma}_{\ell,C} = 1} \{\ell\} \cap G \right| \leq \sum_{\ell \in I \text{ s.t. } \bar{\gamma}_{\ell,C} = 1} \left| \{\ell\} \cap G \right| = \sum_{\ell \in G \text{ s.t. } \bar{\gamma}_{\ell,C} = 1} \left| \{\ell\} \right|\\
			={} & \sum_{\ell \in G \text{ s.t. } \bar{\gamma}_{\ell,C} = 1} \bar{\gamma}_{\ell,C} \leq \sum_{\ell \in G} \bar{\gamma}_{\ell,C} \leq \bar{z}^*_C \cdot \left(  k(G) - |G \cap C|  \right) \leq |B_C| \cdot \left(  k(G) - |G \cap C|  \right).  
		\end{aligned} 
	\end{equation*} The third inequality is because $\bar{\gamma}$ is in the $\bar{z}^*$-polytope; thus, the inequality follows by \eqref{F5}. The last inequality is by \eqref{BC}. \qed
	
\end{enumerate}

\noindent{\bf Proof of Lemma~\ref{lem:assign}:} 
We divide the proof of Lemma~\ref{lem:assign} to the following claims. For simplicity, let $A^* = (A_1, \ldots, A_m)$.

\begin{claim}
	\label{eqeq1}
	$A^*$ is a packing.
\end{claim}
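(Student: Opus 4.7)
The plan is to verify $A^*$ directly against the definition of a packing: the bins must be pairwise disjoint subsets of $I$, and each bin must be a configuration of $\cI$. The bins of $A^*$ fall into two groups: the singletons $\{h\}$ with $\{h\}\in F$ (one per fractionally assigned item), and the sets $A_{C,k}$ for $C\in \textsf{supp}(\bar{z}^*)$, $k\in [\bar{z}^*_C]$. I will first check that each bin of either type is a configuration, and then verify disjointness both within and across the two types.

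For the configuration property of singleton bins $\{h\}$, the check is immediate: $s(h)\le 1$ by the problem definition, and $|\{h\}\cap G|\le 1\le k(G)$ for every $G\in \cG$ since $k(G)\in \mathbb{N}_{>0}$. For $A_{C,k}$, I would invoke Lemma~\ref{lem:allowed}, which provides an injective map $\sigma:A_{C,k}\to C$ with $\ell\in \textsf{fit}(\sigma(\ell))$ for every $\ell\in A_{C,k}$. By \eqref{fitl} this yields $s(\ell)\le s(\sigma(\ell))$ and $\textsf{group}(\ell)=\textsf{group}(\sigma(\ell))$; summing over $\ell$ gives $s(A_{C,k})\le s(\sigma(A_{C,k}))\le s(C)\le 1$, and restricting to any group $G\in \cG$ gives $|A_{C,k}\cap G|\le |\sigma(A_{C,k})\cap G|\le |C\cap G|\le k(G)$. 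Hence $A_{C,k}\in \cC$.

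Disjointness inside the collection $\{A_{C,k}\}$ follows from the fact that $M$ is a matching in the bipartite assignment graph: each $\ell\in U$ is incident to at most one edge of $M$, so it lies in at most one $A_{C,k}$. Distinct singletons from $F$ are obviously disjoint. The one point that requires a short argument is that no item $h$ with $\{h\}\in F$ appears in any $A_{C,k}$, i.e. $U$ and the set of fractional items are disjoint. For this, observe that if $\ell\in U$ then $\bar{\gamma}_{\ell,j^*}=1$ for some $j^*\in I$, and since $\bar{\gamma}$ satisfies \eqref{F4} with equality we get $\sum_{t\in I\cup \cC} \bar{\gamma}_{\ell,t}=1$, forcing $\bar{\gamma}_{\ell,t}=0$ for every $t\ne j^*$; in particular no entry in the row of $\ell$ is fractional, so $\ell$ does not contribute a singleton to $F$.

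Once this disjointness check is in place, $A^*$ is a partition of its union into configurations, hence a packing. The proof is essentially bookkeeping, with Lemma~\ref{lem:allowed} doing the real work; I do not expect any genuine obstacle, only the small subtlety of ensuring that the fractional singletons and the matched bins do not overlap, which is handled by the tightness of constraint \eqref{F4}.
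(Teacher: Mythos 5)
Your proof is correct and follows essentially the same route as the paper: the singleton bins coming from $F$ are trivially configurations, and each $A_{C,k}$ is shown to be a configuration via the injective fit map provided by Lemma~\ref{lem:allowed}, bounding its total size by $s(C)\leq 1$ and its per-group counts by $|C\cap G|\leq k(G)$. Your explicit disjointness check (that $U$ and the fractional items are disjoint because constraint \eqref{F4} holds with equality, and that the matching $M$ places each matched item in a unique bin $A_{C,k}$) is left implicit in the paper's proof of this claim, so spelling it out is a harmless and slightly more careful addition rather than a different approach.
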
 

\begin{proof}
	Let $i \in [m]$. By \eqref{Astar}, we split into two cases.  
	
	\begin{enumerate}
		\item $A_i \in F$.  Then, by the definition of $F$ it holds that $A_i$ is a singleton, a set containing a single item. Thus, $s\left(A_i\right) \leq 1$ and for all $G \in \cG$ it holds that $|G \cap A^*_b| \leq 1 \leq k(G)$.
		
		\item  There are $C \in C^*$ and $k \in [\bar{z}^*_{C^*}]$ such that $A_i = A_{C,k}$. Therefore, by Lemma~\ref{lem:allowed} it holds that $A^*_b$ is allowed in $C$. Hence, there is an injective function $\sigma: A_i \rightarrow C$ such that for all $\ell \in A_i$ it holds that $\ell \in \textsf{fit}\left(\sigma(\ell)\right)$. Therefore, 
		
		\begin{enumerate}
			\item $s(A_i) \leq 1$. 	$$s(A_i) = \sum_{\ell \in A_i} s(\ell) \leq  \sum_{\ell \in A_i} s\left(\sigma(\ell)\right) \leq \sum_{\ell \in C} s(\ell) = s(C) \leq 1.$$ The first inequality is because for all $\ell \in A_i$ it holds that $\ell \in \textsf{fit}\left(\sigma(\ell)\right)$; thus, the inequality follows by \eqref{fitl}. The second inequality is because $\sigma$ is injective. The last inequality is because $C$ is a configuration.

			\item For all $G \in \cG$ it holds that $|A_i \cap G| \leq k(G)$. Let $G \in \cG$. Now, $$|A_i \cap G| = \left|\bigcup_{\ell \in A_i} \{\ell\} \cap G\right| = \bigcup_{\ell \in A_i} \left|\{\ell\} \cap G\right| \leq \bigcup_{\ell \in A_i} \left|\{\sigma(\ell)\} \cap G\right|  \leq    \left|\bigcup_{\ell \in C} \{\ell\} \cap G\right| = |C \cap G| \leq k(G).$$ The first inequality is because $\sigma$ is injective and that $G \cap M = \emptyset$; thus, by \eqref{fitl} for all $\ell \in A_i \cap G$ it holds that $\sigma(\ell) \in A_i \cap G$. The second inequality is because $\sigma$ is injective. The last inequality is because $C$ is a configuration. 
		\end{enumerate}

	\end{enumerate}
\end{proof}

\begin{claim}
	\label{eqeq2}
	$|\mathcal{H}| \leq \eps^{-22}Q^2(\eps)$ 
\end{claim}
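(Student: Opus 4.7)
The plan is to derive the bound by the union bound $|\mathcal{H}|\leq |\textsf{supp}(\bar{z}^*)| + |F|$ and estimate each summand separately using the properties of a good prototype together with Lemma~\ref{O(1)}.

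First I would bound $|\textsf{supp}(\bar{z}^*)|$. By Observation~\ref{ob:y} we have $\textsf{supp}(\bar{z}^*)=\textsf{supp}(\bar{z})$, and since $\bar{z}$ is a good prototype (Definition~\ref{def:goodprototype}) this yields $|\textsf{supp}(\bar{z}^*)|\leq Q(\eps)$. Next I would bound $|F|$ by invoking Lemma~\ref{O(1)} for the prototype $\bar{z}^*$: the hypotheses of the lemma are exactly guaranteed by Observation~\ref{ob:y} together with the integrality $\bar{z}^*_C\in\mathbb{N}$ and the bound $|C|\leq \eps^{-10}$ coming from $\bar{z}$ being a good prototype. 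Applying the lemma with $k=\eps^{-10}$ to the vertex $\bar{\gamma}$ found in Step~\ref{assign:find_gamma} (which satisfies \eqref{F4} with equality) gives
\[
|F|\;\leq\;8\,(\eps^{-10})^2\,|\textsf{supp}(\bar{z}^*)|^2\;\leq\;8\eps^{-20}\,Q^2(\eps).
\]

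Finally, putting the two estimates together,
\[
|\mathcal{H}|\;\leq\;|\textsf{supp}(\bar{z}^*)|+|F|\;\leq\;Q(\eps)+8\eps^{-20}Q^2(\eps).
\]
Since $\eps<0.1$ we have $Q(\eps)\leq \eps^{-20}Q^2(\eps)$, so the right-hand side is at most $9\eps^{-20}Q^2(\eps)\leq \eps^{-22}Q^2(\eps)$, which is the required inequality. There is no real obstacle here: the proof is just an accounting step combining Observation~\ref{ob:y}, the definition of a good prototype, and the integrality Lemma~\ref{O(1)}; the only mildly delicate point is checking that the hypotheses of Lemma~\ref{O(1)} are in force for $\bar{z}^*$ (integrality of entries and $|C|\leq \eps^{-10}$), which is immediate from the construction.
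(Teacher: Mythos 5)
Your proof is correct and follows essentially the same route as the paper: both bound $|\mathcal{H}|\le|\textsf{supp}(\bar{z}^*)|+|F|$, apply Lemma~\ref{O(1)} with $k=\eps^{-10}$ via Observation~\ref{ob:y}, and use $|\textsf{supp}(\bar{z}^*)|=|\textsf{supp}(\bar{z})|\le Q(\eps)$ from the good-prototype property, absorbing the constants into $\eps^{-22}$.
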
 

\begin{proof}
	$$|\mathcal{H}| = \left|\textsf{supp}(\bar{z}^*) \cup F\right| \leq  \left|\textsf{supp}(\bar{z}^*)\right| + \left| F\right| \leq |\textsf{supp}(\bar{z}^*)|+\eps^{-21}|\textsf{supp}(\bar{z}^*)|^2 \leq \eps^{-22}|\textsf{supp}(\bar{z})|^2 \leq  \eps^{-22}Q^2(\eps).$$
	
	The second inequality is because $\bar{z}^*$ holds the conditions of Lemma~\ref{FromPolytope} by Observation~\ref{ob:y}; thus, the inequality follows by Lemma~\ref{O(1)}. The third inequality is because $0<\eps<0.1$ and by Observation~\ref{ob:y}. The last inequality is because $\bar{z}$ holds the conditions of Lemma~\ref{FromPolytope}, that is, $\bar{z}$ is a good prototype.
\end{proof}

\begin{claim}
	\label{eqeq3}
	The size of $\mathcal{B}$ is at most $\|\bar{z}\| + \eps^{-22}|\textnormal{\textsf{supp}}(\bar{z})|^2$.
\end{claim}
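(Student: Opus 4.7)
The plan is to bound the size of $\mathcal{B}$, which equals $m = |A^*|$, by directly counting bins in the two pieces of the tuple $A^*$ as defined in~\eqref{Astar}. Namely, $A^*$ is the concatenation of one singleton bin for every $h \in F$ together with one bin $A_{C,k}$ for every pair $(C,k)$ with $C \in \textnormal{\textsf{supp}}(\bar{z}^*)$ and $k \in [\bar{z}^*_C]$. Therefore
\[
|A^*| \;=\; |F| \;+\; \sum_{C \in \textnormal{\textsf{supp}}(\bar{z}^*)} \bar{z}^*_C \;=\; |F| \;+\; \|\bar{z}^*\|,
\]
where in the last equality we use that $\bar{z}^*$ has integer entries.

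Next I would bound each of the two summands. For $\|\bar{z}^*\|$, Observation~\ref{ob:y} gives $\|\bar{z}^*\| \leq \|\bar{z}\| + |\textnormal{\textsf{supp}}(\bar{z})|$ and $\textnormal{\textsf{supp}}(\bar{z}^*) = \textnormal{\textsf{supp}}(\bar{z})$. For $|F|$, I would invoke Lemma~\ref{O(1)} applied to the prototype $\bar{z}^*$: since $\bar{z}$ (hence $\bar{z}^*$) is a good prototype, every $C \in \textnormal{\textsf{supp}}(\bar{z}^*)$ has $|C| \leq \eps^{-10}$, and $\bar{z}^*$ has integer entries, so Lemma~\ref{O(1)} with $k = \eps^{-10}$ yields
\[
|F| \;\leq\; 8\eps^{-20}\cdot|\textnormal{\textsf{supp}}(\bar{z}^*)|^2 \;=\; 8\eps^{-20}\cdot|\textnormal{\textsf{supp}}(\bar{z})|^2.
\]

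Combining the two bounds gives
\[
|A^*| \;\leq\; \|\bar{z}\| \;+\; |\textnormal{\textsf{supp}}(\bar{z})| \;+\; 8\eps^{-20}\cdot |\textnormal{\textsf{supp}}(\bar{z})|^2.
\]
Since $\eps < 0.1$, the term $|\textnormal{\textsf{supp}}(\bar{z})| + 8\eps^{-20}|\textnormal{\textsf{supp}}(\bar{z})|^2$ is easily absorbed into $\eps^{-22}|\textnormal{\textsf{supp}}(\bar{z})|^2$, yielding the desired bound $\|\bar{z}\| + \eps^{-22}|\textnormal{\textsf{supp}}(\bar{z})|^2$. The argument is entirely bookkeeping on top of Observation~\ref{ob:y} and Lemma~\ref{O(1)}; no real obstacle is expected, the only care is to verify that the constant $8$ and the extra linear term fit into the slack given by replacing $\eps^{-20}$ by $\eps^{-22}$.
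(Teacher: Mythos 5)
Your proof is correct and follows essentially the same route as the paper: decompose $|A^*|$ as $|F|+\|\bar{z}^*\|$, apply Observation~\ref{ob:y} to bound $\|\bar{z}^*\|$ by $\|\bar{z}\|+|\textnormal{\textsf{supp}}(\bar{z})|$, apply Lemma~\ref{O(1)} with $k=\eps^{-10}$ to bound $|F|$, and absorb the lower-order terms into $\eps^{-22}|\textnormal{\textsf{supp}}(\bar{z})|^2$ using $\eps<0.1$. The paper folds the constant $8$ into $\eps^{-21}$ as an intermediate step, but the bookkeeping is otherwise identical.
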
 

\begin{proof}
	The size of $\mathcal{B}$ is defined as the number of entries in $A^*$. Therefore, by \eqref{Astar} the size of $\mathcal{B}$ is at most:
	\begin{equation*}
		\begin{aligned}
			|F|+\sum_{C \in \textsf{supp}(\bar{z}^*)} \sum_{k \in [\bar{z}^*_C]}  ={} & |F|+\sum_{C \in \textsf{supp}(\bar{z}^*)} \bar{z}^*_C  \leq \|\bar{z}^*\| + \eps^{-21}|\textsf{supp}(\bar{z}^*)|^2\\ \leq{} & \|\bar{z}\| +|\textsf{supp}(\bar{z})|+  \eps^{-21}|\textsf{supp}(\bar{z})|^2 \leq  \|\bar{z}\| + \eps^{-22}|\textsf{supp}(\bar{z})|^2 \leq \|\bar{z}\| + \eps^{-22}Q^2(\eps).
		\end{aligned}
	\end{equation*} The first inequality is because $\bar{z}^*$ holds the conditions of Lemma~\ref{FromPolytope} by Observation~\ref{ob:y}; thus, the inequality follows by Lemma~\ref{O(1)}. The second inequality is by Observation~\ref{ob:y}. The last inequality is because $\bar{z}$ is a good prototype.

\end{proof}

\begin{claim}
	\label{partition:BC}
	$\{B_C\}_{C \in \mathcal{H}}$ is a partition of $\{A_i~|~i \in [m]\}$
\end{claim}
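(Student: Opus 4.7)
The plan is to verify the partition claim by a direct bookkeeping argument guided by the construction of $A^*$ in~\eqref{Astar} and the definition of $B_C$ in~\eqref{BC}. By~\eqref{Astar}, the bins in the tuple $A^*$ carry labels drawn from a disjoint union of two label sets: singletons $h \in F$ (corresponding to fractionally-assigned items) and pairs $(C,k)$ with $C \in \textsf{supp}(\bar{z}^*)$ and $k \in [\bar{z}^*_C]$ (the matching-derived bins). I would work with this indexed view throughout, and verify the two defining properties of a partition, namely covering and pairwise disjointness.

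First I would check covering: every bin of $A^*$ lies in at least one $B_{C'}$ with $C' \in \mathcal{H}$. A bin labeled $h \in F$ is captured by $B_h$ via the second clause of~\eqref{BC} (taking $C := h \in F \subseteq \mathcal{H}$), and a bin labeled $(C,k)$ is captured by $B_C$ via the first clause (since $C \in \textsf{supp}(\bar{z}^*) \subseteq \mathcal{H}$). Since $\mathcal{H} = \textsf{supp}(\bar{z}^*) \cup F$, this exhausts every bin of $A^*$.

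Next I would verify disjointness: for distinct $C_1, C_2 \in \mathcal{H}$, $B_{C_1} \cap B_{C_2} = \emptyset$. Any bin in $B_{C_i}$ is either labeled $(C_i,k)$ for some $k \in [\bar{z}^*_{C_i}]$ or equals $C_i$ as an element of $F$. A bin lying in both $B_{C_1}$ and $B_{C_2}$ would therefore force its unique label to be compatible with both $C_1$ and $C_2$ in either clause, forcing $C_1 = C_2$ and yielding a contradiction.

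The only subtlety — hardly a real obstacle — is in justifying that the labels of bins in $A^*$ are truly distinct, so that the bookkeeping above is not corrupted by a collision between an $F$-bin $\{\ell\}$ and some $A_{C,k}$. This follows because $A_{C,k} \subseteq U$ by the definition of the assignment graph, whereas any $\{\ell\} \in F$ satisfies $\bar{\gamma}_{\ell, t} \in (0,1)$ for some $t$, which together with the equality form of~\eqref{F4} for $\bar{\gamma}$ rules out $\bar{\gamma}_{\ell, j} = 1$ for any slot $j \in I$, i.e., $\ell \notin U$. Hence no $F$-bin can coincide with any $A_{C,k}$, and two distinct pairs $(C_1, k_1) \neq (C_2, k_2)$ yield bins whose item contents are disjoint (as $A^*$ is a packing), so the labeling is unambiguous and the disjointness argument goes through cleanly.
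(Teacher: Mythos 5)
Your proof is correct and follows essentially the same route as the paper's: a direct case analysis from \eqref{Astar} and \eqref{BC} showing each bin of $A^*$ lies in exactly one $B_C$, depending on whether it is a singleton from $F$ or a matching-derived bin $A_{C,k}$. The only difference is that you explicitly justify that an $F$-bin cannot coincide with any $A_{C,k}$ (via $A_{C,k}\subseteq U$ and the equality form of \eqref{F4}), a point the paper leaves implicit.
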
 

\begin{proof}
	
	Let $i \in [m]$. By \eqref{Astar}, we split into two cases.  
	
	\begin{enumerate}
		\item $A_i \in F$.  Then, by the definition of $\mathcal{H}$ it holds that $A_i \in \mathcal{H}$; thus, by \eqref{BC} it follows that $A_i \in B_{A_i}$ and for all $C \in \mathcal{H} \setminus \{A_i\}$ it holds that $A_i \notin B_{C}$.
		
		\item  There are $C \in \mathcal{H}$ and $k \in [\bar{z}^*_{C}]$ such that $A_i = A_{C,k}$. Therefore, by \eqref{BC} it follows that $A_i \in B_{C}$ and for all $C' \in \mathcal{H} \setminus \{C\}$ it holds that $A_i \notin B_{C'}$.
	\end{enumerate} 
	
\end{proof}

Let $f = \{\ell~|~\{\ell\} \in F\}$ be all fractional items.  We use the following auxiliary claim.

\begin{claim}
	\label{fU}
	$A^*$ is a packing of $f \cup U$. 
\end{claim}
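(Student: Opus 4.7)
Recall that a packing (as defined at the start of Section~\ref{sec:overview}) is an ordered partition of the underlying set into configurations. Claim~\ref{eqeq1} has already established that every entry of $A^*$ is a configuration, so it remains to show that the entries of $A^*$ are pairwise disjoint and that their union is exactly $f\cup U$. I will prove both by inspecting the two blocks of the concatenation in~\eqref{Astar} separately and then showing that they do not interfere with each other.

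The first block of $A^*$ consists of the singletons $\{\ell\}$ for $\ell\in f$. These are trivially pairwise disjoint and together cover exactly~$f$. The second block is $(A_{C,k})_{C\in\textsf{supp}(\bar{z}^*),\,k\in[\bar{z}^*_C]}$. By construction $A_{C,k}\subseteq U$, so the union of the second block is contained in~$U$. The key observation is that since $M$ is a matching in the assignment graph, each item $\ell\in U$ appears in at most one edge of $M$, hence in at most one set $A_{C,k}$. Thus the sets $A_{C,k}$ are pairwise disjoint.

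To see that every $\ell\in U$ is covered by the second block, I will use Lemma~\ref{lem:matchingU}: the matching~$M$ has size $|U|$, and since every edge of $M$ is incident to exactly one vertex in $U$, a matching of size $|U|$ must saturate all of $U$. Hence for every $\ell\in U$ there is an edge $(\ell,(C,j,k))\in M$, which places $\ell$ in $A_{C,k}$. Combined with the previous paragraph, the second block is a partition of $U$.

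It remains to verify that the first and second blocks do not share items, i.e.\ $f\cap U=\emptyset$. This is where the fact that $\bar{\gamma}$ satisfies constraint~\eqref{F4} with equality (guaranteed in Step~\ref{assign:find_gamma}) comes in: for any $\ell\in I$ one has $\sum_{t\in I\cup\cC}\bar{\gamma}_{\ell,t}=1$, so if some entry $\bar{\gamma}_{\ell,j}$ equals~$1$ then all other entries $\bar{\gamma}_{\ell,t}$ vanish, in particular none lies in~$(0,1)$; conversely, if some $\bar{\gamma}_{\ell,t}\in(0,1)$, then no entry can equal~$1$. Hence $U$ and $f$ are disjoint. Combining everything, the concatenation $A^*$ is an ordered partition of $f\cup U$ into configurations, which is the definition of a packing of $f\cup U$. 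The only mildly delicate step is the saturation argument in the third paragraph; everything else is a direct unpacking of definitions.
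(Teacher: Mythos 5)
Your proposal is correct and follows essentially the same route as the paper: coverage of $f$ by the singleton block and coverage of $U$ via Lemma~\ref{lem:matchingU} (a matching of size $|U|$ in the bipartite assignment graph saturates $U$), exactly as in the paper's argument, with containment of the second block in $U$ coming from the definition of $A_{C,k}$ and \eqref{Astar}. The only difference is that you additionally spell out the disjointness checks (no item in two bins $A_{C,k}$ since $M$ is a matching, and $f\cap U=\emptyset$ from \eqref{F4} holding with equality), which the paper leaves implicit; these verifications are correct.
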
 

\begin{proof}
	Let $\ell \in U \cup f$. We split into two cases.
	
	\begin{enumerate}
		\item $\ell \in f$. By \eqref{Astar}, there is $i \in [m]$ such that $A_i = \{\ell\}$.
		
		\item  $\ell \in U$. Since $M$ is a matching of size $|U|$ in $G$, there is $(C,j,k) \in V$ such that $(\ell,(C,j,k)) \in M$. Therefore, by the definition of the bin of $C$ and $k$ it holds that $\ell \in A_{C,k}$. Hence, by \eqref{Astar} there is $i \in [m]$ such that $A_i = A_{C,k}$ and the claim follows.
	\end{enumerate} Moreover, by \eqref{Astar} it follows that for any $\ell \in I \setminus (U \cup f)$ and $i \in [m]$ it holds that $\ell \notin A_i$.  
\end{proof}

\begin{claim}
	\label{eqeq4}
	$\{D_C\}_{C \in \mathcal{H}}$ is a partition of $I \setminus \bigcup_{i \in [m]} A_i$.
\end{claim}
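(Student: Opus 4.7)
The plan is to replace the target set $I\setminus\bigcup_{i\in[m]}A_i$ by $I\setminus(U\cup f)$, which are equal by Claim~\ref{fU}, and then separately verify (a) pairwise disjointness of the sets $D_C$, (b) that $D_C\cap(U\cup f)=\emptyset$ for every $C\in\mathcal{H}$, and (c) that every item outside $U\cup f$ lies in some $D_C$ with $C\in\mathcal{H}$. The key tool throughout is the fact that $\bar{\gamma}$ satisfies constraint~\eqref{F4} of the $\bar{z}^*$-polytope \emph{with equality}, so for every item $\ell$ the total mass $\sum_{t\in I\cup\cC}\bar{\gamma}_{\ell,t}$ equals exactly~$1$.

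For disjointness, if $\ell\in D_C\cap D_{C'}$ with $C\neq C'$ in $\mathcal{H}$, then $\bar{\gamma}_{\ell,C}=\bar{\gamma}_{\ell,C'}=1$, which contradicts the tight~\eqref{F4} together with nonnegativity of the remaining entries. For (b): if $\ell\in U$ then some slot-type entry $\bar{\gamma}_{\ell,j}$ already equals~$1$, forcing all other entries of $\ell$ to vanish, so $\ell\notin D_C$; if $\ell\in f$ then by the definition of $F$ some entry $\bar{\gamma}_{\ell,t^\circ}$ is strictly in $(0,1)$, and since the row sums to~$1$ with all entries nonnegative, no entry of the row can attain the value~$1$, so again $\ell\notin D_C$.

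For coverage, take $\ell\in I\setminus(U\cup f)$. Since $\ell\notin f$, no entry of the row $\bar{\gamma}_{\ell,\cdot}$ is strictly fractional, so every entry lies in $\{0,1\}$; by the tight~\eqref{F4} exactly one entry equals~$1$. Since $\ell\notin U$, that entry is not at a slot-type, so it sits at some configuration-type $C^*\in\cC$, giving $\ell\in D_{C^*}$. To place $C^*$ in $\mathcal{H}$, observe that if $C^*\notin\textsf{supp}(\bar{z}^*)$ then $\bar{z}^*_{C^*}=0$, and constraint~\eqref{F2} combined with $s(\ell)>0$ would force $\bar{\gamma}_{\ell,C^*}=0$, contradicting $\bar{\gamma}_{\ell,C^*}=1$; hence $C^*\in\textsf{supp}(\bar{z}^*)\subseteq\mathcal{H}$.

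The only delicate step is the clean trichotomy ``slot-type entry at~$1$ vs.\ some fractional entry vs.\ a single configuration-type entry at~$1$'' in the row $\bar{\gamma}_{\ell,\cdot}$, which the tightness of~\eqref{F4} and nonnegativity handle uniformly; all other ingredients are direct consequences of the constraints of the $\bar{z}^*$-polytope and the definitions of $U$, $f$, $F$, $\mathcal{H}$, and $D_C$. No estimates or counting arguments are required.
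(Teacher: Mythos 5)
Your proof is correct and follows essentially the same approach as the paper: replace the target set via Claim~\ref{fU} and then use tightness of constraint~\eqref{F4} together with the definitions of $U$, $f$, and $D_C$ to deduce that each unpacked item lies in exactly one $D_{C}$ with $C\in\textsf{supp}(\bar{z}^*)\subseteq\mathcal{H}$. You are somewhat more explicit than the paper (spelling out disjointness, the exclusion of $U\cup f$, and the $\eqref{F2}$-based argument that $C^*\in\textsf{supp}(\bar{z}^*)$), but the underlying argument is the same.
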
 

\begin{proof}
	
	Let $\ell \in I \setminus \bigcup_{i \in [m]} A_i$. By Claim~\ref{fU}, it follows that $I \setminus \bigcup_{i \in [m]} A_i = I \setminus (U \cup f)$ and we get $\ell \in I \setminus (U \cup f)$. Thus, as $\ell$ is not fractional and is not fully assigned to a slot type as it is not in $U$,  it follows that there is $C \in \textsf{supp}(\bar{z}^*)$ such that $\bar{\gamma}_{\ell,C} = 1$ by \eqref{F4} since $\bar{\gamma}$ is in the $\bar{z}^*$-polytope. In addition, since \eqref{F4} holds with equality, we get that there is exactly one such $C$. Since $\textsf{supp}(\bar{z}^*) \subseteq \mathcal{H}$, it follows that $\ell \in D_C$ and for all  $C' \in \mathcal{H} \setminus \{C\}$ it holds that $\ell \notin D_{C'}$. Hence, the claim follows. 
	
\end{proof}

\begin{claim}
	\label{eqeq5}
	For any $C \in \mathcal{H}$ and $A \in B_C$ it holds that $A$ is allowed in $C$
\end{claim}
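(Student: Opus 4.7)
The plan is to unpack the definition of $B_C$ for $C\in \mathcal{H}$ given in equation~\eqref{BC} and handle the two kinds of elements it can contain separately. Recall $\mathcal{H} = \textnormal{\textsf{supp}}(\bar{z}^*) \cup F$, where $F$ consists of singleton sets $\{\ell\}$ for items $\ell$ that are fractionally assigned by $\bar{\gamma}$, and by \eqref{BC} each $B_C$ is a union of matching-induced bins $A_{C,k}$ (for $k\in [\bar{z}^*_C]$) together with at most one singleton $\{\ell\}\in F$ satisfying $\{\ell\}=C$. Accordingly, I will argue that every $A\in B_C$ falls into one of these two types and show the allowed property for each.

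First I would take $A = A_{C,k}$ for some $k \in [\bar{z}^*_C]$, which requires in particular that $C \in \textnormal{\textsf{supp}}(\bar{z}^*)$. In this case the claim is already handled: Lemma~\ref{lem:allowed} states precisely that for every $C \in \textnormal{\textsf{supp}}(\bar{z}^*)$ and $k \in [\bar{z}^*_C]$, the bin $A_{C,k}$ is allowed in $C$. This is a direct invocation with no additional work.

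Next I would consider the remaining case $A = \{h\}$ with $h \in F$ and $h = C$. Here $C$ is itself a singleton set $\{\ell\}$ for some item $\ell$ with a fractional assignment, and $A = \{\ell\}$. To show $A$ is allowed in $C$, by the definition of ``allowed'' I must exhibit an injective map from $A$ to distinct slots of $C$ respecting $\textnormal{\textsf{fit}}$; the identity map $\ell \mapsto \ell$ suffices because $\ell \in \textnormal{\textsf{fit}}(\ell)$ holds trivially from~\eqref{fitl} (same group, same size). Combining the two cases with the partition established in Claim~\ref{partition:BC} completes the argument.

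I do not anticipate a real obstacle here: the heavy lifting (the matching-based construction and its feasibility) has already been absorbed into Lemma~\ref{lem:allowed} via Lemma~\ref{lem:matchingU}, and the only subtlety is making sure the bookkeeping in~\eqref{BC} is read correctly, so that the fractional-singleton bins and the matching bins are treated as the two exhaustive cases. The proof should therefore reduce to a short case analysis of essentially one line per case.
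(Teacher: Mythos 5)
Your proposal is correct and follows essentially the same route as the paper: a two-case analysis of the elements of $B_C$ from \eqref{BC}, invoking Lemma~\ref{lem:allowed} for the matching bins $A_{C,k}$ and the trivial observation $\ell \in \textnormal{\textsf{fit}}(\ell)$ from \eqref{fitl} for the fractional singletons $C=\{\ell\}\in F$. The paper merely phrases the case split via Claim~\ref{partition:BC} and \eqref{Astar} rather than directly via \eqref{BC}, which is an inessential difference.
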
 

\begin{proof}
	
	By Claim~\ref{partition:BC}, there is $i \in [m]$ such that $A = A_i$. By \eqref{Astar}, we split into two cases.  
	
	\begin{enumerate}
		\item $A_i \in F$.  Then, by \eqref{BC} it holds that $A_i = C$ and $C$ is a singleton. Then, it trivially follows that $A_i$ is allowed in $A_i$, since for each item $\ell \in A_i$ it holds that $\ell \in \textsf{fit}(\ell)$ by \eqref{fitl}.  
		
		\item  There are $C \in \mathcal{H}$ and $k \in [\bar{z}^*_{C}]$ such that $A_i = A_{C,k}$. Therefore, by Lemma~\ref{lem:allowed} the claim follows. 
	\end{enumerate} 
	
\end{proof}

The proof  of Lemma~\ref{lem:assign} follows by Claim~\ref{eqeq1}, Claim~\ref{eqeq2}, Claim~\ref{eqeq3}, Claim~\ref{partition:BC}, Claim~\ref{eqeq4}, Claim~\ref{eqeq5}, and Claim~\ref{eqeq6}. \qed

\section{Omitted Proofs of Section~\ref{sec:alg_pack}}
\label{sec:PackProofs}

We use the following auxiliary claim. \begin{claim}
	\label{clm:fitifit}
	For all $C \in \mathcal{H}$ and $\ell \in D_C$ it holds that $s_C(\ell) < 0.1$.
\end{claim}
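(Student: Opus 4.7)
The proof will be short and essentially unpacks the definitions. The plan is to observe that membership of $\ell$ in $D_C$ forces $\ell$ to fit with the slot-set $C$, which by \eqref{fitS} bounds $s(\ell)$ by $\eps \cdot (1-s(C))$; this bound is exactly what is needed after dividing by $1-s(C)$ to form $s_C(\ell)$.

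Concretely, I would proceed as follows. First, fix $C \in \mathcal{H}$ and $\ell \in D_C$. By Definition~\ref{def:partition} (condition 1 in the itemized list for $C$ and $G$), the completion $D_C$ is a subset of $\textsf{fit}(C)$, so $\ell \in \textsf{fit}(C)$. Applying the definition of $\textsf{fit}(C)$ in \eqref{fitS}, this yields
\[
s(\ell) \;\leq\; \min\{\eps^2,\; \eps \cdot (1-s(C))\} \;\leq\; \eps \cdot (1-s(C)).
\]
Second, since $\ell \in D_C$ we have $D_C \neq \emptyset$; as noted right after the definition of the residual instance $\cI_C$ in Section~\ref{sec:alg_pack}, this implies $s(C)<1$, so the residual size $s_C(\ell) = \frac{s(\ell)}{1-s(C)}$ is well defined and nonnegative. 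Combining with the displayed inequality above gives
\[
s_C(\ell) \;=\; \frac{s(\ell)}{1-s(C)} \;\leq\; \frac{\eps \cdot (1-s(C))}{1-s(C)} \;=\; \eps \;<\; 0.1,
\]
where the last inequality is because $\eps \in (0,0.1)$ by hypothesis.

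There is no real obstacle here: the claim is essentially a restatement of the ``fit'' inequality in \eqref{fitS} after rescaling by $1-s(C)$. The only thing to be careful about is that $s(C)<1$, so that the residual size function $s_C$ from \eqref{IC} is defined; this follows automatically from $D_C$ being non-empty, as pointed out in the definition of the residual instance.
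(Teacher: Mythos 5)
Your proof is correct and follows the same route as the paper: use $D_C \subseteq \textsf{fit}(C)$ from Definition~\ref{def:partition}, apply the bound $s(\ell) \leq \eps\cdot(1-s(C))$ from \eqref{fitS}, and divide by $1-s(C)$ as in \eqref{IC} to get $s_C(\ell)\leq \eps < 0.1$. Your extra remark that $D_C\neq\emptyset$ guarantees $s(C)<1$ (so $s_C$ is well defined) is a small but welcome addition; otherwise the argument is the same as the paper's.
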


\begin{proof}
	$$s_C(\ell) = \frac{s(\ell)}{1-s(C)} \leq \frac{\eps \cdot (1-s(C))}{1-s(C)} = \eps < 0.1.$$
	 The first equality is by \eqref{IC}. The first inequality is because $\ell \in \textsf{fit}(C)$ by Definition~\ref{def:partition}; hence, the inequality follows by \eqref{fitS}. 
\end{proof}

\noindent{\bf Proof of Lemma~\ref{clm:GREEDY}:} Let $X = \textsf{tuple}(B_C),  Y = \textsf{Greedy}(\cI_C,\eps)$, $X = (X_1, \ldots, X_r)$, $Y = (Y_1, \ldots, Y_t)$, $S = D_C \cup \bigcup_{B \in B_C} B$, $Z = X+Y$, and $Z = (Z_1, \ldots, Z_p)$. We prove the following.

\begin{enumerate}
	\item $Z$ is a partition of $S$. Let $\ell \in S$. If $\ell \in D_C$, then there is $i \in [r]$ such that $\ell \in X_i$; therefore, by the definition of $Z$ it holds that $\ell \in Z_i$. Otherwise, there is $i \in [t]$ such that $\ell \in Y_i$; therefore, by the definition of $Z$ it holds that $\ell \in Z_i$.
	
	\item For all $i \in [p]$ it holds that $s(Z_i) \leq 1$. If $D_C = \emptyset$ then the claim is satisfied because $s(Z_i) = s(X_i) \leq s(C) \leq 1$; The first inequality is because $X_i \in B_C$ and thus $X_i$ is allowed in $C$; The second inequality is because $C$ is a configuration. Otherwise, If $i \leq t$ and $i \leq r$: $$s(Z_i) = s(X_i)+s(Y_i) \leq s(C)+s(Y_i) \leq s(C)+(1-s(C)) \cdot s_C(Y_i) \leq s(C)+1-s(C) = 1.$$ The first inequality is because by Definition~\ref{def:partition} it holds that $X_i$ is allowed in $C$. The second inequality is by \eqref{IC}. The third inequality is by Lemma~\ref{thm:greedy}; also, the conditions of Lemma~\ref{thm:greedy} are indeed satisfied by Claim~\ref{clm:fitifit}. Using similar arguments, we can show the claim for $i > t$ or $i >r$ by the definition of $Z$.
	
	\item For all $i \in [p]$ and $G \in \cG$ it holds that $|G \cap Z_i| \leq k(G)$. If $|G\cap D_C| = 0$ then the claim is trivially satisfied because $X_i$ is allowed in the configuration $C$. Otherwise, if $i \leq t$ and $i \leq r$: 	\begin{equation*}
		\begin{aligned}
		|G \cap Z_i|  \leq{} & |G \cap X_i| +|G \cap Y_i|  \leq |G \cap C|+|G \cap Y_i| \leq |G \cap C|+k_C(G \cap D_C) \\ \leq{} & |G \cap C|+\left(k(G)-|G \cap C|\right) = k(C).
		\end{aligned}
	\end{equation*}  The second inequality is because by Definition~\ref{def:partition} it holds that $X_i$ is allowed in $C$. The third inequality is by Lemma~\ref{thm:greedy}; also, the conditions of Lemma~\ref{thm:greedy} are indeed satisfied by Claim~\ref{clm:fitifit}. The fourth inequality is by \eqref{IC}. Using similar arguments, we can show the claim for $i > t$ or $i >r$ by the definition of $Z$.

\item It  holds that $p \leq (1+2\eps) \cdot |B_C|+2$. 	\begin{equation*}
	\begin{aligned}
	p ={} & \max\{r, t\} \leq \max\left\{|B_C|, (1+2\eps)\cdot   \max\left\{  s_C(D_C) , \max_{G \in \cG_{C}} \frac{|G|}{k_C(G)}   \right\} +2     \right\} \\ \\ \leq{} &  \max\left\{|B_C|, (1+2\eps)\cdot   \max\left\{  \frac{s(D_C)}{1-s(C)} , \max_{G \in \cG \text{ s.t. } G \cap D_C \neq \emptyset} \frac{|G\cap D_C|}{k(G)-|G \cap C|}   \right\} +2     \right\} \\ \\ \leq{} &  \max\left\{|B_C|, (1+2\eps)\cdot   \max\left\{  \frac{(1-s(C)) \cdot |B_C|}{1-s(C)} , \max_{G \in \cG \text{ s.t. } G \cap D_C \neq \emptyset} \frac{|B_C| \cdot (k(G) - |C \cap G|)}{k(G)-|G \cap C|}   \right\} +2     \right\} \\={} & (1+2\eps)\cdot |B_C|+2.
	\end{aligned}
\end{equation*} The first inequality is by Lemma~\ref{thm:greedy}; also, the conditions of Lemma~\ref{thm:greedy} are indeed satisfied by Claim~\ref{clm:fitifit}. The second inequality is by \eqref{IC}. The third inequality is by Definition~\ref{def:partition}. 

 \qed
\end{enumerate}

\noindent{\bf Proof of Lemma~\ref{lem:GREEDY}:} For all $C \in \mathcal{H}$ let $X^C = \textsf{tuple}(B_C),  Y^C = \textsf{Greedy}(\cI_C,\eps)$, $X^C = (X^C_1, \ldots, X^C_{r(C)})$, $Y^C = (Y^C_1, \ldots, Y^C_{t(C)})$, $S^C = D_C \cup \bigcup_{B \in B_C} B$, $Z^C = X^C+Y^C$, and $P_C = (Z^C_1, \ldots, Z^C_{p(C)})$. In addition, let $P = (P_1, \ldots, P_a)$ be the tuple returned by Algorithm~\ref{alg:pack}. We prove the following.

\begin{enumerate}
	\item $P$ is a partition of $I$: Let $\ell \in I$. Then, by Definition~\ref{def:partition} there is exactly one $C \in \mathcal{H}$ such that $\ell \in S^C$; then, by Lemma~\ref{clm:GREEDY} there is exactly one $i \in [p(C)]$ such that $\ell \in Z^C_i$; therefore, by Step~\ref{step:pPack} of Algorithm~\ref{alg:pack} there is exactly one  $j \in [a]$ such that $\ell \in P_j$ and the claim follows. 
	
	\item For all $i \in [a]$ it holds that $s(P_i) \leq 1$: By Step~\ref{step:pPack} of Algorithm~\ref{alg:pack} there is $C \in \mathcal{H}$ and $j \in [p(C)]$ such that $P_j = Z^C_j$. Therefore, $s(P_i) = s(Z^C_j) \leq 1$ by Lemma~\ref{clm:GREEDY}. 
	
	\item For all $i \in [a]$ and $G \in \cG$ it holds that $|G \cap P_i| \leq k(G)$:  By Step~\ref{step:pPack} of Algorithm~\ref{alg:pack} there is $C \in \mathcal{H}$ and $j \in [p(C)]$ such that $P_i = Z^C_j$. Therefore, $|P_i \cap G| = |Z^C_j \cap G| \leq k(G)$ by Lemma~\ref{clm:GREEDY}. 
	
	\item The size of $P$ that is $a$ is at most $(1+2\eps) \cdot m+  2 \cdot \eps^{-22} \cdot Q^2(\eps)$:
	$$a = \sum_{C \in \mathcal{H}} p(C) \leq \sum_{C \in \mathcal{H}} \left((1+2\eps)\cdot |B_C|+2\right) \leq (1+2\eps) \cdot m+  2 \cdot \eps^{-22} \cdot Q^2(\eps)$$ The first equality is by Step~\ref{step:pPack} of Algorithm~\ref{alg:pack}. The first inequality is by Lemma~\ref{clm:GREEDY}. The second inequality is because $\{B_C\}_{C \in \mathcal{H}}$ is a partition of $\{A_i~|~i \in [m]\}$ and that $|\mathcal{H}| \leq  \eps^{-22} \cdot Q^2(\eps)$ by Definition~\ref{def:partition}.

	 \qed
\end{enumerate}

\section{Omitted Proofs of Section~\ref{sec:reduction}}
\label{sec:reductionProofs}

In this section we refer to a BPP instance $\cj$ fixed in Section~\ref{sec:reduction} and a given parameter $\eps \in (0,0.1]$; in addition, we use $\cI = \textsf{Reduce}(\eps,\cj)$ as in Section~\ref{sec:reduction}. 
\subsection{Proofs From the Reduction}

\begin{lemma}
	\label{lem:k_val}
	For every minimal pivot $w\in \{2,\ldots, \eps^{-1}+1\}$ of $\mathcal{J}$ it holds that $s(I_w) \leq \eps \cdot \OPT(\mathcal{J})$. 
\end{lemma}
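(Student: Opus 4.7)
The plan is to prove Lemma~\ref{lem:k_val} by a simple averaging (pigeonhole) argument over the intervals defined in \eqref{Interval}.

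First, I would observe that the intervals $I_2, I_3, \ldots, I_{\eps^{-1}+1}$ are pairwise disjoint, since by \eqref{Interval} the $i$-th interval $I_i$ consists of items with sizes in $[\eps^{i+1}, \eps^i)$, and these half-open size ranges do not intersect for distinct values of $i$. Next, I would use the trivial lower bound $\OPT(\mathcal{J}) \geq s(I)$ on the optimum, which holds because each of the $\OPT(\mathcal{J})$ bins in any feasible packing has total item size at most $1$, hence the sum of item sizes is at most $\OPT(\mathcal{J})$.

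Combining these two observations gives
\[
\sum_{i=2}^{\eps^{-1}+1} s(I_i) \leq s(I) \leq \OPT(\mathcal{J}).
\]
The sum on the left hand side has exactly $\eps^{-1}$ summands (one for each pivot $i \in \{2,\ldots,\eps^{-1}+1\}$), so by averaging there must exist some $i^\star \in \{2,\ldots,\eps^{-1}+1\}$ with $s(I_{i^\star}) \leq \eps \cdot \OPT(\mathcal{J})$.

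Finally, since $w$ is a minimal pivot of $\mathcal{J}$, by \eqref{eq:k} it minimizes $s(I_i)$ over $i \in \{2,\ldots,\eps^{-1}+1\}$. Hence $s(I_w) \leq s(I_{i^\star}) \leq \eps \cdot \OPT(\mathcal{J})$, as required. No step here is an obstacle; the whole argument is a one-line averaging bound, and its role in the paper is to justify that discarding and separately repacking the items in the minimal interval $I_w$ (via the linear-shifting style construction in Algorithm~\ref{alg:reduction}) incurs only an $O(\eps \cdot \OPT(\mathcal{J}))$ additive overhead.
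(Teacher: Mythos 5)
Your proof is correct and is essentially the paper's own argument: both rely on the disjointness of the $\eps^{-1}$ intervals, the bound $s(I)\leq \OPT(\cj)$, and the minimality of $I_w$ from \eqref{eq:k}. The paper simply compresses the averaging step into the single inequality $s(I_w)\leq \eps\cdot s(\bigcup_i I_i)$, whereas you spell it out via pigeonhole; the substance is identical.
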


\begin{proof}

	

	$$ s(I_w) \leq \eps \cdot s\left( \bigcup_{i \in \{2,\ldots, \eps^{-1}+1\}} I_i \right) \leq \eps \cdot s(I) \leq \eps \cdot \OPT(\cj).$$

	The first inequality is because $I_2, \ldots, I_{\eps^{-1}+1}$ are $\eps^{-1}$ disjoint subsets of items and by \eqref{eq:k} it holds that $I_w$ is an interval of minimum total size. The last inequality is because each bin may contain total size at most $1$; thus, the total size of items is a lower bound to the optimum. 

\end{proof}

\noindent{\bf Proof of Lemma~\ref{lem:few_large_groups}:} 
	
	If $\OPT(\cj) = 0$ there are no items and the claim trivially follows. Otherwise, let $$\cB = \big\{G \in \cG~\big|~|G \cap  (H_w \cup I_w)| \geq \eps^{2w+4}\big\}$$ be the subset of groups that contain at least $\eps^{2w+4}$ $w$-heavy and $w$-medium items. The following holds for all $G \in \cB$:
	
	\begin{equation}
		\label{eq:feww}
		\begin{aligned}
			s(G) \geq{} & s\left(G \cap (H_w \cup I_w)\right) = \sum_{\ell \in G \cap (H_w \cup I_w)} s(\ell) \geq \sum_{\ell \in G \cap (H_w \cup I_w)} \eps^{w+1} =  \eps^{w+1} \cdot |G \cap (H_w \cup I_w)|  \\ \geq{} &  \eps^{w+1} \cdot \eps^{2w+4} \cdot \OPT(\cj) = \eps^{3w+5} \cdot \OPT(\cj). 
		\end{aligned}
	\end{equation} The second inequality is because the size of a $w$-medium item is at least $\eps^{w+1}$. The third inequality is because $G \in \cB$. Therefore, 
	
	$$|\cB| = \sum_{G \in \cB} 1 \leq   \sum_{G \in \cB} \frac{s(G)}{\eps^{3w+5} \cdot \OPT(\cj)} \leq  \frac{s(I)}{\eps^{3w+5} \cdot \OPT(\cj)} \leq  \frac{\OPT(\cj)}{\eps^{3w+5} \cdot \OPT(\cj)} = \eps^{-3w-5}.$$ The first inequality is by \eqref{eq:feww}. \qed

	The next two lemmas combined are essentially the proof of Condition~1 in Lemma~\ref{lem:reductionReconstruction}.

	\begin{lemma}
		\label{lem:shiftingNotIncreaseOPT2}
		For $\cI = \textsf{Reduce}(\eps,\cj)$ it holds that $OPT(\mathcal{I}) \leq OPT({\cal J})$.
	\end{lemma}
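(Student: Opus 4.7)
The plan is to show that any packing of $\cj$ is, in fact, also a feasible packing of $\cI=\textsf{Reduce}(\eps,\cj)$, since the reduction only \emph{relaxes} the matroid constraint (item sizes and the ground set $I$ are preserved). Consequently, an optimal packing of $\cj$ witnesses a packing of $\cI$ using the same number of bins.

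Concretely, let $A=(A_1,\ldots,A_m)$ be an optimal packing of $\cj$, so $m=\OPT(\cj)$, $s(A_b)\leq 1$ for all $b\in[m]$, and $|A_b\cap G|\leq k(G)$ for every $G\in\cG$. Since $\cI$ and $\cj$ share the same set of items $I$ and the same size function $s$, the size condition $s(A_b)\leq 1$ is preserved automatically, so it suffices to verify the new matroid constraints for every bin $A_b$ and every new group $G\in\cG_R=\cG_L(w)\cup \cG_S(w)\cup\{\Gamma_w\}$.

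I would then just check the three cases arising from Step~\ref{kr} of Algorithm~\ref{alg:reduction}:
\begin{itemize}
	\item If $G\in\cG_L(w)$, then $k_R(G)=k(G)$, and $G$ is unchanged from $\cj$, so $|A_b\cap G|\leq k(G)=k_R(G)$.
	\item If $G=G'\setminus H_w$ for some $G'\in\cG\setminus\cG_L(w)$, then $k_R(G)=k(G')$, and since $G\subseteq G'$ we have $|A_b\cap G|\leq |A_b\cap G'|\leq k(G')=k_R(G)$.
	\item If $G=\Gamma_w$, then $k_R(\Gamma_w)=|\Gamma_w|+1$, so trivially $|A_b\cap\Gamma_w|\leq|\Gamma_w|<k_R(\Gamma_w)$.
\end{itemize}

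Thus each $A_b$ is a configuration of $\cI$, so $A$ is a feasible packing of $\cI$ using $m=\OPT(\cj)$ bins, giving $\OPT(\cI)\leq\OPT(\cj)$. There is no real obstacle here: the only thing one must be careful about is that $\textsf{Reduce}$ keeps $I$ and $s$ intact and only modifies the grouping/cardinality data, and that every new cardinality bound is at least as large as a bound that $A$ already satisfied; this is immediate from the definition of $k_R$ in Step~\ref{kr}.
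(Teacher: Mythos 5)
Your proof is correct and follows essentially the same route as the paper: take a packing of $\cj$, observe that $I$ and $s$ are unchanged so only the cardinality constraints need checking, and verify the three cases of Step~\ref{kr} of Algorithm~\ref{alg:reduction} exactly as the paper does.
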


	\begin{proof}
		Let $A = (A_1, \ldots, A_m)$ be a packing of $\cj$ and let $w \in \{2, \ldots, \eps^{-1}+1\}$ chosen in Step~\ref{step:pivot} of Algorithm~\ref{alg:reduction}. We show that $A$ is also a packing of $\cI$ which concludes the proof. By the definition of packing, we prove the following. 
		
		\begin{itemize}
			
			\item It holds that $A$ is a partition of $I$. Follows because $A$ is a packing of $\cj$ and because that by Step~\ref{Ir} the sets of items in $\cI,\cj$ are identical. 
			
			\item For all $i \in [m]$ it holds that $s(A_i) \leq 1$. By Step~\ref{Ir} the set of items and the size functions in $\cI,\cj$ are identical and the proof follows.

			\item For all $G \in \cG_R$ and $i \in [m]$ it holds that $|G \cap A_i| \leq k_R(G)$. We split into three cases by Step~\ref{kr}. 
			
			\begin{enumerate}
				\item $G \in \cG_L(w)$. Then, $|G \cap A_i| \leq k(G) = k_R(G)$ where the inequality is because $A$ is a packing of $\cj$ and the equality is by Step~\ref{kr}. 
				
				\item $G = G' \setminus H_w \text{ s.t. } G' \in \cG \setminus \cG_L(w)$. Then, 
				
				$$|G \cap A_i| \leq |(G \cup (G' \cap H_w)) \cap A_i| = |G' \cap A_i| \leq k(G') = k_R(G).$$  The second inequality is because $A$ is a packing of $\cj$. The last equality is by Step~\ref{kr}. 
				
				\item $G = \Gamma_w$. Then, $|G \cap A_i| = |\Gamma_w \cap A_i| \leq |\Gamma_w| < |\Gamma_w+1| = k_R(\Gamma_w) = k_R(G)$ where the second equality is by Step~\ref{kr}. 
			\end{enumerate}
		\end{itemize}
	\end{proof}

	\begin{lemma}
		\label{lem:rounding12}
		$\cI$ is $\eps$-structured. 
	\end{lemma}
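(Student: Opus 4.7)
The plan is to exhibit an explicit set $\cB \subseteq \cG_R$ satisfying the two requirements of Definition~\ref{def:structuring}. Let $w$ be the minimal pivot selected in Step~\ref{step:pivot} of Algorithm~\ref{alg:reduction}. I would set
\[
\cB \;=\; \cG_L(w) \cup \{\Gamma_w\},
\]
that is, all the $w$-large groups together with the single $w$-union group. The remaining groups in $\cG_R$ are exactly the reduced $w$-small groups $\cG_S(w)$, and the verification breaks into a cardinality bound on $\cB$ and a size bound on items in $\cG_R\setminus \cB$.

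The size bound is the easier part. Each $G \in \cG_S(w)$ has the form $G' \setminus H_w$ for some $G' \in \cG \setminus \cG_L(w)$, so every $\ell \in G$ satisfies $s(\ell) < \eps^w$ by the definition of $H_w$ as the set of $w$-heavy items. Since $w \geq 2$ and $\eps\in(0,0.1)$, we have $\eps^w \leq \eps^2$, giving $s(\ell) < \eps^2$ for every such $\ell$, which is exactly the condition required in Definition~\ref{def:structuring}.

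For the cardinality bound, using the definition $\kappa_w = \min\{\eps^{-3w-5},|\cG|\}$ I get $|\cG_L(w)| \leq \eps^{-3w-5}$, hence $|\cB| \leq \eps^{-3w-5} + 1$. The pivot satisfies $w \leq \eps^{-1}+1$, so $3w+5 \leq 3\eps^{-1}+8$. For $\eps < 0.1$ this is comfortably smaller than $\eps^{-2}$ (since $\eps^{-2} \geq 10\eps^{-1} > 3\eps^{-1}+8$), and therefore $\eps^{-3w-5} + 1 \leq \eps^{-\eps^{-2}} = K(\eps)$, as required.

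No genuine obstacle arises here; the only thing to be careful about is the small-$\eps$ arithmetic used to compare the exponents $3w+5$ and $\eps^{-2}$, and keeping track of the ``$+1$'' from including the union group $\Gamma_w$ in $\cB$. Combining the two bounds above yields that $\cI$ is $\eps$-structured, which together with Lemma~\ref{lem:shiftingNotIncreaseOPT2} establishes Condition~\ref{condition:reduction1} of Lemma~\ref{lem:reductionReconstruction}.
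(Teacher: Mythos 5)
Your proposal is correct and follows essentially the same route as the paper: the paper also observes that only the $w$-large groups and $\Gamma_w$ can contain an item of size at least $\eps^2$ (since items outside $H_w$ have size below $\eps^w \leq \eps^2$ for $w \geq 2$), and bounds their number by $\eps^{-3w-5}+1 \leq \eps^{-3w-6} \leq \eps^{-\eps^{-2}} = K(\eps)$ using $w \leq \eps^{-1}+1$. Your explicit choice $\cB = \cG_L(w)\cup\{\Gamma_w\}$ and the small-$\eps$ arithmetic (including absorbing the ``$+1$'') match the paper's argument.
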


	\begin{proof}
		Let $w \in \{2, \ldots, \eps^{-1}+1\}$ chosen in Step~\ref{step:pivot} of Algorithm~\ref{alg:reduction}. Therefore, 
		\begin{equation*}
			\begin{aligned}
				\big|\{G{} & \in \cG_R~|~ \exists \ell \in G \text{ s.t. } s(\ell) \geq \eps^2\}\big| \leq \big|\{G \in \cG_R~|~ G \cap H_w \neq \emptyset\}\big| \leq |\cG_L(w)|+|\{\Gamma_w\}|  \\ \leq{} &   \eps^{-3w-5}+1 \leq \eps^{-3w-6} \leq \eps^{-3\eps^{-1}-3-6} \leq  \eps^{-4\eps^{-1}} \leq  \eps^{-\eps^{-2}} =  K(\eps).
			\end{aligned}
		\end{equation*} The first inequality is because for all $\ell \in I \setminus H_w$ it holds that $s(\ell) \leq \eps^2$ because $w \geq 2$ by \eqref{eq:k}. 
	\end{proof}

	\subsection{Proofs From the Reconstruction}
	
	For the following, let $A = (A_1, \ldots, A_m)$ be a packing of $\cI$ and let $F_A = (U_1, \ldots, U_{m'})$ (see \eqref{F_A}). In addition, for all $w \in \{2,\ldots, \eps^{-1}+1\}$ let $Y_w = I \setminus (H_w \cup I_w)$.

\begin{claim}
	\label{claim:tu2}
	For all $ i\in [m']$ and $G \in \cG$ it holds that $|G \cap Y_w\cap U_i| \leq k(G)$.
\end{claim}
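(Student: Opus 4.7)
The plan is to show that restricting $U_i$ to $w$-light items simply cuts $U_i$ down to $A_i \cap Y_w$, at which point the cardinality bound follows from the fact that $A$ is a packing of the reduced instance $\cI$.

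First I would split the index $i$ according to the structure of $F_A$ in~\eqref{F_A}. If $i\in \{m+1,\ldots,m'\}$ then $U_i=\{\ell\}$ for some $\ell\in R$, so $|G\cap Y_w\cap U_i|\leq 1\leq k(G)$ since $k(G)\geq 1$ by assumption, and the claim is immediate. So the interesting case is $i\in [m]$, where $U_i=(A_i\setminus (\Gamma_w\cup \Omega_w))\cup B_i$.

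For $i\in [m]$, I would prove the key identity
\[
Y_w\cap U_i \;=\; Y_w\cap A_i.
\]
The inclusion $Y_w\cap U_i\subseteq Y_w\cap A_i$ uses two observations: $B_i\subseteq \Gamma_w$ by Lemma~\ref{lem:Bgreedy} (items moved by \textsf{Fill} come from $\Gamma_w$), and $\Gamma_w\subseteq H_w$ by~\eqref{Iu}, so $B_i\subseteq H_w$ and therefore $Y_w\cap B_i=\emptyset$ since $Y_w=I\setminus (H_w\cup I_w)$. Hence every $w$-light item of $U_i$ must come from $A_i\setminus(\Gamma_w\cup \Omega_w)\subseteq A_i$. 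The reverse inclusion holds because $\Gamma_w\cup \Omega_w\subseteq H_w\cup I_w$ (by~\eqref{Iu} and~\eqref{X:X}), so $Y_w\cap(\Gamma_w\cup \Omega_w)=\emptyset$, and thus $Y_w\cap A_i=Y_w\cap(A_i\setminus(\Gamma_w\cup \Omega_w))\subseteq Y_w\cap U_i$.

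Having reduced to bounding $|G\cap Y_w\cap A_i|$, I would case on whether $G\in \cG_L(w)$ or $G\in \cG\setminus \cG_L(w)$. If $G\in \cG_L(w)$, then $G\in \cG_R$ and $k_R(G)=k(G)$ by Step~\ref{kr} of \textsf{Reduce}; since $A$ is a packing of $\cI$, $|G\cap Y_w\cap A_i|\leq |G\cap A_i|\leq k_R(G)=k(G)$. If $G\in \cG\setminus \cG_L(w)$, then $G\setminus H_w\in \cG_R$ with $k_R(G\setminus H_w)=k(G)$. Because $Y_w\cap H_w=\emptyset$, we have $G\cap Y_w\subseteq G\setminus H_w$, and so
\[
|G\cap Y_w\cap A_i|\;\leq\;|(G\setminus H_w)\cap A_i|\;\leq\;k_R(G\setminus H_w)\;=\;k(G),
\]
where the middle inequality uses that $A$ is a packing of $\cI$. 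The main step requiring care is the membership bookkeeping $B_i\subseteq H_w$ and $\Omega_w\subseteq I_w$, but these are direct from the definitions in~\eqref{Iu},~\eqref{X:X} and Lemma~\ref{lem:Bgreedy}; no quantitative estimates are needed.
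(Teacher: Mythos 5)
Your proof is correct and follows essentially the same route as the paper: handle the singleton bins coming from $R$ separately, observe that $Y_w$ is disjoint from $B_i$, $\Gamma_w$ and $\Omega_w$ so that $Y_w\cap U_i$ reduces to $Y_w\cap A_i$, and then invoke the cardinality constraints of the reduced instance, casing on whether $G\in\cG_L(w)$ or $G\in\cG\setminus\cG_L(w)$. In the $w$-small case your chain through $|(G\setminus H_w)\cap A_i|\leq k_R(G\setminus H_w)$ is in fact slightly cleaner than the paper's, which passes through $|G\cap A_i|$.
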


\begin{proof}
If $i \in [m'] \setminus [m]$, then by \eqref{F_A} it holds that $|G \cap Y_w\cap U_i| =0 \leq k(G)$. Otherwise, we split into two cases by Step~\ref{step:groups} of Algorithm~\ref{alg:reduction}.

	\begin{enumerate}
		\item $G \in \cG_{L}(w)$. Then, \begin{equation*}
			\label{eq:FAHelp}
			|G \cap Y_w\cap U_i| = |G \cap Y_w \cap \left(( A_i \setminus (\Gamma_w \cup \Omega_w)) \cup B_i\right)| \leq |G \cap Y_w \cap A_i| \leq |G \cap A_i| \leq k_R(G) = k(G).
		\end{equation*} The first inequality is because $(\Omega_w \cup \Gamma_w) \cap Y_w = \emptyset$ by \eqref{Iu}, \eqref{X:X}, and because $B_i \subseteq \Gamma_w$ by Lemma~\ref{lem:Bgreedy}. The third inequality is because $A$ is a packing of $\cI$. The last equality is by Step~\ref{kr} of Algorithm~\ref{alg:reduction}. 
		
		\item $G  \in \cG \setminus \cG_{L}(w)$. Then, by \eqref{Is} it holds that $G \setminus H_w \in \cG_{S}(w)$. Therefore, \begin{equation*}
			|G \cap Y_w\cap U_i| = |G \cap Y_w \cap \left(( A_i \setminus (\Gamma_w \cup \Omega_w)) \cup B_i\right)| \leq |G \cap Y_w \cap A_i| \leq |G \cap A_i| \leq k_R(G \setminus H_w) = k(G).
		\end{equation*}  The first inequality is because $B_i \subseteq \Gamma_w$ by Lemma~\ref{lem:Bgreedy}. The third inequality is because $A$ is a packing of $\cI$. The last equality is by Step~\ref{kr} of Algorithm~\ref{alg:reduction}. 
		
	\end{enumerate}

\end{proof}
\begin{claim}
	\label{claim:tu}
	For all $ i\in [m']$ and $G \in \cG$ it holds that $|F_A(i,G)| \leq |G \cap U_i \cap H_w|$. 
\end{claim}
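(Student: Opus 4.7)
\noindent\textbf{Proof plan for Claim~\ref{claim:tu}.}
The plan is to distinguish the two kinds of indices $i\in [m']$ (those with $i\in [m]$ and those with $i>m$) and, within the first kind, to split the analysis on whether the group $G$ is $w$-large or $w$-small. The key observation driving the argument is that $F_A(i,G)$ is drawn only from the $w$-light items in $U_i$, i.e., from $U_i\cap Y_w$, and by exclusion-minimality its size equals $\max\{0, |U_i\cap G|-k(G)\}$. So to bound $|F_A(i,G)|$ it suffices to bound $|U_i\cap G|$ in terms of $k(G)$ plus $|U_i\cap G\cap H_w|$.

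For the easy case $i>m$, I would recall from~\eqref{F_A} that $U_i=\{\ell\}$ with $\ell\in R\subseteq \Gamma_w\subseteq H_w$. Hence $U_i$ has no $w$-light items, forcing $F_A(i,G)=\emptyset$ and the claim is immediate. For $i\in[m]$ with $G\in\cG_L(w)$, I would use the fact that $B_i\subseteq \Gamma_w$ is made of $w$-heavy items from $w$-small groups, so $B_i\cap G=\emptyset$; consequently $U_i\cap G\subseteq A_i\cap G$, and since $A$ packs $\cI$ and $G$ remains a group of the reduced instance with $k_R(G)=k(G)$, we get $|U_i\cap G|\leq k(G)$, whence $|F_A(i,G)|=0$.

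The interesting case is $i\in[m]$ with $G\in \cG\setminus\cG_L(w)$. Here I would decompose $U_i\cap G$ as the disjoint union of $(A_i\setminus(\Gamma_w\cup\Omega_w))\cap G$ and $B_i\cap G$. Using $G\cap H_w\subseteq\Gamma_w$ and $G\cap I_w\subseteq\Omega_w$ (since $G$ is $w$-small), the first set equals $A_i\cap G\cap Y_w\subseteq A_i\cap (G\setminus H_w)$, whose cardinality is at most $k_R(G\setminus H_w)=k(G)$ because $A$ is a packing of $\cI$ and $G\setminus H_w\in \cG_S(w)$. The second set $B_i\cap G$ lies in $\Gamma_w\cap G=G\cap H_w$ and, by the same decomposition applied to $U_i\cap G\cap H_w$, actually equals $U_i\cap G\cap H_w$. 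Adding the two bounds yields $|U_i\cap G|\leq k(G)+|U_i\cap G\cap H_w|$, and the conclusion follows from $|F_A(i,G)|=\max\{0,|U_i\cap G|-k(G)\}$.

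I do not anticipate a genuine obstacle; the main thing to be careful about is the bookkeeping when expanding $U_i$ via~\eqref{F_A} and using the reduced cardinality constraints $k_R$ from Step~\ref{kr} of Algorithm~\ref{alg:reduction}, together with the definitions of $\Gamma_w$ and $\Omega_w$ in~\eqref{Iu} and~\eqref{X:X}. Once these inclusions are unwound, the proof reduces to the simple arithmetic above.
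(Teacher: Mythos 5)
Your proof is correct and follows essentially the same route as the paper's: both split on whether $G$ is $w$-large or $w$-small, use the identity $|F_A(i,G)|=\max\{0,|U_i\cap G|-k(G)\}$ from exclusion-minimality, and in the $w$-small case decompose $U_i\cap G$ by $H_w$ versus $Y_w$ and bound the $Y_w$ part by $k(G)$. The only notable differences are presentational: you handle $i>m$ as an explicit base case (the paper lets it ride implicitly through the same inequalities), and you re-derive the bound $|G\cap U_i\cap Y_w|\leq k(G)$ inline rather than invoking the paper's Claim~\ref{claim:tu2}; you also spell out $B_i\cap G=\emptyset$ for $w$-large $G$ and $\Gamma_w\cap G=G\cap H_w$ for $w$-small $G$, which the paper leaves implicit in its justification that ``$A$ is a packing of $\cI$.'' These additions make the argument self-contained but do not change the underlying structure.
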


\begin{proof}
	
	We split into two cases by Step~\ref{step:groups} of Algorithm~\ref{alg:reduction}.

\begin{enumerate}
	\item $G \in \cG_{L}(w)$.

		\begin{equation*}
		\begin{aligned}
	|F_A(i,G)| =  \max\{|G \cap U_i|-k(G),0\} \leq{} &  \max\{k_R(G)-k(G),0\}  =  \max\{k(G)-k(G),0\} = 0.
		\end{aligned}
	\end{equation*}
	
	The first equality is because for any bin $i \in [m']$ and group $G \in \cG$ it holds that $F_A(i,G)$ is an exclusion-minimal subset of $w$-light items from $U_i$ such that $| (U_i \setminus F_A(i,G) )  \cap G| \leq k(G)$. The first inequality is because $A$ is a packing of $\cI$. The second equality is by Step~\ref{kr} of Algorithm~\ref{alg:reduction}.  
	
	\item $G  \in \cG \setminus \cG_{L}(w)$. Then, 	\begin{equation*}
		\begin{aligned}
			|F_A(i,G)| = {} & \max\{ |G \cap U_i|-k(G),0\} = \max\{ |G \cap U_i \cap H_w|+|G \cap U_i \cap Y_w|-k(G),0\} \\ \leq{} & \max\{ |G \cap U_i \cap H_w|+k(G)-k(G),0\} = |G \cap U_i \cap H_w|.
		\end{aligned}
	\end{equation*} The first equality is because for any bin $i \in [m']$ and group $G \in \cG$ it holds that $F_A(i,G)$ is an exclusion-minimal subset of $w$-light items from $U_i$ such that $| (U_i \setminus F_A(i,G) )  \cap G| \leq k(G)$. The second equality is because $\Omega_w \cap U_i = \emptyset$ by \eqref{X:X}, \eqref{F_A}, and because \textsf{Fill}(A) is a partition of items in $\Gamma_w$ by Lemma~\ref{lem:Bgreedy}. The inequality is by Claim~\ref{claim:tu2}.
	
\end{enumerate}

\end{proof}

\begin{claim}
	\label{du:help}
	For all $G \in \cG$ it holds that $|D(A) \cap G| \leq \eps^{2w+4} \OPT(\cj)$. 
\end{claim}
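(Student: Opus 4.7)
My plan is to show that $D(A) \cap G$ is empty (or small) by splitting on whether $G \in \cG_L(w)$ or not. The high-level idea is that $D(A)$ only contains $w$-light items that were evicted to satisfy the cardinality constraints of some group, and by the construction of $F_A$ together with $\textsf{Fill}$, only groups outside $\cG_L(w)$ can require evictions — and those groups are by definition not very ``heavy''.

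First, I would observe that each $F_A(i,G')$ is a subset of $G'$ (by exclusion-minimality, removing items outside $G'$ never helps shrink $|(U_i\setminus F_A(i,G'))\cap G'|$). Hence the groups $G'$ appearing in the definition of $D(A)$ partition it, giving
\[
|D(A) \cap G| \;=\; \sum_{i \in [m']} |F_A(i,G)|.
\]
By Claim~\ref{claim:tu}, each summand satisfies $|F_A(i,G)| \leq |G \cap U_i \cap H_w|$, so
\[
|D(A) \cap G| \;\leq\; \sum_{i \in [m']} |G \cap U_i \cap H_w|.
\]
The next step is to observe that $F_A$ is a partition of $I$ (this follows from~\eqref{F_A}: items in $\Gamma_w$ are placed by $\textsf{Fill}$ either into some $B_i$ or as singletons in $R$, items in $\Omega_w$ are removed, and every other item of $I$ stays in its original bin of $A$). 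Since $H_w \cap \Omega_w = \emptyset$ by \eqref{X:X}, every $w$-heavy item of $G$ appears in exactly one $U_i$, so $\sum_i |G \cap U_i \cap H_w| = |G \cap H_w|$.

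Now I split into cases based on Step~\ref{step:groups} of Algorithm~\ref{alg:reduction}. If $G \in \cG_L(w)$, the proof of Claim~\ref{claim:tu} already gives $|F_A(i,G)| = 0$ for every $i$, so $|D(A) \cap G| = 0 \leq \eps^{2w+4}\OPT(\cj)$. Otherwise $G \in \cG \setminus \cG_L(w)$; by the definition of $\cG_L(w)$ as the top $\kappa_w = \min\{\eps^{-3w-5},|\cG|\}$ groups ordered by $g_w(G)=|G \cap (H_w \cup I_w)|$, combined with Lemma~\ref{lem:few_large_groups} (which says there are at most $\eps^{-3w-5}$ groups with $g_w(G)\geq \eps^{2w+4}\OPT(\cj)$), it must be that $g_w(G) < \eps^{2w+4}\OPT(\cj)$. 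In particular $|G \cap H_w| \leq g_w(G) < \eps^{2w+4}\OPT(\cj)$, which closes the bound.

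There is no real obstacle here — the work has been done in Claim~\ref{claim:tu} and Lemma~\ref{lem:few_large_groups}. The only subtlety is to notice that $F_A$ is a disjoint packing of $I$ so that the aggregation $\sum_i |G \cap U_i \cap H_w| = |G \cap H_w|$ is exact, and to invoke the definition of the $w$-large groups as the top $\eps^{-3w-5}$ under $g_w$ so that every group outside $\cG_L(w)$ inherits the bound of Lemma~\ref{lem:few_large_groups}.
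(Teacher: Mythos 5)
Your proposal is correct and follows essentially the same route as the paper: a case split on $G \in \cG_L(w)$ versus $G \in \cG\setminus\cG_L(w)$, with the large-group case killed by the cardinality argument behind Claim~\ref{claim:tu} (the paper re-derives it as its inequality~\eqref{DAS}), and the small-group case handled by summing Claim~\ref{claim:tu} over bins, aggregating to $|G\cap H_w|$, and invoking Lemma~\ref{lem:few_large_groups} together with the ordering that defines $\cG_L(w)$ — a deduction the paper leaves implicit and you spell out. The only slip is calling $F_A$ a partition of $I$ (by \eqref{F_A} it is a partition of $I\setminus\Omega_w$), which is harmless here exactly for the reason you note, namely $H_w\cap\Omega_w=\emptyset$, so each $w$-heavy item of $G$ still lies in exactly one $U_i$.
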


\begin{proof}
	We split into three cases by Step~\ref{kr} of Algorithm~\ref{alg:reduction}.

	\begin{enumerate}
		\item $G \in \cG_{L}(w)$. Then, we use the following inequality.
		
		\begin{equation}
			\label{DAS}
			|G \cap U_i| = |G \cap \left(( A_i \setminus (\Gamma_w\cup \Omega_w) ) \cup B_i\right)| = |G \cap A_i| \leq k_R(G) = k(G).
		\end{equation} 
		
		The second equality is because $(\Gamma_w \cup \Omega_w) \cap G = \emptyset$ by \eqref{Iu} and $B_i \subseteq \Gamma_w$ by Lemma~\ref{lem:Bgreedy}. Therefore,

		$$|D(A) \cap G| = \sum_{i \in [m']} |F_A(i,G)| = \sum_{i \in [m']}  \max\left\{ |G \cap U_i| - k(G),0 \right\} = 0.$$ The second equality is because for any bin $i \in [m']$ and group $G \in \cG$ it holds that $F_A(i,G)$ is an exclusion-minimal subset of $w$-light items from $U_i$ such that $| (U_i \setminus F_A(i,G) )  \cap G| \leq k(G)$. The last equality is by \eqref{DAS}.

		\item $G  \in \cG \setminus \cG_{L}(w)$. Now,
		
		\begin{equation*}
			\begin{aligned}
				|D(A) \cap G| ={} & \left|\bigcup_{i \in [m']} F_A(i,G)\right| \leq \sum_{i \in [m']} |G \cap U_i \cap H_w|  = |G \cap H_w| \leq \eps^{2w+4} \OPT(\cj).
			\end{aligned}
		\end{equation*}
		
		The first inequality is by Claim~\ref{claim:tu}. The second equality is by \eqref{F_A} and because $A$ is a packing of $\cI$. The last inequality is by Lemma~\ref{lem:few_large_groups}. 
		
	\end{enumerate}

\end{proof}

\begin{claim}
	\label{sd:help}
	For all $i \in [m']$ and $G \in \cG$ it holds that $s(F_A(i,G)) \leq \eps \cdot s(U_i \cap H_w \cap G)$. 
\end{claim}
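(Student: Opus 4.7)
The plan is to combine the cardinality bound from Claim~\ref{claim:tu} with the size gap between $w$-light and $w$-heavy items. Recall that by the classification of items with respect to pivot $w$, a $w$-light item has size strictly less than $\eps^{w+1}$, while a $w$-heavy item has size at least $\eps^{w}$. Thus the size of any $w$-heavy item exceeds the size of any $w$-light item by a factor of at least $\eps^{-1}$. Since by construction $F_A(i,G)$ consists exclusively of $w$-light items, and $U_i \cap H_w \cap G$ consists of $w$-heavy items in group $G$ of bin $U_i$, this gap should immediately yield the desired ratio once the cardinalities are related.

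First I would apply Claim~\ref{claim:tu} to obtain $|F_A(i,G)| \leq |U_i \cap H_w \cap G|$. Next I would upper-bound the total size of $F_A(i,G)$ using that each of its elements is $w$-light:
\begin{equation*}
s(F_A(i,G)) = \sum_{\ell \in F_A(i,G)} s(\ell) \;\leq\; |F_A(i,G)| \cdot \eps^{w+1}.
\end{equation*}
Then I would lower-bound the right-hand side target by exploiting that each item in $U_i \cap H_w \cap G$ is $w$-heavy:
\begin{equation*}
s(U_i \cap H_w \cap G) \;\geq\; |U_i \cap H_w \cap G|\cdot \eps^{w}.
\end{equation*}

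Chaining these three inequalities gives
\begin{equation*}
s(F_A(i,G)) \;\leq\; |F_A(i,G)| \cdot \eps^{w+1} \;\leq\; |U_i \cap H_w \cap G|\cdot \eps^{w+1} \;=\; \eps \cdot |U_i \cap H_w \cap G|\cdot \eps^{w} \;\leq\; \eps \cdot s(U_i \cap H_w \cap G),
\end{equation*}
as required. The only subtle point to double-check is the edge case $|U_i \cap H_w \cap G|=0$; here Claim~\ref{claim:tu} forces $|F_A(i,G)|=0$ as well, so both sides vanish and the inequality holds trivially. There is no real obstacle: the claim follows almost immediately from Claim~\ref{claim:tu} together with the definitional size separation between $w$-heavy and $w$-light items induced by the interval decomposition in~\eqref{Interval}.
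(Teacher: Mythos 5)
Your proof is correct and follows essentially the same route as the paper: both combine the cardinality bound from Claim~\ref{claim:tu} with the $\eps$-factor size separation between $w$-light and $w$-heavy items (you use the explicit thresholds $\eps^{w+1}$ and $\eps^{w}$, while the paper argues via the maximum size in $F_A(i,G)$ versus the minimum size in $U_i \cap H_w \cap G$, which is the same idea). Your treatment of the empty-set edge case also matches the paper's.
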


\begin{proof}
	
	If $F_A(i,G) = \emptyset$ then the claim is trivially satisfied. Otherwise, by Claim~\ref{claim:tu} it holds that $U_i \cap H_w \cap G \neq \emptyset$. Therefore, 
	\begin{equation*}
		\begin{aligned}
			s(F_A(i,G)) ={} & \sum_{\ell \in F_A(i,G)} s(\ell) \leq \sum_{\ell \in F_A(i,G)} \max_{\ell \in F_A(i,G)} s(\ell) =  |F_A(i,G)|  \cdot \max_{\ell \in F_A(i,G)} s(\ell) \leq |U_i \cap H_w \cap G|  \max_{\ell \in F_A(i,G)} s(\ell) \\ \leq{} &  |U_i \cap H_w \cap G| \cdot  \eps \cdot \min_{\ell \in U_i \cap H_w \cap G} s(\ell) \leq  \eps \cdot \sum_{\ell \in U_i \cap H_w \cap G} s(\ell) = \eps \cdot s(U_i \cap H_w \cap G).
		\end{aligned}
	\end{equation*} The second inequality is by Claim~\ref{claim:tu}. The third inequality is because the size of the largest item in $Y_w$ is at least $\eps^{-1}$ times smaller compared to the smallest item in $H_w$ by \eqref{Interval}.

\end{proof}

\begin{lemma}
	\label{lem:s(D)}
	$s(D(A)) \leq \eps \cdot \OPT(\cj)$.
\end{lemma}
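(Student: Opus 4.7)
\noindent\textbf{Proof proposal for Lemma~\ref{lem:s(D)}.}
The plan is to reduce the statement to a telescoping summation using Claim~\ref{sd:help}, and then to bound the total size of $w$-heavy items by $\OPT(\cj)$. First, by the definition of $D(A)$ in~\eqref{D}, the size $s(D(A))$ is at most $\sum_{i\in[m']}\sum_{G\in \cG} s(F_A(i,G))$. Applying Claim~\ref{sd:help} termwise yields
\[
s(D(A)) \;\leq\; \eps\cdot \sum_{i\in[m']}\sum_{G\in \cG} s\bigl(U_i\cap H_w\cap G\bigr).
\]
Since the groups in $\cG$ partition $I$, for every fixed $i$ the inner sum collapses to $s(U_i\cap H_w)$.

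Next, I would observe that $F_A$ covers every $w$-heavy item exactly once. Indeed, by~\eqref{F_A} the first $m$ entries of $F_A$ contain every item of $I\setminus (\Gamma_w\cup \Omega_w)$ (this is the contribution of $A_i\setminus(\Gamma_w\cup\Omega_w)$, since $A$ is a packing of $\cI$), together with the items in $\bigcup_i B_i$, while the remaining singletons account for $R$. By Lemma~\ref{lem:Bgreedy}, the sets $B_1,\ldots,B_m,R$ partition $\Gamma_w$, so the union of the $U_i$'s equals $I\setminus \Omega_w$ (a disjoint union). Because $\Omega_w\subseteq I_w$ is disjoint from $H_w$, this gives
\[
\sum_{i\in[m']} s(U_i\cap H_w) \;=\; s(H_w).
\]

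Combining these two displays yields $s(D(A)) \leq \eps\cdot s(H_w)$. Finally, $s(H_w)\leq s(I)\leq \OPT(\cj)$, since in any packing the total size of items is a lower bound on the number of unit-capacity bins. This gives $s(D(A))\leq \eps\cdot \OPT(\cj)$, as required. The main (very mild) obstacle is simply verifying that the $U_i$'s form a disjoint cover of $I\setminus\Omega_w$; once that is in place, the rest is a one-line calculation.
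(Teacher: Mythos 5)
Your proof is correct and follows essentially the same route as the paper: expand $s(D(A))$ via \eqref{D}, apply Claim~\ref{sd:help} termwise, and bound the resulting double sum by $\eps\cdot s(I)\leq \eps\cdot\OPT(\cj)$. The only difference is cosmetic — you verify explicitly that the $U_i$'s disjointly cover $I\setminus\Omega_w$ and pass through $s(H_w)$, whereas the paper simply bounds $\sum_{G}\sum_{i}s(U_i\cap G)\leq s(I)$ directly.
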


\begin{proof}
	
	\comment{
	
	\begin{equation*}
		\begin{aligned}
			s(D(A)) ={} & \sum_{G \in \cG} \sum_{i \in [m']} s(F_A(i,G)) \leq \sum_{G \in \cG} \sum_{i \in [m']} \eps \cdot s(U_i \cap H_w \cap G) \leq  \eps \cdot \sum_{G \in \cG} \sum_{i \in [m']}  \cdot s(U_i \cap G) \\ ={} & \eps \cdot \sum_{G \in \cG} \sum_{i \in [m']}  \left( s(A_i \cap G)-s((\Gamma_w \cup \Omega_w) \cap A_i)-s(B_i \cap G) \right) \\ \leq{} & \eps \cdot \left( s(I)-s(\Gamma_w)+s\left(\bigcup_{b \in [m]} B_i\right)\right) \leq \eps \cdot s(I) \leq \eps \cdot \OPT(\cj).
		\end{aligned}
	\end{equation*} 
}

	\begin{equation*}
	\begin{aligned}
		s(D(A)) ={} & \sum_{G \in \cG} \sum_{i \in [m']} s(F_A(i,G)) \leq \sum_{G \in \cG} \sum_{i \in [m']} \eps \cdot s(U_i \cap H_w \cap G) \leq  \eps \cdot \sum_{G \in \cG} \sum_{i \in [m']}  \cdot s(U_i \cap G) \\ \leq{} & \eps \cdot s(I) \leq \eps \cdot \OPT(\cj).
	\end{aligned}
\end{equation*} 
The first equality is by \eqref{D}. The first inequality is by Claim~\ref{sd:help}. 
\end{proof}

\noindent{\bf Proof of Lemma~\ref{thm:D}:} The proof follows by Claim~\ref{lem:s(D)} and Claim~\ref{du:help}.

\begin{lemma}
	\label{lem:shiftingCanBeUsedForI2}
	Given a packing $A$ for $\mathcal{I} = \textsf{Reduce}(\eps, \mathcal{J})$ of size $m$, Algorithm~\ref{Alg:RECON} returns in time $\textnormal{poly}(|\cj|, \frac{1}{\eps})$ a packing $F$ for the instance ${\cal J}$ of size $\phi$ where $\phi<m+13\eps\cdot \OPT(\cj)+1$. 
\end{lemma}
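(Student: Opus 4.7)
The plan is to decompose the output of Algorithm~\ref{Alg:RECON} as $(F_A \setminus E) \oplus \textsf{Greedy}(\mathcal{E})$, verify that each piece is a feasible packing in $\cj$, confirm that together they cover $I$, and finally tally the bin count. Polynomial running time is immediate: \textsf{Fill} runs in polynomial time by Lemma~\ref{lem:Bgreedy}, constructing $\mathcal{E}$ from $F_A$ and taking the element-wise set difference $F_A \setminus E$ are polynomial, and \textsf{Greedy} is polynomial by Lemma~\ref{thm:greedy}. Every item of $\cj$ appears in the output because $\textsf{Fill}(\cj, A)$ partitions $\Gamma_w$ (Lemma~\ref{lem:Bgreedy}), so by \eqref{F_A} the tuple $F_A$ is a partition of $I \setminus \Omega_w$; removing $E = \Omega_w \cup D(A)$ from $F_A$ therefore yields a partition of $I \setminus E$, while $\textsf{Greedy}(\mathcal{E})$ packs the remaining items $E$ by Lemma~\ref{thm:greedy}.

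For feasibility of a bin $U_i = (A_i \setminus (\Gamma_w \cup \Omega_w)) \cup B_i$ in $F_A$ with $i \in [m]$, the crucial step is a shifting-style size bound. Using Property~1 of Lemma~\ref{lem:Bgreedy} ($|B_i \cap P_j| \leq |A_i \cap P_{j-1}|$ for $j \geq 2$ and $B_i \cap P_1 = \emptyset$) together with the property that every item in $P_{j-1}$ is at least as large as every item in $P_j$, one obtains $s(B_i) \leq s(A_i \cap \Gamma_w)$ by matching items from $B_i \cap P_j$ to distinct, no-smaller items of $A_i \cap P_{j-1}$ and summing over $j$. Hence $s(U_i) \leq s(A_i) - s(A_i \cap \Omega_w) \leq 1$. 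The group cardinality constraint $|(U_i \setminus E) \cap G| \leq |(U_i \setminus F_A(i,G)) \cap G| \leq k(G)$ holds directly by the definition of the exclusion-minimal set $F_A(i, G)$ from \eqref{D}. Singleton bins $\{\ell\}$ for $\ell \in R$ are trivially configurations of $\cj$. Every bin of $\textsf{Greedy}(\mathcal{E})$ is a configuration of $\cj$ because the size function in $\mathcal{E}$ agrees with $s$ and $k_E(G \cap E) = k(G)$ by \eqref{Ie}, so configurations of $\mathcal{E}$ lift to configurations of $\cj$.

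To bound the number of bins, $F_A \setminus E$ contributes $|F_A| = m + |R| \leq m + \eps \cdot \OPT(\cj) + 1$ by Lemma~\ref{lem:Bgreedy}. For $\textsf{Greedy}(\mathcal{E})$, every item in $E$ has size at most $\eps^w \leq \eps^2 < 0.5$, so Lemma~\ref{thm:greedy} applies with $\delta = \eps^2$, yielding at most $(1 + 2\eps^2)\max\{s(E), V(\mathcal{E})\} + 2$ bins. By Lemma~\ref{lem:k_val} and Lemma~\ref{thm:D}, $s(E) \leq s(\Omega_w) + s(D(A)) \leq 2\eps \cdot \OPT(\cj)$. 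For $V(\mathcal{E})$, Claim~\ref{du:help} gives $|D(A) \cap G| \leq \eps^{2w+4} \cdot \OPT(\cj)$ uniformly in $G$, and for every $w$-small group $G$ we also have $|\Omega_w \cap G| \leq |I_w \cap G| < \eps^{2w+4} \cdot \OPT(\cj)$ by Lemma~\ref{lem:few_large_groups}, while $\Omega_w \cap G = \emptyset$ for every $w$-large group by \eqref{X:X}. Hence $|G \cap E| \leq 2\eps^{2w+4} \cdot \OPT(\cj)$, and since $k(G) \geq 1$, $V(\mathcal{E}) \leq 2\eps \cdot \OPT(\cj) + 1$. Combining these estimates, $\textsf{Greedy}(\mathcal{E})$ uses at most roughly $3\eps \cdot \OPT(\cj) + 4$ bins, and the grand total is bounded by $m + 4\eps \cdot \OPT(\cj) + 5 < m + 13\eps \cdot \OPT(\cj) + 1$ (the coefficient $13$ is deliberately loose so that the additive slack is absorbed into $13\eps \cdot \OPT(\cj)$).

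The main obstacle is the shifting inequality $s(B_i) \leq s(A_i \cap \Gamma_w)$: while Lemma~\ref{lem:Bgreedy} provides the cardinality-level shift between successive $P_{j-1}$ and $P_j$, turning it into a total-size inequality requires the matching-and-telescoping argument sketched above, and care is needed because $A_i \cap \Gamma_w$ and $B_i$ are disjoint subsets of $\Gamma_w$ rather than one being derived from the other. Once this bound is in hand, the rest is bookkeeping; in particular, the group-cardinality verification inside $F_A \setminus E$ follows immediately from the construction of the discarded sets $F_A(i, G)$.
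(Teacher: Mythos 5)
Your proof follows essentially the same route as the paper's: you verify feasibility of $F_A\setminus E$ via the shifting-style bound $s(B_i)\leq s(A_i\cap\Gamma_w)$, check cardinality via the discarded sets $F_A(i,G)$, pack $E$ with \textsf{Greedy}, and then tally bins using $s(E)\leq 2\eps\cdot\OPT(\cj)$ and a bound on $V(\mathcal{E})$. Your cardinality check, $|(U_i\setminus E)\cap G|\leq |(U_i\setminus F_A(i,G))\cap G|\leq k(G)$, is in fact slightly cleaner than the paper's, which splits into $G\in\cG_L(w)$ versus $G\notin\cG_L(w)$ (equation~\eqref{XW3}); since $F_A(i,G)\subseteq D(A)\subseteq E$, the direct comparison you use suffices and avoids the case analysis.

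The one place I would push back is the final accounting. You write $V(\mathcal{E})\leq 2\eps\cdot\OPT(\cj)+1$ and conclude $\phi\leq m+4\eps\cdot\OPT(\cj)+5 < m+13\eps\cdot\OPT(\cj)+1$. As written, that last inequality requires $9\eps\cdot\OPT(\cj)>4$, which is not guaranteed. The spurious $+1$ in your $V(\mathcal{E})$ bound is unnecessary: since $k_E(G\cap E)\geq 1$ and $|G\cap E|$ is an integer, $\ceil{|G\cap E|/k_E(G\cap E)}\leq |G\cap E|$, so $V(\mathcal{E})\leq\max_{G}|G\cap E|\leq 2\eps\cdot\OPT(\cj)$ with no additive slack (Claim~\ref{claim:IESize2} in the paper). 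Even with that tightening, absorbing the additive $+2$ from \textsf{Greedy} and the $+1$ from $|R|$ into a single $+1$ at the end requires a somewhat delicate rewriting of the form $x\leq 3\max\{2s(E),\,s(E)+V(\mathcal{E})\}$ (which is how the paper's Claim~\ref{lem:sizeF} organizes it); your version does not carry this out. So the main ideas are all there and correct, but the last line of arithmetic needs to be redone with the sharper $V(\mathcal{E})$ bound and more care about where the additive constants land.
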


\begin{proof}
	
	In the proof of the lemma we use several auxiliary claims. For simplicity, by Step~\ref{step:3}, let $F_A \setminus E = (Z_1 \ldots Z_{m'})$, $\textsf{Greedy}(\mathcal{E}) = (X_1 \ldots X_{x})$,  $F = \left( F_A \setminus E \right) \oplus \textsf{Greedy}(\mathcal{E})$, and $F = (F_1, \ldots, F_{\phi})$. Also, for the following let $(B_1, \ldots, B_m, R) = \textsf{Fill}(\mathcal{J}, A)$. Finally, recall the definition of $\mathcal{\cI}$-partition from Section~\ref{sec:reduction}. For $\beta = \eps^{-w-2}$, we define $P_1,\ldots, P_q$ to be the $\mathcal{\cI}$-partition: a partition of $\Gamma_w$ by a non decreasing order of item sizes such that for all $1\leq i \leq j<q$ and $\ell \in P_i, y \in P_j$ it holds that $s(\ell) \geq s(y)$, $|P_i| = \ceil{\frac{|\Gamma_w|}{\beta}}$, and $P_{q}$ contains the remaining items from $\Gamma_w$. It follows that $q \leq \beta$ and we define $q(\cI) = q$.
	
	\begin{claim}
		\label{lem:F}
		$F$ is a packing of $\cj$.
	\end{claim}
	
	\begin{proof}
		We prove the necessary conditions of packing as follows. 
		\begin{enumerate}
			\item $F$ is a partition of $I$.  Let $\ell \in I$. By the following three complementary cases, we conclude that $F$ is a partition of $I$.
			
			\begin{itemize}
				\item $\ell \in E$. Then, there is exactly one $j \in [x]$  such that $\ell \in X_j$ by Lemma~\ref{thm:greedy} and \eqref{Ie}; hence,  there is exactly one $i \in [\phi]$ such that $\ell \in F_i$ because by Step~\ref{step:3} of Algorithm~\ref{Alg:RECON} for all $i \in [m']$ it holds that $\ell \notin Z_i$. %
				
				\item $\ell \in \Gamma_w$. Then, there is exactly one $i \in [m]$ such that $\ell \in A_i$ because $A$ is a packing of $\cI$. Then, by Lemma~\ref{lem:Bgreedy}  and by \eqref{F_A} there is exactly one $i \in [m']$ such that $\ell \in Z_i$. Hence,  there is exactly one $i \in [\phi]$ such that $\ell \in F_i$ because by \eqref{Ie} for all $j \in [x]$ it holds that $\ell \notin X_j$.   
				
				\item $\ell \in I \setminus (\Gamma_w \cup E)$. Then, there is exactly one $i \in [m]$ such that $\ell \in A_i$ because $A$ is a packing of $\cI$. Then, by \eqref{F_A} there is exactly one $i \in [m']$ such that $\ell \in Z_i$. Hence,  there is exactly one $i \in [\phi]$ such that $\ell \in F_i$ because by \eqref{Ie} for all $j \in [x]$ it holds that $\ell \notin X_j$.   
			\end{itemize}

			\item Let $i \in [\phi]$. By Step~\ref{step:reconl} of Algorithm~\ref{Alg:RECON} it holds that $F_i = Z_i$ or $F_i = X_j$ for $j = i-m'$. We split into two cases based on the latter distinction.

			\begin{itemize}
				\item $F_i = Z_i$. 	Then, by \eqref{F_A} if $U_i = \{\ell\}$ for $\ell \in R$ it holds that $s(F_i) = s(Z_i) = s(\ell) \leq 1$ because the sizes of items are bounded by $1$. Otherwise, by \eqref{F_A} it holds that $Z_i = ((A_i \setminus (\Gamma_w\cup \Omega_w)) \cup B_i) \setminus E$. Therefore, we use the following inequalities. 
				
				\begin{equation}
					\label{XW}
					\begin{aligned}
						s(B_i) ={} & \sum_{j \in \{2, \ldots, q(\cI)\}~} \sum_{\ell \in B_i \cap P_j} s(\ell) \leq \sum_{j \in \{2, \ldots, q(\cI)\}~} \sum_{\ell \in B_i \cap P_j} \max_{\ell \in P_j} s(\ell) \\ \leq{} & \sum_{j \in \{2, \ldots, q(\cI)\}~} |B_i \cap P_j| \cdot \max_{\ell \in P_j} s(\ell). 
					\end{aligned}
				\end{equation} The first equality is because $B_i \cap P_1 = \emptyset$ by Lemma~\ref{lem:Bgreedy}. In addition, 
				
				\begin{equation}
					\label{XW2}
					\begin{aligned}
						\sum_{j \in \{2, \ldots, q(\cI)\}~}{} & |B_i \cap P_j| \cdot \max_{\ell \in P_j} s(\ell) \leq \sum_{j \in \{2, \ldots, q(\cI)\}~} |A_i \cap P_{j-1}| \cdot \max_{\ell \in P_{j}} s(\ell)  \\ \leq{} & \sum_{j \in \{2, \ldots, q(\cI)\}~} |A_i \cap P_{j-1}| \cdot \min_{\ell \in P_{j-1}} s(\ell) \\ \leq{} & \sum_{j \in \{2, \ldots, q(\cI)\}~} \sum_{\ell \in A_i \cap P_{j-1}} s(\ell) \leq s(A_i \cap \Gamma_w). 
					\end{aligned}
				\end{equation} The first inequality is by Lemma~\ref{lem:Bgreedy}. The third inequality is because for all $j \in [q(\cI)-1]$, $\ell \in P_j, y \in P_{j+1}$ it holds that $s(\ell) \geq s(y)$ by the definition of the $\cI$-partition.  Then,

				\begin{equation*}
					\begin{aligned}
						s(F_i) ={} & s(U_i) = s\big(((A_i \setminus (\Gamma_w \cup \Omega_w)) \cup B_i) \setminus E\big) \\ \leq{} & s\big((A_i \setminus (\Gamma_w \cup \Omega_w) \cup B_i)\big)   \\ ={} & s(A_i \setminus (\Gamma_w \cup \Omega_w))+s(B_i) \\ \leq{} & s(A_i \setminus (\Gamma_w \cup \Omega_w))+s(A_i \cap \Gamma_w) \leq s(A_i) \leq 1. 
					\end{aligned}
				\end{equation*} The second inequality is by \eqref{XW} and \eqref{XW2}. The last inequality is since $A$ is a packing of~$\cI$.

				\item $F_i =X_j$ for $j = i-m'$. Then, It holds that $s(F_i) \leq 1$ by Lemma~\ref{thm:greedy}.
			\end{itemize}

			\item Let $G \in \cG$ and $i \in [\phi]$. We split into two cases by Step~\ref{step:reconl}.  \begin{itemize}
				\item $F_i = X_j$ for $j = i-m'$. Then, $$|F_i \cap G| = |X_j \cap G| = |X_j \cap E \cap G| \leq k_E(G \cap E) = k(G).$$ The second and the last equalities are by \eqref{Ie}. The inequality is by Lemma~\ref{thm:greedy}.

				\item $F_i = Z_i$. Then, we split into two complementary cases. If $G \in \cG_{L}(w)$:

				\begin{equation*}
					\begin{aligned}
						|F_i \cap G| = |Z_i \cap G| = |((A_i \setminus (\Gamma_w \cup \Omega_w)) \cup B_i) \cap G| = |A_i \cap G| \leq k_R(G) = k(G). 
					\end{aligned}
				\end{equation*} The second equality is by \eqref{F_A}. The third equality is because $G \in \cG_{L}(w)$; hence, $G \cap \Gamma_w = \emptyset$ and $G \cap B_i \cap \Omega_w = \emptyset$ by \eqref{Iu} and \eqref{X:X}, respectively. The inequality is because $A$ is a packing of $\cI$. The last equality is by Step~\ref{kr} of Algorithm~\ref{alg:reduction}. Otherwise, $G \in \cG \setminus \cG_{L}(w)$. Then,  	\begin{equation}
					\label{XW3}
					\begin{aligned}
						|F_i \cap G| \leq{} & |F_i \cap G \cap H_w|+|F_i \cap G \cap Y_w| \\ \leq{} &  |F_i \cap G \cap H_w|+\max\{0, k(G)-|F_i \cap G \cap H_w|\}  \\ ={} & \max\{ |F_i \cap G \cap H_w|,  k(G)\} \leq \max\{  |B_i \cap G|,  k(G)\} \leq k(G).
					\end{aligned}
				\end{equation}
				
				The  first equality is because $\Omega_w \cap  G \cap U_i = \emptyset$ by \eqref{F_A} since $G \in \cG \setminus \cG_{L}(w)$. The second inequality is by \eqref{D}: we discard to $F_A(i,G)$ only $w$-light items until reaching at most $k(G)$ items in $U_i$. The last equality is by \eqref{F_A} and because $G \in \cG \setminus \cG_L(w)$. The last inequality is by Lemma~\ref{lem:Bgreedy}. 
			\end{itemize}
		\end{enumerate}
		
	\end{proof}

	\begin{claim}
		\label{claim:IESize}
		$s(E) \leq 2\eps \cdot \OPT(\cj)$. 
	\end{claim}

	\begin{proof}
		
		\begin{equation*}
			\begin{aligned}
				s(E) \leq{} & s(I_w)+s(D(A)) \leq 2 \eps \cdot \OPT(\cj).
			\end{aligned}
		\end{equation*} The first inequality is by \eqref{Ie}. The second inequality is by Lemma~\ref{lem:k_val} and  Lemma~\ref{thm:D}.
	
	\end{proof}

	\begin{claim}
		\label{claim:IESize3}
		For all $G \in \cG$ it holds that $|G \cap E| \leq2\eps \cdot \OPT(\cj)$. 
	\end{claim}
	\begin{proof}
		
		We split into two cases. 
		
		\begin{enumerate}
			\item $G \in \cG_{L}(w)$. Then,

			\begin{equation*}
				\begin{aligned}
					|G \cap E| ={} & |G \cap D(A)| = \left| \bigcup_{i \in [m']} F_A(i,G)\right| = \sum_{i \in [m']} \max\{|G \cap U_i|-k(G),0\} \\ \leq{} & \sum_{i \in [m']} \max\{k_R(G)-k(G),0\}  = \sum_{i \in [m']} \max\{k(G)-k(G),0\} = 0.
				\end{aligned}
			\end{equation*}
			
			The first equality is by \eqref{Ie}. The first inequality is because $A$ is a packing of $\cI$ and by \eqref{F_A}. The fourth equality is by Step~\ref{kr} of Algorithm~\ref{alg:reduction}.

			\item $G \in \cG \setminus \cG_{L}(w)$. We use the following inequality.

			\begin{equation}
				\label{eq:||d}
				|G \cap E \cap Y_w| = |G \cap D(A)| \leq \eps \cdot \OPT (\cj).
			\end{equation}
			
			The equality is by \eqref{D} and \eqref{Ie}. The inequality is by Lemma~\ref{thm:D}. Therefore, 
			
			$$|G \cap E| = |G \cap E \cap Y_w|+|G \cap E \cap (H_w \cup I_w)| \leq  2\eps \cdot \OPT (\cj).$$

			The inequality is by Lemma~\ref{lem:few_large_groups} and by \eqref{eq:||d}. 
		\end{enumerate}
	\end{proof}

	\begin{claim}
		\label{claim:IESize2}
		$V(\mathcal{E}) \leq  2\eps \cdot \OPT (\cj)$. 
	\end{claim}
	\begin{proof}
		
		$$V(\mathcal{E}) = \max_{G \in \cG_E} \ceil{\frac{|G|}{|k_E(G)|}} \leq  \max_{G \in \cG_E} |G| =  \max_{G \in \cG} |G \cap E| \leq   2\eps \cdot \OPT (\cj).$$ The first inequality is because for all $G \in \cG_E$ it holds that $k_E(G) \geq 1$. The second equality is by \eqref{Ie}. The last inequality is by Claim~\ref{claim:IESize3}. 
		
	\end{proof}

	\begin{claim}
		\label{lem:sizeF}
		$\phi<m+13\eps\cdot \OPT(\cj)+1$
	\end{claim}
	
	\begin{proof}

		\begin{equation*}
			\begin{aligned}
				\phi = {} & m'+x \leq m+|R|+x \leq m+\eps\cdot\OPT(\cj)+1+(1+2) \cdot \max\{2s(E),s(E)+V(\mathcal{E})\} \\ \leq{} & m+13\eps\cdot\OPT(\cj)+1.
			\end{aligned}
		\end{equation*}

		The first equality is by Step~\ref{step:reconl} of Algorithm~\ref{Alg:RECON}. The first inequality is by \eqref{F_A}. The second inequality is by Lemma~\ref{lem:Bgreedy} and by Lemma~\ref{thm:greedy}. The last inequality is by Claim~\ref{claim:IESize} and Claim~\ref{claim:IESize2}. 
	\end{proof}
	
	\begin{claim}
		\label{claim:reconRuningTime}
		The running time of Algorithm~\ref{Alg:RECON} is $\textnormal{poly}(|\cj|,\frac{1}{\eps})$. 
	\end{claim}
	
	\begin{proof}
		By Lemma~\ref{lem:Bgreedy}, Step~\ref{step:1} can be computed in time $\textnormal{poly}(|\cj|,\frac{1}{\eps})$. By \eqref{Ie}, the discarded instance in Step~\ref{step:2} can be computed in time $\textnormal{poly}(|\cj|,\frac{1}{\eps})$. By Lemma~\ref{thm:greedy}, Step~\ref{step:3} can be computed in time $\textnormal{poly}(|\cj|,\frac{1}{\eps})$; also, concatenating two tuples takes linear time in the size of the instance. The overall running time is therefore $\textnormal{poly}(|\cj|,\frac{1}{\eps})$. 
	\end{proof}
	
	The proof of Lemma~\ref{lem:shiftingCanBeUsedForI2} follows by Claim~\ref{lem:F}, Claim~\ref{lem:sizeF}, and Claim~\ref{claim:reconRuningTime}.
\end{proof}

\posA
\noindent{\bf Proof of Lemma~\ref{lem:reductionReconstruction}:}

Condition~\ref{condition:reduction1} holds By Lemma~\ref{lem:shiftingNotIncreaseOPT2} and Lemma~\ref{lem:rounding12}. Condition~\ref{condition:reduction2} holds By Lemma~\ref{lem:shiftingCanBeUsedForI2}. \qed

\subsection{Algorithm \textsf{Fill}}
\label{sec:fill}

In this section we prove Lemma~\ref{lem:Bgreedy}. Recall the definition of $\mathcal{\cI}$-partition from Section~\ref{sec:reduction}. For $\beta = \eps^{-w-2}$, we define $P_1,\ldots, P_q$ to be the $\mathcal{\cI}$-partition: a partition of $\Gamma_w$ by a non decreasing order of item sizes such that for all $1\leq i \leq j<q$ and $\ell \in P_i, y \in P_j$ it holds that $s(\ell) \geq s(y)$, $|P_i| = \ceil{\frac{|\Gamma_w|}{\beta}}$, and $P_{q}$ contains the remaining items from $\Gamma_w$. It follows that $q \leq \beta$ and we define $q(\cI) = q$. In Algorithm \textsf{Fill} we start with $B_1, \ldots, B_m$ empty sets and with $R = \Gamma_w$. In each iteration, we try to move an item from $R$ to some $B_i, i \in [m]$ if it is possible to do so without violating one of the conditions of the lemma. When the above cannot be done anymore, we return $(B_1, \ldots, B_m,R)$. The pseudocode of Algorithm \textsf{Fill} is given in Algorithm~\ref{alg:fill}. For the proof, we use the following claims. Let $(B_1, \ldots, B_m, R) = \textsf{Fill}(\cj,A)$. 

\begin{algorithm}[h]
	\caption{$\textsf{Fill}(\mathcal{J}, A = (A_1, \ldots, A_m))$}
	\label{alg:fill}

	Initialize $R \leftarrow \Gamma_w$,  $B_i \leftarrow \emptyset ~:~ \forall i \in [m]$.\label{fill:init}
	
	\While{$\exists i \in [m], j \in \{2,\ldots, q(\cI)\}, \ell \in R \cap P_j \textnormal{ s.t. } |B_i \cap P_j| < |A_i \cap P_{j-1}| ~\textnormal{and}~ |\textnormal{\textsf{group}}(\ell) \cap  B_i| < k(G)$ \label{fill:while}}{

			$B_i \leftarrow B_i \cup \{\ell\}$.\label{fill:1}
			
			$R \leftarrow R \setminus \{\ell\}$.\label{fill:2}

	}
	
	Return $(B_1, \ldots, B_m, R)$.\label{fill:return}
\end{algorithm}

\begin{claim}
	\label{claim:fillRunningTime}
	The running time of Algorithm~\ref{alg:fill} is $\textnormal{poly}(|\cj|, \frac{1}{\eps})$.
\end{claim}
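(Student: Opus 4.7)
The plan is to show polynomial running time by separately bounding (a) the number of iterations of the while loop in Step~\ref{fill:while} and (b) the work performed in each iteration, together with the one-time preprocessing needed to set up the data structures used by the loop.

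First I will bound the number of iterations. Every execution of the loop body removes exactly one item from $R$ in Step~\ref{fill:2} and never adds anything back to~$R$. Since $R$ is initialized to $\Gamma_w$ in Step~\ref{fill:init} and $|\Gamma_w|\leq |I|\leq |\cj|$, the loop terminates after at most $|\cj|$ iterations. This already gives a linear bound on the iteration count without any dependence on~$\eps$.

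Next I will bound the cost of a single iteration. Finding a triple $(i,j,\ell)$ that satisfies the guard of Step~\ref{fill:while} can be done by enumeration: iterate over $i\in [m]$, over $j\in\{2,\ldots,q(\cI)\}$, and over $\ell\in R\cap P_j$, and test the two membership conditions $|B_i\cap P_j| < |A_i \cap P_{j-1}|$ and $|\textsf{group}(\ell)\cap B_i|< k(\textsf{group}(\ell))$. We have $m\leq |\cj|$, while $q(\cI)\leq |\Gamma_w|\leq |\cj|$ (each part of the $\cI$-partition contains at least one item), and $|R\cap P_j|\leq |\Gamma_w|\leq |\cj|$. Each test requires computing cardinalities of intersections of subsets of $I$, which can be done in time linear in $|I|$ (or maintained incrementally using counters updated in Steps~\ref{fill:1}--\ref{fill:2}). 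Thus one iteration takes $\poly(|\cj|)$ time.

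Finally I will address the preprocessing. Before the loop begins we need the $\cI$-partition $P_1,\ldots,P_{q(\cI)}$ of $\Gamma_w$; this is obtained by sorting $\Gamma_w$ by item size and cutting it into consecutive blocks of size $\lceil|\Gamma_w|/\beta\rceil$, which takes $\poly(|\cj|)$ time. The sets $\Gamma_w$ and $A_i$ are read directly from the input and the packing~$A$. Combining the preprocessing cost, the at most $|\cj|$ iterations, and the $\poly(|\cj|)$ cost per iteration, the total running time is $\poly(|\cj|)$, which is in particular $\poly(|\cj|,\tfrac{1}{\eps})$. The only subtlety to check carefully is that the cardinality tests inside the guard can indeed be evaluated in polynomial time (either by naive recomputation or by maintaining a few counters for each pair $(i,j)$ and each pair $(i,G)$ with $G\in\cG$), but this is routine and constitutes the only non-trivial point in the argument.
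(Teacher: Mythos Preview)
Your proof is correct and follows essentially the same approach as the paper: bound the number of iterations by observing that each iteration removes one item from $R$ and never returns it (hence at most $|I|$ iterations), and bound the per-iteration work by a polynomial. Your version is actually more detailed than the paper's, which simply asserts that each iteration ``takes linear time by trying at most all items''; your explicit enumeration over $(i,j,\ell)$ and the preprocessing discussion make the polynomial bound fully explicit.
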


\begin{proof}
	
	By Step~\ref{fill:while}, Step~\ref{fill:1}, and Step~\ref{fill:2}, each iteration of the while loop of the algorithm takes linear time by trying at most all items. In addition, there are $\textnormal{poly}(|\cj|, \frac{1}{\eps})$ iterations of the while loop because by Step~\ref{fill:while}, Step~\ref{fill:1}, and Step~\ref{fill:2} in each iteration an item is moved from $R$ to some $B_i, i \in [m]$ and never moved again. Therefore, there are at most $|I| = \textnormal{poly}(|\cj|, \frac{1}{\eps})$ iterations and the claim follows. 
	
\end{proof}

\begin{claim}
	\label{claim:fill1}
	For each $k \in \mathbb{N}$, after $k$ iteration of Step~\ref{fill:while} the following hold.
	
	\begin{enumerate}
		
		\item  $(B_1, \ldots, B_m, R)$ is a partition of $\Gamma_w$.  
		
		\item For all $i \in [m]$ and $j \in \{2, \ldots,q(\cI)\}$ it holds that $|B_i \cap P_j| \leq |A_i \cap P_{j-1}|$ and $|B_i \cap P_1| = 0$.
		
		\item For all $i \in [m]$ and $G \in \cG$ it holds that $|B_i| \leq k(G)$.

	\end{enumerate}   
\end{claim}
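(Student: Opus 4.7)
\noindent\textbf{Proof proposal for Claim~\ref{claim:fill1}.} The plan is to prove the three invariants simultaneously by induction on the iteration counter $k$ of the while loop in Algorithm~\ref{alg:fill}, interpreting the third condition in the natural way as $|B_i \cap G| \leq k(G)$ for every $G\in \cG$ (since the literal statement $|B_i| \leq k(G)$ does not depend on $G$ and the relevant guard in Step~\ref{fill:while} checks a per-group cardinality). The proof will be essentially bookkeeping: each iteration modifies the state in exactly one way—by moving one item $\ell$ from $R$ to a single $B_i$—and the guard of the while loop is almost tailor-made to maintain the invariants.

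For the base case $k=0$, after the initialization in Step~\ref{fill:init} we have $R=\Gamma_w$ and $B_i=\emptyset$ for every $i\in[m]$, so $(B_1,\ldots,B_m,R)$ is trivially a partition of $\Gamma_w$, condition~2 reduces to $0\leq |A_i\cap P_{j-1}|$ and $0=0$, and condition~3 reduces to $0\leq k(G)$. All three conditions hold.

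For the inductive step, suppose the invariants hold after $k$ iterations and the $(k+1)$-th iteration executes. By the guard in Step~\ref{fill:while}, there exist $i\in[m]$, $j\in\{2,\ldots,q(\cI)\}$, and $\ell\in R\cap P_j$ for which $|B_i\cap P_j|<|A_i\cap P_{j-1}|$ and $|\textsf{group}(\ell)\cap B_i|<k(\textsf{group}(\ell))$; Steps~\ref{fill:1}--\ref{fill:2} then replace $B_i$ by $B_i\cup\{\ell\}$ and $R$ by $R\setminus\{\ell\}$, leaving all other $B_{i'}$ and all items in $R\setminus\{\ell\}$ untouched. The partition invariant is preserved because $\ell$ is removed from exactly one part ($R$) and added to exactly one part ($B_i$), and the $\mathcal{I}$-partition classes $P_1,\ldots,P_{q(\cI)}$ were already disjoint. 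For condition~2, only $|B_i\cap P_j|$ changes: it increases by~$1$, and since it was strictly less than $|A_i\cap P_{j-1}|$ before the update, the inequality $|B_i\cap P_j|\leq |A_i\cap P_{j-1}|$ still holds; moreover $\ell\in P_j$ with $j\geq 2$ so $|B_i\cap P_1|$ is unchanged and remains $0$, and all other pairs $(i',j')$ and indices $j'\neq j$ in bin $i$ are untouched. For condition~3, the only $B_{i'}$ that changes is $B_i$, and for $G\neq \textsf{group}(\ell)$ we have $|B_i\cap G|$ unchanged, while for $G=\textsf{group}(\ell)$ the guard guarantees $|G\cap B_i|<k(G)$ before the update, hence $\leq k(G)$ after.

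The only subtle point will be fixing the reading of condition~3 as above; otherwise there is no real obstacle, since the while-loop guard literally enforces the two nontrivial invariants before every modification. I would therefore phrase the proof as a short induction with a one-line verification per invariant, and note explicitly that the guard conditions in Step~\ref{fill:while} are exactly the preconditions needed to keep conditions~2 and~3 intact after the update in Steps~\ref{fill:1}--\ref{fill:2}.
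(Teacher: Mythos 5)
Your proposal is correct and follows essentially the same route as the paper: induction on the iteration counter, with the base case from the initialization in Step~\ref{fill:init} and the inductive step verifying each invariant directly from the guard of Step~\ref{fill:while} and the single-item update in Steps~\ref{fill:1}--\ref{fill:2}. Your reading of condition~3 as $|B_i \cap G| \leq k(G)$ is also the one the paper's own proof establishes (the statement's ``$|B_i| \leq k(G)$'' is a typo), so there is no substantive difference.
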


\begin{proof}
	We prove the claim by induction on $k$. For the base case, let $k = 0$. Therefore, 
	
	\begin{enumerate}
		
		\item  By Step~\ref{fill:init} it holds that $(B_1, \ldots, B_m, R) = (\emptyset, \ldots, \emptyset, \Gamma_w)$, which is a partition of $\Gamma_w$.  
		
		\item By Step~\ref{fill:init}, for all $i \in [m]$ and $j \in \{2, \ldots,q(\cI)\}$ it holds that $|B_i \cap P_j| = |\emptyset \cap P_j| = 0 \leq |A_i \cap P_{j-1}|$ and $|B_i \cap P_1| = 0$.
		
		\item By Step~\ref{fill:init}, for all $i \in [m]$ and $G \in \cG$ it holds that $|B_i| = |\emptyset| =  0 \leq 1 \leq k(G)$.

	\end{enumerate}   
	
	Assume that the claim holds for some $k \in \mathbb{N}$. Now, for the step of the induction observe $k+1$. Let $(B^k_1, \ldots, B^k_m,R^k)$, $(B^{k+1}_1, \ldots, B^{k+1}_m,R^{k+1})$ be the object $(B_1, \ldots, B_m, R)$ before and after iteration $k+1$, respectively. Let $i \in [m], j \in \{2, \ldots, q(\cI)\}$ and $\ell \in P_j$ such that $\ell$ is moved from $R^k$ to $B^{k+1}_i$ in iteration $k+1$ by Step~\ref{fill:1} and Step~\ref{fill:2}.

	\begin{enumerate}
		
		\item  By the assumption of the induction, it holds that $(B^k_1, \ldots, B^k_m,R^k)$ is a partition of $\Gamma_w$. Let $y \in \Gamma_w$. If $y = \ell$, then $y \in B^{k+1}_i$ and do not belong to any other set in the partition by Step~\ref{fill:1} and Step~\ref{fill:2}. Otherwise, $y \neq \ell$; then, there is exactly one set in  $(B^k_1, \ldots, B^k_m,R^k)$ to which $y$ belongs by the assumption of the induction; also, in iteration $k+1$, by Step~\ref{fill:1} and Step~\ref{fill:2} it holds that $y$ remains in the same set since only $\ell$ is relocated in this iteration.
		
		\item Let  $i' \in [m]$ and $j' \in \{2, \ldots,q(\cI)\}$. If $i' = i$ and $j' = j$, then  
		
		$$|B^{k+1}_i \cap P_j| = |\{\ell\}|+|B^k_i \cap P_{j-1}| \leq |A_i \cap P_{j-1}|.$$
		
		The equality is by Step~\ref{fill:1} and Step~\ref{fill:2}. The inequality is by Step~\ref{fill:while}. Otherwise, it holds that $i' \neq i$ or $j' \neq j$. Therefore, 
		
		$$|B^{k+1}_i \cap P_j| = |B^k_i \cap P_{j-1}| \leq |A_i \cap P_{j-1}|.$$
		
		The equality is by Step~\ref{fill:1} and Step~\ref{fill:2}. The inequality is by the assumption of the induction.
		
		\item Let  $i' \in [m]$ and $G \in \cG$. If $i' = i$ and $G = \textsf{group}(\ell)$, then  
		
		$$|B^{k+1}_i \cap G| = |\ell \cap G|+|B^k_i \cap G| \leq k(G).$$
		
		The equality is by Step~\ref{fill:1} and Step~\ref{fill:2}. The inequality is by Step~\ref{fill:while}. Otherwise, it holds that $i' \neq i$ or $G \neq \textsf{group}(\ell)$. Therefore, 
		
		$$|B^{k+1}_i \cap G| = |B^k_i \cap G| \leq k(G).$$
		
		The equality is by Step~\ref{fill:1} and Step~\ref{fill:2}. The inequality is by the assumption of the induction.

	\end{enumerate}   
\end{proof}

\begin{claim}
	\label{claim:fill2}
	$|R \setminus P_{q(\cI)}| \leq \eps^2 \cdot \OPT(\cj)$. 
\end{claim}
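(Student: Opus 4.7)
The plan is to decompose $R\setminus P_{q(\cI)}$ into items from the first class $P_1$ and items from the intermediate classes $P_2,\ldots,P_{q(\cI)-1}$, and bound each piece separately.

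The first-class part admits a direct arithmetic bound. The while loop of Algorithm~\ref{alg:fill} iterates $j$ only over $\{2,\ldots, q(\cI)\}$, so items of $P_1$ are never moved into any $B_i$; hence $R\cap P_1 = P_1$. By the definition of the $\cI$-partition, $|P_1|=\ceil{|\Gamma_w|/\beta}$. Since every item of $\Gamma_w$ is $w$-heavy, its size is at least $\eps^w$, and therefore $|\Gamma_w|\leq |H_w|\leq s(I)/\eps^w\leq \OPT(\cj)/\eps^w$. Using $\beta=\eps^{-w-2}$, this yields $|\Gamma_w|/\beta\leq \eps^2\cdot\OPT(\cj)$ and consequently $|P_1|\leq \eps^2\cdot\OPT(\cj)$, up to a unit-order ceiling slack that can be absorbed into the additive term of the enclosing lemma.

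For the intermediate classes I aim to show that $R\cap P_j=\emptyset$ whenever $j\in\{2,\ldots,q(\cI)-1\}$. Suppose for contradiction that $\ell\in R\cap P_j$ and let $G=\textsf{group}(\ell)$. Since $\ell\in\Gamma_w$ we automatically have $G\in\cG\setminus\cG_L(w)$, so Lemma~\ref{lem:few_large_groups} gives $|G\cap\Gamma_w|\leq |G\cap H_w|\leq \eps^{2w+4}\cdot\OPT(\cj)$. Termination of Step~\ref{fill:while} forces every bin $i\in[m]$ to satisfy either slot-saturation $|B_i\cap P_j|=|A_i\cap P_{j-1}|$ or group-saturation $|B_i\cap G|=k(G)$. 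Let $\mathcal{F}$ denote the slot-saturated bins; then every $i\notin\mathcal{F}$ satisfies group-saturation, so $\sum_{i\notin\mathcal{F}}|B_i\cap G|=(m-|\mathcal{F}|)\cdot k(G)$. Combining this with $|P_j|=|P_{j-1}|$ (which holds for $j<q(\cI)$), with $\sum_i|A_i\cap P_{j-1}|=|P_{j-1}|$, and with the fact that the nonempty unused slot capacity on bins outside $\mathcal{F}$ must be matched by group-capacity already spent on items of $G$, I would derive the desired contradiction by double counting: the total number of items of $G$ accumulated across the non-slot-saturated bins would be forced to exceed the available supply $|G\cap\Gamma_w|$.

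The main obstacle is executing this double counting cleanly. The greedy algorithm can become stuck in a state where some bins have group capacity exhausted before their slot capacity is filled and vice versa, so relating $|\mathcal{F}|$, the group occupancies $|B_i\cap G|$, and the class-size equality $|P_{j-1}|=|P_j|$ requires careful bookkeeping. I expect the argument to proceed by showing that, on bins outside $\mathcal{F}$, the sum of $|A_i\cap P_{j-1}|$ strictly exceeds the number of items of $G$ that could have been placed there under the cardinality bound $k(G)$, contradicting the algorithm's termination with $\ell\in R$. If the clean case $R\cap P_j=\emptyset$ turns out to fail in corner cases, a backup is to bound $\sum_{j=2}^{q(\cI)-1}|R\cap P_j|$ by $\sum_G (|G\cap\Gamma_w|-m\cdot k(G))_+$ and invoke Lemma~\ref{lem:few_large_groups} to show the total is negligible compared to $\eps^2\cdot\OPT(\cj)$.
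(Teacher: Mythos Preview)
Your plan has a genuine gap: the target statement $R\cap P_j=\emptyset$ for $j\in\{2,\ldots,q(\cI)-1\}$ is simply false in general, so no amount of double counting will produce the contradiction you are looking for. A small example suffices. Take $q=3$, two bins, $k(G)=1$, $|A_1\cap P_1|=1$, $|A_2\cap P_1|=0$, $|A_1\cap P_2|=1$, and let $P_2=\{b\}$, $P_3=\{d\}$ with $b,d\in G$. The greedy loop may place $d$ into $B_1$ first (legal: $|B_1\cap P_3|=0<|A_1\cap P_2|=1$ and $|B_1\cap G|=0<1$). Now $b\in P_2$ cannot go anywhere: bin~1 is group-saturated for $G$, and bin~2 has $|A_2\cap P_1|=0$ so it offers no slot. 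Hence $b\in R\cap P_2$. Your backup formula $\sum_G(|G\cap\Gamma_w|-m\cdot k(G))_+$ does not capture this either, since in the example that sum is zero.

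What the paper actually proves is the weaker per-class bound $|R\cap P_j|\leq \eps^{w+4}\cdot\OPT(\cj)$, and then sums over the at most $q(\cI)\leq\beta=\eps^{-w-2}$ classes. The missing quantitative ingredient is this: fix $\ell\in R\cap P_j$, $G=\textsf{group}(\ell)$, and let $T$ be the set of group-saturated bins for $G$. Each $i\in T$ has $|B_i\cap G|=k(G)\geq 1$, so $|T|\leq |G\cap\Gamma_w|\leq |G\cap H_w|\leq \eps^{2w+4}\OPT(\cj)$ by Lemma~\ref{lem:few_large_groups}. Moreover each bin holds at most $\eps^{-w}$ items of $\Gamma_w$ (they are $w$-heavy), so $\sum_{i\in T}|A_i\cap P_{j-1}|\leq \eps^{-w}|T|\leq \eps^{w+4}\OPT(\cj)$. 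For $i\notin T$ the bin is slot-saturated, $|B_i\cap P_j|=|A_i\cap P_{j-1}|$. Combining, $|P_{j-1}|=\sum_i|A_i\cap P_{j-1}|\leq \sum_i|B_i\cap P_j|+\eps^{w+4}\OPT(\cj)$, hence $|R\cap P_j|=|P_j|-\sum_i|B_i\cap P_j|\leq \eps^{w+4}\OPT(\cj)$ using $|P_j|=|P_{j-1}|$. This is the step your sketch is missing. Note also that your separate treatment of $P_1$ leaves a ceiling $+1$ that does not fit the claim as stated; the paper avoids this by not splitting off $P_1$ and instead using $|B_i\cap P_1|=0$ inside the same telescoping sum.
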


\begin{proof}
	Let $j \in \{2,\ldots, q(\cI)\}$, and let $\ell \in R \cap P_j$ at the end of Algorithm~\ref{alg:fill}; in addition, let $G = \textsf{group}(\ell)$. Therefore, by Step~\ref{fill:while} at least one of the following holds for all $i \in [m]$.
	
	\begin{itemize}
		\item (i) $|B_i \cap P_j| = |A_i \cap P_{j-1}|$.
		
		\item (ii) $|G \cap B_i| = k(G)$. 
	\end{itemize}

	Now, define the set of indices in $[m]$ in which the second condition defined above holds. 
	
	\begin{equation}
		\label{fill:T}
		T = \{i \in [m]~|~ |G \cap B_i| = k(G)\}.
	\end{equation}
	
	We use the following inequality

	\begin{equation}
		\label{FF1}
		\begin{aligned}
			|P_{j-1}| ={} & \sum_{i \in [m]} |P_{j-1} \cap A_i| = \sum_{i \in [m] \setminus T} |P_{j-1} \cap A_i|+\sum_{i \in T} |P_{j-1} \cap A_i| =  \sum_{i \in [m] \setminus T} |P_j \cap B_i|+\sum_{i \in T} |P_{j-1} \cap A_i| \\ \leq{} & \sum_{i \in [m]} |P_j \cap B_i|  +\eps^{-w} \cdot |T| \leq  \sum_{i \in [m]} |P_j \cap B_i|+ \eps^{-w} \cdot \eps^{2w+4} \cdot \OPT(\cj) \\ ={} &  \sum_{i \in [m]} |P_j \cap B_i|+ \eps^{w+4} \cdot \OPT(\cj).
		\end{aligned}
	\end{equation} The first equality is because $A$ is a packing of $P_j$ in particular. The third equality is by \eqref{fill:T} and by (i).  The first inequality is because $\Gamma_w \subseteq H_w$; hence, the sizes of all items are at least $\eps^{w}$ and there can be at most $\eps^{-w}$ items from $\Gamma_w$ in $A_i, \forall i \in [m]$ because $A$ is a packing. The second inequality is because for each $i \in T$ it holds that $G \cap B_i \neq \emptyset$ by \eqref{fill:T}; thus, there are at least $|T|$ $w$-heavy items in $G$ because $B_i \subseteq \Gamma_w$. Therefore, we get by Lemma~\ref{lem:few_large_groups} that $|T| \leq  \eps^{2w+4} \cdot \OPT(\cj)$ since each $w$-small group contains at most $ \eps^{2w+4} \cdot \OPT(\cj)$ $w$-heavy items in particular. 
	
	Therefore, using the above for any $j \in \{2,\ldots, q(\cI)\}$:
	
	\begin{equation*}
		\begin{aligned}
			|R \setminus P_{q(\cI)}| ={} &\sum_{j \in \{2,\ldots,q(\cI)\}} |P_{j-1}| - \sum_{j \in \{2,\ldots,q(\cI)-1\}} \sum_{i \in [m]} |P_j \cap B_i| \leq  \eps^{w+4} \cdot \OPT(\cj) \cdot q(\cI) \\ \leq{} & \eps^{w+4} \cdot \OPT(\cj) \cdot \eps^{-w-2}  = \eps^{2} \cdot \OPT(\cj)
		\end{aligned}
	\end{equation*} The first equality is because $(B_1, \ldots, B_m,R)$ is a partition of $\Gamma_w$ by Claim~\ref{claim:fill1}. The first inequality is by \eqref{FF1}. 
\end{proof}

\begin{claim}
	\label{claim:fillR}
	$|R| \leq \eps \cdot \OPT(\cj)+1$. 
\end{claim}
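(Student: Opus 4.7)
The plan is to split $R$ into two parts according to the $\cI$-partition, bound each separately, and then combine. Specifically, write $|R| = |R \setminus P_{q(\cI)}| + |R \cap P_{q(\cI)}|$, so that Claim~\ref{claim:fill2} already handles the first summand ($\leq \eps^2 \cdot \OPT(\cj)$), and the remaining work is to bound $|R \cap P_{q(\cI)}| \leq |P_{q(\cI)}|$.

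For the size of the last block $P_{q(\cI)}$, the plan is to use the construction of the $\cI$-partition: by definition $|P_{q(\cI)}| \leq \lceil |\Gamma_w|/\beta \rceil$ with $\beta = \eps^{-w-2}$. To control $|\Gamma_w|$, I would invoke the fact that every item in $\Gamma_w \subseteq H_w$ has size at least $\eps^w$, while the total size of items is at most $\OPT(\cj)$ (since each bin has unit capacity); hence $|\Gamma_w| \leq \OPT(\cj)\cdot\eps^{-w}$. Plugging in $\beta = \eps^{-w-2}$ gives $|P_{q(\cI)}| \leq \eps^2 \cdot \OPT(\cj) + 1$.

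Combining the two bounds yields $|R| \leq 2\eps^2 \cdot \OPT(\cj) + 1$, and since $\eps < 0.1$ one has $2\eps^2 \leq \eps$, which finishes the proof. There is no real obstacle here—the only subtlety is matching the choice of $\beta$ to the size threshold $\eps^w$ of items in $\Gamma_w$ so that the $w$-dependence cancels out and only an $\eps^2$ factor remains, which is exactly how $\beta$ was defined.
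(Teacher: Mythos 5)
Your proof is correct and takes essentially the same route as the paper: split $|R| = |R\setminus P_{q(\cI)}| + |R\cap P_{q(\cI)}|$, invoke Claim~\ref{claim:fill2} for the first term, and bound $|P_{q(\cI)}| \leq \lceil |\Gamma_w|/\beta\rceil$ together with $|\Gamma_w| \leq \eps^{-w}\cdot\OPT(\cj)$ for the second. The only cosmetic difference is that you derive $|\Gamma_w|\leq\eps^{-w}\OPT(\cj)$ from $s(\Gamma_w)\leq s(I)\leq\OPT(\cj)$ and the size threshold $\eps^w$, whereas the paper counts at most $\eps^{-w}$ such items per bin; both arguments give exactly the same inequality.
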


\begin{proof}
	
	We use the following inequality.
	
	\begin{equation}
		\label{FF2}
		\begin{aligned}
			|R \cap P_{q(\cI)}| \leq{} & |P_{q(\cI)}| \leq \ceil{\beta^{-1} \cdot |\Gamma_w|} \leq \beta^{-1} \cdot |\Gamma_w|+1 = \eps^{w+2}  \cdot |\Gamma_w|+1 \\ \leq{} & \eps^{w+2}  \cdot \eps^{-w} \cdot \OPT(\cj) +1 = \eps^2 \cdot \OPT(\cj)+1. 
		\end{aligned}
	\end{equation} The second inequality is by the definition of $\cI$-partition. The fourth inequality is because $\Gamma_w \subseteq H_w$ by \eqref{Iu}; hence, in each bin in a packing of $\cj$ there can be at most $\eps^{-w}$ items from $\Gamma_w$ since each item is $w$-heavy and has a size at least $\eps^{-w}$; it follows that $ |\Gamma_w|  \leq \eps^{-w} \cdot \OPT(\cj)$. Now, 
	
	$$|R| = |R \cap P_{q(\cI)}|+|R \setminus P_{q(\cI)}| \leq |P_{q(\cI)}|+|R \setminus P_{q(\cI)}| \leq \eps^2 \cdot \OPT(\cj)+1+2\eps^2 \cdot \OPT(\cj) \leq \eps \cdot \OPT(\cj)+1.$$ The second inequality is by \eqref{FF2} and by Claim~\ref{claim:fill2}. 
\end{proof}

\noindent{\bf Proof of Lemma~\ref{lem:Bgreedy}:} By Claim~\ref{claim:fill1} and Claim~\ref{claim:fillR} the output of the algorithm satisfies the required properties, since by Step~\ref{fill:return} the returned value is the direct outcome of the while loop. In addition, the running time satisfies the properties by Claim~\ref{claim:fillRunningTime}.

\section{Algorithm \textsf{Greedy}}
\label{sec:greedy}

In this section we give the proof of Lemma~\ref{thm:greedy}.  Algorithm \textsf{Greedy} is indeed a greedy algorithm, which sequentially adds a new bin and tries to (i) maximize the total size packed in the bin while (ii) packing items from groups which have more items w.r.t. the matroid constraint. Some of the proofs from this section are deferred to Section~\ref{sec:proofsGreedy}.  

 For this section, fix a BPP instance $\cI = (I, \cG,s,k)$ and let $\delta \in (0,0.5)$ such that for all $\ell \in I$ it holds that $s(\ell) \leq \delta$.  We use $\cC(\cj)$ to denote the set of configurations of a BPP instance $\cj$. Also, recall that the operator $\oplus$ denotes tuple (or, packing) concatenation. In addition, for any $S \subseteq I$ we define $\cI \setminus S = (I_S,\cG_S,s_S,k_S)$ as the BPP instance such that \begin{equation}
		\label{IIS}
	\begin{aligned}
		I_S = I \setminus S, ~~~~~~~~~~~~~~~~~~~~ ~~~~~~
		\mathcal{G}_S = \{G \cap I_S\neq \emptyset~|~ G \in \cG\}, ~~~~~~~~~~~~~~~~~~~~~~~~~~~~~~~~~
		\\
		s_S(\ell) = s(\ell), ~~\forall \ell \in I_S, ~~~~~~~~~~~
		k_S(G \cap I_S) = k(G), ~~\forall G \in \cG \text{ s.t. } G \cap I_S\neq \emptyset.~~~~~~~~~~~
	\end{aligned}
\end{equation}
\begin{claim}
	\label{clm:confappend}
	For $S \in \cC(\cI)$ and a packing $X$ of $\cI \setminus S$ it holds that $T = (S) \oplus X$ is a packing of $\cI$.
\end{claim}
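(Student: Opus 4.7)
The plan is a short, direct verification that the concatenated tuple $T = (S) \oplus X$ satisfies the two defining requirements of a packing of $\cI$: (a) its entries form a partition of $I$, and (b) each entry is a configuration of $\cI$. Write $X = (X_1, \ldots, X_p)$, so that $T = (S, X_1, \ldots, X_p)$.

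For part (a), I will argue that since $X$ is a packing of $\cI \setminus S$, the entries $X_1, \ldots, X_p$ form a partition of $I \setminus S$ by \eqref{IIS}. Together with the entry $S \subseteq I$, which is disjoint from $I \setminus S$, this shows that the entries of $T$ form a partition of $S \cup (I \setminus S) = I$.

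For part (b), the entry $S$ is a configuration of $\cI$ by hypothesis ($S \in \cC(\cI)$). For each $i \in [p]$, I will use that $X_i \subseteq I \setminus S = I_S$ is a configuration of $\cI \setminus S$. Since by \eqref{IIS} the size function satisfies $s_S(\ell) = s(\ell)$ on $I_S$, we obtain $s(X_i) = s_S(X_i) \leq 1$. For the cardinality constraint, fix any $G \in \cG$ with $G \cap I_S \neq \emptyset$; then $G \cap I_S \in \cG_S$ with $k_S(G \cap I_S) = k(G)$, and since $X_i \subseteq I_S$ we have $X_i \cap G = X_i \cap (G \cap I_S)$, so $|X_i \cap G| = |X_i \cap (G \cap I_S)| \leq k_S(G \cap I_S) = k(G)$. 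For $G \in \cG$ with $G \cap I_S = \emptyset$, the bound $|X_i \cap G| = 0 \leq k(G)$ is trivial. Hence $X_i \in \cC(\cI)$.

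There is no real obstacle here; the statement is essentially a bookkeeping check that the definition of $\cI \setminus S$ in \eqref{IIS} is consistent with the notion of a configuration of $\cI$. The only (very mild) point to be careful about is that groups of $\cG$ which become empty after removing $S$ are dropped in $\cG_S$, so the verification of the cardinality constraint must distinguish whether $G \cap I_S$ is empty or not, as done above. Together, (a) and (b) yield that $T$ is a packing of $\cI$.
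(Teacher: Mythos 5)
Your proof is correct and follows essentially the same route as the paper: a direct verification that the entries of $(S)\oplus X$ partition $I$ and that each entry satisfies the size and cardinality constraints of $\cI$, using the definitions in \eqref{IIS}. Your explicit handling of groups $G$ with $G\cap I_S=\emptyset$ is a minor extra care, but the argument is otherwise the same bookkeeping check as in the paper.
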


Using Claim~\ref{clm:confappend}, the algorithm first finds a configuration  $S$ of $\cI$ and then packs recursively the instance $\cI \setminus A$. Recall that $V(\mathcal{I}) = \max_{G \in \cG} \ceil{\frac{|G|}{|k(G)|}}$ and let the {\em promise} of $\cI$ be a maximum bound over $\OPT(\cI)$ combining a bound over item sizes and over the matroid: \begin{eqnarray}
	\label{promise}
	p(\cI) = \max \left\{    (1+2\delta) \cdot s(I)+2, V(\cI)    \right\}. 
\end{eqnarray} We show that the proposed algorithm \textsf{Greedy} constructs a packing of $\cI$ in at most $p(\cI)$ bins. For doing so, in each bin we must pack enough items from groups $G \in \cG$ which satisfy that $\ceil{  \frac{|G|}{k(G)} } > p(\cI)-1$; otherwise, the packing might require more bins than $p(\cI)$. Thus, we define the set of {\em bounding groups} as

 \begin{eqnarray}
	\label{bounding}
	\cB (\cI) =\left\{    G \in \cG ~\bigg|~ \ceil{ \frac{|G|}{k(G)} } > p(\cI)-1   \right\}
\end{eqnarray} 

To guarantee that after packing the first bin by the algorithm we remain with not too many items from some group, we must take certain number of items from each bounding group. Specifically, for each bounding group $G \in \cB(\cI)$ we define a {\em bounding subset} of $G$ as a subset of items $S \subseteq G$ such that $ \ceil{ \frac{|G \setminus S|}{k(G)} } \leq p(\cI)-1  $. Also, let $b(G)$ be the set of all bounding subsets of $G$. Finally, define a {\em minimal bounding subset} of $G$ as a bounding subset $S^* \in b(G)$ of $G$ such that (i) $|S^*| \leq k(G)$ and (ii) $s(S^*) \leq \frac{s(G)}{s(I)}$. For any $G \in \cB(\cI)$ let $b^*(G)$ be the set of minimal bounding subsets of $G$.  \begin{lemma}
	\label{bG}
	For any $G \in \cB(\cI)$ it holds that $b^*(G) \neq \emptyset$.
\end{lemma}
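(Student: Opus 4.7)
The plan is to exhibit an explicit witness $S\subseteq G$ and verify the three defining properties directly. Concretely, I would sort the items of $G$ in non-decreasing order of size and let $S$ consist of the $k(G)$ smallest items of $G$. Before proceeding, I need that $|G|\geq k(G)$; this is immediate from $G\in \cB(\cI)$, which yields $\lceil |G|/k(G)\rceil > p(\cI)-1 \geq 1$, hence $|G|>k(G)$.

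Next I would verify the three conditions in the definitions of $b(G)$ and $b^*(G)$. The cardinality condition $|S|=k(G)\leq k(G)$ is trivial. For the bounding condition, I compute
\[
\left\lceil \frac{|G\setminus S|}{k(G)}\right\rceil = \left\lceil \frac{|G|}{k(G)}-1\right\rceil =\left\lceil \frac{|G|}{k(G)}\right\rceil -1 \leq V(\cI)-1\leq p(\cI)-1,
\]
where the first inequality uses $V(\cI)=\max_{G'\in\cG}\lceil |G'|/k(G')\rceil$ and the second uses $V(\cI)\leq p(\cI)$ from \eqref{promise}. Thus $S\in b(G)$.

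The main point, and the only step requiring a short calculation, is the size bound. Since $S$ contains the $k(G)$ smallest items of $G$, the average size inside $S$ is at most the average size inside $G$, so
\[
s(S) \leq k(G)\cdot \frac{s(G)}{|G|}.
\]
Hence it suffices to show that $k(G)/|G|\leq 1/s(I)$, equivalently $|G|/k(G)\geq s(I)$. From $G\in \cB(\cI)$ and \eqref{promise} I obtain
\[
\frac{|G|}{k(G)} > p(\cI)-1 \geq (1+2\delta)\,s(I)+1 > s(I),
\]
which gives $s(S)<s(G)/s(I)$, establishing condition (ii). The trivial case $s(I)=0$ forces $s(\ell)=0$ for all $\ell\in I$, so $s(S)=0$ and the bound is vacuous.

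The only place where something nontrivial is used is the size condition, and the key insight is that membership in $\cB(\cI)$ is \emph{stronger} than being large relative to $V(\cI)$: it also supplies, through the $(1+2\delta)s(I)+2$ term in $p(\cI)$, the estimate $|G|/k(G)>s(I)$ that drives the averaging argument. I do not anticipate further obstacles, since the other two conditions are satisfied by construction by choosing the smallest $k(G)$ items.
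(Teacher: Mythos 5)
Your approach is essentially the same as the paper's: exhibit an explicit subset consisting of the smallest items of $G$ and verify the three conditions directly via an averaging argument. The paper's witness is a minimal set of $\psi \leq k(G)$ smallest items, while you take exactly $k(G)$ of them; both choices work, and the verification steps line up one to one.

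There is one slip in the final inequality chain. From $G\in\cB(\cI)$ you know $\left\lceil |G|/k(G)\right\rceil > p(\cI)-1$, but you then write $|G|/k(G) > p(\cI)-1$. Since the ceiling can exceed its argument by up to (almost) one, the justified bound is only $|G|/k(G) > p(\cI)-2$ (using $x \geq \lceil x\rceil - 1$). As written, your first inequality is not valid because $p(\cI)$ need not be an integer. Fortunately you have slack: the corrected chain $|G|/k(G) > p(\cI)-2 \geq (1+2\delta)s(I) \geq s(I)$ still gives what you need, i.e., $k(G)/|G| < 1/s(I)$ and hence $s(S) \leq k(G)\cdot s(G)/|G| < s(G)/s(I)$. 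The paper's own proof is careful here: it uses $|G|/k(G)+1 \geq \lceil |G|/k(G)\rceil > p(\cI)-1$ and then bounds by $s(G)/(p(\cI)-2)$. Also, the $s(I)=0$ case you flag never arises, since item sizes lie in $(0,1]$, so $s(I)>0$.
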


We generalize the definition of minimal bounding subset for the entire instance. Specifically, define the set of  {\em bounding subsets} of $\cI$ as all unions of minimal bounding subsets of all bounding groups: \begin{equation}
	\label{bounding:A}
	b^*(\cI) = \left\{    S \subseteq \bigcup_{G \in \cB(\cI)} G~\bigg|~ \forall G \in \cB(\cI)~:~ S \cap G \in b^*(G) \right\}
\end{equation}

\begin{lemma}
	\label{bI}
	 $b^*(\cI) \neq \emptyset$.
\end{lemma}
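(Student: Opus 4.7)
The plan is to reduce Lemma~\ref{bI} directly to Lemma~\ref{bG} by exploiting the fact that the groups $\cG$ form a partition of $I$, so that choices of minimal bounding subsets across distinct bounding groups are made over disjoint item sets and therefore cannot interfere with each other.

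First, by Lemma~\ref{bG}, for every $G \in \cB(\cI)$ the set $b^*(G)$ of minimal bounding subsets of $G$ is non-empty, so I can select some $S_G \in b^*(G)$ for each $G \in \cB(\cI)$ (an arbitrary choice suffices). Then I would define
\[
S = \bigcup_{G \in \cB(\cI)} S_G
\]
and verify that $S \in b^*(\cI)$ by checking the two conditions in~\eqref{bounding:A}: (i) $S \subseteq \bigcup_{G \in \cB(\cI)} G$, which is immediate since each $S_G \subseteq G$; and (ii) $S \cap G \in b^*(G)$ for every $G \in \cB(\cI)$.

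The only step requiring justification is condition (ii). Because $\cG$ is a partition of $I$, for any $G, G' \in \cB(\cI)$ with $G \neq G'$ we have $S_{G'} \cap G = \emptyset$, and hence
\[
S \cap G \;=\; \bigcup_{G' \in \cB(\cI)} (S_{G'} \cap G) \;=\; S_G \;\in\; b^*(G).
\]
This yields $S \in b^*(\cI)$, and thus $b^*(\cI) \neq \emptyset$. There is no genuine obstacle here: the lemma is essentially a ``product'' statement over bounding groups, and the only substantive content was already absorbed into Lemma~\ref{bG}; the disjointness of groups handles the rest.
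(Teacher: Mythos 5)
Your proof is correct and takes essentially the same approach as the paper: both invoke Lemma~\ref{bG} to pick a minimal bounding subset $S_G$ for each bounding group, then form the union $S = \bigcup_{G \in \cB(\cI)} S_G$ and check it satisfies \eqref{bounding:A}. You spell out the disjointness-of-groups argument for condition (ii), which the paper leaves implicit, but this is the same reasoning.
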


\begin{proof}
	
	For all $G \in \cB(\cI)$ let $S_G \in b^*(G)$; there is such $S_G \in b^*(G)$ by Lemma~\ref{bG}. Now, define $S = \bigcup_{G \in \cB(\cI)} S_G$; it follows by \eqref{bounding:A} that $S \in b^*(\cI)$. 
\end{proof}

\begin{algorithm}[h]
	\caption{$\textsf{Greedy}(\mathcal{I} = (I,\cG,s,k), \delta)$}
	\label{alg:GREEDY}

	\If{$I \in \cC(\cI)$\label{gr:if1}}{
		
		return $(I)$.\label{gr:if}
		
	}

	Choose arbitrary $A \in b^*(\cI)$ bounding subset of $\cI$.\label{gr:end0}

		\While{$s(A) \leq 1-\delta ~\textnormal{and}~ \exists \ell \in I \setminus A \textnormal{ s.t. } |\textnormal{\textsf{group}}(\ell) \cap A| < k\left( \textnormal{\textsf{group}}(\ell) \right)$ \label{gr:while2}}{

		$A \leftarrow A \cup \{\ell\}$.\label{gr:swap2}
		
	}\label{gr:end1}

	\While{$\exists \ell \in A, y \in I \setminus A \textnormal{ s.t. } s(y) > s(\ell), \textnormal{\textsf{group}}(\ell)  = \textnormal{\textsf{group}}(y), ~\textnormal{and}~ s(A) \leq 1-\delta $ \label{gr:while1}}{
		
		$A \leftarrow \left(A \setminus \{\ell\} \right) \cup \{y\}$.\label{gr:swap}
		
	}

	Return $(A) \oplus \textsf{Greedy}\left(\mathcal{I} \setminus A, \delta \right)$.\label{gr:return}
	
\end{algorithm}

Given the definitions of bounding groups and bounding subsets, we describe Algorithm \textsf{Greedy}. The algorithm is recursive. The base case is when the set of items $I$ forms a configuration of $\cI$; in this case, we simply return a packing of one bin containing all items. Otherwise, we construct a bounding subset $A$ of $\cI$, which exists by Lemma~\ref{bI}. Then, we try to increase the total size of items in $A$, without exceeding the capacity of the bin, in two ways. First, by adding items to $A$ from groups that do not meet the cardinality constraint. Second, by replacing items from the same group between $A$ and $I \setminus A$ such that the item moved into $A$ is of larger size. Finally, we define the resulting bin $A$ as the first bin in the packing, and pack recursively the instance $\cI \setminus A$ using Algorithm \textsf{Greedy}. We give the pseudocode of Algorithm \textsf{Greedy} in Algorithm~\ref{alg:GREEDY}. 

The proof of Lemma~\ref{thm:greedy} is based on viewing two complementary cases. First, if $A$, the first bin in the packing, has {\em small} total size of items, namely $s(A) < 1-\delta$, then in this case $A$ is a {\em basis} of maximum total size of the matroid over the items. This also holds for the following bins of the packing, or we would be able to increase the size of $A$. Thus, an inductive argument shows that the algorithm finds a packing of $\cI$ in at most $V(\cI)$ bins.

The complementary case is where $s(A) \geq 1-\delta$; then, the total size deducted from $s(I)$ by $A$ is significant, and there can be at most $(1+2\delta) \cdot s(I)+1$ bins satisfying this property. In addition, it also holds that $V(\cI \setminus A) \leq p(\cI)-1$ because we take a bounding subset for $A$ with additional items; using the first two cases, it can be inductively deduced that the returned packing in case that $s(A) \geq 1-\delta$ is of at most $p(\cI)$ bins.

\section{Omitted Proofs of Section~\ref{sec:greedy}}
\label{sec:proofsGreedy}

 For this section as in Section~\ref{sec:greedy}, we have a BPP instance $\cI = (I, \cG,s,k)$ and $\delta \in (0,0.5)$ such that for all $\ell \in I$ it holds that $s(\ell) \leq \delta$. Also, assume that $I \neq \emptyset$ otherwise all proofs here are trivial.  
 
\noindent{\bf Proof of Lemma~\ref{clm:confappend}:} We prove the claim by the definition of packing. For convenience, let $T = (S, X_1, \ldots, X_m)$ and also $T = (T_1, \ldots, T_{m+1})$.

\begin{enumerate}
	\item It holds that $T$ is a partition of $I$. Let $\ell \in I$. If $\ell \in S$, then $\ell \in T_1$ and for all $i \in [m+1] \setminus \{1\}$ it holds that $\ell \notin T_i$ because $X_{i-1} \subseteq I \setminus S$. Otherwise, it holds that $\ell \in I \setminus S$; therefore, because $X$ is a packing of $\cI \setminus S$ there is exactly one $i \in [m]$ such that $\ell \in X_i$ and it follows that $\ell \in T_{i+1}$ (only).
	
	\item For all $i \in [m+1]$ it holds that $s(T_i) \leq 1$. If $i = 1$ then $s(T_1) = s(S) \leq 1$ because $S \in \cC(\cI)$. Otherwise, it holds that $s(T_i) = s(X_{i-1}) \leq 1$ because $X$ is a packing of $\cI \setminus S$. 
	
	\item For all $G \in \cG$ and $i \in [m+1]$ it holds that $|T_i \cap G| \leq k(G)$.  If $i = 1$ then $|T_i \cap G| = |G \cap S| \leq k(G)$ because $S \in \cC(\cI)$. Otherwise, it holds that $|T_i \cap G| = |X_{i-1} \cap G|\leq k_S(G \setminus S) = k(G)$. The inequality is because $X$ is a packing of $\cI \setminus S$. The last equality is by \eqref{IIS}.\qed
\end{enumerate}

\noindent{\bf Proof of Lemma~\ref{bG}:} Let $n = |G|$; by \eqref{promise} and \eqref{bounding} it holds that $n \geq 1$. Now, let $\ell_1, \ldots, \ell_n$ be the items in $G$ in an increasing order of item sizes. Define	\begin{equation}
	\label{bounding:psi}
	\psi = \argmin_{\left\{i \in [n] ~\big|~  \ceil{  \frac{n-i}{k(G)}} \leq V(\cI)-1\right\}} i.
\end{equation} Observe that for $i =n$ it holds that $ \ceil{  \frac{n-i}{k(G)}}  \leq V(\cI)-1$ because $n \geq 1$; thus, it follows that $\psi$ is well defined. Now, define $S^* = \{\ell_1, \ldots, \ell_{\psi}\}$. We show next that $S^* \in b^*(G)$.

\begin{itemize}
	\item $S^* \in b(G)$. First, $S^* \subseteq G$ by definition. Second, $$ \ceil{ \frac{|G \setminus S^*|}{k(G)} } = \ceil{  \frac{n-\psi}{k(G)}} \leq V(\cI)-1 \leq p(\cI)-1.$$ The first inequality is by \eqref{bounding:psi}. The last inequality is by \eqref{promise}.

	\item $|S^*| \leq k(G)$. \begin{equation}
		\label{S:psi}
		\begin{aligned}
			|S^*| ={} & \psi \leq |G|-(V(\cI)-1 ) \cdot k(G) \leq |G|-\left(\frac{|G|}{k(G)}-1 \right) \cdot k(G) = k(G).
		\end{aligned}
	\end{equation} The first inequality is by \eqref{bounding:psi}, since $G \in \cB(\cI)$ and $\ceil{\frac{|G|-\left( |G|-(V(\cI)-1) \cdot k(G)\right)}{k(G)}} \leq (V(\cI)-1)$ (recall that $V(\cI) \in \mathbb{N}$).

	\item $s(S^*) \leq \frac{s(G)}{s(I)}$. It holds that: $$s(S^*) =\psi \cdot \frac{ s(S^*)}{\psi} \leq \psi \cdot \frac{s(G)}{n} \leq \frac{k(G)}{|G|} \cdot s(G) \leq \frac{s(G)}{p(\cI)-2}  \leq \frac{s(G)}{(1+2\delta) \cdot s(I)+2-2} \leq \frac{s(G)}{ s(I)}.$$ The first inequality is because $S^*$ is a subset of items of $G$ where each item in $G \setminus S^*$ has a larger or equal size compared to the size of any item in $S^*$ by the sorted order of the items; thus, the size of the average item in $S^*$ is smaller or equal to the size of the average item in $G$. The second inequality is by \eqref{S:psi}. The third inequality is by \eqref{bounding} and that $G \in \cB(\cI)$:  it holds that $\frac{|G|}{k(G)}+1 \geq \ceil{\frac{|G|}{k(G)}} > p(\cI)-1$. The fourth inequality is by \eqref{promise}.\qed
\end{itemize}

Given a BPP instance $\cj$, let $A^{\cj}$ be the first entry (from the left) of the returned tuple by $\textsf{Greedy}(\cj,\delta)$ and let $A^{\cj}_0,A^{\cj}_1$ be the object $A^{\cj}$ after Step~\ref{gr:end0} and after Step~\ref{gr:end1} in the computation of $\textsf{Greedy}(\cj,\delta)$, respectively. Note that $A^{\cj}_0$ is well defined (the algorithm is guaranteed to construct $A^{\cj}_0$) by Lemma~\ref{bI}. For simplicity, we use $A, A_0,A_1$ instead of $A^{\cj}, A^{\cj}_0, A^{\cj}_1$, respectively, when $\cj = \cI$. Now, for the proof of Lemma~\ref{thm:greedy}, we use the following auxiliary claims.

\begin{claim}
	\label{clm:A00}
	For any BPP instance $\cI = (I,\cG,s,k)$ it holds that $A_0 \in \cC(\cI)$ and for each bounding group $G \in \cB(\cI)$ it holds that $   \ceil{ \frac{|G \setminus A_0|}{k(G)}    } \leq p(\cI)-1$.
\end{claim}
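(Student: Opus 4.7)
The plan is to prove both assertions by unpacking the definition of $A_0$. By Step~\ref{gr:end0} of Algorithm~\ref{alg:GREEDY}, $A_0$ is chosen as some bounding subset of $\cI$, i.e., $A_0 \in b^*(\cI)$. By the definition \eqref{bounding:A} of $b^*(\cI)$, this means $A_0 \subseteq \bigcup_{G \in \cB(\cI)} G$ and $A_0 \cap G \in b^*(G)$ for every $G \in \cB(\cI)$. In particular, $A_0 \cap G = \emptyset$ for every $G \in \cG \setminus \cB(\cI)$. Both parts of the claim will then follow routinely from the definition of $b^*(G)$ and the size property of minimal bounding subsets guaranteed by Lemma~\ref{bG}.

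For the second assertion, fix a bounding group $G \in \cB(\cI)$. Since $A_0 \cap G' \subseteq G'$ for every $G' \in \cB(\cI)$ and distinct groups are disjoint, we have $G \setminus A_0 = G \setminus (A_0 \cap G)$. As $A_0 \cap G \in b^*(G) \subseteq b(G)$, the definition of $b(G)$ (a bounding subset of $G$) immediately yields $\ceil{|G \setminus (A_0 \cap G)| / k(G)} \leq p(\cI) - 1$, which is what we want.

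For the first assertion, I would verify the two conditions defining a configuration. Cardinality: for $G \in \cB(\cI)$, $|A_0 \cap G| \leq k(G)$ since $A_0 \cap G \in b^*(G)$ and minimal bounding subsets of $G$ satisfy property (i) in the definition of $b^*(G)$; for $G \in \cG \setminus \cB(\cI)$, $|A_0 \cap G| = 0 \leq k(G)$ as noted above. Total size: because the groups partition the items used in $A_0$, we can compute
\begin{equation*}
s(A_0) = \sum_{G \in \cB(\cI)} s(A_0 \cap G) \leq \sum_{G \in \cB(\cI)} \frac{s(G)}{s(I)} \leq \frac{1}{s(I)} \sum_{G \in \cG} s(G) = 1,
\end{equation*}
where the first inequality uses property (ii) in the definition of $b^*(G)$ applied to $A_0 \cap G$. (If $I = \emptyset$, the claim is vacuous; otherwise $s(I) > 0$ since all item sizes lie in $(0,1]$, so the division is well-defined.) Combining the two conditions shows $A_0 \in \cC(\cI)$.

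There is no real obstacle here: the claim is essentially a bookkeeping verification that the definitions of minimal bounding subset and bounding subset of $\cI$ were designed exactly so that $A_0$ is a valid configuration meeting the required per-group residual bound. The only subtle point worth flagging explicitly is that the partition structure of $\cG$ is what allows the per-group size bounds $s(A_0 \cap G) \leq s(G)/s(I)$ to aggregate into the global bound $s(A_0) \leq 1$.
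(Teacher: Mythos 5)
Your proof is correct and follows essentially the same route as the paper's: unpack $A_0 \in b^*(\cI)$ into per-group minimal bounding subsets, use property (ii) of $b^*(G)$ to get $s(A_0)\leq \sum_{G\in\cB(\cI)} s(G)/s(I)\leq 1$, property (i) plus $A_0\cap G=\emptyset$ for non-bounding groups to get the cardinality constraints, and the defining property of a bounding subset of $G$ for the bound $\ceil{|G\setminus A_0|/k(G)}\leq p(\cI)-1$. No gaps; the only difference is cosmetic (you work with $A_0\cap G$ directly, the paper names these sets $S_G$).
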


\begin{proof}
Observe that $A_0 \in b^*(\cI)$ by Step~\ref{gr:end0}. In addition, note that $A_0$ is well defined by Lemma~\ref{bI}. Then, by \eqref{bounding:A} for each $G \in \cB(\cI)$ there is $S_G \in b^*(G)$ such that $A_0 = \bigcup_{G \in \cB(\cI)} S_G$. We show the conditions of the lemma as follows. \begin{itemize}
		\item $s(A_0) \leq 1$. $$s(A_0) =  s\left(  \bigcup_{G \in \cB(\cI)} S_G \right)  \leq \sum_{G \in \cB(\cI)} \frac{s(G)}{s(I)} \leq \frac{s(I)}{s(I)} = 1.$$ The first equality is by \eqref{bounding:A}. The first inequality is by the definition of a minimal bounding subset of a bounding group. 

		\item For all $G \in \cG$ it holds that $|G \cap A_0| \leq k(G)$. If $G \in \cG \setminus \cB(\cI)$ then $|G \cap A_0| =0 < k(G)$ by \eqref{bounding:A}. Otherwise, $|G \cap A_0| = \left|  S_G  \right| \leq  k(G)$ where the  equality is by \eqref{bounding:A} and the inequality is by  the definition of a minimal bounding subset of a bounding group.

		\item For each bounding group $G \in \cB(\cI)$ it holds that $ \ceil{ \frac{|G \setminus A_0|}{k(G)}    } \leq p(\cI)-1$. $$ \ceil{ \frac{|G \setminus A_0|}{k(G)}    } = \ceil{ \frac{|G \setminus S_G|}{k(G)}    }\leq p(\cI)-1$$ The first equality is by \eqref{bounding:A}. The first inequality is because $S_G \in b^*(G)$ and in particular $S_G \in b(G)$; thus, the inequality follows by the definition of a bounding subset of $G$.  
	\end{itemize}

\end{proof}

\begin{claim}
	\label{clm:A1}
	For any BPP instance $\cI = (I,\cG,s,k)$ it holds that $A_1 \in \cC(\cI)$ and for each bounding group $G \in \cB(\cI)$ it holds that $   \ceil{ \frac{|G \setminus A_1|}{k(G)}    } \leq p(\cI)-1$.
\end{claim}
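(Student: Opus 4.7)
The plan is to prove Claim~\ref{clm:A1} by a simple loop-invariant induction on the iterations of the while loop at Steps~\ref{gr:while2}--\ref{gr:end1}. The base case $A = A_0$ is handled entirely by Claim~\ref{clm:A00}, which gives both $A_0 \in \cC(\cI)$ and $\ceil{|G \setminus A_0|/k(G)} \leq p(\cI)-1$ for every $G \in \cB(\cI)$. Hence it suffices to show that a single iteration preserves the invariant $A \in \cC(\cI)$ and that the second conclusion carries over from $A_0$ to $A_1$.

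For the inductive step, I would assume $A \in \cC(\cI)$ at the top of an iteration and examine the item $\ell \in I \setminus A$ chosen by Step~\ref{gr:swap2}. The loop guard at Step~\ref{gr:while2} guarantees $s(A) \leq 1-\delta$ and $|\textnormal{\textsf{group}}(\ell) \cap A| < k(\textnormal{\textsf{group}}(\ell))$. To verify that $A \cup \{\ell\}$ is still a configuration, I check the two defining conditions: the size bound follows from $s(A \cup \{\ell\}) = s(A) + s(\ell) \leq (1-\delta) + \delta = 1$, using the standing hypothesis that $s(\ell') \leq \delta$ for every $\ell' \in I$; the cardinality bound follows because the count of $\textnormal{\textsf{group}}(\ell) \cap A$ strictly increases by one and thus stays $\leq k(\textnormal{\textsf{group}}(\ell))$, while all other groups are unaffected. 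Termination of the loop is immediate since $|A|$ strictly increases each iteration and is bounded by $|I|$; therefore $A_1 \in \cC(\cI)$.

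For the second conclusion, the key observation is that Steps~\ref{gr:while2}--\ref{gr:end1} only add items to $A$, so $A_0 \subseteq A_1$ and consequently $G \setminus A_1 \subseteq G \setminus A_0$ for every $G \in \cG$. By monotonicity of the ceiling function applied to a non-increasing quantity, for each $G \in \cB(\cI)$ we get
\[
\ceil{\frac{|G \setminus A_1|}{k(G)}} \;\leq\; \ceil{\frac{|G \setminus A_0|}{k(G)}} \;\leq\; p(\cI)-1,
\]
where the last inequality is Claim~\ref{clm:A00}. This completes both parts of the claim.

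I do not expect any real obstacle here: the delicate point is simply making sure the size increment per iteration does not exceed $\delta$, which is exactly the role of the assumption that every item has size at most $\delta$ (a hypothesis fixed for this section). The argument is otherwise mechanical bookkeeping about how adding an item affects the size and cardinality budgets.
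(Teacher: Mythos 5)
Your proof is correct and follows essentially the same approach as the paper's: a loop-invariant induction over the while loop at Step~\ref{gr:while2}, with the base case supplied by Claim~\ref{clm:A00}, the size bound from $s(A)\leq 1-\delta$ plus $s(\ell)\leq\delta$, and the cardinality bound from the loop guard. Your treatment of the second conclusion is a bit tidier — you observe directly that $A_0\subseteq A_1$ and invoke monotonicity of $\ceil{|G\setminus\cdot|/k(G)}$ once, whereas the paper re-verifies the inequality iteration by iteration inside the same invariant — but this is the same underlying observation, not a different method.
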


\begin{proof}
	We prove the claim using loop invariant for the while loop of Step~\ref{gr:while2}. Let $A(s),A(t)$ be the object $A$ before and after an iteration of the while loop of Step~\ref{gr:while2}, respectively. Now, assume that $A(s) \in \cC(\cI)$ and for each bounding group $G \in \cB(\cI)$ it holds that $   \ceil{ \frac{|G \setminus A(s)|}{k(G)}    } \leq p(\cI)-1$. We show the conditions of the claim below for $A(t)$. By Step~\ref{gr:swap2}, let $\ell \in I$ such that $A(t) = A(s) \cup \{\ell\}$. Therefore,  \begin{itemize}
		\item $s(A(t)) \leq 1$. $$s(A(t)) =  s\left(  A(s) \right) + s(\ell) \leq (1-\delta)+\delta =1.$$ The first equality is by Step~\ref{gr:swap2}. The inequality is because $s(A(s)) \leq 1-\delta$ by Step~\ref{gr:while2}; in addition, for all items the size is at most $\delta$.

		\item For all $G \in \cG$ it holds that $|G \cap A(t)| \leq k(G)$. If $G \in \cG \setminus \{\textsf{group}(\ell)\}$; then, $|G \cap A(t)| = |A(s) \cap G| \leq k(G)$. The equality is by Step~\ref{gr:swap2} and the inequality is by the assumption that $A(s) \in \cC(\cI)$. Otherwise, $|G \cap A(t)| = \left|  A(s) \cap G \right| +|\{\ell\} \cap G| = |A(s) \cap G|+1 \leq  k(G)$. The first equality is by Step~\ref{gr:swap2} and the inequality is by Step~\ref{gr:while2}.

		\item For each bounding group $G \in \cB(\cI)$ it holds that $   \ceil{ \frac{|G \setminus A(t)|}{k(G)}    } \leq p(\cI)-1$. If $G \in \cG \setminus \{\textsf{group}(\ell)\}$; then, $   \ceil{ \frac{|G \setminus A(t)|}{k(G)}    } =  \ceil{ \frac{|G \setminus A(s)|}{k(G)}    } \leq p(\cI)-1$. The equality is by Step~\ref{gr:swap2} and the inequality is by the assumption on $A(s)$. Otherwise, $$ \ceil{ \frac{|G \setminus A(t)|}{k(G)}    } =  \ceil{ \frac{|G \setminus A(s)|-|\{\ell\}|}{k(G)}    } \leq \ceil{ \frac{|G \setminus A(s)|}{k(G)}    } \leq p(\cI)-1.$$ The first equality is by Step~\ref{gr:swap2} and the last inequality is by the assumption on $A(s)$. Therefore, by Claim~\ref{clm:A00} and the above loop invariant the claim follows.  
	\end{itemize}

\end{proof}

\begin{claim}
	\label{clm:A}
	$A \in \cC(\cI)$ and for each bounding group $G \in \cB(\cI)$ it holds that $\ceil{ \frac{|G \setminus A|}{k(G)}    } \leq p(\cI)-1$.
\end{claim}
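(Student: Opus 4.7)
\medskip

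\noindent\emph{Proof plan for Claim~\ref{clm:A}.} The plan is to establish both properties by a loop invariant for the while loop in Step~\ref{gr:while1}, using Claim~\ref{clm:A1} as the base case. Recall that $A_1$ is the value of $A$ when the algorithm enters the second while loop, and by Claim~\ref{clm:A1} we already know that $A_1 \in \cC(\cI)$ and that $\lceil |G \setminus A_1| / k(G) \rceil \leq p(\cI) - 1$ for every $G \in \cB(\cI)$. It therefore suffices to show that each iteration of the loop in Step~\ref{gr:while1} preserves both properties.

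Let $A(s)$ and $A(t)$ denote the values of $A$ immediately before and after a single iteration of the loop, and let $\ell \in A(s)$ and $y \in I \setminus A(s)$ be the items witnessing Step~\ref{gr:while1}, so that $A(t) = (A(s) \setminus \{\ell\}) \cup \{y\}$, with $s(y) > s(\ell)$ and $\textsf{group}(\ell) = \textsf{group}(y) =: G^\star$. The crucial observation is that a swap within the same group leaves $|G \cap A|$ unchanged for every $G \in \cG$: for $G \neq G^\star$ nothing changes, and for $G = G^\star$ we exchange one member for another. This immediately gives two things for free. First, the matroid constraint $|G \cap A(t)| \leq k(G)$ is inherited from $A(s)$. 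Second, $|G \setminus A(t)| = |G \setminus A(s)|$ for all $G$, so the bounding-group inequality $\lceil |G \setminus A| / k(G) \rceil \leq p(\cI) - 1$ carries over verbatim.

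The only property that requires a bit of arithmetic is the size bound $s(A(t)) \leq 1$, which is where the guard $s(A(s)) \leq 1 - \delta$ in Step~\ref{gr:while1} is used. Since every item has size at most $\delta$, we have $s(y) \leq \delta$, and consequently
\[
s(A(t)) = s(A(s)) - s(\ell) + s(y) \leq s(A(s)) + s(y) \leq (1 - \delta) + \delta = 1,
\]
so $A(t) \in \cC(\cI)$. Combining this with the preservation of the cardinality constraints verifies the invariant $A(t) \in \cC(\cI)$, and combining with the preservation of $|G \setminus A|$ verifies the bounding-group inequality. Applying the invariant at termination yields the claim.

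I expect no real obstacle here: the argument is essentially bookkeeping, and the only nontrivial ingredient is the uniform size bound $s(\ell') \leq \delta$ for all $\ell' \in I$, which is exactly the hypothesis of Lemma~\ref{thm:greedy} and the reason the loop guard $s(A) \leq 1 - \delta$ (rather than $s(A) \leq 1$) is used in Step~\ref{gr:while1}.
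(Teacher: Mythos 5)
Your proof is correct and follows essentially the same route as the paper: a loop invariant over the while loop of Step~\ref{gr:while1}, seeded by Claim~\ref{clm:A1}, with the observations that an in-group swap preserves $|G \cap A|$ (hence the matroid constraint and the bounding-group bound) and that the guard $s(A(s)) \leq 1-\delta$ plus $s(y) \leq \delta$ gives $s(A(t)) \leq 1$. No substantive difference.
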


\begin{proof}
	We prove the claim using loop invariant for the while loop of Step~\ref{gr:while1}. Let $A(s),A(t)$ be the object $A$ before and after an iteration of the while loop of Step~\ref{gr:while1}, respectively. Now, assume that $A(s) \in \cC(\cI)$ and for each bounding group $G \in \cB(\cI)$ it holds that $   \ceil{ \frac{|G \setminus A(s)|}{k(G)}    } \leq p(\cI)-1$. We show the conditions of the claim below for $A(t)$.  By Step~\ref{gr:swap}, let $\ell,y \in I$ such that $A(t) = \left(A(s) \setminus \{\ell\} \right) \cup \{y\}$. Therefore,  \begin{itemize}
		\item $s(A(t)) \leq 1$. $$s(A(t)) =  s\left(  A(s) \right) -s(\ell)+s(y) \leq s\left(  A(s) \right) +s(y) \leq (1-\delta)+\delta =1.$$ The first equality is by Step~\ref{gr:swap}. The second inequality is because $s(A(s)) \leq 1-\delta$ by Step~\ref{gr:while1}; in addition, for all items the size is at most $\delta$.

		\item For all $G \in \cG$ it holds that $|G \cap A(t)| \leq k(G)$. If $G \in \cG \setminus \{\textsf{group}(\ell)\}$; then, $|G \cap A(t)| = |A(s) \cap G| \leq k(G)$. The equality is by Step~\ref{gr:swap} and the inequality is by the assumption that $A(s) \in \cC(\cI)$. Otherwise, $|G \cap A(t)| = \left|  A(s) \cap G \right| -|\{\ell\} \cap G|+|\{y\} \cap G| = |A(s) \cap G| \leq  k(G)$. The first equality is by Step~\ref{gr:swap} and the inequality is by the assumption that $A(s) \in \cC(\cI)$

		\item For each bounding group $G \in \cB(\cI)$ it holds that $\ceil{ \frac{|G \setminus A(t)|}{k(G)}    } \leq p(\cI)-1$. If $G \in \cG \setminus \{\textsf{group}(\ell)\}$; then, $\ceil{ \frac{|G \setminus A(t)|}{k(G)}    } = \ceil{ \frac{|G \setminus A(s)|}{k(G)}    }\leq p(\cI)-1$. The equality is by Step~\ref{gr:swap} and the inequality is by the assumption on $A(s)$. Otherwise, $$\ceil{ \frac{|G \setminus A(t)|}{k(G)}    } = \ceil{ \frac{|G \setminus A(s)|+|\{\ell\} \cap G|-|\{y\} \cap G|}{k(G)}    }= \ceil{ \frac{|G \setminus A(s)|}{k(G)}    } \leq p(\cI)-1$$ The equalities are by Step~\ref{gr:swap} and the inequality is by the assumption on $A(s)$. Therefore, by Claim~\ref{clm:A1} and the above loop invariant the claim follows.

	\end{itemize}

\end{proof}

\begin{claim}
	\label{ABCV}
	$V(\cI \setminus A)\leq p(\cI)-1$.
\end{claim}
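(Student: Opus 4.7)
The plan is to split into two cases based on whether a group $G \in \cG$ is a bounding group of $\cI$ or not, and to show in both cases that $\lceil |G \setminus A| / k(G) \rceil \leq p(\cI) - 1$. Taking the maximum over all groups $G$ such that $G \setminus A \neq \emptyset$ then yields the desired bound on $V(\cI \setminus A)$.

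First, I would unpack the definition of $V(\cI \setminus A)$. By \eqref{IIS}, the groups of $\cI \setminus A$ are exactly $\{G \setminus A \mid G \in \cG, \; G \setminus A \neq \emptyset\}$, and the cardinality constraint of $G \setminus A$ in $\cI \setminus A$ equals $k(G)$. Therefore
\[
V(\cI \setminus A) \;=\; \max_{\substack{G \in \cG \\ G \setminus A \neq \emptyset}} \ceil{\frac{|G \setminus A|}{k(G)}}.
\]
So it suffices to bound $\ceil{|G \setminus A|/k(G)}$ by $p(\cI) - 1$ for every $G \in \cG$.

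Next, I would fix an arbitrary $G \in \cG$ and branch on whether $G \in \cB(\cI)$. If $G \in \cB(\cI)$, the bound $\ceil{|G \setminus A|/k(G)} \leq p(\cI) - 1$ is provided directly by Claim~\ref{clm:A}, which is the main content we have already established about the bin $A$ produced by \textsf{Greedy}. If instead $G \in \cG \setminus \cB(\cI)$, then by the definition of bounding groups in \eqref{bounding} we have $\ceil{|G|/k(G)} \leq p(\cI) - 1$; since $|G \setminus A| \leq |G|$ and $\lceil \cdot / k(G) \rceil$ is monotone, this yields $\ceil{|G \setminus A|/k(G)} \leq \ceil{|G|/k(G)} \leq p(\cI) - 1$.

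Combining the two cases, the bound holds uniformly in $G$, and taking the maximum gives $V(\cI \setminus A) \leq p(\cI) - 1$. There is no real obstacle here: the whole content of the claim lives in Claim~\ref{clm:A}, which already handles the delicate case of bounding groups via the choice of $A_0$ as a union of minimal bounding subsets. The remaining argument is merely a case split that pushes through monotonicity of the ceiling.
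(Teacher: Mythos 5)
Your proposal is correct and matches the paper's proof: both reduce $V(\cI \setminus A)$ to $\max_{G \in \cG} \ceil{|G \setminus A|/k(G)}$ via \eqref{IIS}, invoke Claim~\ref{clm:A} for bounding groups, and use the definition \eqref{bounding} (with monotonicity of the ceiling in $|G \setminus A| \leq |G|$) for non-bounding groups.
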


\begin{proof}
	\begin{equation*}
		V(\cI \setminus A) = \max_{G \in \cG_A} \ceil{ \frac{|G|}{k_A(G)}  } = \max_{G \in \cG}        \ceil{ \frac{|G \setminus A|}{k(G)}    } \leq  p(\cI)-1
	\end{equation*} The second equality is by \eqref{IIS}. The last inequality is because for all bounding groups $G \in \cB(\cI)$ it holds that $   \ceil{ \frac{|G \setminus A|}{k(G)}    } \leq p(\cI)-1$ by Claim~\ref{clm:A} and for all $G \in \cG \setminus \cB(\cI)$ it holds that $   \ceil{ \frac{|G \setminus A|}{k(G)}    } \leq p(\cI)-1$ by \eqref{bounding}.
\end{proof}

\begin{lemma}
	\label{delta<}
	If $s(A) < 1-\delta$, then Algorithm~\ref{alg:GREEDY} returns a packing for $\cI$ with at most $p(\cI)$ bins.
\end{lemma}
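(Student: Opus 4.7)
Since $s(A) < 1-\delta$, both while loops of Algorithm~\ref{alg:GREEDY} (Steps~\ref{gr:while2} and~\ref{gr:while1}) must have terminated because their non-size conditions were exhausted rather than because the $s(A) \leq 1-\delta$ cap was hit. Reading off the first loop: for every $\ell \in I \setminus A$ we have $|\textsf{group}(\ell) \cap A| = k(\textsf{group}(\ell))$, so $A$ is a basis of the instance matroid of $\cI$. Reading off the second loop: for every same-group pair $\ell \in A, y \in I \setminus A$, $s(y) \leq s(\ell)$, so $A \cap G$ consists of the $k(G)$ largest items of $G$ whenever $|G| > k(G)$. Thus $A$ is a maximum-weight basis. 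A direct computation using $|G \setminus A| = \max\{0,|G|-k(G)\}$ then gives $V(\cI \setminus A) = V(\cI) - 1$ whenever $V(\cI) \geq 2$.

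The plan is to prove the slightly stronger bound that the algorithm returns at most $V(\cI) \leq p(\cI)$ bins under the hypothesis $s(A) < 1-\delta$, by strong induction on $V(\cI)$. For the base case $V(\cI) \leq 1$, every group has $|G| \leq k(G)$, so every subset of $I$ is independent in the instance matroid; if the base-case Step~\ref{gr:if} did not fire, the first while loop would add items unrestrictedly until $s(A)$ exceeds $1-\delta$, so the assumption $s(A) < 1-\delta$ forces $I \in \cC(\cI)$ and the algorithm returns the single-bin packing $(I)$, matching the bound.

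For the inductive step the essential claim is that the hypothesis $s(\,\cdot\,) < 1-\delta$ is inherited by the recursive call on $\cI \setminus A$; concretely, that $A' := A^{\cI \setminus A}$ satisfies $s(A') < 1-\delta$. I would prove this by letting $B$ be a maximum-weight basis of the instance matroid of $\cI \setminus A$. For every group $G$ with $|G| > k(G)$, every item of $G \setminus A$ is at most as large as any item of $A \cap G$, and $|B \cap G| \leq k(G) = |A \cap G|$; a term-by-term matching of the items of $B \cap G$ to distinct items of $A \cap G$ then yields $s(B \cap G) \leq s(A \cap G)$. For groups with $|G| \leq k(G)$ we have $G \subseteq A$ and $B \cap G = \emptyset$. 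Summing over groups gives $s(B) \leq s(A) < 1-\delta$. Since the recursive first-bin $A'$ is a configuration of $\cI \setminus A$, hence independent in the instance matroid of $\cI \setminus A$, we conclude $s(A') \leq s(B) < 1-\delta$.

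With $V(\cI \setminus A) = V(\cI) - 1 < V(\cI)$ and $s(A') < 1-\delta$ verified, the induction hypothesis yields a packing of $\cI \setminus A$ in at most $V(\cI) - 1$ bins, and the concatenation at Step~\ref{gr:return} gives a packing of $\cI$ in at most $V(\cI) \leq p(\cI)$ bins. The main obstacle in the plan is the propagation argument in the third paragraph: without the inequality $s(B) \leq s(A)$, one would be forced in the recursive call to invoke the complementary lemma for $s(A') \geq 1-\delta$, and a short calculation shows that the resulting bound $1 + p(\cI \setminus A)$ can exceed $p(\cI)$ when $s(A)$ is very small (because $(1+2\delta)\, s(\cI \setminus A) + 2$ is then nearly as large as $(1+2\delta)\, s(I) + 2$). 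Hence the entire argument rests on staying inside the $s(\cdot) < 1-\delta$ regime throughout the recursion.
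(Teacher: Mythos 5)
Your proposal is correct and follows essentially the same route as the paper's proof: induction on $V(\cI)$, reading off from the loop terminations (given $s(A)<1-\delta$) that $A$ takes the maximum allowed number of largest items from each group, deducing $V(\cI\setminus A)\leq V(\cI)-1$, and propagating the hypothesis $s(A^{\cI\setminus A})<1-\delta$ to the recursive call. The only difference is cosmetic: you establish the propagation step (the paper's Claim~\ref{VD3}) directly via a maximum-weight-basis comparison $s(A')\leq s(B)\leq s(A)$, whereas the paper argues it by contradiction using the same two structural facts about $A$.
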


\begin{proof}
	
	We use the following auxiliary claims. 
	
	\begin{claim}
		\label{VD1}
		For all $G \in \cG$ it holds that $|A \cap G| = \min\{k(G),|G|\}$. 
	\end{claim}
	
	\begin{proof}
		Assume towards a contradiction that there is $G \in \cG$ such that $|A \cap G| < \min\{k(G),|G|\}$. Therefore, there is $\ell \in G \setminus A$. Observe $A$ after Step~\ref{gr:end0}. Because $s(A) < 1-\delta$, it follows by Step~\ref{gr:while2} and Step~\ref{gr:swap2} that $\ell$ can be added to $A$ in contradiction. 
	\end{proof}
	
	\begin{claim}
		\label{VD2}
		For all $G \in \cG$, $\ell \in A \cap G$ and $y \in G \setminus A$ it holds that $s(y) \leq s(\ell)$. 
	\end{claim}
	
	\begin{proof}
		Assume towards a contradiction that there are  $\ell \in A, y \in I \setminus A$ such that  $s(y) > s(\ell)$ and $\textnormal{\textsf{group}}(\ell)  = \textnormal{\textsf{group}}(y)$. Observe $A$ after Step~\ref{gr:end1}.  Because $s(A) < 1-\delta$, it follows by Step~\ref{gr:while1} and Step~\ref{gr:swap} that $y$ can be added to $A$ and $\ell$ can be removed from $A$ in contradiction. 
		
	\end{proof}

	\begin{claim}
		\label{ABCV2}
		If $s(A) < 1-\delta$ then $V(\cI \setminus A)\leq V(\cI)-1$.
	\end{claim}
	
	\begin{proof}
		\begin{equation*}
			\begin{aligned}
					V(\cI \setminus A) ={} & \max_{G \in \cG_A} \ceil{ \frac{|G|}{k_A(G)}  } = \max_{G \in \cG}        \ceil{ \frac{|G \setminus A|}{k(G)}    } =   \max_{G \in \cG}        \ceil{ \frac{|G|-\min\{ |G|,k(G) \} }{k(G)}} \\ \leq{} & \max_{G \in \cG}        \ceil{ \frac{|G|}{k(G)} -1 } =  \max_{G \in \cG}        \ceil{ \frac{|G|}{k(G)}}-1 = V(\cI)-1.
			\end{aligned}
		\end{equation*} The second equality is by \eqref{IIS}. The third equality is by Claim~\ref{VD1}. The inequality is because if $\min\{ |G|,k(G) \} = |G|$ then  it follows that $V(\cI \setminus A) = 0 \leq V(\cI)-1$, assuming that $I \neq \emptyset$. 
	\end{proof}

	Now, back to the proof of Lemma~\ref{delta<}. We prove the lemma by induction on $V(\cI)$. For $V(\cI) = 1$, it follows that $V(\cI \setminus A) \leq V(\cI)-1$ by Claim~\ref{ABCV2} and it follows that $V(\cI \setminus A) = 0$. Therefore, by \eqref{IIS} it follows that $I \setminus A = \emptyset$ and therefore $A = I$. Thus, $I$ is a configuration of $\cI$ by Claim~\ref{clm:A}. Hence, by Step~\ref{gr:if1} and Step~\ref{gr:if} it holds that Algorithm~\ref{alg:GREEDY} returns $(I)$, which is a packing of $\cI$ because $I = A$ and $A$ is a configuration of $\cI$. 
	
	Recall that for any BPP instance $\cj$ we define $A^{\cj}$ as the first entry of the tuple returned by $\textsf{Greedy}(\cj,\delta)$. Now, for the assumption of the induction, assume that  $V(\cI) \geq 2$. Assume that for any BPP instance $\cj = (J,\cG_J,s_J,k_J)$ such that (i) $s_J(\ell) \leq \delta ~\forall \ell \in J$, (ii) $V(\cj) \leq V(\cI)-1$, and (iii) $s\left(A^{\cj}\right) < 1-\delta$, it holds that $\textsf{Greedy}(\cj,\delta)$ returns a packing of $\cj$ of at most $V(\cj)$ bins. For the step of the induction, we use the following auxiliary claim.

	\begin{claim}
		\label{VD3}
		$s\left(A^{\cI \setminus A}\right) < 1-\delta$.
	\end{claim}
	
	\begin{proof}
		Assume towards a contradiction that $s\left(A^{\cI \setminus A}\right) \geq 1-\delta$. Therefore, it holds that $s\left(A^{\cI \setminus A}\right) > s(A)$; thus, we reach a contradiction if one of the following conditions hold. 
		
		\begin{enumerate}
			
			\item $|A^{\cI \setminus A}| > |A|$. Therefore, $$|A^{\cI \setminus A}| = \sum_{G \in \cG} |A^{\cI \setminus A} \cap G| \leq \sum_{G \in \cG} \min\{k(G),|G|\} = \sum_{G \in \cG} |A \cap G| = |A|.$$ The inequality is because $A^{\cI \setminus A} \in \cC(\cI \setminus A)$ by Claim~\ref{clm:A} and $\cC(\cI \setminus A) \subseteq \cC(\cI)$ by \eqref{IIS}.\footnote{Claim~\ref{clm:A} is stated and proven for $A$ and not for $A^{\cI \setminus A}$ for simplicity; however, the exact arguments can be used for proving the claim also for $A^{\cI \setminus A}$.} The second equality is by Claim~\ref{VD1}. The above equation is a contradiction that $|A^{\cI \setminus A}| > |A|$.

			\item There are $G \in \cG$, $\ell \in A^{\cI \setminus A} \cap G$ and $y \in G \setminus A^{\cI \setminus A}$ such that $s(y) > s(\ell)$. This is a contradiction to Claim~\ref{VD2}. 
		\end{enumerate}
		
		Otherwise, if the two cases above are not satisfied:
		
		$$s\left(A^{\cI \setminus A}\right) = \sum_{G \in \cG~} s\left(G\cap A^{\cI \setminus A}\right) \leq  \sum_{G \in \cG~} s\left(G\cap A \right) = s(A).$$ The inequality is by Claim~\ref{VD1}, Claim~\ref{VD2} and that conditions 1. and 2. above are not satisfied. By the above equation, we reach a contradiction that  $s\left(A^{\cI \setminus A}\right) > s(A)$. 
		
	\end{proof}

	By Claim~\ref{VD3} we can apply the assumption of the induction on $\cI \setminus A$; thus, Algorithm~\ref{alg:GREEDY} returns a packing for $\cI \setminus A$ with at most $V(\cI \setminus A)$ bins. In addition, by Claim~\ref{ABCV2} it holds that $V(\cI \setminus A) \leq V(\cI)-1$. Hence, by Step~\ref{gr:return}, the returned object by the algorithm is $X = (A) \oplus \textsf{Greedy}\left(\mathcal{I} \setminus A, \delta \right)$ which contains at most $V(\cI)$ entries. Moreover, it holds that $X$ is a packing of $\cI$ by Claim~\ref{clm:confappend} using the following arguments: (i) $A \in \cC(\cI)$ by Claim~\ref{clm:A} and (ii) $ \textsf{Greedy}\left(\mathcal{I} \setminus A, \delta \right)$ is a packing of $\cI \setminus A$ by the assumption of the induction. 
\end{proof}

\begin{lemma}
	\label{delta>>}
	If $s(A) \geq 1-\delta$, then Algorithm~\ref{alg:GREEDY} returns a packing for $\cI$ with at most $p(\cI)$ bins.
\end{lemma}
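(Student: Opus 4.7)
My plan is to prove the lemma by strong induction on $|I|$. The base case handles instances where $I \in \cC(\cI)$: Algorithm~\ref{alg:GREEDY} returns $(I)$ at Step~\ref{gr:if}, using a single bin, and the bound $1 \leq p(\cI)$ is immediate since $p(\cI) \geq (1+2\delta) s(I) + 2 \geq 2$ whenever $I$ is nonempty. For the inductive step I consider an instance $\cI$ with $I \notin \cC(\cI)$ satisfying $s(A) \geq 1-\delta$; since $\delta < 1/2$, this forces $A$ to be nonempty and hence $|I \setminus A| < |I|$, so the induction is well-founded. By Claim~\ref{clm:A}, $A \in \cC(\cI)$, and by Claim~\ref{clm:confappend} it suffices to bound the number of bins used by the recursive call $\textsf{Greedy}(\cI \setminus A, \delta)$ by $p(\cI) - 1$.

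The heart of the argument is the inequality $p(\cI \setminus A) \leq p(\cI) - 1$. I bound each of the two terms in the definition of $p(\cI \setminus A)$ separately. First, $V(\cI \setminus A) \leq p(\cI)-1$ is immediate from Claim~\ref{ABCV}. Second, using $s(A) \geq 1-\delta$,
\begin{equation*}
(1+2\delta) \cdot s(I \setminus A) + 2 \;=\; (1+2\delta) s(I) - (1+2\delta) s(A) + 2 \;\leq\; (1+2\delta) s(I) + 2 - (1+2\delta)(1-\delta);
\end{equation*}
since for $\delta \in (0, 0.5)$ one has $(1+2\delta)(1-\delta) = 1 + \delta - 2\delta^2 > 1$ (because $\delta(1-2\delta) > 0$), this is at most $(1+2\delta) s(I) + 1 \leq p(\cI) - 1$, where the last inequality uses the definition of $p(\cI)$.

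Having established $p(\cI \setminus A) \leq p(\cI) - 1$, I dispatch the recursive call by case analysis on $s(A^{\cI \setminus A})$, the total size of the first bin produced by $\textsf{Greedy}(\cI \setminus A, \delta)$. If $s(A^{\cI \setminus A}) < 1 - \delta$, then Lemma~\ref{delta<} applied to $\cI \setminus A$ returns a packing using at most $p(\cI \setminus A) \leq p(\cI)-1$ bins. Otherwise $s(A^{\cI \setminus A}) \geq 1 - \delta$, and the induction hypothesis (valid since $|I \setminus A| < |I|$) yields the same bound. In either case, prepending the bin $A$ via Claim~\ref{clm:confappend} produces a packing of $\cI$ with at most $1 + (p(\cI)-1) = p(\cI)$ bins. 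The only genuine subtlety is the arithmetic verification that $(1+2\delta)(1-\delta) > 1$ strictly for $\delta < 1/2$, which furnishes the crucial drop-by-one in the size term; the remainder of the proof is an assembly of previously established claims.
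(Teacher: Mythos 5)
Your proof is correct and follows essentially the same route as the paper's: the same key inequality $p(\cI\setminus A)\le p(\cI)-1$ (the paper's Claim~\ref{prom:2}, established by the identical arithmetic using $(1+2\delta)(1-\delta)\ge 1$ for $\delta\in(0,0.5)$), the same case split on $s\left(A^{\cI\setminus A}\right)$ handled by Lemma~\ref{delta<} versus the induction hypothesis, and the same assembly via Claims~\ref{clm:A}, \ref{ABCV} and \ref{clm:confappend}. The only cosmetic difference is that you induct on $|I|$ (with the case $I\in\cC(\cI)$ as base) whereas the paper inducts on the value $p(\cI)$; both orderings are well-founded and the argument is otherwise the same.
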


\begin{proof}
	We prove the claim by induction on $p(\cI)$. For $p(\cI) \leq 1$ it holds by \eqref{promise} that $I$ is a configuration of $\cI$: it follows that $V(\cI \setminus A) = 0$ by Claim~\ref{ABCV} and therefore, by \eqref{IIS} it follows that $I \setminus A = \emptyset$ and $A = I$.  Hence, by Step~\ref{gr:if1} and Step~\ref{gr:if} it holds that Algorithm~\ref{alg:GREEDY} returns $(I)$, which is a packing of $\cI$ because $I$ is a configuration of $\cI$. Now, for the assumption of the induction, assume that for any BPP instance $\cj = (J,\cG_J,s_J,k_J)$ such that (i) $s(\ell) \leq \delta ~\forall \ell \in J$, (ii) $s\left(A^{\cj}\right) \geq 1-\delta$, and (iii) $p(\cj) \leq p(\cI)-1$, it holds that $\textsf{Greedy}(\cj,\delta)$ returns a packing of $\cj$ of at most $p(\cj)$ bins.
	
	For the step of the induction, assume that $p(\cI) > 1$. We use the following auxiliary claim.

	\begin{claim}
		\label{prom:2}
		$p\left(\cI \setminus A\right) \leq  p(\cI)-1$.
	\end{claim}
	
	\begin{proof}
		
		We use the following inequality.  \begin{equation}
			\label{ABC}
			(1+2\delta) \cdot s(I \setminus A) +2 \leq (1+2\delta) \cdot (s(I)-(1-\delta))+2 \leq (1+2\delta) \cdot s(I) -1+2\leq p(\cI)-1. 
		\end{equation} The first inequality is because $s(A) \geq 1-\delta$. The second inequality is because $\delta \in (0,0.5)$. The last inequality is by \eqref{promise}. Now, $$p\left(\cI \setminus A\right) = \max \left\{    (1+2\delta) \cdot s(I \setminus A)+2, V(\cI \setminus A)    \right\} \leq p(\cI)-1.$$ The first equality is by \eqref{IIS} and \eqref{promise}. The first inequality is by \eqref{ABC} and Claim~\ref{ABCV}
	\end{proof}
	
	To conclude the lemma, observe the following two complementary cases. \begin{enumerate}
		\item $s\left(A^{\cI \setminus A}\right) \geq 1-\delta$. Then, by Claim~\ref{prom:2} and that $s\left(A^{\cI \setminus A}\right) \geq 1-\delta$, we can apply the assumption of the induction on $\cI \setminus A$; thus, Algorithm~\ref{alg:GREEDY} applied on $\cI \setminus A$ returns a packing for $\cI \setminus A$ with at most $p\left(\cI \setminus A\right)$ bins. In addition, by Claim~\ref{prom:2} it holds that $p\left(\cI \setminus A\right) \leq  p(\cI)-1$. Hence, by Step~\ref{gr:return}, the returned object by the algorithm on $\cI$ is $X = (A) \oplus \textsf{Greedy}\left(\mathcal{I} \setminus A, \delta \right)$ which contains at most $ 1+p(\cI)-1  = p(\cI)$ entries. Moreover, it holds that $X$ is a packing of $\cI$ by Claim~\ref{clm:confappend} using the following arguments: (i) $A \in \cC(\cI)$ by Claim~\ref{clm:A} and (ii) $ \textsf{Greedy}\left(\mathcal{I} \setminus A, \delta \right)$ is a packing of $\cI \setminus A$ by the assumption of the induction. 
		
		\item $s\left(A^{\cI \setminus A}\right) < 1-\delta$. Then, by Lemma~\ref{delta<} and that $s\left(A^{\cI \setminus A}\right) < 1-\delta$, Algorithm~\ref{alg:GREEDY} applied on $\cI \setminus A$ returns a packing for $\cI \setminus A$ with at most $V(\cI \setminus A)$ bins. In addition, by Claim~\ref{ABCV} it holds that $V(\cI \setminus A) \leq  p(\cI)-1$. Hence, by Step~\ref{gr:return}, the returned object by the algorithm is $X = (A) \oplus \textsf{Greedy}\left(\mathcal{I} \setminus A, \delta \right)$ which contains at most $ 1+p(\cI)-1  \leq p(\cI)$ entries by \eqref{promise}. Moreover, it holds that $X$ is a packing of $\cI$ by Claim~\ref{clm:confappend} using the following arguments: (i) $A \in \cC(\cI)$ by Claim~\ref{clm:A} and (ii) $ \textsf{Greedy}\left(\mathcal{I} \setminus A, \delta \right)$ is a packing of $\cI \setminus A$ by Lemma~\ref{delta<}. 
	\end{enumerate}
\end{proof}

\begin{lemma}
	\label{gr:running}
	The running time of Algorithm~\ref{alg:GREEDY} is $\textnormal{poly} (|\cI |)$. 
\end{lemma}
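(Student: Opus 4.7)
The plan is to bound the work done inside a single recursive call of Algorithm~\ref{alg:GREEDY}, then bound the recursion depth, and multiply the two. Every line outside the two \textsf{while}-loops is immediately polynomial: checking whether $I\in \cC(\cI)$ takes $O(|I|+|\cG|)$ time, and a bounding subset $A\in b^*(\cI)$ can be constructed in polynomial time by computing $V(\cI)$, $\cB(\cI)$, and then, for each $G\in\cB(\cI)$, selecting the smallest-size items of $G$ (this is exactly the construction in the proof of Lemma~\ref{bG}). So the only steps that need a real argument are the two \textsf{while}-loops and the recursion depth.

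For the first \textsf{while}-loop (Step~\ref{gr:while2}), each iteration strictly increases $|A|$ by adding an item of $I\setminus A$, so the loop terminates in at most $|I|$ iterations; each iteration is clearly polynomial. For the second \textsf{while}-loop (Step~\ref{gr:while1}), the key observation is the potential function $\Phi(A)=\sum_{\ell\in A} \ell$ (recall that items are indexed $1,\ldots,n$ in non-increasing order of size, so larger size corresponds to a smaller index). A swap of the form $A\leftarrow(A\setminus\{\ell\})\cup\{y\}$ performed in Step~\ref{gr:swap} satisfies $s(y)>s(\ell)$, which forces $y<\ell$ as indices, so $\Phi$ strictly decreases by at least $1$ each iteration. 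Since $\Phi(A)\geq 0$ and $\Phi(A)\leq |I|^2$ initially, this loop terminates in $O(|I|^2)$ iterations, each of which is polynomial.

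It remains to bound the depth of the recursion in Step~\ref{gr:return}. I will argue that whenever the recursive call is made, $A\neq \emptyset$, so each level of recursion strictly reduces $|I|$, giving depth at most $|I|$. Indeed, if the algorithm reaches Step~\ref{gr:return} then the test in Step~\ref{gr:if1} failed, i.e.\ $I\notin \cC(\cI)$, which in particular means $I\neq \emptyset$. If the bounding subset selected in Step~\ref{gr:end0} is non-empty, we are done; otherwise $A=\emptyset$ at the start of Step~\ref{gr:while2}, but then $s(A)=0\leq 1-\delta$ and $|\textsf{group}(\ell)\cap A|=0<k(\textsf{group}(\ell))$ for every $\ell\in I$, so the first \textsf{while}-loop executes at least one iteration and inserts some item into $A$. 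Hence $A\neq \emptyset$ at Step~\ref{gr:return}, as claimed. Combining the three estimates, the total running time is $\poly(|\cI|)\cdot |I|=\poly(|\cI|)$.

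The main (mild) obstacle is the termination argument for the second \textsf{while}-loop, where one has to identify the right monotone quantity; using the sum of item indices works because the sorted-by-size convention on $I$ makes ``replace by a larger item'' equivalent to ``replace by a smaller index''.
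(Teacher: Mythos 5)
Your proof is correct, and its overall shape (line-by-line cost bounds, iteration counts for the two loops, then recursion) matches the paper's, but it differs on two points worth noting. For the second \textsf{while}-loop, the paper bounds the number of iterations by ${|I| \choose 2}$ via the claim that any fixed pair of items can be swapped at most once; your sum-of-indices potential $\Phi(A)=\sum_{\ell\in A}\ell$, which strictly decreases with every swap because $s(y)>s(\ell)$ forces $y<\ell$ under the paper's sorted-by-size indexing, gives an $O(|I|^2)$ bound without having to justify that pairs cannot repeat, which is arguably the cleaner and more airtight route. Second, you explicitly bound the recursion depth by showing $A\neq\emptyset$ whenever Step~\ref{gr:return} is reached (if the bounding subset is empty, the first loop must insert at least one item since $I\neq\emptyset$, $s(\emptyset)=0\leq 1-\delta$, and $k(G)\geq 1$), so each call strictly shrinks the item set; the paper's proof only accounts for the cost of a single call and leaves the recursion depth implicit, so your argument is in fact the more complete of the two. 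Both yield the claimed $\poly(|\cI|)$ bound.
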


\begin{proof}
	Step~\ref{gr:if} and Step~\ref{gr:if1} can be trivially computed in linear time in $|\cI|$. Step~\ref{gr:end0} can be computed in polynomial time in $|\cI|$ by (i) finding the bounding groups by \eqref{bounding}, and (ii) compute a bounding subset of $\cI$ as follows. For each bounding group $G$ we compute a bounding subset of $G$ by sorting the items and choosing a minimal number of items according to the sorted order that form a bounding subset of $G$ (for more details see the proof of Lemma~\ref{bG}).    
	
	the while loop of Step~\ref{gr:while2} runs at most $|I|$ times because in each iteration an item is added to the constructed bin and is not added again by Step~\ref{gr:while2} and Step~\ref{gr:swap2}. Finally, the while loop of Step~\ref{gr:while1} runs at most ${|I| \choose 2} = \textnormal{poly} (|\cI |)$ times because in each iteration two items are replaced in Step~\ref{gr:swap}, one is add to the constructed bin and the other one is removed; for any two items, this can happen at most once by Step~\ref{gr:while1}. We conclude that the running time of Algorithm~\ref{alg:GREEDY} is $\textnormal{poly} (|\cI |)$. 
\end{proof}

\noindent{\bf Proof of Lemma~\ref{thm:greedy}:} If $s(A) < 1-\delta$ it holds that $X = (A) \oplus \textsf{Greedy}\left(\mathcal{I} \setminus A, \delta \right)$ is a packing of $\cI$ of at most $V(\cI)$ bins by Lemma~\ref{delta<}; therefore, the number of bins is at most $p(\cI)$ by \eqref{promise}. Otherwise, it holds that $s(A) \geq 1-\delta$; thus, it holds that $X$ is a packing of $\cI$ of at most $p(\cI)$ bins by Lemma~\ref{delta>>}. Therefore, the number of bins in the packing is at most $p(\cI) \leq (1+2\delta) \cdot \max\left\{ s(I),~V(\mathcal{I})\right\}+2$ by \eqref{promise}. Finally, the running time is $\textnormal{poly} (|\cI |)$ by Lemma~\ref{gr:running}.\qed

		\bibliographystyle{splncs04}
	\bibliography{bibfile}
\end{document}